\numberwithin{equation}{section}
\theoremstyle{plain}
\newtheorem{teorema}{Teorema}[section]
\newtheorem{proposizione}[teorema]{Proposizione}
\newtheorem{corollario}[teorema]{Corollario}
\theoremstyle{definition}
\newtheorem{definizione}{Definizione}[section]
\newtheorem{esempio}{Esempio}[section]
\theoremstyle{remark}
\newtheorem{osservazione}{Osservazione}[section]
\newcommand*{\diff}{\mathop{}\!\mathrm{d}}
\newcommand{\norm}[1]{\left\lVert#1\right\rVert}
\title{\ttitle} 
\begin{document}

\frontmatter 

\setstretch{1.3} 

\fancyhead{} 
\rhead{\thepage} 
\lhead{} 

\pagestyle{fancy} 

\newcommand{\HRule}{\rule{\linewidth}{0.5mm}} 



\begin{titlepage}

\center 

\includegraphics[width=15em]{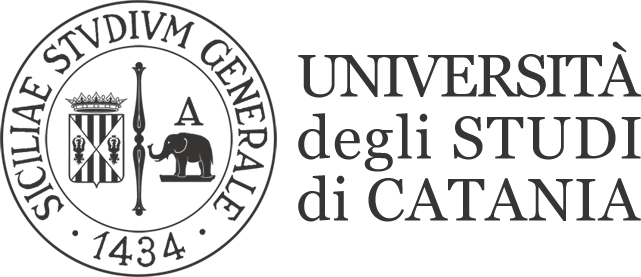}\\[1cm] 
 
\textsc{\Large Dipartimento di Matematica e Informatica}\\[0.5cm]
\textsc{\large Corso di Laurea Magistrale in Matematica}\\[0.5cm]

\rule[0.1cm]{\linewidth}{0.1mm}
\rule[0.5cm]{\linewidth}{0.6mm}

\vspace{10mm}



{ \huge \bfseries Simulazione Monte Carlo per il trasporto di cariche nel grafene}\\[0.4cm] 

\vspace{25mm}

\textsc{\large Tesi di Laurea Magistrale}\\[0.5cm] 

\vspace{75mm}
 
\begin{minipage}{0.4\textwidth}
\begin{flushleft} \large
\emph{Autore:}\\
Giovanni \textsc{Nastasi} 
\end{flushleft}
\end{minipage}
~
\begin{minipage}{0.4\textwidth}
\begin{flushright} \large
\emph{Relatore:} \\
Chiar.mo Prof.\\Vittorio \textsc{Romano} 
\end{flushright}
\end{minipage}\\[0.5cm]

\rule[0.1cm]{\linewidth}{0.1mm}\\
{\large{Anno Accademico 2015-2016}} 

\vfill 

\end{titlepage}

\pagestyle{fancy} 

\lhead{\emph{Indice}} 
\tableofcontents 






{
}




{
}




{


}





%

\chapter*{Introduzione}
\addcontentsline{toc}{chapter}{Introduzione}
\chaptermark{Introduzione.}

\lhead{Introduzione}


Grafene è il nome dato ad un singolo strato piano costituito solamente da atomi di carbonio, disposti in un reticolo bidimensionale a nido d'ape. Dal punto di vista teorico, il grafene è stato studiato a partire dal 1947. Quarant'anni dopo, si scoprirono le particolari proprietà elettrodinamiche di questo materiale, motivo per il quale cominciò il suo studio nell'ambito della fisica condensata. In un primo momento si pensava che il grafene non potesse esistere allo stato libero. Sorprendentemente, nel 2004 A.K.~Geim e K.S.~Novoselov, due studiosi dell'Università di Manchester, isolarono e caratterizzarono il grafene allo stato libero. Questo gli valse nel 2010 il premio Nobel per la fisica.

Il grafene è un materiale dalle innumerevoli proprietà quali l'elevata conducibilità termica ed elettrica, la trasparenza su tutto lo spettro luminoso, la resistenza sia meccanica che anche ad agenti fisici e chimici. Ciò colloca questo materiale in una posizione di rilievo nella ricerca in ambito tecnologico. In particolare, alcune prospettive di sviluppo riguardano la sostituzione del silicio con nuovi materiali a base di grafene, la costruzione di dispositivi elettronici completamente
flessibili, lo sviluppo di pannelli solari trasparenti, schermi, batterie e memorie RAM.

Lo spunto principale per questa tesi sperimentale nasce da un algoritmo proposto da V.~Romano, A.~Majorana e M.~Coco nel 2015 e descritto in \citep{ART:RoMaCo_JCP}, in cui si introduce un nuovo approccio al metodo di Simulazione Diretta Monte Carlo dove è pienamente rispettato il principio di esclusione di Pauli. Questo algoritmo è stato anche successivamente esteso al caso del grafene su substrato (cfr.~\citep{ART:CoMajRo_Ric_mat_2016}). Si è approfondita anche l'analisi degli effetti termici (cfr.~\citep{ART:CoMaRo_JCompTheorTrans}) e lo studio della mobilità (cfr.~\citep{ART:MajMaRo}) nel caso del grafene sospeso.

In questa tesi si svilupperanno tre tematiche principali. La prima sarà un approfondimento teorico sul trasporto di cariche nei semiconduttori e sulla sua applicazione al caso del grafene. La seconda tematica riguarderà la risoluzione delle equazioni di trasporto mediante la simulazione Monte Carlo. La terza costituirà la parte sperimentale e consisterà in primo luogo nel descrivere e mostrarne i risultati di un codice equivalente che lavora con tempi di esecuzione inferiori per le simulazioni, in secondo luogo si interverrà a livello modellistico nel caso del grafene su substrato, introducendo modelli più sofisticati per la scelta delle distanze delle impurezze. Infine si eseguiranno alcune simulazioni in cui saranno utilizzati substrati di diversa natura.

Uno degli obiettivi di questa tesi, come si è detto, consiste nel fornire una descrizione del modello semiclassico per il trasporto di cariche nei semiconduttori. Questo richiede alcuni richiami di meccanica classica e fisica atomica, esposti nel Capitolo~1. In tale capitolo inoltre si parlerà anche di alcuni aspetti quantistici nella descrizione atomica e delle proprietà degli operatori lineari negli spazi di Hilbert.

Nel Capitolo~2 si tratterà del modello semiclassico, il quale prevede che il trasporto di cariche avvenga supponendo che il moto tra due scattering successivi sia in accordo con la meccanica classica. Gli elementi necessari allo sviluppo di questo modello riguardano innanzi tutto alcune nozioni di meccanica quantistica, essenziali quando si descriveranno i fenomeni di scattering. Inoltre, risulterà fondamentale fornire anche elementi di struttura della materia allo stato solido. Tutto ciò convergerà nella deduzione dell'equazione di Boltzmann semiclassica e del relativo termine collisionale che descrive i fenomeni di scattering.

La simulazione Monte Carlo costituisce una delle tecniche più utilizzate per risolvere le equazioni appena menzionate. Nel Capitolo~3 si applicherà il modello cinetico di trasporto al caso del grafene, risolvendone le equazioni mediante le tecniche Monte Carlo. L'approccio considerato è quello presentato in \citep{ART:RoMaCo_JCP} in cui si utilizza un approccio al metodo di Simulazione Diretta Monte Carlo che rispetta il principio di esclusione di Pauli anche durante il free-flight degli elettroni. Questo algoritmo è stato anche esteso al caso del grafene su substrato (cfr.~\citep{ART:CoMajRo_Ric_mat_2016}) e prima della stesura di questa tesi, come si è detto, sono stati sviluppati con successo dei codici più rapidi per effettuare le simulazioni, sia nel caso del grafene sospeso che nel caso del grafene su substrato. Si concluderà il capitolo con alcuni risultati originali ottenuti scegliendo sia substrati diversi, sia differenti modelli per la distanza delle impurezze. Tali risultati saranno pubblicati in \citep{ART:CoMajNaRo}.

\clearpage 


\mainmatter 

\pagestyle{fancy} 



\chapter{Concetti preliminari} 

\label{Chapter1} 

\lhead{Capitolo 1. \emph{Concetti preliminari}} 



\section{Richiami di meccanica classica}

In questa tesi spesso si farà riferimento ad alcuni concetti di base della meccanica classica che permettono la descrizione completa del moto di alcuni sistemi fisici. Risulta quindi opportuno richiamare le basi della meccanica lagrangiana ed hamiltoniana e introdurre la meccanica statistica.

Per trattare i successivi paragrafi saranno presi come riferimento \citep{DISP:Benettin} e \citep{DISP:Carati}.

\subsection{Le equazioni di Lagrange}
Si consideri un sistema di $N$ punti materiali $P_1,\ldots,P_N$ di massa $m_1,\ldots,m_N$ e si introduca, per le loro coordinate cartesiane, la notazione compatta
\begin{equation}
\mathbf{w}=(w_1,\ldots,w_{3N})=(x_1,y_1,z_1,\ldots,x_N,y_N,z_N)\in\mathbb{R}^{3N}.
\end{equation}
Se il sistema è non vincolato allora è possibile esprimere le coordinate degli $N$ punti in funzione di $3N$ parametri o \textbf{coordinate libere} $q_1,\ldots,q_{3N}$ in modo tale che la rappresentazione parametrica
\begin{equation}
w_j=w_j(q_1,\ldots,q_{3N},t), \qquad j=1,\ldots,3N.
\end{equation}
sia un cambiamento di coordinate rispetto a quelle cartesiane e il sistema degli $N$ punti materiali si possa esprimere nella forma
\begin{equation}\label{EQ:CAP1:Forma_parametrica_P_i}
P_i=P_i(q_1,\ldots,q_{3N},t), \qquad i=1,\ldots,N.
\end{equation}
Introducendo la notazione abbreviata
\begin{equation}
\mathbf{q}=(q_1,\ldots,q_{3N}), \qquad \dot{\mathbf{q}}=(\dot{q}_1,\ldots,\dot{q}_{3N})
\end{equation}
le grandezze $\dot{q}_1,\ldots,\dot{q}_{3N}$ prendono il nome di \textbf{velocità generalizzate}. Tramite le \eqref{EQ:CAP1:Forma_parametrica_P_i} è possibile calcolare direttamente le velocità $\mathbf{v}_i$ di ciascun punto materiale
\begin{equation}
\mathbf{v}_i(\mathbf{q},\dot{\mathbf{q}},t)=\frac{\diff P_i}{\diff t} = \sum_{h=1}^{3N} \frac{\partial P_i}{\partial q_h}(\mathbf{q},t)\dot{q}_h + \frac{\partial P_i}{\partial t}(\mathbf{q},t).
\end{equation}
L'energia cinetica espressa in termini delle coordinate libere avrà la forma
\begin{equation}
T(\mathbf{q},\dot{\mathbf{q}},t)=\frac{1}{2}\sum_{i=1}^N m_i\mathbf{v}_i^2(\mathbf{q},\dot{\mathbf{q}},t).
\end{equation}
Supponendo che gli $N$ punti materiali siano soggetti ad altrettante forze attive posizionali $\mathbf{F}_i$ si ottiene il sistema di equazioni di Newton
\begin{equation}
m_i\dot{\mathbf{a}}_i=\mathbf{F}_i, \qquad i=1,\ldots,N.
\end{equation}
Un risultato importante è il seguente.
\begin{proposizione}
Per un sistema non vincolato di $N$ punti materiali, con energia cinetica $T=T(\mathbf{q},\dot{\mathbf{q}},t)$, soggetto a forze conservative derivanti dall'energia potenziale $V=V(\mathbf{q},t)$, le coordinate libere $q_1,\ldots,q_{3N}$ soddisfano le equazioni
\begin{equation}\label{EQ:CAP1:Lagrange}
\frac{\diff}{\diff t} \frac{\partial \mathcal{L}}{\partial \dot{q}_h} - \frac{\partial \mathcal{L}}{\partial q_h}=0, \qquad h=1,\ldots,3N,
\end{equation}
dove $\mathcal{L}(\mathbf{q},\dot{\mathbf{q}},t)$ è definita da
\begin{equation}
\mathcal{L}=T-V
\end{equation}
(per la dimostrazione consultare \citep{DISP:Benettin}).
\end{proposizione}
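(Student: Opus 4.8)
La strategia naturale consiste nel partire dalle equazioni di Newton $m_i\mathbf{a}_i=\mathbf{F}_i$, proiettarle lungo i vettori $\partial P_i/\partial q_h$ e riconoscere nei termini che si ottengono le derivate parziali della lagrangiana. Il calcolo si articola in due blocchi indipendenti: uno per il membro contenente l'energia cinetica e uno per le forze.

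\textbf{Trattamento del termine cinetico.} Prima di tutto dimostrerei le due identità ausiliarie dovute a Lagrange. La prima,
\begin{equation}
\frac{\partial \mathbf{v}_i}{\partial \dot{q}_h}=\frac{\partial P_i}{\partial q_h},
\end{equation}
segue subito derivando rispetto a $\dot q_h$ l'espressione esplicita di $\mathbf{v}_i$ già scritta nel testo, osservando che $\partial P_i/\partial q_h$ e $\partial P_i/\partial t$ non dipendono dalle velocità generalizzate. La seconda,
\begin{equation}
\frac{\diff}{\diff t}\frac{\partial P_i}{\partial q_h}=\frac{\partial \mathbf{v}_i}{\partial q_h},
\end{equation}
si ottiene calcolando il primo membro con la regola della catena e confrontandolo con la derivata di $\mathbf{v}_i$ rispetto a $q_h$ (è lecito scambiare l'ordine delle derivate per la regolarità di $P_i$). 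A questo punto, partendo da $m_i\mathbf{a}_i\cdot\frac{\partial P_i}{\partial q_h}$ e sommando su $i$, userei la regola di Leibniz per scrivere
\begin{equation}
\sum_i m_i\mathbf{a}_i\cdot\frac{\partial P_i}{\partial q_h}=\frac{\diff}{\diff t}\!\left(\sum_i m_i\mathbf{v}_i\cdot\frac{\partial P_i}{\partial q_h}\right)-\sum_i m_i\mathbf{v}_i\cdot\frac{\diff}{\diff t}\frac{\partial P_i}{\partial q_h},
\end{equation}
e quindi, applicando le due identità ausiliarie, riconoscerei che la parentesi è $\partial T/\partial \dot q_h$ e che l'ultimo termine è $\partial T/\partial q_h$, ottenendo $\sum_i m_i\mathbf{a}_i\cdot\frac{\partial P_i}{\partial q_h}=\frac{\diff}{\diff t}\frac{\partial T}{\partial \dot q_h}-\frac{\partial T}{\partial q_h}$.

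\textbf{Trattamento delle forze e conclusione.} Per il membro delle forze, l'ipotesi di conservatività $\mathbf{F}_i=-\nabla_{P_i}V$ dà, di nuovo per la regola della catena,
\begin{equation}
\sum_i \mathbf{F}_i\cdot\frac{\partial P_i}{\partial q_h}=-\sum_i \nabla_{P_i}V\cdot\frac{\partial P_i}{\partial q_h}=-\frac{\partial V}{\partial q_h}.
\end{equation}
Uguagliando le due espressioni si arriva a $\frac{\diff}{\diff t}\frac{\partial T}{\partial \dot q_h}-\frac{\partial T}{\partial q_h}+\frac{\partial V}{\partial q_h}=0$; poiché $V$ non dipende dalle $\dot q_h$, si ha $\partial V/\partial \dot q_h=0$, e introducendo $\mathcal L=T-V$ si riconoscono immediatamente le \eqref{EQ:CAP1:Lagrange}. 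L'ostacolo principale non è concettuale ma di bookkeeping: bisogna gestire con cura lo scambio dell'ordine di derivazione nelle identità ausiliarie e tenere traccia correttamente della dipendenza esplicita dal tempo in $P_i$, per essere sicuri che i conti restino validi anche nel caso (ammesso dall'enunciato) in cui $T$ e $V$ dipendano esplicitamente da $t$. Poiché comunque l'enunciato rimanda a \citep{DISP:Benettin} per i dettagli, in questa sede mi limiterei a tracciare la linea dimostrativa appena descritta.
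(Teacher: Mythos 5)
La tua traccia è corretta ed è la derivazione standard delle equazioni di Lagrange (proiezione delle equazioni di Newton su $\partial P_i/\partial q_h$, le due identità di Lagrange per il termine cinetico, e la riduzione delle forze conservative a $-\partial V/\partial q_h$). La tesi non fornisce una propria dimostrazione ma rimanda interamente a \citep{DISP:Benettin}, dove si segue esattamente questo stesso schema, quindi non c'è nulla da confrontare oltre a constatare che il tuo argomento è quello atteso e non presenta lacune.
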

Le equazioni \eqref{EQ:CAP1:Lagrange} prendono il nome di \textbf{equazioni di Lagrange} e la funzione $\mathcal{L}$ è detta \textbf{funzione di Lagrange} del sistema o \textbf{lagrangiana}.

\subsection{Le equazioni di Hamilton}\label{PAR:CAP1:Eq_Hamilton}
Si consideri il sistema di punti materiali introdotto nel paragrafo precedente e si ponga il problema di passare dal sistema delle $3N$ equazioni del secondo ordine di Lagrange ad un sistema di $6N$ equazioni del primo ordine. Si considerano le equazioni di Lagrange \eqref{EQ:CAP1:Lagrange} e si introducono le nuove variabili $p_1,\ldots,p_{3N}$, dette \textbf{momenti coniugati} alle coordinate generalizzate $q_1,\ldots,q_{3N}$, mediante la definizione
\begin{equation}\label{EQ:CAP1:Momenti}
p_h = \frac{\partial L}{\partial \dot{q}_h} (\mathbf{q},\dot{\mathbf{q}},t), \qquad h=1,\ldots,3N.
\end{equation}
Allora le equazioni di Lagrange \eqref{EQ:CAP1:Lagrange} assumono la forma
\begin{equation}\label{EQ:CAP1:Lagrange_mod}
\dot{p}_h = \frac{\partial L}{\partial q_h} (\mathbf{q},\dot{\mathbf{q}},t), \qquad h=1,\ldots,3N.
\end{equation}
\`{E} possibile vedere le \eqref{EQ:CAP1:Momenti} come un sistema implicito di $3N$ equazioni per le $\dot{q}_h$. Inoltre questo sistema è invertibile rispetto a $\dot{q}_1,\ldots,\dot{q}_{3N}$ se la matrice $\left( \frac{\partial^2 L}{\partial \dot{q}_h \partial \dot{q}_k} \right)$ ha determinante non nullo. In tal caso è possibile esplicitare le $\dot{q}_h$ ottenendo le $n$ equazioni
\begin{equation}\label{EQ:CAP1:Momenti_implicite}
\dot{q}_h = \dot{q}_h (\mathbf{p},\mathbf{q},t), \qquad h=1,\ldots,{3N}.
\end{equation}
Le altre $3N$ equazioni sono le \eqref{EQ:CAP1:Lagrange_mod} che utilizzando le \eqref{EQ:CAP1:Momenti_implicite} assumono la forma
\begin{equation}
\dot{p}_h = \dot{p}_h (\mathbf{p},\mathbf{q},t), \qquad h=1,\ldots,{3N}.
\end{equation}
Le $6N$ equazioni così ottenute hanno una struttura particolare. Allo scopo di semplificare le scritture seguenti, sia $\mathbf{w}=(w_1,\ldots,w_{3N})\in\mathbb{R}^{3N}$ e si introduce la notazione compatta
\begin{equation}
\nabla_{\mathbf{w}}=\left( \frac{\partial}{\partial w_1},\ldots,\frac{\partial}{\partial w_{3N}} \right).
\end{equation}
A questo punto è possibile determinare il sistema di $6N$ equazioni cercato grazie alla seguente 
\begin{proposizione}
Si consideri la lagrangiana $\mathcal{L}(\mathbf{q},\dot{\mathbf{q}},t)$ associata al sistema di punti materiali $P_1,\ldots,P_N$. Se il determinante hessiano è non nullo, cioè $\det \left( \frac{\partial^2 L}{\partial \dot{q}_h \partial \dot{q}_k} \right) \neq 0$, allora il sistema
\begin{equation}\label{EQ:CAP1:Sistema_Lagrange}
\left\lbrace 
\begin{aligned}
\mathbf{p} & = \nabla_{\dot{\mathbf{q}}}\mathcal{L} \\
\dot{\mathbf{p}} & = \nabla_{\mathbf{q}}\mathcal{L}
\end{aligned}
\right. 
\end{equation}
è equivalente al sistema
\begin{equation}\label{EQ:CAP1:Sistema_Hamilton}
\left\lbrace 
\begin{aligned}
\dot{\mathbf{q}} & = \nabla_{\mathbf{p}}\mathcal{H} \\
\dot{\mathbf{p}} & = -\nabla_{\mathbf{q}}\mathcal{H}
\end{aligned}
\right.
\end{equation}
dove $\mathcal{H}(\mathbf{q},\mathbf{p},t)$ è la funzione definita da
\begin{equation}\label{EQ:CAP1:Hamiltoniana}
\mathcal{H}(\mathbf{q},\mathbf{p},t) = \left[ \mathbf{p} \cdot \dot{\mathbf{q}} - \mathcal{L}(\mathbf{q},\dot{\mathbf{q}},t) \right] _{\dot{\mathbf{q}}=\dot{\mathbf{q}}(\mathbf{q},\mathbf{p},t)},
\end{equation}
essendo la funzione $\dot{\mathbf{q}}(\mathbf{q},\mathbf{p},t)$ definita per inversione dalla prima delle \eqref{EQ:CAP1:Sistema_Lagrange}.\\
Inoltre si ha $\frac{\partial \mathcal{H}}{\partial t} = - \frac{\partial \mathcal{L}}{\partial t}$.
\end{proposizione}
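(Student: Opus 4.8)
The plan is to read the definition \eqref{EQ:CAP1:Hamiltoniana} as a Legendre transform in the velocity variables and to extract the Hamilton equations by differentiating it. First I would use the hypothesis $\det\left(\frac{\partial^{2} L}{\partial \dot{q}_h \partial \dot{q}_k}\right)\neq 0$: by the implicit function theorem the first of \eqref{EQ:CAP1:Sistema_Lagrange}, i.e.\ $\mathbf{p}=\nabla_{\dot{\mathbf{q}}}\mathcal{L}(\mathbf{q},\dot{\mathbf{q}},t)$, can be inverted locally in the form \eqref{EQ:CAP1:Momenti_implicite}, $\dot{\mathbf{q}}=\dot{\mathbf{q}}(\mathbf{q},\mathbf{p},t)$, on the open set where the Hessian is nonsingular; this is what makes $\mathcal{H}$ a genuine function of $(\mathbf{q},\mathbf{p},t)$.

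The core of the argument is the computation of $\diff\mathcal{H}$. Writing $\mathcal{H}(\mathbf{q},\mathbf{p},t)=\mathbf{p}\cdot\dot{\mathbf{q}}(\mathbf{q},\mathbf{p},t)-\mathcal{L}\bigl(\mathbf{q},\dot{\mathbf{q}}(\mathbf{q},\mathbf{p},t),t\bigr)$ and differentiating, the terms multiplying $\diff\dot{\mathbf{q}}$ are $\mathbf{p}\cdot\diff\dot{\mathbf{q}}-\nabla_{\dot{\mathbf{q}}}\mathcal{L}\cdot\diff\dot{\mathbf{q}}$, and they cancel identically precisely because the substitution is performed along $\mathbf{p}=\nabla_{\dot{\mathbf{q}}}\mathcal{L}$. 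What remains is
\begin{equation*}
\diff\mathcal{H}=\dot{\mathbf{q}}\cdot\diff\mathbf{p}-\nabla_{\mathbf{q}}\mathcal{L}\cdot\diff\mathbf{q}-\frac{\partial\mathcal{L}}{\partial t}\,\diff t .
\end{equation*}
Matching this against $\diff\mathcal{H}=\nabla_{\mathbf{p}}\mathcal{H}\cdot\diff\mathbf{p}+\nabla_{\mathbf{q}}\mathcal{H}\cdot\diff\mathbf{q}+\frac{\partial\mathcal{H}}{\partial t}\,\diff t$ and equating the coefficients of the independent differentials $\diff\mathbf{q},\diff\mathbf{p},\diff t$ yields the three identities
\begin{equation*}
\nabla_{\mathbf{p}}\mathcal{H}=\dot{\mathbf{q}}(\mathbf{q},\mathbf{p},t),\qquad \nabla_{\mathbf{q}}\mathcal{H}=-\nabla_{\mathbf{q}}\mathcal{L},\qquad \frac{\partial\mathcal{H}}{\partial t}=-\frac{\partial\mathcal{L}}{\partial t},
\end{equation*}
the last being exactly the closing claim of the Proposition.

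Finally I would promote these pointwise identities to the equivalence of the two systems along a curve $t\mapsto(\mathbf{q}(t),\mathbf{p}(t))$. If \eqref{EQ:CAP1:Sistema_Lagrange} holds, its first equation says the genuine velocity $\dot{\mathbf{q}}(t)$ of the curve coincides with the inverted function $\dot{\mathbf{q}}(\mathbf{q}(t),\mathbf{p}(t),t)$, so the first identity gives $\dot{\mathbf{q}}=\nabla_{\mathbf{p}}\mathcal{H}$, while the second equation together with the second identity gives $\dot{\mathbf{p}}=\nabla_{\mathbf{q}}\mathcal{L}=-\nabla_{\mathbf{q}}\mathcal{H}$; this is \eqref{EQ:CAP1:Sistema_Hamilton}. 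Conversely, if \eqref{EQ:CAP1:Sistema_Hamilton} holds, then $\dot{\mathbf{q}}=\nabla_{\mathbf{p}}\mathcal{H}=\dot{\mathbf{q}}(\mathbf{q},\mathbf{p},t)$ is, by the very construction of the inverse, the same as $\mathbf{p}=\nabla_{\dot{\mathbf{q}}}\mathcal{L}$, and $\dot{\mathbf{p}}=-\nabla_{\mathbf{q}}\mathcal{H}=\nabla_{\mathbf{q}}\mathcal{L}$, recovering \eqref{EQ:CAP1:Sistema_Lagrange}. The step I expect to be the real pitfall is the bookkeeping: $\dot{\mathbf{q}}$ appears in three distinct roles — as the free slot of $\mathcal{L}$, as the Legendre-inverse function $\dot{\mathbf{q}}(\mathbf{q},\mathbf{p},t)$, and as the actual time derivative along a trajectory — and the cancellation in $\diff\mathcal{H}$, as well as the whole equivalence, works only because these coincide on the locus $\mathbf{p}=\nabla_{\dot{\mathbf{q}}}\mathcal{L}$, and only locally, on the domain where $\det\left(\frac{\partial^{2} L}{\partial \dot{q}_h \partial \dot{q}_k}\right)\neq 0$.
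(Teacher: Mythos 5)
La tua dimostrazione segue essenzialmente la stessa strada di quella del testo: si differenzia la definizione \eqref{EQ:CAP1:Hamiltoniana}, si osserva che i termini in $\diff\dot{\mathbf{q}}$ si cancellano proprio in virtù della prima delle \eqref{EQ:CAP1:Sistema_Lagrange}, e si confrontano i coefficienti dei differenziali indipendenti per ottenere le tre identità. L'unica differenza è che espliciti con maggior cura il ruolo del teorema della funzione implicita e il passaggio dalle identità puntuali all'equivalenza dei due sistemi lungo una traiettoria, dettagli che il testo lascia impliciti.
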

\begin{proof}
Per il fatto che il determinante hessiano è non nullo allora l'inversione della prima delle  \eqref{EQ:CAP1:Sistema_Lagrange} è possibile e dunque la funzione $\mathcal{H}$ è ben definita. Differenziando il secondo membro della \eqref{EQ:CAP1:Hamiltoniana}, si ha
\begin{equation}
\diff \mathcal{H} = \mathbf{p} \cdot \diff \dot{\mathbf{q}} + \dot{\mathbf{q}} \cdot \diff \mathbf{p} - \nabla_{\mathbf{q}}\mathcal{L} \cdot \diff \mathbf{q} - \nabla_{\dot{\mathbf{q}}}\mathcal{L} \cdot \diff \dot{\mathbf{q}} - \frac{\partial \mathcal{L}}{\partial t} \diff t,
\end{equation}
in cui, per la prima delle \eqref{EQ:CAP1:Sistema_Lagrange}, si ha che i termini che moltiplicano $\diff \dot{\mathbf{q}}$ si cancellano. Inoltre differenziando il primo membro si ottiene
\begin{equation}
\diff \mathcal{H} = \nabla_{\mathbf{q}}\mathcal{H} \cdot \diff \mathbf{q} + \nabla_{\mathbf{p}}\mathcal{H} \cdot \diff \mathbf{p} + \frac{\partial \mathcal{H}}{\partial t} \diff t ,
\end{equation}
Allora per confronto si ha che
\begin{equation}
\dot{\mathbf{q}} = \nabla_{\mathbf{p}}\mathcal{H}, \qquad \nabla_{\mathbf{q}}\mathcal{L} = -\nabla_{\mathbf{q}}\mathcal{H}, \qquad \frac{\partial \mathcal{H}}{\partial t} = - \frac{\partial \mathcal{L}}{\partial t}.
\end{equation}
\end{proof}
La funzione $\mathcal{H}$ si dice \textbf{funzione di Hamilton} o \textbf{hamiltoniana} del sistema e le equazioni \eqref{EQ:CAP1:Sistema_Hamilton} si dicono \textbf{equazioni canoniche} o \textbf{di Hamilton}. In definitiva, nel caso in cui il determinante hessiano è non nullo si ha l'equivalenza tra le equazioni di Lagrange e quelle di Hamilton.

\subsection{Stato di un sistema classico}
Si consideri il sistema di coordinate introdotto nel Paragrafo \ref{PAR:CAP1:Eq_Hamilton}, si ponga $\mathbf{x}=(\mathbf{q},\mathbf{p})$, cioè il vettore di tutte le coordinate nello spazio delle fasi e si consideri il campo vettoriale $\mathbf{v} = \mathbf{v}(\mathbf{x})$, scaturito dalle \eqref{EQ:CAP1:Sistema_Hamilton} e definito da
\begin{equation}
\mathbf{v}(\mathbf{q},\mathbf{p}) = \left( \nabla_{\mathbf{p}}\mathcal{H} , -\nabla_{\mathbf{q}}\mathcal{H}\right).
\end{equation}
Allora le equazioni di Hamilton assumono la forma
\begin{equation}\label{EQ:CAP1:Equazioni_moto}
\dot{\mathbf{x}}=\mathbf{v}(\mathbf{x}),
\end{equation}
che è un sistema di $6N$ equazioni differenziali ordinarie del primo ordine. Assegnando la condizione iniziale $\mathbf{x}(t_0)=\mathbf{x}_0$ si ha che, sotto sufficienti ipotesi di regolarità, il sistema ammette una e una sola soluzione $\mathbf{x}(t)=(\mathbf{q}(t),\mathbf{p}(t))$. Inoltre si dice che $\mathbf{x}(t)$ è lo \textbf{stato} del sistema al tempo $t$.

\subsection{Lo spazio delle fasi}
Considerando il sistema di $N$ punti materiali non vincolati, il formalismo lagrangiano introduce uno spazio a $6N$ dimensioni, con coordinate locali $q_1,\ldots,q_{3N},\dot{q}_1,\ldots,\dot{q}_{3N}$, che è chiamato \textbf{spazio degli stati}. L'introduzione del formalismo hamiltoniano porta a considerare un secondo spazio a $6N$ dimensioni, con coordinate locali $q_1,\ldots,q_{3N},p_1,\ldots,p_{3N},$, che è detto \textbf{spazio delle fasi} e normalmente si indica con $\Gamma$.

L'utilità delle coordinate hamiltoniane rispetto a quelle lagrangiane sta nel fatto che le equazioni di Hamilton hanno una struttura simmetrica. Questo permette una riscrittura compatta delle equazioni di tali equazioni. Infatti, si considerino le equazioni di Hamilton nella forma data da \eqref{EQ:CAP1:Equazioni_moto}.
Allora è possibile ottenere il campo $\mathbf{v}$ nel modo seguente. Si consideri l'hamiltoniana $\mathcal{H}=\mathcal{H}(\mathbf{q},\mathbf{p},t)$ e si calcoli il gradiente
\begin{equation}
\nabla \mathcal{H} = \left( \frac{\partial \mathcal{H}}{\partial q_1},\ldots,\frac{\partial \mathcal{H}}{\partial q_{3N}},\frac{\partial \mathcal{H}}{\partial p_1},\ldots,\frac{\partial \mathcal{H}}{\partial p_{3N}} \right).
\end{equation}
Si vuole scambiare l'insieme delle prime $3N$ componenti con quello delle ultime $3N$ e introdurre un cambiamento di segno per uno dei due insiemi. Tutto ciò può essere ottenuto mediante la matrice $\mathbb{E}$, detta \textbf{matrice simplettica standard}, definita da
\begin{equation}
\mathbb{E} = \left(
\begin{array}{cc}
0_{3N} & \mathbb{I}_{3N} \\ 
-\mathbb{I}_{3N} & 0_{3N}
\end{array}
\right),
\end{equation}
dove $0_{3N}$ e $\mathbb{I}_{3N}$ sono rispettivamente la matrice nulla e la matrice identità in $\mathbb{R}^{3N}$. In definitiva, il campo vettoriale $\mathbf{v}= \mathbf{v} (\mathbf{x})$ che definisce la dinamica nello spazio delle fasi ha la struttura
\begin{equation}\label{EQ:CAP1:Struttura_v}
\mathbf{v} (\mathbf{x}) = \mathbb{E} \cdot \nabla \mathcal{H},
\end{equation}
ovvero in componenti
\begin{equation}
v_j = \sum_{k=1}^{3N} \mathbb{E}_{jk} \frac{\partial \mathcal{H}(\mathbf{x})}{\partial x_k}.
\end{equation}
Inoltre il campo vettoriale $\mathbf{v}= \mathbf{v} (\mathbf{x})$ è solenoidale, cioè soddisfa la condizione
\begin{equation}\label{EQ:CAP1:Cond_solenoidale}
\nabla \cdot \mathbf{v} =0
\end{equation}
Infatti
\begin{align*}
\nabla \cdot \mathbf{v} & = \sum_{j=1}^{6N} \frac{\partial v_j}{\partial x_j} =
\sum_{j=1}^{6N} \frac{\partial}{\partial x_j} \left( \sum_{k=1}^{6N} \mathbb{E}_{jk} \frac{\partial \mathcal{H}}{\partial x_k} \right) =
\sum_{j=1}^{6N} \sum_{k=1}^{6N} \mathbb{E}_{jk} \frac{\partial^2 \mathcal{H}}{\partial x_j \partial x_k}=\\
& = \sum_{j=1}^{3N} \sum_{k=3N+1}^{6N} 1 \cdot \frac{\partial^2 \mathcal{H}}{\partial x_j \partial x_k} + \sum_{j=3N+1}^{6N} \sum_{k=1}^{3N} (-1) \cdot \frac{\partial^2 \mathcal{H}}{\partial x_j \partial x_k} =\\
& = \sum_{j=1}^{3N} \sum_{\tilde{k}=1}^{3N} 1 \cdot \frac{\partial^2 \mathcal{H}}{\partial q_j \partial p_{\tilde{k}}} + \sum_{\tilde{j}=1}^{3N} \sum_{k=1}^{3N} (-1) \cdot \frac{\partial^2 \mathcal{H}}{\partial p_{\tilde{j}} \partial q_k} =\\
& = \sum_{j=1}^{3N} \sum_{k=1}^{3N} \left( \frac{\partial^2 \mathcal{H}}{\partial q_j \partial p_k} - \frac{\partial^2 \mathcal{H}}{\partial p_j \partial q_k} \right) =0,
\end{align*}
in cui l'ultima uguaglianza è ottenuta grazie al teorema di Schwartz.

\subsection{Le variabili dinamiche}
Le funzioni definite sullo spazio delle fasi sono chiamate \textbf{variabili dinamiche}.

Inoltre è possibile considerare variabili dinamiche dipendenti esplicitamente dal tempo, cioè funzioni della forma $f=f(\mathbf{x},t)$, con $\mathbf{x}=(\mathbf{q},\mathbf{p})$ coordinate nello spazio delle fasi. Infatti, sia data una variabile dinamica $f(\mathbf{x},t)$ e un moto $\mathbf{x}=\mathbf{x}(t)$, soluzione delle equazioni di moto \eqref{EQ:CAP1:Equazioni_moto} in corrispondenza del dato iniziale $\mathbf{x}(t_0)=\mathbf{x}_0$. Per la formula di derivazione delle funzioni composte, si ha
\begin{equation}\label{EQ:CAP1:Derivata_composta}
\dot{f} = \frac{\partial f}{\partial t} + \mathbf{v} \cdot \nabla f.
\end{equation}
Si ha allora la seguente
\begin{definizione}
Una variabile dinamica $f(\mathbf{q},\mathbf{p},t)$ è detta \textbf{costante del moto} rispetto ad una hamiltoniana $\mathcal{H}$ se
\begin{equation}\label{EQ:CAP1:Costante_moto}
\dot{f}=0
\end{equation}
per ogni soluzione delle equazioni di Hamilton con hamiltoniana $\mathcal{H}$ (ovvero per ogni dato iniziale $\mathbf{x}_0$).
\end{definizione}
Dalla \eqref{EQ:CAP1:Derivata_composta} e dalla \eqref{EQ:CAP1:Struttura_v} si ha il seguente
\begin{teorema}\label{TEOR:CAP1:Costante_moto_ver1}
Condizione necessaria e sufficiente affinché $f$ sia costante del moto sotto la dinamica indotta da una hamiltoniana $\mathcal{H}$ è che si abbia
\begin{equation}
\dot{f} = \frac{\partial f}{\partial t} + \mathbf{v} \cdot \nabla f \qquad \mbox{dove} \qquad \mathbf{v} (\mathbf{x}) = \mathbb{E} \cdot \nabla \mathcal{H}.
\end{equation}
\end{teorema}
\`{E} utile riscrivere esplicitamente la condizione \eqref{EQ:CAP1:Costante_moto} in termini delle coordinate $(\mathbf{q},\mathbf{p})$, ottenendo
\begin{equation}
\dot{f} = \frac{\partial f}{\partial t} + \sum_{i=1}^{3N} \left( \frac{\partial f}{\partial q_i} \dot{q}_i + \frac{\partial f}{\partial p_i} \dot{p}_i \right) = \frac{\partial f}{\partial t} + \sum_{i=1}^{3N} \left( \frac{\partial f}{\partial q_i} \frac{\partial \mathcal{H}}{\partial p_i} - \frac{\partial f}{\partial p_i} \frac{\partial \mathcal{H}}{\partial q_i} \right)
\end{equation}
\begin{definizione}
Per ogni coppia ordinata di variabili dinamiche $f$ e $g$ risulta definita una nuova variabile dinamica, chiamata \textbf{parentesi di Poisson} di $f$ e $g$ e denotata $\left\lbrace f,g \right\rbrace $, mediante la relazione
\begin{equation}
\left\lbrace f,g \right\rbrace = \nabla_{\mathbf{q}}f \cdot \nabla_{\mathbf{p}}g - \nabla_{\mathbf{p}}f \cdot \nabla_{\mathbf{q}}g \equiv \sum_{i=1}^{3N} \left( \frac{\partial f}{\partial q_i} \frac{\partial g}{\partial p_i} - \frac{\partial f}{\partial p_i} \frac{\partial g}{\partial q_i} \right).
\end{equation}
\end{definizione}
Le parentesi di Poisson soddisfano le quattro proprietà racchiuse nella seguente
\begin{proposizione}\label{PROP:CAP1:Parentesi_Poisson}
Siano $f$,$g$,$g_1$,$g_2$ e $h$ delle variabili dinamiche e $\alpha_1,\alpha_2 \in \mathbb{R}$. Sono valide le seguenti proprietà.
\begin{enumerate}
\item $\left\lbrace f,g \right\rbrace = - \left\lbrace g,f \right\rbrace$
\item $\left\lbrace f,\alpha_1 g_1 + \alpha_2 g_2 \right\rbrace = \alpha_1 \left\lbrace f,g_1 \right\rbrace + \alpha_2 \left\lbrace f,g_2 \right\rbrace$ \label{PTA:CAP1:Parentesi_Poisson_pta_2}
\item $\left\lbrace f,g_1 g_2 \right\rbrace = \left\lbrace f,g_1 \right\rbrace g_2 + g_1 \left\lbrace f,g_2 \right\rbrace$
\item $\left\lbrace f, \left\lbrace g,h \right\rbrace \right\rbrace + \left\lbrace g, \left\lbrace h,f \right\rbrace \right\rbrace + \left\lbrace h, \left\lbrace f,g \right\rbrace \right\rbrace =0$
\end{enumerate}
(per ulteriori dettagli consultare \citep{DISP:Carati}).
\end{proposizione}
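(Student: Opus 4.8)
Il piano è di verificare le quattro proprietà separatamente, nell'ordine in cui sono elencate: le prime tre discendono in modo pressoché immediato dalla definizione e dalle regole elementari del calcolo differenziale, mentre tutta la difficoltà si concentra nell'identità di Jacobi.

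Per la proprietà 1 (antisimmetria) basta scambiare $f$ e $g$ nella definizione
\begin{equation*}
\{f,g\} = \sum_{i=1}^{3N} \left( \frac{\partial f}{\partial q_i} \frac{\partial g}{\partial p_i} - \frac{\partial f}{\partial p_i} \frac{\partial g}{\partial q_i} \right)
\end{equation*}
e osservare che i due addendi di ciascun termine si scambiano cambiando segno. Per la proprietà 2 (linearità nel secondo argomento) si usa la linearità delle derivate parziali, $\partial_{p_i}(\alpha_1 g_1 + \alpha_2 g_2) = \alpha_1 \partial_{p_i} g_1 + \alpha_2 \partial_{p_i} g_2$ e l'analoga per $\partial_{q_i}$, sostituendo nella definizione e riorganizzando la somma; combinando con la proprietà 1 si ottiene anche la linearità nel primo argomento. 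Per la proprietà 3 (regola di Leibniz) si applica la regola del prodotto a $\partial_{p_i}(g_1 g_2)$ e a $\partial_{q_i}(g_1 g_2)$ e si raccolgono da un lato i termini contenenti le derivate di $g_1$ e dall'altro quelli contenenti le derivate di $g_2$, riconoscendo rispettivamente $\{f,g_1\}\,g_2$ e $g_1\,\{f,g_2\}$.

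Per la proprietà 4 (identità di Jacobi) l'approccio che seguirei è di associare ad ogni variabile dinamica $g$ l'operatore differenziale del primo ordine $D_g := \sum_{i=1}^{3N}\left( \frac{\partial g}{\partial q_i}\frac{\partial}{\partial p_i} - \frac{\partial g}{\partial p_i}\frac{\partial}{\partial q_i}\right)$, cosicché $\{g,h\} = D_g h$ e quindi $\{f,\{g,h\}\} = D_f D_g h$. Sfruttando la proprietà 1 per scrivere $\{g,\{h,f\}\} = -D_g D_f h$, la somma ciclica diventa
\begin{equation*}
\{f,\{g,h\}\} + \{g,\{h,f\}\} + \{h,\{f,g\}\} = (D_f D_g - D_g D_f)\,h + \{h,\{f,g\}\}.
\end{equation*}
Il punto cruciale è il lemma secondo cui il commutatore $D_f D_g - D_g D_f$ è ancora un operatore del primo ordine (i termini con le derivate seconde si elidono con lo stesso meccanismo — coefficienti antisimmetrici negli indici contratti con derivate seconde simmetriche e teorema di Schwartz — già incontrato nel calcolo di $\nabla\cdot\mathbf{v}$) e che, più precisamente, $D_f D_g - D_g D_f = D_{\{f,g\}}$. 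Sostituendo si ottiene $(D_f D_g - D_g D_f)h = D_{\{f,g\}}h = \{\{f,g\},h\} = -\{h,\{f,g\}\}$, da cui la cancellazione.

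L'ostacolo principale sarà proprio la dimostrazione di questo lemma: espandendo i due prodotti di operatori occorre separare i termini con le derivate seconde delle coordinate da quelli con le derivate prime, e verificare con cura che questi ultimi riproducano esattamente i coefficienti $\partial_{q_i}\{f,g\}$ e $\partial_{p_i}\{f,g\}$ di $D_{\{f,g\}}$. In alternativa, volendo evitare il formalismo degli operatori, si può procedere per verifica diretta espandendo le tre doppie parentesi: ogni addendo che ne risulta contiene esattamente una derivata seconda (di $f$, di $g$ oppure di $h$) e tali addendi si cancellano a coppie grazie all'antisimmetria della definizione e al teorema di Schwartz. In ogni caso, la contabilità dei numerosi termini a doppio indice è la parte più laboriosa del conto.
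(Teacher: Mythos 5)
La tesi non dimostra questa proposizione: la enuncia e rimanda interamente a \citep{DISP:Carati}, quindi non c'è una dimostrazione ``del testo'' con cui confrontare la tua. Ciò detto, il tuo schema è corretto e segue la via standard. Le proprietà 1--3 discendono davvero in modo immediato dalla definizione, dalla linearità delle derivate e dalla regola di Leibniz, come scrivi. Per l'identità di Jacobi, l'approccio tramite gli operatori $D_g$ è quello giusto e i segni tornano con le convenzioni del testo: $D_g h=\{g,h\}$, quindi $\{f,\{g,h\}\}+\{g,\{h,f\}\}=(D_fD_g-D_gD_f)h$, e il lemma $D_fD_g-D_gD_f=D_{\{f,g\}}$ chiude il conto perché $D_{\{f,g\}}h=\{\{f,g\},h\}=-\{h,\{f,g\}\}$. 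L'unico punto in cui la tua esposizione resta a livello di schizzo è proprio questo lemma: la cancellazione dei termini con le derivate seconde di $h$ nel commutatore avviene perché i loro coefficienti (prodotti di derivate prime di $f$ e $g$) risultano \emph{simmetrici} rispetto allo scambio $f\leftrightarrow g$ una volta contratti con le derivate seconde simmetriche via Schwartz — non è esattamente lo stesso meccanismo ``coefficiente antisimmetrico contro tensore simmetrico'' del calcolo di $\nabla\cdot\mathbf{v}$, ma il principio (il commutatore di operatori del primo ordine è del primo ordine) è corretto; resta poi da verificare esplicitamente che la parte del primo ordine coincida con $D_{\{f,g\}}$, verifica che indichi ma non svolgi. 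L'alternativa per espansione diretta delle ventiquattro coppie di termini, che pure menzioni, è equivalente e altrettanto valida. In sintesi: dimostrazione corretta nell'impianto, con il lemma chiave enunciato e motivato ma non svolto nei dettagli.
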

\begin{osservazione}
Grazie a questa notazione la \eqref{EQ:CAP1:Derivata_composta} diventa
\begin{equation}
\dot{f} = \frac{\partial f}{\partial t} + \left\lbrace f,\mathcal{H} \right\rbrace
\end{equation}
e quindi per la condizione \eqref{EQ:CAP1:Costante_moto} si ha
\begin{equation}
\frac{\partial f}{\partial t} + \left\lbrace f,\mathcal{H} \right\rbrace =0.
\end{equation}
\end{osservazione}
Utilizzando le parentesi di Poisson, il Teorema \ref{TEOR:CAP1:Costante_moto_ver1} si può enunciare nel seguente modo equivalente.
\begin{teorema}\label{TEOR:CAP1:Costante_moto_ver2}
Una variabile dinamica $f$ è una costante del moto rispetto all'hamiltoniana $\mathcal{H}(\mathbf{q},\mathbf{p},t)$ se e solo se si ha
\begin{equation}
\frac{\partial f}{\partial t} + \left\lbrace f,\mathcal{H} \right\rbrace =0 \qquad \forall \mathbf{q},\mathbf{p},t.
\end{equation}
\end{teorema}

\subsection{La meccanica statistica}\label{PAR:CAP1:Meccanica_statistica}
Si consideri un sistema fisico costituito da un grande numero di punti materiali. In linea di principio è possibile conoscere lo stato del sistema al variare del tempo, che da qui in avanti sarà detto \textbf{stato microscopico}, in modo deterministico. Questo si ottiene assegnando l'hamiltoniana e lo stato iniziale, come già discusso nel Paragrafo . Tuttavia questa tecnica è impossibile da attuare nella pratica per l'enorme numero di equazioni differenziali che si dovrebbero risolvere. Si può passare allora ad una descrizione macroscopica, studiando solamente le caratteristiche globali del sistema e non quelle delle singole particelle. Le grandezze fisiche che entrano in gioco in questa descrizione sono raccolte in un tipo di stato detto \textbf{stato macroscopico}.

A titolo di esempio si consideri il problema della descrizione di una mole di una certa tipologia di gas. Per quanto riguarda la descrizione classica, cioè microscopica, assegnando l'hamiltoniana del sistema e lo stato iniziale di ciascuna particella del gas, è possibile conoscere le traiettorie di tutte le particelle che lo costituiscono. Inoltre è noto che una mole di una qualunque sostanza contiene un numero di particelle dell'ordine del numero di Avogadro,
\begin{equation}
N_A \simeq \si{\num{6.022140857(74)e23}\ \mole^{-1}}.
\end{equation}
Di conseguenza, nella pratica occorrerebbe assegnare e risolvere un numero spropositato di equazioni differenziali ordinarie, cioè dell'ordine di $\si{\num{e23}}$. Si passa allora alla descrizione termodinamica, cioè macroscopica, in cui lo stato del gas è definito assegnando due sole variabili ovvero la temperatura e il volume in cui è racchiuso.

La \textbf{meccanica statistica} nasce con lo scopo di creare una via di mezzo tra le due nozioni di stato, introducendo concetti di tipo probabilistico o statistico che porteranno alla definizione di una nuova tipologia di stato. \`{E} opportuno cominciare considerando un sistema meccanico hamiltoniano definito su una varietà dello spazio delle fasi $\Gamma$ (dove la varietà può essere anche $\mathbb{R}^{6N}$) sulla quale sia assegnata un'equazione differenziale in forma vettoriale
\begin{equation}
\dot{\mathbf{x}}=\mathbf{v}(\mathbf{x}),
\end{equation}
in cui $\mathbf{x}=(\mathbf{q},\mathbf{p})$ sono le coordinate hamiltoniane e $\mathbf{v}=\mathbf{v}(\mathbf{x})$ è un campo vettoriale assegnato sulla varietà che presenta la struttura hamiltoniana
\begin{equation}\label{EQ:CAP1:Struttura_v_stat}
\mathbf{v} = \mathbb{E} \cdot \nabla \mathcal{H},
\end{equation}
con una fissata hamiltoniana $\mathcal{H}$. 

Osservando che uno stesso stato macroscopico può essere rappresentato da un gran numero di stati microscopici, si considerino $\mathcal{N}$ copie di un particolare stato macroscopico, ciascuna descritta da un diverso stato microscopico, ovvero da un diverso punto rappresentativo dello spazio delle fasi. Sia $\diff \mathcal{N}(\mathbf{x},t)$ il numero di punti rappresentativi in un volume infinitesimo $\diff V = \prod_{i=1}^{3N} \diff q_i \diff p_i$ costruito attorno al punto $\mathbf{x}=(\mathbf{q},\mathbf{p})$. Allora è possibile definire la densità dello spazio delle fasi ponendo
\begin{equation}
\rho (\mathbf{x},t) \diff V = \lim_{\mathcal{N}\to +\infty} \frac{\diff \mathcal{N}(\mathbf{x},t)}{\mathcal{N}}.
\end{equation}
Si supponga che $\rho (\mathbf{x},t)$ sia, per ogni $t$, una densità di probabilità, cioè che siano soddisfatte le proprietà
\begin{equation}\label{EQ:CAP1:Proprieta_densita_mecc_stat}
\rho(\mathbf{x},t) \geq 0 \quad\forall \mathbf{x}\in\Gamma, \, \forall t, \qquad \int_{\Gamma} \! \rho(\mathbf{x},t) \,\diff V =1 \quad\forall t,
\end{equation}
e imponendo che la probabilità che il sistema si trovi al tempo $t$ nel dominio $D(t)$ sia data da
\begin{equation}\label{EQ:CAP1:Prob_mecc_stat}
p \left( \mathbf{x}\in D(t) \right) =\int_{D(t)} \! \rho(\mathbf{x},t) \,\diff V.
\end{equation}
Allora si dice che $\rho(\mathbf{x},t)$ rappresenta lo \textbf{stato statistico} del sistema al tempo $t$.

Una importante definizione che sarà utile nel seguito è la seguente. Sia $f=f(\mathbf{x},t)$ una variabile dinamica. Si definisce \textbf{valore medio} di $f$ la quantità
\begin{equation}
< f > = \int \! \rho(\mathbf{x},t)f(\mathbf{x},t) \,\diff V,
\end{equation}
che si può vedere come la generalizzazione della media pesata di tutti i possibili stati microscopici al tempo $t$. Inoltre, a partire dal valore medio si potrà calcolare la varianza e la distribuzione di probabilità di $f$ come usualmente accade nella teoria della probabilità.

Si vuole ottenere un'equazione che descriva l'evoluzione della densità $\rho(\mathbf{x},t)$. Allo scopo si introduca una densità di probabilità per i dati iniziali del sistema, cioè si assegni, per $t=t_0$, una funzione $\rho_0(\mathbf{x})$ soddisfacente le proprietà \eqref{EQ:CAP1:Proprieta_densita_mecc_stat} e \eqref{EQ:CAP1:Prob_mecc_stat}. Allora la probabilità che lo stato del sistema si trovi inizialmente in un arbitrario dominio $D_0\subseteq\Gamma$ è data da
\begin{equation}
p \left( \mathbf{x}\in D_0 \right) =\int_{D_0} \! \rho_0(\mathbf{x}) \,\diff V.
\end{equation}
Si supponga che il sistema sia descritto al variare del tempo $t$ da uno stato statistico, cioè da una densità di probabilità $\rho (\mathbf{x},t)$. Data la condizione di raccordo $\rho(\mathbf{x},0)=\rho_0(\mathbf{x})$, dalla \eqref{EQ:CAP1:Prob_mecc_stat}, essendo $D(t)$ l'evoluzione al tempo $t$ del dominio $D_0$, si ha la condizione di compatibilità
\begin{equation}\label{EQ:CAP1:Cond_comp}
\int_{D(t)} \! \rho(\mathbf{x},t) \,\diff V = \int_{D_0} \! \rho_0(\mathbf{x}) \,\diff V, \qquad \forall D_0, \, \forall t.
\end{equation}
Essendo il secondo membro della \eqref{EQ:CAP1:Cond_comp} una costante, fissati di volta in volta $D_0$ e $t$, si ha che la \eqref{EQ:CAP1:Cond_comp} è equivalente a
\begin{equation}
\frac{\diff}{\diff t} \int_{D(t)} \! \rho(\mathbf{x},t) \,\diff V = 0,
\end{equation}
cioè la densità $\rho$ è un invariante integrale. Inoltre tale condizione può essere espressa equivalentemente con una condizione di tipo differenziale. Cioè è possibile dimostrare (cfr.~\citep{DISP:Carati}) che la \eqref{EQ:CAP1:Cond_comp} implica che la densità deve soddisfare la seguente equazione differenziale
\begin{equation}
\frac{\partial \rho}{\partial t} + \nabla \cdot (\rho\mathbf{v}) =0,
\end{equation}
detta \textbf{equazione di continuità}. Inoltre dall'identità
\begin{equation}
\nabla \cdot (\rho \mathbf{v}) = \rho \nabla \cdot \mathbf{v} + \mathbf{v} \cdot \nabla \rho
\end{equation}
e dalla \eqref{EQ:CAP1:Struttura_v_stat} per cui si ha \eqref{EQ:CAP1:Cond_solenoidale}, segue che l'equazione di continuità assume la forma
\begin{equation}\label{EQ:CAP1:Liouville_2}
\frac{\partial \rho}{\partial t} + \mathbf{v} \cdot \nabla \rho =0,
\end{equation}
o equivalentemente
\begin{equation}\label{EQ:CAP1:Liouville}
\frac{\partial \rho}{\partial t} + \left\lbrace \rho,H \right\rbrace  =0.
\end{equation}
Grazie al Teorema \ref{TEOR:CAP1:Costante_moto_ver2} ciò è equivalente al fatto che la densità sia una costante del moto per l'hamiltoniana $\mathcal{H}$. Quando quest'ultima equazione è pensata in relazione agli stati statistici anziché alle variabili dinamiche prende il nome di \textbf{equazione di Liouville}.
\begin{osservazione}
Ritornando alle coordinate hamiltoniane, ricordando la posizione $\mathbf{x}=(\mathbf{q},\mathbf{p})$, si ha che $\mathbf{v}=(\dot{\mathbf{q}},\dot{\mathbf{p}})$ e di conseguenza l'equazione \eqref{EQ:CAP1:Liouville_2} si può scrivere
\begin{equation}\label{EQ:CAP1:Liouville_3}
\frac{\partial \rho}{\partial t} + (\mathbf{q},\mathbf{p}) \cdot \nabla \rho =0.
\end{equation}
Inoltre, ponendo
\begin{equation}
\nabla_{\mathbf{q}} \: \rho =\left( \frac{\partial \rho}{\partial q_1},\ldots,\frac{\partial \rho}{\partial q_n}\right) , \quad \nabla_{\mathbf{p}} \: \rho =\left( \frac{\partial \rho}{\partial p_1},\ldots,\frac{\partial \rho}{\partial p_n}\right)
\end{equation}
la \eqref{EQ:CAP1:Liouville_3} diventa
\begin{equation}\label{EQ:CAP1:Liouville_4}
\frac{\partial \rho}{\partial t} + \mathbf{q} \cdot \nabla_{\mathbf{q}} \: \rho + \mathbf{p} \cdot \nabla_{\mathbf{p}} \: \rho =0.
\end{equation}
\end{osservazione}


\section{Richiami di fisica atomica}
Tutta la materia è costituita da molecole che, a loro volta, sono composte da atomi. Il modello atomico più semplice consisteva nel considerare l'atomo come una piccola sfera rigida, indivisibile, avente diametro dell'ordine di $\si{\num{e-10}\ \metre}$, ed elettricamente neutra. Successivamente si cominciò a considerare l'atomo come formato da tre costituenti elementari: il \textbf{protone}, il \textbf{neutrone} e l'\textbf{elettrone}. Queste particelle sono elettricamente cariche (con diverse positività) e sono tenute insieme da forze di Coulomb. Oggi è noto che tali costituenti non sono elementari, ma sono a loro volta formati da particelle, tuttavia la visione appena descritta è sufficiente per gli scopi di questa tesi.

Nei successivi paragrafi saranno descritte alcune proprietà dei sistemi atomici, sarà illustrato il modello atomico classico e saranno descritti alcuni comportamenti che risultano anomali in questo contesto e che hanno avviato l'allontanamento dall'approccio classico per descrivere fenomeni che avvengono su scala atomica. 

Le informazioni dei paragrafi seguenti sono tratte da \citep{BOOK:French}, \citep{BOOK:Landau}, \citep{DISP:Anile} e \citep{BOOK:Mazzoldi}.

\subsection{La carica elementare}
Tra il 1831 e il 1834 M.~Faraday condusse una serie di esperimenti sull'elettrolisi. Tali esperimenti consistevano nel fare passare della corrente tra due elettrodi sospesi, immersi in varie soluzioni, e poi misurare la quantità di sostanze solide (in termini di massa) o gas (in termini di volume) liberati da ciascun elettrodo. Faraday scoprì che la quantità di ogni sostanza estratta è proporzionale alla quantità totale di carica elettrica passata attraverso la soluzione. In particolare, la quantità di carica trasportata da una mole (ovvero la quantità di sostanza che contiene lo stesso numero di molecole) è chiamata \textbf{costante di Faraday} e indicata con $F$, il cui valore è
\begin{equation}
F \simeq \si{\num{96485.33289(59)}\ \coulomb \ \mole^{-1}}.
\end{equation}
L'elettrone, come costituente di tutti i tipi di atomi, fu scoperto da J.J.~Thomson nel 1897. Grazie ad un esperimento dovuto a R.A.~Millikan si scoprì che la carica elettrica viene trasferita in multipli interi di una certa quantità di carica, detta \textbf{carica elementare} e denotata con $e$, il cui valore è
\begin{equation}
e \simeq \si{\num{1.6021766208(98)e-19}\ \coulomb}.
\end{equation}
Per questa ragione si dice che la carica elettrica è una grandezza quantizzata.

Inoltre, si supponga che $e$ sia, in valore assoluto, pari alla carica dell'elettrone e che essa sia la più piccola quantità di carica trasportata da un singolo atomo durante l'elettrolisi. Allora è possibile determinare il numero di atomi o di molecole in una mole di una certa sostanza calcolando il rapporto
\begin{equation}
N_A=\frac{F}{e}\simeq \si{\num{6.022140857(74)e23}\ \mole^{-1}},
\end{equation}
in cui la costante $N_A$ è il numero di Avogadro. Inoltre, la misura precisa del valore di $e$, insieme ad una relazione tra carica e massa dedotta da Thomson permise di calcolare il valore della \textbf{massa dell'elettrone}, indicata con $m_e$ e pari a
\begin{equation}
m_e \simeq \si{\num{9.10938356(11)e-31}\ \kilogram}.
\end{equation}

\subsection{Le linee spettrali}
Nella seconda metà del XIX secolo si affermò, attraverso una solida base sperimentale e supportata teoricamente dalle equazioni di Maxwell, la natura ondulatoria della radiazione elettromagnetica. Questo modello prevede che la radiazione elettromagnetica si propaghi sotto forma di onda, detta \textbf{onda elettromagnetica}. A ciascuna di essa si associa una frequenza $\nu$, una lunghezza d'onda $\lambda$ ed è noto che esse si propagano alla velocità $c$ pari a quella della luce nel vuoto, cioè vale
\begin{equation}
c = \si{\num{299792458}\ \metre \ \second^{-1}}.
\end{equation}
Queste tre grandezze sono legate dalla relazione $\nu = \frac{c}{\lambda}$. Inoltre, come per gli altri processi ondulatori, è possibile associare alla radiazione elettromagnetica una pulsazione $\omega=2\pi\nu$ e un vettore d'onda $\mathbf{k}$, avente modulo $\vert\mathbf{k}\vert=\frac{2\pi}{\lambda}$, come direzione e verso quelli di propagazione dell'onda.

\`{E} noto che le onde elettromagnetiche hanno origine dalla propagazione del campo elettrico e del campo magnetico. Esistono sorgenti macroscopiche di onde elettromagnetiche, come le antenne o le cariche elettriche accelerate, oppure esse possono avere origine anche da fenomeni interni all'atomo. Gli elettroni di singoli atomi liberi possono essere eccitati, cioè ricevere energia, attraverso vari meccanismi. Successivamente l'elettrone eccitato si diseccita ed emette energia sotto forma di radiazione elettromagnetica. Quindi è possibile costruire uno spettro d'emissione e si osserva che esso non è continuo, bensì è formato da righe, che prendono il nome di \textbf{linee spettrali}. A queste righe è possibile associare delle frequenze ben definite. Il primo a scoprire una regolarità tra le linee spettrali fu J.J.~Balmer nel 1885, mentre uno studio completo sullo spettro dell'idrogeno portò verso il 1890 alla formula empirica proposta da J.R.~Rydberg. Questo fatto gettò i presupposti per ipotizzare che l'energia radiante è quantizzata in unità, ciascuna delle quali sarà in seguito chiamata \textbf{fotone}.

\subsection{L'effetto fotoelettrico}
L'effetto fotoelettrico consiste nell'emissione di elettroni da parte di metalli dopo che essi vengono illuminati da un'onda elettromagnetica. Sperimentalmente si osserva che esiste un valore minimo della frequenza della luce usata, dipendente dal tipo di metallo, al di sotto del quale non sono emessi elettroni e inoltre al raggiungimento di tale soglia si ha immediatamente l'emissione. Questo comportamento è inspiegabile con la teoria classica, la quale prevede l'emissione di elettroni, indipendentemente dalla frequenza, dopo un tempo sufficiente a far assorbire agli elettroni l'energia necessaria per vincere l'energia di legame con gli atomi.

Nel 1905 A.~Einstein propose un modello per spiegare le proprietà dell'effetto fotoelettrico. L'idea di Einstein era che un'onda elettromagnetica monocromatica (cioè con frequenza $\nu$ definita) fosse in realtà costituita da un flusso di particelle elementari, dette quanti di luce e successivamente chiamate fotoni, ciascuno avente energia data dalla legge
\begin{equation}\label{EQ:CAP1:Rel_Planck_Einstein_nu}
E=h\nu,
\end{equation}
dove $h$ prende il nome di \textbf{costante di Planck} e vale
\begin{equation}
h = \si{\num{6.626070040(81)e-34}\ \joule \ \second}.
\end{equation}
Secondo l'idea di Einstein, il pacchetto di energia può essere o completamente assorbito oppure non assorbito affatto. Nel caso in cui l'energia del quanto sia uguale o superiore all'energia di legame dell'elettrone con il metallo $E_0$ (misurata sperimentalmente con l'effetto fotoelettrico), allora l'elettrone viene immediatamente espulso, trasformando in energia cinetica l'eventuale energia in eccesso. La soglia sulla frequenza osservata sperimentalmente si ottiene calcolando $\nu_0=\frac{E_0}{h}$. Tale ipotesi si rivelò in perfetto accordo con i dati sperimentali.

Si osservi che la \eqref{EQ:CAP1:Rel_Planck_Einstein_nu} prende il nome di \textbf{relazione di Planck-Einstein} e può essere espressa in termini di pulsazione scrivendo
\begin{equation}\label{EQ:CAP1:Rel_Planck_Einstein}
E=\hbar\omega,
\end{equation}
dove $\hbar$ è la \textbf{costante di Planck ridotta} che vale
\begin{equation}
\hbar=\frac{h}{2\pi} \simeq \si{\num{1.054571800(13)e-34}\ \joule \ \second}.
\end{equation}
Questa teoria pose le basi per introdurre nella fisica una nuova teoria, cioè quella che considera la luce dotata di una doppia natura, sia ondulatoria che corpuscolare.

\subsection{Il modello atomico di Rutherford}
Nel 1911 E.~Rutherford scoprì che tutta la carica positiva e la maggior parte della massa atomica è concentrata in uno spazio molto piccolo rispetto al volume totale occupato dall'atomo (meno di un decimo del raggio atomico). Questo spazio venne chiamato \textbf{nucleo} atomico. Esso si suppose costituito da un certo numero di protoni e neutroni. Inoltre egli ipotizzò che gli elettroni si muovessero attorno al nucleo, sotto l'azione elettrica attrattiva da esso esercitata, in accordo con la fisica classica. Questa descrizione contiene un paradosso. Infatti, la teoria di Maxwell afferma che particelle cariche accelerate devono emettere radiazione elettromagnetica e quindi perdere energia. Allora in particolare l'elettrone dovrebbe collassare sul nucleo, ma in realtà ciò non avviene.

Attualmente in chimica la composizione di un atomo è descritta da due numeri: il numero atomico $Z$ che indica il numero di protoni ed elettroni presenti nell'atomo; il numero di massa $A=Z+N$ che indica la somma tra il numero di protoni e il numero di neutroni che formano il nucleo dell'atomo. Secondo le ultime misure, la massa del protone è pari a 
\begin{equation}
m_p \simeq \si{\num{1.672621898(21)e-27}\ \kilogram},
\end{equation}
quella del neutrone è
\begin{equation}
m_n \simeq \si{\num{1.674927471(21)e-27}\ \kilogram}.
\end{equation}
Confrontando le masse delle particelle subatomiche si osserva quindi che la massa del protone è circa uguale a quella del neutrone, le quali sono circa $\si{\num{1836}}$ volte più grandi di quella dell'elettrone, motivo per il quale si considera la massa dell'atomo concentrata nel nucleo (che in effetti contiene oltre il $\si{\num{99.9}\percent}$ della massa dell'atomo). Riguardo alla carica si assume che la carica del protone sia positiva, pari a $+e$, la carica del neutrone sia nulla e quella dell'elettrone sia negativa, pari a $-e$. Infine, dal punto di vista delle dimensioni, quelle del protone e del neutrone sono dell'ordine di $\si{\num{e-15}}$, quella dell'elettrone è inferiore a $\si{\num{e-17}}$, motivo per il quale esso si può considerare puntiforme.

\subsection{Il modello atomico di Bohr-Sommerfeld}\label{PAR:CAP1:Modello_Bohr_Somm}
In accordo con la scoperta del nucleo, N.~Bohr propose nel 1913 il suo modello atomico sulla base dei seguenti due postulati.
\begin{enumerate}
\item Un atomo ha un numero di possibili ``stati stazionari''. In ognuno di questi stati gli elettroni compiono moti orbitali in accordo con le leggi della meccanica newtoniana, ma (contrariamente alle previsioni dell'elettromagnetismo classico) non irradiano fino a quando rimangono su orbite fisse.
\item Quando un atomo passa da uno stato stazionario ad un altro, corrispondente ad un cambiamento di orbita di uno degli elettroni dell'atomo, è emessa una radiazione sotto forma di fotone. L'energia del fotone è proprio la differenza di energia emessa tra lo stato iniziale e quello finale dell'atomo. La frequenza classica è collegata all'energia dalla relazione di Planck-Einstein \eqref{EQ:CAP1:Rel_Planck_Einstein}.
\end{enumerate}
Il concetto di stato stazionario consiste nell'ipotesi che esistano alcuni particolari stati, tra gli infiniti possibili, in cui gli elettroni possono muoversi senza emettere energia, cioè mantenendo costante la loro energia meccanica totale. Nel caso di un sistema costituito da un protone ed un elettrone (atomo di idrogeno), Bohr suppose che le orbite seguissero la cosiddetta regola di quantizzazione del momento angolare, in cui si suppone che il momento angolare dell'elettrone rispetto al nucleo debba essere un multiplo intero della quantità $\hbar=\frac{h}{2\pi}$, cioè
\begin{equation}\label{EQ:CAP1:Quant_momento_angolare}
m_e v r_n = n \hbar, \qquad n=1,2,3,\ldots
\end{equation}
dove $v$ è il modulo della velocità dell'elettrone e $r_n$ il raggio dell'orbita. La condizione classica di equilibrio si ottiene uguagliando in modulo la forza centripeta, dovuta alla rotazione dell'elettrone attorno al protone, alla forza di attrazione di Coulomb, cioè
\begin{equation}\label{EQ:CAP1:Cond_equilibrio_idrogeno}
m_e\frac{v^2}{r_n}=\frac{1}{4\pi\varepsilon_0}\frac{e^2}{r_n^2},
\end{equation}
in cui $\varepsilon_0$ è la costante dielettrica nel vuoto, che vale
\begin{equation}
\varepsilon_0 = \si{\num{8.85418781762e-12}\ \farad \metre^{-1}}.
\end{equation}
Utilizzando la \eqref{EQ:CAP1:Quant_momento_angolare} nella \eqref{EQ:CAP1:Cond_equilibrio_idrogeno}, si ottiene la seguente relazione per i raggi delle orbite
\begin{equation}
r_n = \frac{4\pi\varepsilon_0 \hbar^2}{m_e e^2}n^2 \simeq n^2 \ \si{\num{0.529e-10}\ \metre}.
\end{equation}
Inoltre è possibile calcolare l'energia totale corrispondente a ciascuna orbita come somma dell'energia cinetica dell'elettrone $E_k$ e dell'energia potenziale dell'elettrone nel campo del protone $U_e$, cioè
\begin{equation}
U_n=E_k+U_e = \frac{1}{2} \frac{e^2}{4\pi\varepsilon_0 r_n}-\frac{e^2}{4\pi\varepsilon_0 r_n} = -\frac{1}{2} \frac{e^2}{4\pi\varepsilon_0 r_n} \simeq -\frac{\si{\num{13.6}}}{n^2} \si{\electronvolt}.
\end{equation}
Anche l'energia è dunque quantizzata e si parla di livelli energetici dell'atomo di idrogeno. Si chiama \textbf{livello fondamentale} quello associato all'energia $U_1$, gli altri si chiamano \textbf{livelli eccitati}. Invece $n$ prende il nome di \textbf{numero quantico principale}.

Il modello di Bohr fu esteso da A.~Sommerfeld considerando la possibilità per l'elettrone di descrivere orbite ellittiche in cui il protone occupa uno dei fuochi. Inoltre egli impose che il momento angolare $\mathbf{L}$ dell'elettrone rispetto al protone e la sua proiezione lungo una determinata direzione fossero quantizzati in multipli interi di $\hbar$. In particolare, per ciascuna orbita, il modulo del momento angolare è dato da
\begin{equation}\label{EQ:CAP1:Quant_momento_angolare_L}
\vert\mathbf{L}\vert = (l+1)\hbar, \quad l=0,1,2,\ldots,n-1,
\end{equation}
in cui l'intero $l$ è chiamato \textbf{numero quantico azimutale}. Infine, per ciascuna orbita caratterizzata dal valore $l$, sono ammesse $2l+1$ inclinazioni del piano dell'orbita, in cui l'angolo $\alpha$ tra la normale al piano e una data direzione deve essere tale che
\begin{equation}\label{EQ:CAP1:Quant_proiezione_L}
\cos \alpha = \frac{m}{l+1}, \quad m=0,\pm 1, \pm 2, \ldots, \pm l,
\end{equation}
in cui l'intero $m$ si chiama \textbf{numero quantico magnetico}. Si osservi che ogni livello energetico risulta così suddiviso in ulteriori $2l+1$ livelli.

Il modello di Bohr-Sommerfeld spiegò molte caratteristiche dei fenomeni atomici osservati. Tuttavia questo modello non era coerente perché venivano applicati al moto dell'elettrone i risultati della meccanica classica ed introduceva opportunamente le ipotesi di quantizzazione senza una giustificazione teorica. Nacque così l'esigenza di un allontanamento radicale dai concetti della meccanica classica.

\subsection{La descrizione classica del moto di un elettrone libero}
Nella descrizione classica newtoniana, un elettrone è visto come un punto materiale dotato di massa $m_e$ e carica $-e$. Si tralasci il concetto di spin (il quale può essere spiegato solo attraverso la meccanica quantistica) e si consideri un elettrone isolato. \`{E} possibile associare all'elettrone un vettore posizione $\mathbf{x}\in\mathbb{R}^3$ e un vettore velocità $\mathbf{v}\in\mathbb{R}^3$, che in generale variano nel tempo $t$. In presenza di un campo elettrico $\mathbf{E}(\mathbf{x})$, sull'elettrone agirà una forza
\begin{equation}
\mathbf{F}=-e\mathbf{E},
\end{equation}
dove $q$ è la carica dell'elettrone in valore assoluto. Allora la seconda legge di Newton afferma che
\begin{equation}
m_e\dot{\mathbf{v}}=\mathbf{F}.
\end{equation}
Di conseguenza il moto dell'elettrone sarà descritto dalle equazioni
\begin{equation}
\left\lbrace
\begin{aligned}
\dot{\mathbf{x}}&=\mathbf{v}\\
m_e\dot{\mathbf{v}}&=-e\mathbf{E}
\end{aligned}
\right.
\end{equation}
Il moto è univocamente determinato assegnando lo stato del sistema all'istante iniziale $t_0$ tramite le condizioni
\begin{equation}
\left\lbrace
\begin{aligned}
\mathbf{x}(t_0)&=\mathbf{x}_0\\
\mathbf{v}(t_0)&=\mathbf{v}_0
\end{aligned}
\right.
\end{equation}
Poiché un campo elettrico stazionario proviene da un potenziale elettrico $V$, cioè
\begin{equation}
\mathbf{E}=-\nabla_{\mathbf{x}}V,
\end{equation}
allora l'energia totale $\mathcal{E}$ dell'elettrone può essere scritta come somma della sua energia cinetica e potenziale
\begin{equation}
\mathcal{E} = \frac{m_e\vert \mathbf{v}\vert^2}{2} -eV.
\end{equation}
Si osservi che ad ogni traiettoria è associato un valore dell'energia che si mantiene costante durante il moto. Infatti
\begin{equation}
\begin{aligned}
\dot{\mathcal{E}} & = m_e\mathbf{v}\cdot\dot{\mathbf{v}} -e\nabla_{\mathbf{x}}V\cdot\dot{\mathbf{x}}=\\
&=m_e\mathbf{v}\cdot\dot{\mathbf{v}} + e\mathbf{E}\cdot\mathbf{v}=\\
&=(m_e\dot{\mathbf{v}}-\mathbf{F})\cdot\mathbf{v}=0
\end{aligned}
\end{equation}
per cui, assegnando il valore dell'energia all'istante iniziale, si ha
\begin{equation}
\mathcal{E}(t)=\mathcal{E}(t_0)\equiv\frac{m_e\vert \mathbf{v}_0\vert^2}{2} -eV(\mathbf{x}_0) \qquad \forall t>t_0.
\end{equation}
Nella descrizione classica hamiltoniana si introduce la grandezza vettoriale $\mathbf{p}=m_e\mathbf{v}$ detta \textbf{quantità di moto} o \textbf{impulso}. Lo spazio di tutti gli stati $(\mathbf{x},\mathbf{p})\in\mathbb{R}^3\times\mathbb{R}^3$ è detto \textbf{spazio delle fasi}. Una particella elementare è rappresentata da un punto nello spazio delle fasi dotato di massa e carica. Si osservi che una traiettoria $\mathbf{x}=\mathbf{x}(t)$ nello spazio fisico individua una curva $(\mathbf{x}(t),m_e\dot{\mathbf{x}}(t))$ nello spazio delle fasi. Si è visto che ogni traiettoria determina un valore costante dell'energia totale della particella. Lo stesso rimane vero nello spazio delle fasi introducendo la funzione hamiltoniana
\begin{equation}\label{EQ:CAP1:Hamiltoniana_elettrone_libero}
\mathcal{H}(\mathbf{x},\mathbf{p})=\frac{\vert \mathbf{p}\vert^2}{2m_e}-eV(\mathbf{x}).
\end{equation}
Le equazioni classiche del moto di un elettrone saranno quindi
\begin{equation}
\left\lbrace
\begin{aligned}
\dot{\mathbf{x}} &= \nabla_{\mathbf{p}}\mathcal{H}\\
\dot{\mathbf{p}} &= -\nabla_{\mathbf{x}}\mathcal{H}
\end{aligned}
\right.
\end{equation}
Inoltre la funzione hamiltoniana è legata all'energia tramite la relazione
\begin{equation}\label{EQ:CAP1:Relazione_classica_energia}
\mathcal{E}(t)=\mathcal{H}(\mathbf{x}(t),\mathbf{p}(t))
\end{equation}
e, nel caso di forze conservative, l'energia è costante e non dipende dal tempo.


\section{Introduzione alla meccanica quantistica}
\`{E} noto che la fisica classica, rappresentata dalla meccanica newtoniana e dalle leggi di Maxwell dell'elettromagnetismo, opera molto bene in ambito macroscopico. Però quando si entra nel mondo atomico il solo utilizzo di queste teorie appare insufficiente per comprendere alcuni fenomeni che si presentano. L'insieme delle teorie fisiche che permettono di rimuovere tale mancanza prende il nome di \textbf{meccanica quantistica}.

\`{E} opportuno distinguere quando è necessario passare ad una descrizione quantistica. A tale scopo, si dice che un qualsiasi sistema fisico si comporta in modo quantistico quando l'\textbf{azione caratteristica} del sistema, cioè il numero ottenuto combinando opportunamente le dimensioni fisiche caratteristiche del sistema (Energia $\times$ tempo, o equivalentemente Quantità di moto $\times$ Lunghezza) è dell'ordine o inferiore al valore della costante di Planck $h$. La costante di Planck e il termine ``quantum'' vennero introdotti per la prima volta nel 1900 da M.~Planck nel suo lavoro sulla teoria del corpo nero.

Nei successivi paragrafi saranno illustrate alcune ulteriori caratteristiche dei sistemi atomici e alcuni concetti fondamentali della meccanica quantistica. Inoltre si introdurranno le onde materiali, il cui sviluppo permise di spiegare più accuratamente la dinamica dei sistemi atomici.

Per il contenuto dei successivi paragrafi si farà riferimento a \citep{BOOK:Mazzoldi}, \citep{BOOK:Landau}, \citep{BOOK:French} e \citep{BOOK:Dirac}.

\subsection{Il magnetismo atomico}\label{PAR:CAP1:Magnetismo_atomico}
Un ulteriore aspetto da considerare per completare la descrizione dell'atomo è lo studio delle proprietà magnetiche. Si consideri il modello di Bohr-Sommerfeld, illustrato nel Paragrafo \ref{PAR:CAP1:Modello_Bohr_Somm}. \`{E} noto dalla fisica classica (cfr.~\citep{BOOK:Mazzoldi}) che in un sistema di forze centrali, detto $\mathbf{L}$ il momento angolare del moto orbitale dell'elettrone e $\boldsymbol{\mu}$ il momento magnetico orbitale associato all'intensità di corrente scaturita dalla carica in moto dell'elettrone, si ha che $\boldsymbol{\mu}$ ed $\mathbf{L}$ sono legati dalla relazione
\begin{equation}
\boldsymbol{\mu} = - \frac{e}{2m_e}\mathbf{L}.
\end{equation}
Per il fatto che $\mathbf{L}$ è quantizzato secondo la relazione \eqref{EQ:CAP1:Quant_momento_angolare_L} allora anche il momento magnetico orbitale $\boldsymbol{\mu}$ è quantizzato secondo la relazione
\begin{equation}
\vert\boldsymbol{\mu}\vert = \frac{e\hbar}{2m_e}(l+1), \quad l=0,1,2,\ldots,n-1.
\end{equation}
Di conseguenza, per la \eqref{EQ:CAP1:Quant_proiezione_L}, anche la proiezione del momento magnetico orbitale nella direzione della normale al piano dell'orbita è quantizzata, sempre in unità $\frac{e\hbar}{2m_e}$.

Oltre al momento magnetico orbitale vi è un'altra causa di momento magnetico a livello atomico, che non ha equivalente in fisica classica. L'elettrone possiede un momento magnetico che appare proprio come una sua proprietà intrinseca ed è chiamato \textbf{spin}. Nel 1925 S.A.~Goudsmit e G.E.~Uhlenbeck postularono (cfr.~\citep{BOOK:French}) che lo spin di un elettrone è caratterizzato da un \textbf{numero quantico di spin} $s$ a due valori, cioè $s=\pm \hbar/2$, cui è associato un momento magnetico $\mathbf{S}$ di modulo pari a
\begin{equation}
\vert \mathbf{S} \vert = \pm \frac{\hbar}{2}.
\end{equation}
Seguendo questa teoria e per quanto osservato nel Paragrafo  \ref{PAR:CAP1:Modello_Bohr_Somm}, si ha che ogni livello energetico risulta dunque suddiviso in $2(2l+1)$ livelli. Si osservi infine che anche il protone ed il neutrone hanno uno spin e un momento magnetico, trascurabile nella trattazione del magnetismo atomico.

Questa teoria spiegò il risultato sperimentale dovuto a O.~Stern e W.~Gerlach, realizzato nel 1922, in cui si osservava che anche nel caso in cui l'elettrone si trova in un livello caratterizzato dal numero quantico azimutale $l=0$, come accade negli atomi di argento (\ce{Ag}), un fascio di atomi soggetti ad un campo magnetico disomogeneo si suddivide in più fasci.

\subsection{Principio di esclusione di Pauli}
I quattro numeri quantici $n$, $l$, $m$ ed $s$, descritti nei Paragrafi \ref{PAR:CAP1:Modello_Bohr_Somm} e \ref{PAR:CAP1:Magnetismo_atomico}, costituiscono lo \textbf{stato quantico} di un elettrone. Una legge fondamentale della meccanica quantistica che riguarda i numeri quantici è il cosiddetto \textbf{principio di esclusione di Pauli}, scoperto da W.~Pauli nel 1924. Esso afferma che in un atomo due elettroni non possono stare nello stesso stato quantico, cioè non possono avere tutti i numeri quantici uguali.

Di conseguenza, fissato $n$, è possibile ottenere il numero dei possibili stati quantici calcolando
\begin{equation}
\sum_{l=0}^{n-1} [2(2l+1)] = 2 \left[ 2\left( \sum_{l=0}^{n-1} l \right) +n \right] = 2 \left[ 2 \frac{(n-1)^2+(n-1)}{2} +n \right] = 2n^2.
\end{equation}

\subsection{Il modello nucleare a shell}\label{PAR:CAP1:Modello_shell}
Si definisce \textbf{shell}, una regione di spazio attorno al nucleo che può essere occupata da elettroni, corrispondente al numero quantico principale. Le shells definiscono la probabilità di trovare un elettrone in varie regioni di spazio relative al nucleo.

Le shells sono divise in \textbf{subshells} indicate con le lettere $s$, $p$, $d$ e $f$, ordinate per livelli di energia crescenti. Inoltre entro queste subshells, gli elettroni sono raggruppati in orbitali. Un \textbf{orbitale} è una regione dello spazio che contiene due elettroni ed ha uno specifico livello di energia quantizzato. La prima shell contiene un solo orbitale chiamato orbitale $1s$. La seconda shell contiene un orbitale $s$ e tre orbitali $p$ separati da angoli di $90^\circ$. Per questa ragione è possibile considerare gli orbitali $p$ orientati nelle direzioni degli assi $x$, $y$, e $z$. In definitiva, gli orbitali della seconda shell sono indicati con $2s$, $2p_x$, $2p_y$ e $2p_z$. Non saranno descritte le successive shells in quanto la configurazione elettronica degli elementi trattati in questa tesi si limita alla seconda shell.

La configurazione elettronica di un atomo è una descrizione degli orbitali occupati da ciascun elettrone. Ogni atomo ha un numero infinito di possibili configurazioni elettroniche. Per determinare quella a più bassa energia, detta \textbf{stato fondamentale}, si applicano le seguenti tre regole (cfr.~\cite{BOOK:Organic-Chemistry}).
\begin{enumerate}
\item (Principio dell'Aufbau) Gli orbitali si riempiono in ordine crescente di energia.
\item (Principio di esclusione di Pauli) Soltanto due elettroni possono occupare un orbitale ed i loro spin devono essere opposti.
\item (Regole di Hund) La prima parte afferma che, se orbitali di eguale energia sono disponibili ma non ci sono abbastanza elettroni tali da riempirli tutti completamente, allora un solo elettrone viene aggiunto a ciascuno di tali orbitali, prima di aggiungere un secondo elettrone ad uno qualsiasi di essi. La seconda parte afferma che, gli spin dei singoli elettroni negli orbitali di eguale energia devono essere allineati.
\end{enumerate}
Ricordando che gli elettroni hanno carica negativa, il riempimento parziale degli orbitali avviene in modo tale da minimizzare il più possibile la repulsione elettrostatica tra gli elettroni. Se gli elettroni sono disposti con una configurazione diversa allora la sua energia sarà più alta e si avrà uno \textbf{stato eccitato}.

Gli elettroni posizionati nelle shells più esterne sono coinvolti nella formazione dei legami chimici. Tali elettroni sono detti \textbf{elettroni di valenza} e il livello energetico in cui essi si trovano è detto \textbf{livello di valenza}. Gli elettroni che non partecipano ai legami chimici sono quelli che si trovano nelle shells più interne e sono detti elettroni di \textbf{core}. Inoltre gli atomi interagiscono in modo da riempire il livello di valenza nei modi seguenti.
\begin{enumerate}
\item Un atomo può diventare uno \textbf{ione} se perde o guadagna elettroni in modo da completare il livello più esterno. L'atomo risulterà carico positivamente se perde elettroni, carico negativamente se li guadagna. Ioni di cariche opposte si attraggono e tale attrazione genera i cosiddetti \textbf{legami ionici}.
\item Un atomo può condividere elettroni con uno o più atomi per riempire il livello più esterno. Questo tipo di legame è detto \textbf{legame covalente}.
\item I legami possono essere in parte ionici e in parte covalenti. Questi legami sono detti \textbf{legami covalenti polari}.
\end{enumerate}

\subsection{Gli orbitali atomici \texorpdfstring{$s$}{s} e \texorpdfstring{$p$}{p}}\label{PAR:CAP1:Orbitali_s_p}
Tutti gli orbitali di tipo $s$ hanno la forma di una sfera, centrata nel nucleo atomico. Una rappresentazione tridimensionale degli orbitali $1s$ e $2s$ è mostrata nella Figura \ref{FIG:CAP2:orbitali_s}.
\begin{figure}[h]
\centering
\includegraphics[width=0.5\columnwidth]{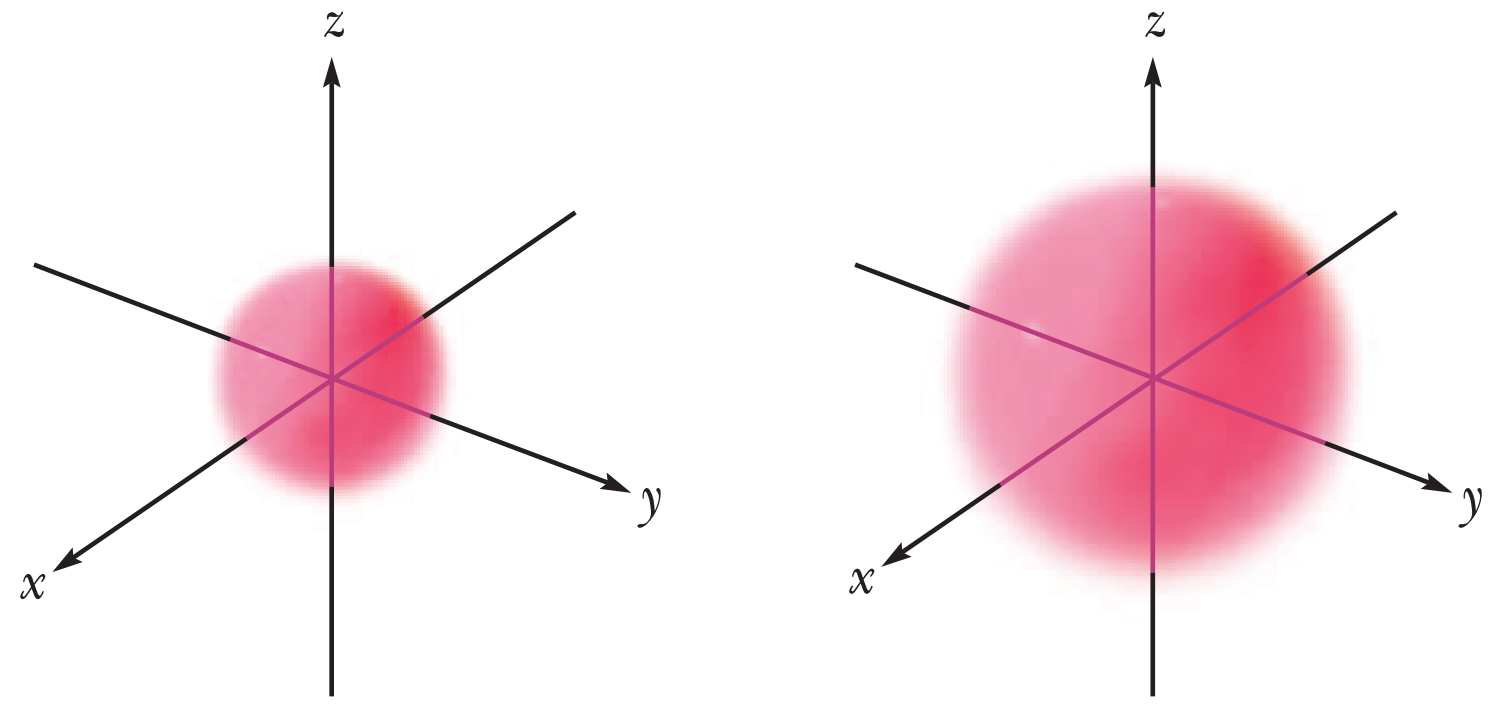}
\caption{Rappresentazione degli orbitali $1s$ (a sinistra) e $2s$ (a destra).}
\label{FIG:CAP2:orbitali_s}
\end{figure}
\FloatBarrier
Ogni orbitale $2p$ consiste di due lobi allineati con al centro il nucleo. Le direzioni dei tre orbitali di tipo $2p$ sono a due a due perpendicolari e sono indicati con $2p_x$, $2p_y$ e $2p_z$. Il segno della funzione d'onda è positivo in un lobo, zero nel nucleo e negativo nell'altro lobo. Questa rappresentazione è mostrata nella Figura \ref{FIG:CAP2:orbitali_p}.
\begin{figure}[h]
\centering
\includegraphics[width=\columnwidth]{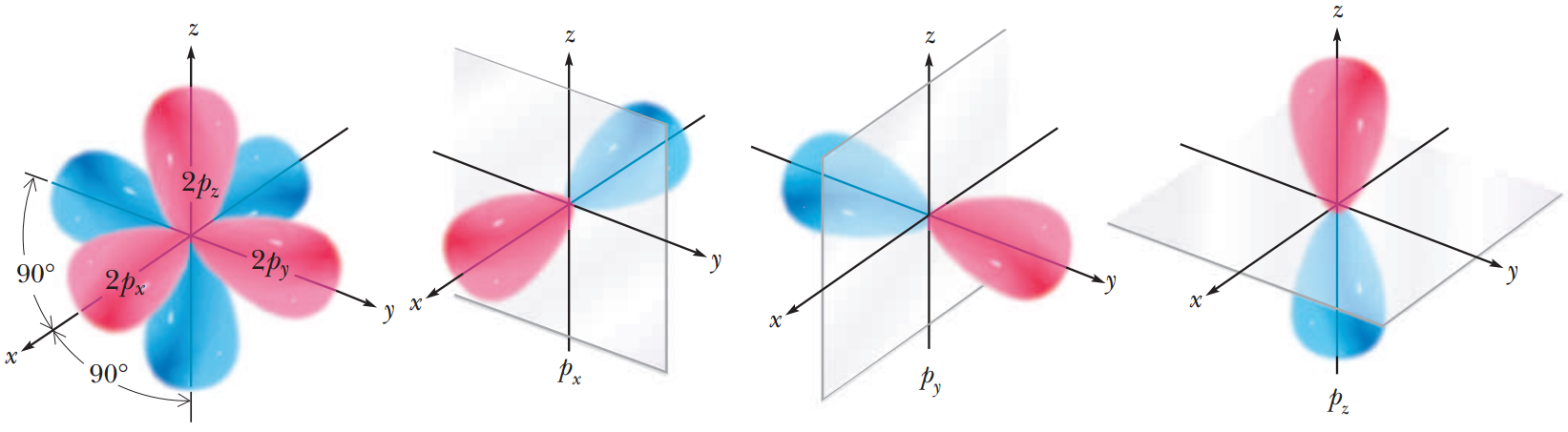}
\caption{Rappresentazione degli orbitali $2p$.}
\label{FIG:CAP2:orbitali_p}
\end{figure}
\FloatBarrier
Nei materiali gli atomi sono tenuti insieme da legami chimici, i quali sono creati dalla sovrapposizione degli orbitali di atomi adiacenti. Nel formare i legami covalenti, alcuni elementi come il carbonio (C), l'azoto (N) e l'ossigeno (O) usano gli orbitali atomici $2s$ e $2p$. Poiché gli orbitali $2p$ formano angoli di $90^\circ$ ci si aspetta che anche nei legami covalenti gli angoli tra gli atomi siano della stessa ampiezza. In realtà non è così e si trova che gli angoli nei legami covalenti sono di circa $109.5^\circ$ in molecole con un solo legame, di $120^\circ$ nelle molecole con due legami, di $180^\circ$ nelle molecole con tre legami. Secondo il modello di L.~Pauling gli orbitali atomici si possono combinare per formare nuovi orbitali atomici, detti \textbf{orbitali ibridi}, e il processo per formarli è detto \textbf{ibridizzazione}. Per gli orbitali ibridi si utilizza la seguente nomenclatura. Gli orbitali ibridi derivanti da un solo legame covalente tra due atomi sono detti $sp^3$, quelli derivanti da due legami sono detti $sp^2$ e infine quelli derivanti da tre legami sono detti $sp$.

Per quanto concerne questa tesi, si è interessati solamente agli orbitali ibridi di tipo $sp^2$, per cui adesso si tratteranno in dettaglio. Ciascuno di essi è dato dalla combinazione di un orbitale $2s$ e di due orbitali $2p$ e inoltre si trovano sempre a gruppi di tre (cfr.~\cite{BOOK:Organic-Chemistry}). Ciascun orbitale di tipo $sp^2$ consiste di due lobi, uno più grande ed uno più piccolo. Gli assi dei tre orbitali ibridi giacciono su un piano e sono diretti verso i vertici di un triangolo equilatero. In questo modo essi formano angoli di $120^\circ$. Il terzo orbitale di tipo $2p$ non è coinvolto nell'ibridizzazione e rimane con i suoi due lobi che si posizionano perpendicolarmente al piano degli orbitali ibridi $sp^2$. La Figura \ref{FIG:CAP2:orbitali_sp2} mostra quanto appena descritto. Infine, un atomo che possiede tre orbitali ibridi $sp^2$ e un solo orbitale di tipo $2p$ si dice essere ibridizzato $sp^2$.
\begin{figure}[h]
\centering
\includegraphics[width=0.8\columnwidth]{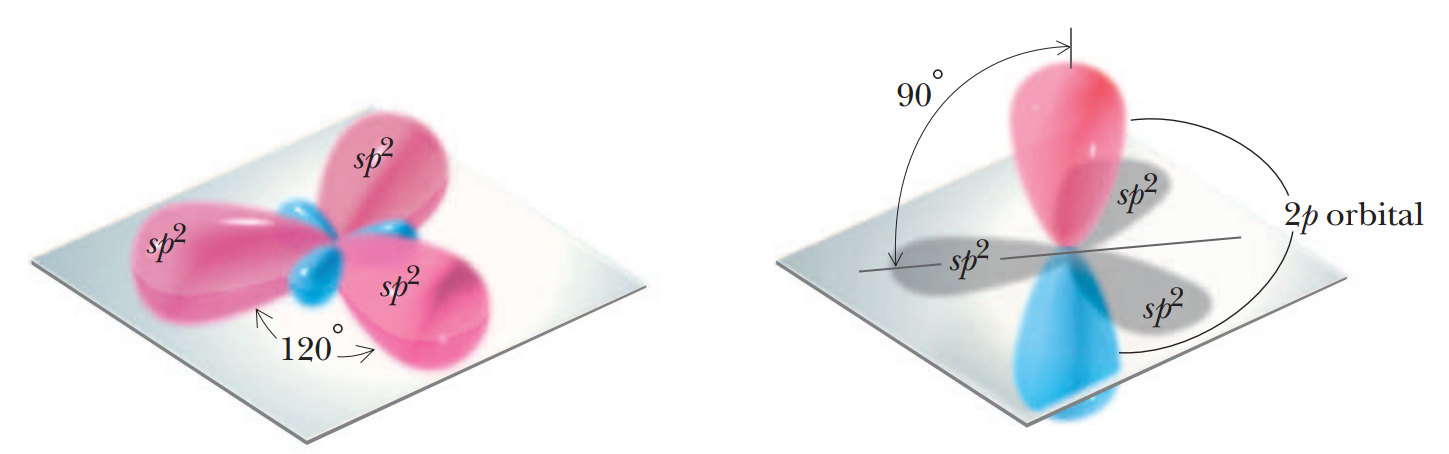}
\caption{Rappresentazione degli orbitali $sp^2$.}
\label{FIG:CAP2:orbitali_sp2}
\end{figure}
\FloatBarrier

\subsection{Le onde materiali}\label{PAR:CAP1:Onde_materiali}
Tra le onde luminose e i fotoni esiste una corrispondenza, nota come dualismo onda-corpuscolo. Infatti, un fotone è una particella che, pur essendo priva di massa, possiede un impulso $\mathbf{p}$ avente la stessa direzione e verso di quelli di propagazione dell'onda luminosa. Poiché dalla teoria della relatività ristretta vale la relazione $E=\vert \mathbf{p} \vert c$ (cfr.~\citep{DISP:Papa}) e utilizzando la \eqref{EQ:CAP1:Rel_Planck_Einstein_nu}, il modulo di $\mathbf{p}$ si ottiene calcolando
\begin{equation}
\vert\mathbf{p}\vert=\frac{E}{c}=\frac{h\nu}{c}=\frac{h}{\lambda}.
\end{equation}
Nel 1924 L.~de~Broglie ipotizzò che era possibile estendere il dualismo onda-corpuscolo anche alla materia imponendo che ogni particella di momento $\mathbf{p}$ avesse una lunghezza d'onda associata $\lambda$, data da
\begin{equation}
\lambda=\frac{h}{\vert\mathbf{p}\vert}.
\end{equation}
A partire da questa ipotesi, è possibile associare ad ogni particella elementare un'onda piana monocromatica, detta \textbf{onda materiale} che si propaga nella direzione di $\mathbf{p}$. Anche per un'onda materiale si definiscono il vettore d'onda $\mathbf{k}$ e la pulsazione $\omega$. Il vettore $\mathbf{k}$ ha direzione e verso di $\mathbf{p}$ e modulo pari a $\vert \mathbf{k} \vert=2\pi / \lambda$; la pulsazione è data dalla relazione di Planck-Einstein \eqref{EQ:CAP1:Rel_Planck_Einstein}, cioè $\omega=E/ \hbar$ e da cui è possibile ricavare la frequenza essendo, per definizione, $\omega=2\pi\nu$. Di conseguenza, il legame tra la quantità di moto della particella e il vettore d'onda dell'onda materiale associata è dato dalla relazione
\begin{equation}\label{EQ:CAP1:Legge_di_de_Broglie}
\mathbf{p}=\hbar \mathbf{k},
\end{equation}
nota come \textbf{legge di de~Broglie}. Quindi le onde di de~Broglie hanno la forma
\begin{equation}\label{EQ:CAP1:Onda_de_Broglie_k_omega}
Ae^{i(\mathbf{k}\cdot\mathbf{x}-t\omega)},
\end{equation}
dove $A$ è l'ampiezza dell'onda, oppure in termini di impulso e pulsazione
\begin{equation}\label{EQ:CAP1:Onda_de_Broglie_p_E}
Ae^{\frac{i}{\hbar}(\mathbf{p}\cdot\mathbf{x}-tE)}.
\end{equation}

Nel 1927, si ebbe la conferma sperimentale dell'esistenza di onde associate al comportamento di elettroni, grazie a due distinti esperimenti da parte di C.~Davisson e L.~Germer e, separatamente, G.P.~Thompson. Essi osservarono un fenomeno di diffrazione, tipico delle onde elettromagnetiche, ottenuto invece da fasci di elettroni separati su cristalli (per ulteriori informazioni sugli esperimenti appena citati si consulti \citep{BOOK:French}).

Nel 1926 E.~Schr\"{o}dinger modificò l'ipotesi di de~Broglie associando ad una particella materiale, non un'onda piana come la \eqref{EQ:CAP1:Onda_de_Broglie_p_E}, bensì un \textbf{pacchetto d'onde} costituito dalla sovrapposizione di onde piane con vettori d'onda vicini. In tal modo la perturbazione ha un'estensione limitata. Tuttavia, questa spiegazione non è corretta in quanto un pacchetto d'onde tende a dissiparsi molto rapidamente.

L'interpretazione fisica corrente è quella che fa uso del concetto di probabilità. Secondo questa teoria, lo stato di un sistema quantistico costituito da $N$ particelle è descritto da una funzione (in generale complessa), indicata solitamente con $\psi$ e chiamata \textbf{funzione d'onda}, dipendente dalle $3N$ coordinate (indicate in forma compatta con $\mathbf{x}\in\mathbb{R}^{3N}$) e dal tempo $t$. Inoltre, il modulo quadro di questa funzione fornisce la distribuzione delle probabilità dei valori delle coordinate. Questa interpretazione è detta \textbf{interpretazione di Copenaghen} e fu sviluppata tra il 1926 e il 1930 grazie ai lavori di M.~Born, W.~Heisenberg e N.~Bohr.

\subsection{Alcuni principi fondamentali}\label{PAR:CAP1:Principi_fond}
Un principio fondamentale della meccanica quantistica è detto \textbf{principio di sovrapposizione degli stati}. Esso consiste nell'ipotesi che fra gli stati esistano relazioni particolari tali che, se un sistema si trova in uno stato, esso possa venire considerato come facente parte contemporaneamente di due o più altri stati. Viceversa, due o più stati possono essere sovrapposti per formarne uno nuovo. Per comprendere meglio cosa comporta questo principio si considerino due stati $A$ e $B$, tali che esista un'osservazione che, effettuata su $A$ porti ad un risultato $a$, mentre, effettuata su $B$ conduca ad un risultato $b$. Allora, il risultato di un'osservazione eseguita sul sistema nello stato risultante dalla sovrapposizione di $A$ e $B$ sarà a volte $a$ ed a volte $b$, in accordo con una legge di probabilità che dipende dai pesi di $A$ e $B$. Questa probabilità di ottenere un certo risultato è tutto quello che la teoria quantistica permette di calcolare.

Nella meccanica quantistica il concetto di traiettoria della particella non esiste. Questo concetto si può esprimere mediante il \textbf{principio di indeterminazione}, scoperto da W.~Heisenberg nel 1927. Il punto di partenza è una proprietà che assume un processo di misura in meccanica quantistica, cioè il fatto che esso influisce sempre sull'elettrone, oggetto della misura. Inoltre, più precisa è la misura più forte è l'influenza da essa esercitata. Di conseguenza, mentre nella meccanica classica una particella possiede, ad ogni istante, posizione e velocità determinate, ciò non accade nella meccanica quantistica. Se, effettuando una misura, all'elettrone si assegna una posizione determinata, esso allora non ha, in generale, nessuna velocità determinata. Viceversa, se l'elettrone è dotato di una velocità determinata, allora esso non potrà avere una posizione determinata nello spazio. Pertanto si può affermare che la posizione e la velocità dell'elettrone sono grandezze esistenti non simultaneamente.

Si consideri un generico sistema atomico, composto da particelle che interagiscano secondo leggi di forza assegnate. Risulteranno possibili diversi moti di tali particelle e ciascuno dei quali si chiamerà \textbf{stato} del sistema. In meccanica classica, per descrivere completamente uno stato di un sistema fisico occorre assegnare la posizione e la velocità all'istante iniziale. In questo modo è possibile determinare le equazioni del moto e di conseguenza il comportamento del sistema in tutti gli istanti successivi. Nella meccanica quantistica una tale descrizione è impossibile perché le coordinate e le velocità non esistono simultaneamente. Di conseguenza, lo stato di un sistema quantistico è determinato da un numero minore di grandezze rispetto alla meccanica classica e ciò non può essere sufficiente a descrivere completamente il moto. Il compito della meccanica quantistica è soltanto quello di determinare la probabilità di avere un certo risultato di una misura.

Data una particella elementare sia $\mathbf{x}=(x_1,x_2,x_3)$ la sua posizione e $\mathbf{p}=(p_1,p_2,p_3)$ il suo impulso. Grazie ad un esperimento ideale sulla diffrazione degli elettroni (cfr.~), si determinano le seguenti relazioni
\begin{equation}\label{EQ:CAP1:Rel_Ind_Heisenberg}
\Delta x_i \cdot \Delta p_i \simeq h, \qquad i=1,2,3,
\end{equation}
dove $\Delta x_i$ e $\Delta p_i$ sono l'incertezza sulla misura della coordinata $i$-esima rispettivamente della posizione e dell'impulso, $h$ è la costante di Planck. Le \eqref{EQ:CAP1:Rel_Ind_Heisenberg} sono note come \textbf{Relazioni di Indeterminazione di Heisenberg} ed esprimono l'impossibilità di conoscere contemporaneamente con precisione arbitrariamente grande una coordinata della particella e la rispettiva componente dell'impulso.\\
In meccanica quantistica ad ogni particella elementare è associata un'onda detta \textbf{onda materiale} di vettore d'onda $\mathbf{k}$ e pulsazione $\omega$.


\section{Gli operatori lineari}
Per formalizzare matematicamente la meccanica quantistica è necessario tenere presenti alcune proprietà degli operatori lineari, i quali rivestono un ruolo fondamentale in tutta la teoria. Oltre alle definizioni e proprietà generali, si tratteranno alcune classi di operatori molto utilizzati e si concluderà descrivendone le proprietà spettrali.

Per i contenuti dei successivi paragrafi si farà riferimento a \citep{DISP:Fonte}.

\subsection{Prime definizioni}
\begin{definizione}
Un'applicazione lineare $A:X\longrightarrow Y$ è chiamata \textbf{operatore lineare} (o semplicemente \textbf{operatore}) da $X$ a $Y$. Si scriverà $Ax=y$, dove $x\in X$ e $y\in Y$. In particolare, se $X=Y$ si dice che $A$ è un operatore in $X$.
\end{definizione}
Si osservi che esistono operatori che possono essere definiti in tutto lo spazio $X$  ed operatori che possono essere definiti in un sottoinsieme di $X$. In ogni caso, l'insieme di definizione di un operatore $A$ è detto \textbf{dominio di definizione} dell'operatore $A$ e viene indicato con $D(A)$; l'insieme dei vettori $Ax$ con $x\in D(A)$ è detto \textbf{codominio} dell'operatore $A$ ed è indicato con $R(A)$.
Sull'insieme degli operatori lineari si possono definire le seguenti operazioni.
\begin{itemize}
\item Somma di due operatori
\begin{equation*}
(A+B)x=Ax+Bx, \qquad \forall x \in D(A) \cap D(B).
\end{equation*}
\item Prodotto di un numero per un operatore
\begin{equation*}
(\alpha A)x=\alpha (Ax), \qquad \forall \alpha \in \mathbb{C}, \; \forall x \in D(A).
\end{equation*}
\item Prodotto di due operatori
\begin{equation*}
(AB)x=A(Bx), \qquad \forall x \in D(B) \; \mbox{tale che} \; Bx \in D(A).
\end{equation*}
\end{itemize}
Si può verificare che l'insieme degli operatori lineari da $X$ a $Y$ costituisce uno spazio vettoriale su $\mathbb{C}$ con le operazioni di somma e prodotto per un numero sopra definite. \`{E} opportuno illustrare alcuni esempi notevoli.
\begin{esempio}
Le matrici $m \times n$ ad elementi complessi $a_{ij}$ con $i=1,2,\ldots,m$, $j=1,2,\ldots,n$ sono degli operatori lineari da $\mathbb{C}^n$ a $\mathbb{C}^m$. La legge di applicazione è
\begin{equation}
\sum_{j=1}^n a_{ij} \xi_j = \eta_i,
\end{equation}
dove $(\xi_1,\xi_2,\ldots,\xi_n)$, $(\eta_1,\eta_2,\ldots,\eta_m)$ sono, rispettivamente, elementi di $\mathbb{C}^n$ e $\mathbb{C}^m$.
\end{esempio}
\begin{esempio}
Una matrice infinita $A$ con elementi $a_{ij}\in\mathbb{C}$ con $i,j=1,2,\ldots$, può essere vista come un operatore lineare da un certo spazio $l^p$ ad un altro spazio $l^s$, $1 \leq p, s \leq \infty$ e con legge di applicazione
\begin{equation}
\sum_{j=1}^\infty a_{ij} \xi_j = \eta_i,
\end{equation}
dove $(\xi_1,\xi_2,\ldots)\in l^p$, $(\eta_1,\eta_2,\ldots)\in l^s$.
\end{esempio}
\begin{esempio}
Una funzione $k(x,y)$ a valori complessi, misurabile e definita per $x \in E$ ed $y \in F$, dove $E, F \subset \mathbb{R}^n$, definisce un operatore lineare da un certo spazio $L^p(F)$ ad un certo spazio $L^s(E)$, $1 \leq p, s \leq \infty$ con legge di applicazione
\begin{equation}
\int_{F} \! k(x,y)f(y) \, \diff y = g(x).
\end{equation}
Tale operatore è detto \textbf{operatore integrale} e la funzione $k(x,y)$ è chiamata \textbf{nucleo} dell'operatore.
\end{esempio}
\begin{esempio}
Un altro importante esempio è costituito dagli operatori differenziali, il più semplice dei quali è l'operatore
\begin{equation}
A\cdot = \frac{\diff}{\diff t}\cdot,
\end{equation}
che può essere visto come un operatore da $L^2([a,b])$ ad $L^2([a,b])$. Si osservi che tale operatore non può essere definito su tutto $L^2([a,b])$, perché questo spazio contiene anche funzioni non derivabili.
\end{esempio}

\subsection{Proprietà degli operatori lineari}
Gli operatori lineari hanno proprietà differenti a seconda delle ipotesi che possono essere effettuate sugli spazi su cui sono definiti. Si considerino in primo luogo gli operatori lineari definiti su spazi di dimensione finita. \`{E} possibile dimostrare i risultati che seguono.
\begin{proposizione}
Siano $X$ e $Y$ due $\mathbb{C}$-spazi vettoriali, aventi rispettivamente dimensione $n$ ed $m$. Allora ogni trasformazione lineare $\hat{A}:X\longrightarrow Y$ è rappresentata da
\begin{equation}
\sum_{j=1}^n a_{ij} \xi_j = \eta_i, \quad i=1,2,\ldots,m,
\end{equation}
dove $(\xi_1,\xi_2,\ldots,\xi_n)\in\mathbb{C}^n$, $(\eta_1,\eta_2,\ldots,\eta_m)\in\mathbb{C}^m$ ed i coefficienti $a_{ij}$ con $i=1,2,\ldots,m$, $j=1,2,\ldots,n$ sono univocamente determinati.
\end{proposizione}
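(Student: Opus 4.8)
The plan is to reduce the statement to the familiar matrix representation of a linear map, once bases have been fixed. First I would fix a basis $\{e_1,\dots,e_n\}$ of $X$ and a basis $\{f_1,\dots,f_m\}$ of $Y$; these exist precisely because $X$ and $Y$ are finite-dimensional. Every $x\in X$ then admits a unique expansion $x=\sum_{j=1}^n\xi_j e_j$, and the assignment $x\mapsto(\xi_1,\dots,\xi_n)$ is the coordinate isomorphism $X\cong\mathbb{C}^n$; likewise $Y\cong\mathbb{C}^m$ via $y\mapsto(\eta_1,\dots,\eta_m)$. Under these identifications the claim is exactly that $\hat A$ acts on coordinates through a matrix.

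Next I would define the coefficients. Since $\hat A e_j\in Y$, it has a unique expansion $\hat A e_j=\sum_{i=1}^m a_{ij}f_i$, and this fixes the numbers $a_{ij}\in\mathbb{C}$. Using linearity of $\hat A$, for $x=\sum_j\xi_j e_j$ one computes
\begin{equation*}
\hat A x=\sum_{j=1}^n\xi_j\,\hat A e_j=\sum_{j=1}^n\xi_j\sum_{i=1}^m a_{ij}f_i=\sum_{i=1}^m\Bigl(\sum_{j=1}^n a_{ij}\xi_j\Bigr)f_i,
\end{equation*}
so, reading off the $i$-th coordinate $\eta_i$ of $\hat A x$, one obtains $\eta_i=\sum_{j=1}^n a_{ij}\xi_j$ for $i=1,\dots,m$, which is the asserted representation.

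For uniqueness, suppose $(a'_{ij})$ also represents $\hat A$ in the same pair of bases. Applying the representation to the coordinate vector of $e_k$, i.e. $\xi_j=\delta_{jk}$, gives $a_{ik}=\eta_i=a'_{ik}$ for every $i$ and every $k$, so the matrix is determined. I would state explicitly that this uniqueness is relative to the chosen bases (a change of basis replaces the matrix by a conjugate one), so that ``univocamente determinati'' is to be read in that sense. I do not expect any genuine obstacle: the only points needing care are keeping the two coordinate isomorphisms separate and noting that both the existence of the representation and the uniqueness argument rest on finite-dimensionality and on a fixed choice of bases.
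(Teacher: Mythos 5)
Your proof is correct and is the standard argument; the paper itself states this proposition without proof (deferring to \citep{DISP:Fonte}), so there is no alternative route to compare against. Your explicit remark that the uniqueness of the $a_{ij}$ is relative to the chosen bases is exactly the reading the paper intends, as confirmed by the sentence immediately following the proposition ("la matrice $A=\left\lbrace a_{ij} \right\rbrace$ corrispondente ad un dato operatore $A$, \emph{per basi fissate}, è unica").
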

Di conseguenza, la matrice $A=\left\lbrace a_{ij} \right\rbrace$ corrispondente ad una dato operatore $A$, per basi fissate, è unica e si chiama \textbf{rappresentazione matriciale} dell'operatore $A$.

Seguono alcune importanti definizioni che saranno utili nel seguito.
\begin{definizione}
Sia $A:X\longrightarrow Y$ un operatore iniettivo con $D(A)\subseteq X$, $R(A)\subseteq Y$. Allora, se $Ax=y$, l'operatore $A^{-1}$ che applicato ad $y$ dà $x$ è detto \textbf{operatore inverso} di $A$.
\end{definizione}
\`{E} possibile dimostrare che l'operatore inverso è un operatore lineare e che un operatore $A$ ammette inverso se e solo se l'equazione $Ax=0$ ha come unica soluzione $x=0$.

Un'altra proprietà degli operatori lineari è l'idempotenza.
\begin{definizione}
Sia $X$ uno spazio vettoriale. Un operatore $A:X\longrightarrow X$ si dice \textbf{idempotente} se $A^2=A$.
\end{definizione}

Gli operatori lineari si possono classificare in due famiglie: gli operatori limitati e gli operatori non limitati.
\begin{definizione}
Siano $(X,\norm{\cdot}_X)$ e $(Y,\norm{\cdot}_Y)$ due spazi di Banach e sia $A$ un operatore da $X$ a $Y$. Si dice che l'operatore $A$ è \textbf{limitato} se esiste un numero $M$ tale che
\begin{equation*}
\norm{Ax}_Y \leq M \norm{x}_X, \qquad \forall \in D(A).
\end{equation*}
\end{definizione}
Se $(X,\norm{\cdot}_X)$ e $(Y,\norm{\cdot}_Y)$ sono spazi di Banach allora l'insieme degli operatori da $X$ a $Y$ limitati, indicato con $B(X,Y)$ è uno spazio vettoriale. Un'altra proprietà importante degli operatori è la loro continuità.
\begin{definizione}
Siano $(X,\norm{\cdot}_X)$ e $(Y,\norm{\cdot}_Y)$ due spazi di Banach e sia $A:X\longrightarrow Y$ un operatore. Allora $A$ è detto \textbf{continuo nel punto} $X_0 \in D(A)$, se
\begin{equation}
\lim_{n \to \infty}{\norm{x_n-x_0}_X}=0
\end{equation}
implica
\begin{equation}
\lim_{n \to \infty}{\norm{Ax_n-Ax_0}_Y}=0.
\end{equation}
Inoltre, se $A$ risulta continuo in ogni punto del suo dominio sarà detto semplicemente \textbf{continuo}.
\end{definizione}
Si ha inoltre il seguente risultato.
\begin{teorema}
Siano $(X,\norm{\cdot}_X)$ e $(Y,\norm{\cdot}_Y)$ due spazi di Banach. Un operatore $A:X\longrightarrow Y$ è continuo se e solo se esso è limitato.
\end{teorema}
Una classe notevole di operatori lineari è quella costituita dai funzionali.
\begin{definizione}
Sia $X$ uno spazio normato. Un'applicazione lineare $f:X\longrightarrow\mathbb{C}$ è detta \textbf{funzionale lineare} su $X$.
\end{definizione}
I funzionali lineari sono particolari operatori lineari che hanno come spazio immagine $\mathbb{C}$. L'insieme di tutti i funzionali lineari e continui definiti su uno spazio normato $X$ forma uno spazio vettoriale normato e completo. Esso è detto \textbf{spazio duale} di $X$. Vi è un importante teorema di rappresentazione dei funzionali lineari e continui definiti su uno spazio di Hilbert.
\begin{teorema}[Riesz]\label{TEOR:CAP1:Riesz}
Sia $H$ uno spazio di Hilbert. Allora per ogni funzionale $f(x)$ lineare e continuo definito su tutto $H$ esiste un unico elemento $x_f$ tale che
\begin{equation}
f(x)=\left\langle x_f \vert x \right\rangle , \quad \forall x \in H,
\end{equation}
e inoltre $\norm{f}=\norm{x_f}$.
\end{teorema}
Si procede adesso definendo l'operatore aggiunto di un operatore limitato ed il conseguente concetto di operatore autoaggiunto. Successivamente si daranno le definizioni analoghe per il caso di operatori non limitati.

Sia $H$ uno spazio di Hilbert ed $A:H\longrightarrow H$ un operatore limitato, definito su tutto $H$. Fissato ad arbitrio un elemento $y\in H$, si definisce il funzionale $f$ su $H$ ponendo
\begin{equation}
f(x) = \left\langle y \vert Ax \right\rangle, \quad \forall x\in H.
\end{equation}
Si può verificare che il funzionale $f$ è lineare e continuo. Per il Teorema \ref{TEOR:CAP1:Riesz}, esiste un unico elemento $y^* \in H$ tale che
\begin{equation}
\left\langle y \vert Ax \right\rangle = \left\langle y^* \vert x \right\rangle, \quad \forall x\in H.
\end{equation}
Questa relazione stabilisce una corrispondenza $y \mapsto y^*$, definita su tutto $H$, che risulta essere lineare. Tale corrispondenza è chiamata \textbf{operatore aggiunto} di $A$ e si denota con $A^*$. Di conseguenza, osservando che $A^* y = y^*$, si ha la relazione
\begin{equation}
\left\langle y \vert Ax \right\rangle = \left\langle A^* y \vert x \right\rangle, \quad \forall x,y\in H,
\end{equation}
la quale permette di determinare esplicitamente $A^*$. Inoltre, se risulta $A^*=A$ allora l'operatore $A$ è detto \textbf{autoaggiunto}. Per completare, è opportuno enunciare il seguente criterio relativo agli operatori autoaggiunti.
\begin{teorema}
Condizione necessaria e sufficiente affinché un operatore limitato $A$ con $D(A) = H$ sia autoaggiunto è che $\left\langle x \vert Ax \right\rangle  $ sia reale $\forall x \in H$.
\end{teorema}

Sotto certe condizioni, si vedrà che è possibile definire l'aggiunto anche di un operatore non limitato. Infatti, sia $A:H \longrightarrow H$ un operatore non limitato con dominio $D(A)$ denso in $H$. Si consideri l'insieme $D^*$ dei vettori $y\in H$ per i quali esiste un vettore $y^* \in H$ tale che
\begin{equation}\label{EQ:CAP1:Def_autoaggiunto_1}
\left\langle y \mid Ax \right\rangle = \left\langle y^* \mid x \right\rangle, \qquad \forall x \in D(A).
\end{equation}
Si può dimostrare che il vettore $y^*$ è univocamente determinato da $y$. Di conseguenza, si ottiene una corrispondenza $y \mapsto y^*$. Inoltre si può verificare che tale corrispondenza è lineare. Quindi resta definito un operatore lineare $A^*$ con dominio $D(A^*)=D^*$. Anche in questo caso, l'operatore $A^*$ è detto \textbf{aggiunto} dell'operatore $A$ e, se risulta $A^*=A$, l'operatore $A$ si dice \textbf{autoaggiunto}.

\subsection{Gli operatori di proiezione}
Un'importante categoria di operatori limitati è costituita dagli operatori di proiezione. Per definirli, è essenziale richiamare il concetto di proiezione di un vettore su sottospazi complementari ortogonali di uno spazio di Hilbert. Si consideri in primo luogo uno spazio di Hilbert infinito-dimensionale.
\begin{definizione}
Sia $H$ uno spazio di Hilbert ad infinite dimensioni ed $M$ un suo sottospazio chiuso. Allora l'insieme
\begin{equation}
M^\perp = \left\lbrace x\in H \mid \left\langle x \vert y \right\rangle=0, \quad \forall y\in M  \right\rbrace 
\end{equation}
è chiamato \textbf{complemento ortogonale} del sottospazio $M$.
\end{definizione}
Si può dimostrare che $M^\perp$ è un sottospazio chiuso di $H$. Si ha il risultato seguente.
\begin{teorema}\label{TEOR:CAP1:proiezione_ortogonale}
Sia $H$ uno spazio di Hilbert ad infinite dimensioni ed $M$ un suo sottospazio chiuso. Allora, ogni vettore $x \in H$ è rappresentabile univocamente nella forma
\begin{equation}
x=h+h',
\end{equation}
dove $h\in M$ e $h' \in M^\perp$.
\end{teorema}
Sia $\left\lbrace e_i \right\rbrace_{i=1}^\infty $ un sistema ortonormale completo in $M$. Allora il vettore $h$ assume la forma seguente
\begin{equation}\label{EQ:CAP1:proiezione_ortogonale}
h = \sum_{i=1}^\infty e_i \left\langle e_i \vert x \right\rangle  .
\end{equation}
Tale vettore è detto \textbf{proiezione ortogonale} di $x$ su $M$. Analogo discorso vale per definire la proiezione ortogonale di $x$ su $M^\perp$. Inoltre, lo spazio $H$ si può decomporre nel modo seguente
\begin{equation}
H=M\oplus M^\perp,
\end{equation}
dove lo spazio $M^\perp$, in virtù del Teorema \ref{TEOR:CAP1:proiezione_ortogonale}, è univocamente determinato a partire da $M$. Le relazioni
\begin{equation}
P_M x = h, \quad P_{M^\perp} = h', \quad \forall x\in H
\end{equation} 
definiscono gli operatori $P_M$ e $P_{M^\perp}$, detti \textbf{operatori di proiezione ortogonali}, rispettivamente su $M$ e $M^\perp$.

Viceversa, dati due operatori $P_M$ e $P_{M^\perp}$ che siano lineari, limitati, autoaggiunti, idempotenti e tali che
\begin{equation}
P_M + P_{M^\perp} =I, \qquad P_M + P_{M^\perp} = 0,
\end{equation}
allora essi sono operatori di proiezione ortogonali, proiettando rispettivamente su $M=R(P_M)$ e $M^\perp = R(P_{M^\perp})$. Inoltre si ha la decomposizione $H=R(P_M) \oplus R(P_{M^\perp})$.
\begin{osservazione}
I risultati sopra provati valgono in particolare anche in spazi di Hilbert finito-dimensionali. Per questo, basta notare che in tali spazi ogni sottospazio $M$ è chiuso e per definire la proiezione ortogonale basta prendere un sistema ortonormale $\left\lbrace e_i \right\rbrace_{i=1}^k $, dove $k$ è la dimensione di $M$.
\end{osservazione}
Ora seguirà la definizione di proiezione di un vettore su sottospazi complementari non necessariamente ortogonali. Tale definizione può dunque essere applicata anche in spazi di Banach. Sia $X$ uno spazio di Banach e siano $M_1, M_2, \ldots, M_m$ sottospazi chiusi di $X$ tali che
\begin{equation}\label{EQ:CAP1:Decomposizione_operatori_proiezione}
X=M_1 \oplus M_2 \oplus \ldots \oplus M_m.
\end{equation}
Quindi $\forall x \in X$ può essere rappresentato univocamente come
\begin{equation}
x=x_1+x_2+\ldots +x_m, \qquad x_i\in M_i, \quad i=1,2,\ldots m.
\end{equation}
Le relazioni $P_i x = x_i$ definiscono degli operatori $P_i$, ciascuno dei quali è detto \textbf{operatore di proiezione} sul sottospazio $M_i$, $i = 1, 2, \ldots m$ lungo
\begin{equation}
M_1 \oplus M_2 \oplus \ldots \oplus M_{i-1} \oplus M_{i+1} \oplus \ldots \oplus M_m.
\end{equation}
Gli operatori di proiezione si dimostrano essere lineari e limitati. Si osservi che $P_i x =x$ se e solo se $x\in M_i$. Inoltre, si ha
\begin{equation}\label{EQ:CAP1:Cond_operatori_proiezione}
\sum_{i=1}^m P_i = I, \qquad P_i P_j = \delta_{ij} P_i.
\end{equation}
Viceversa, se $P_i$, $i=1,2,\ldots,m$ sono operatori limitati, definiti in tutto lo spazio e soddisfacenti la \eqref{EQ:CAP1:Cond_operatori_proiezione}, allora si ha la decomposizione \eqref{EQ:CAP1:Decomposizione_operatori_proiezione}, dove $M_i=R(P_i)$, $i=1,2,\ldots,m$ e ciascun $P_i$ è l'operatore di proiezione su $M_i$.

Infine, l'ultimo caso è il seguente. Sia $H$ uno spazio di Hilbert e si consideri la sua decomposizione
\begin{equation}
H=M_1 \oplus M_2 \oplus \ldots \oplus M_m,
\end{equation}
dove $M_i$, $i=1,2,\ldots,m$ sono sottospazi a due a due ortogonali. Allora gli operatori $P_i$, definiti come nel caso precedente, risultano essere ortogonali. Inoltre, ciascuno proietta su $M_i$ e soddisfano la condizione \eqref{EQ:CAP1:Cond_operatori_proiezione}. Infine, di questa proprietà vale anche il viceversa.

\subsection{Spettro di un operatore}
Una delle caratteristiche più importanti di un operatore è lo spettro. Spesso esso rappresenta il collegamento tra la teoria degli operatori e moltissimi problemi della fisica.
\begin{definizione}
Sia $X$ uno spazio di Banach ed $A$ un operatore in $X$. L'equazione in $\lambda$ ed in $x$
\begin{equation}\label{EQ:CAP1:Eq_agli_autovalori}
Ax=\lambda x,
\end{equation}
con $\lambda\in\mathbb{C}$ ed $x\in D(A)$, è chiamata \textbf{equazione agli autovalori}. Inoltre, il numero $\lambda$ è detto \textbf{autovalore} di $A$ se esiste un vettore $x\in D(A)$, differente dal vettore nullo, tale che la \eqref{EQ:CAP1:Eq_agli_autovalori} sia soddisfatta. L'elemento $x$ è chiamato \textbf{autovettore} di $A$ corrispondente all'autovalore $\lambda$.
\end{definizione}
\begin{osservazione}
Riscrivendo la \eqref{EQ:CAP1:Eq_agli_autovalori} nella forma equivalente
\begin{equation}
(A-\lambda I)x =0,
\end{equation}
è possibile osservare che $A$ ammette $\lambda$ come autovalore se e solo se l'operatore $A_\lambda := (A-\lambda I)$ non ammette inverso. Inoltre, se lo spazio è finito-dimensionale allora $A_\lambda^{-1}$ o non esiste oppure esiste e in tal caso è limitato. Negli spazi infinito-dimensionali si ha anche che l'operatore $A_\lambda^{-1}$, quando esiste, può anche essere non limitato. Da questo fatto seguono le seguenti definizioni.
\end{osservazione}
\begin{definizione}
Si dice che $\lambda$ è un \textbf{punto regolare} di $A$ se il corrispondente operatore $A_\lambda^{-1}$ esiste, è limitato ed è definito su tutto $X$. In questo caso, l'operatore $A_\lambda^{-1}$ si denota con $R(\lambda,A)$ ed è chiamato \textbf{risolvente} di $A$ nel punto $\lambda$. L'insieme dei punti regolari di $A$ è chiamato \textbf{insieme risolvente} di $A$ e si denota con $\rho(A)$. Infine, l'insieme $\sigma(A) = \mathbb{C}/\rho(A)$ è chiamato \textbf{spettro} di $A$.
\end{definizione}
Lo spettro di $A$ può essere suddiviso nei seguenti tre sottoinsiemi disgiunti $\sigma_d(A)$, $\sigma_c(A)$, $\sigma_r(A)$ in accordo alla seguente definizione.
\begin{definizione}
si chiama \textbf{spettro discreto} di $A$, e si indica con $\sigma_d(A)$, l'insieme dei punti $\lambda$ per cui $A_\lambda^{-1}$ non esiste. Si chiama \textbf{spettro continuo} di $A$, e si indica con $\sigma_c(A)$, l'insieme dei punti $\lambda$ per cui $A_\lambda^{-1}$ esiste non limitato con dominio denso in $X$. Infine, si chiama \textbf{spettro residuo} di $A$, e si indica con $\sigma_r(A)$, l'insieme dei punti $\lambda$ per cui $A_\lambda^{-1}$ esiste ma il suo dominio non è denso in $X$.
\end{definizione}

\subsection{Proprietà spettrali degli operatori in spazi di dimensione finita}
Sia $\hat{A}$ un operatore in uno spazio $X$ di dimensione finita $n$. In questo caso, lo spettro di $\hat{A}$ si riduce al solo spettro discreto, cioè solo ai suoi autovalori. Per determinarli, si consideri l'equazione agli autovalori
\begin{equation}
\hat{A}\hat{x}=\lambda \hat{x}
\end{equation}
e la sua rappresentazione matriciale
\begin{equation}
Ax=\lambda x
\end{equation}
in una certa base di $\mathbb{C}^n$. \`{E} noto che gli autovalori sono tutte e sole le radici dell'equazione secolare
\begin{equation}
\mbox{Det} (A-\lambda I)=0,
\end{equation}
che saranno indicate con $\lambda_1,\lambda_2,\ldots,\lambda_m$, con $m\leq n$.

Se l'operatore $\hat{A}$ ammette $n$ autovettori linearmente indipendenti $\psi_1,\ldots,\psi_n$, allora esso è diagonalizzabile e si può scrivere
\begin{equation}\label{EQ:CAP1:Rappresentazione_spettr_dim_finita}
A=\sum_{i=1}^n \lambda_i P'_i,
\end{equation}
dove $P'_i$, $i=1,2,\ldots,n$ sono le matrici $n\times n$
\begin{equation}
P'_1 =
\left( 
\begin{array}{cccc}
1 & 0 & \ldots & 0 \\ 
0 & 0 & \ldots & 0 \\ 
\vdots & \vdots & \ddots & \vdots \\ 
0 & 0 & \ldots & 0
\end{array}
\right)
\quad \ldots \quad
P'_n =
\left( 
\begin{array}{cccc}
0 & \ldots & 0 & 0 \\ 
0 & \ddots & \vdots & \vdots \\ 
0 & \ldots & 0 & 0 \\ 
0 & \ldots & 0 & 1
\end{array}
\right).
\end{equation}
La \eqref{EQ:CAP1:Rappresentazione_spettr_dim_finita} è detta \textbf{rappresentazione spettrale} della matrice $A$.

Se l'operatore $\hat{A}$ non è diagonalizzabile, esiste comunque una base dove esso è rappresentato in una forma abbastanza semplice, la cosiddetta forma canonica di Jordan. Essa è una matrice a blocchi diagonali, dove in ciascun blocco si può utilizzare una rappresentazione analoga al caso in cui l'operatore è diagonalizzabile.

\subsection{Proprietà spettrali degli operatori compatti}
Gli operatori compatti costituiscono un esempio notevole di operatori limitati. Tra le famiglie degli operatori in spazi infinito dimensionali, gli operatori compatti sono quelli che hanno proprietà abbastanza simili a quelle degli operatori in spazi finito dimensionali.
\begin{definizione}
Siano $X$ ed $Y$ spazi di Banach. Un operatore $A : X \longrightarrow Y$ si dice \textbf{compatto} se trasforma ogni insieme limitato di $X$ in un insieme di $Y$ la cui chiusura è compatta.
\end{definizione}
\begin{teorema}[Hilbert-Schmidt]\label{TEOR:CAP1:Hilbert_Schmidt}
Sia $A$ un operatore compatto ed autoaggiunto in $H$ con $D(A) = H$. Allora seguono i seguenti risultati.
\begin{enumerate}
\item Se $A$ ammette l'autovalore zero, allora $\sigma(A) = \sigma_d(A)$, ove $\sigma_d(A)$ è fatto da un numero finito di autovalori. Ciascun autovalore diverso da zero ha molteplicità finita, mentre l'autovalore zero ha molteplicità infinita.
\item Se $A$ non ammette l'autovalore zero, $\sigma(A) = \sigma_d(A)\cup\sigma_c(A)$, ove lo spettro discreto $\sigma_d(A)$ è fatto di una infinità numerabile di autovalori, ciascuno di molteplicità finita, ed aventi lo zero come unico punto di accumulazione, mentre lo spettro continuo $\sigma_c(A)$ è costituito da unico punto che è lo zero.
\item Ogni vettore $Ax$ può essere rappresentato come
\begin{equation}\label{EQ:CAP1:Rappresentazione_operatori_compatti}
Ax = \sum_{i} \lambda_i y_i \left\langle y_i \vert x \right\rangle ,
\end{equation}
dove $\lambda_i$ sono tutti gli autovalori di $A$ non nulli, ciascuno ripetuto un numero di volte pari alla sua molteplicità geometrica, ed $y_i$ sono gli autovettori ortonormali corrispondenti.
\end{enumerate}
\end{teorema}
\begin{corollario}\label{COR:CAP1:Hilbert_Schmidt}
In corrispondenza di ogni operatore compatto ed autoaggiunto in uno spazio di Hilbert $H$, si può determinare una base ortonormale in $H$ costituita da autovettori di $A$.
\end{corollario}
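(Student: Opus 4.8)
Il piano è di dedurre il corollario direttamente dal Teorema \ref{TEOR:CAP1:Hilbert_Schmidt}, in particolare dalla rappresentazione \eqref{EQ:CAP1:Rappresentazione_operatori_compatti}. Per prima cosa considererei la famiglia $\left\lbrace y_i \right\rbrace$ degli autovettori ortonormali associati agli autovalori non nulli di $A$, fornita dal punto 3 del teorema, e porrei $M$ uguale alla chiusura del loro span lineare. Il passo chiave successivo è mostrare che $A$ si annulla su $M^\perp$: per ogni $x\in M^\perp$ si ha $\left\langle y_i \vert x\right\rangle=0$ per ogni $i$, e dunque la \eqref{EQ:CAP1:Rappresentazione_operatori_compatti} dà $Ax=0$. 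Quindi ogni vettore non nullo di $M^\perp$ è un autovettore di $A$ relativo all'autovalore zero.

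Rimarrebbe da completare $\left\lbrace y_i \right\rbrace$ a una base ortonormale di tutto $H$. Essendo $M^\perp$ un sottospazio chiuso di uno spazio di Hilbert, esso è a sua volta uno spazio di Hilbert e ammette quindi un sistema ortonormale completo $\left\lbrace z_j \right\rbrace$; per quanto appena visto ciascun $z_j$ soddisfa $A z_j = 0$, dunque è un autovettore di $A$. Verificherei infine che $\left\lbrace y_i\right\rbrace \cup \left\lbrace z_j\right\rbrace$ è un sistema ortonormale completo in $H$: l'ortonormalità è immediata, poiché $y_i\perp z_j$ per costruzione e ciascuna delle due famiglie è ortonormale; la completezza segue dal Teorema \ref{TEOR:CAP1:proiezione_ortogonale}, che permette di scrivere ogni $x\in H$ nella forma $x=h+h'$ con $h\in M$ e $h'\in M^\perp$, e quindi di svilupparlo sulla riunione dei due sistemi.

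Il punto più delicato è l'esistenza di un sistema ortonormale completo per $M^\perp$ quando tale sottospazio è infinito-dimensionale: in generale ciò richiede il lemma di Zorn, ma nel caso di uno spazio di Hilbert separabile — sufficiente per gli scopi di questa tesi — si ottiene con un ordinario procedimento di ortonormalizzazione di Gram--Schmidt. Va inoltre osservato che se $M^\perp=\left\lbrace 0\right\rbrace$ il completamento è superfluo e la famiglia $\left\lbrace y_i\right\rbrace$ è già la base cercata, mentre se $M^\perp\neq\left\lbrace 0\right\rbrace$ allora $0$ è effettivamente un autovalore di $A$, in accordo con la dicotomia dei punti 1 e 2 del Teorema \ref{TEOR:CAP1:Hilbert_Schmidt}.
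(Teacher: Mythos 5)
La tua dimostrazione è corretta ed è l'argomento canonico: dedurre da \eqref{EQ:CAP1:Rappresentazione_operatori_compatti} che $A$ si annulla su $M^\perp$, completare il sistema $\left\lbrace y_i \right\rbrace$ con una base ortonormale di $M^\perp$ (fatta di autovettori relativi all'autovalore zero) e concludere con la decomposizione $H=M\oplus M^\perp$ del Teorema \ref{TEOR:CAP1:proiezione_ortogonale}. Nella tesi il corollario è enunciato senza dimostrazione (viene rimandato alla dimostrazione del Teorema \ref{TEOR:CAP1:Hilbert_Schmidt}, a sua volta non riportata), quindi non c'è un confronto da fare; i due punti delicati che segnali tu stesso — l'esistenza di un sistema ortonormale completo in $M^\perp$ (Zorn in generale, Gram--Schmidt nel caso separabile) e la distinzione $M^\perp=\left\lbrace 0\right\rbrace$ oppure no, coerente con la dicotomia dei punti 1 e 2 del teorema — sono trattati in modo adeguato.
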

La dimostrazione del Teorema \ref{TEOR:CAP1:Hilbert_Schmidt} fornisce un algoritmo per determinare in sequenza gli autovalori $\lambda_1,\lambda_2,\ldots$ di $A$, soddisfacenti
\begin{equation}
\vert \lambda_1 \vert \geq \vert \lambda_2 \vert \geq \ldots
\end{equation}
ed ai quali, in virtù del Corollario \ref{COR:CAP1:Hilbert_Schmidt}, corrisponde una base di autovettori ortonormali. Di conseguenza, se ad un certo passo del procedimento si ottiene l'autovalore $\lambda_n=0$ allora si avrà $0=\lambda_{n+1}=\lambda_{n+2}=\ldots$. Inoltre, per la \eqref{EQ:CAP1:proiezione_ortogonale} e dalla definizione di operatore di proiezione, vale la relazione
\begin{equation}
P_i x = y_i \left\langle y_i \vert x \right\rangle,
\end{equation}
essendo $P_i$ l'operatore di proiezione ortogonale sul sottospazio sotteso da $y_i$. Si possono distinguere due casi. Se $A$ possiede l'autovalore $\lambda_n =0$, allora si hanno un numero finito di autovalori e la \eqref{EQ:CAP1:Rappresentazione_operatori_compatti} diventa
\begin{equation}
Ax = \sum_{i=1}^{n-1} \lambda_i P_i x, \quad \forall x \in H,
\end{equation}
da cui
\begin{equation}\label{EQ:CAP1:Rappresentazione_spettrale_1}
A = \sum_{i=1}^{n-1} \lambda_i P_i.
\end{equation}
Se $A$ non ammette l'autovalore zero, allora la \eqref{EQ:CAP1:Rappresentazione_operatori_compatti} diventa
\begin{equation}
Ax = \sum_{i=1}^{\infty} \lambda_i P_i x, \quad \forall x \in H,
\end{equation}
da cui
\begin{equation}\label{EQ:CAP1:Rappresentazione_spettrale_2}
A = \sum_{i=1}^{\infty} \lambda_i P_i.
\end{equation}
La \eqref{EQ:CAP1:Rappresentazione_spettrale_1} o la \eqref{EQ:CAP1:Rappresentazione_spettrale_2} è detta \textbf{rappresentazione spettrale} di $A$.

\subsection{Proprietà spettrali degli operatori autoaggiunti}
\`{E} possibile scrivere una rappresentazione spettrale anche per un generico operatore autoaggiunto. La derivazione della rappresentazione spettrale di un
operatore autoaggiunto è basata su un concetto di misura più generale rispetto a quelle considerate fino a qui.
\begin{definizione}
Sia $S$ un insieme non vuoto e $\mathcal{B}$ l'algebra di Borel di $S$. Allora si chiama \textbf{misura spettrale} $E(A)$ sullo spazio misurabile $(S, \mathcal{B})$ una applicazione $E$ che assegna ad ogni boreliano $A\in\mathcal{B}$ un operatore di proiezione ortogonale $E(A)$ in uno spazio di Hilbert $H$, con le proprietà:
\begin{enumerate}
\item $E(S)=I$
\item $E\left( \bigcup_{i=1}^\infty A_i \right) = \sum_{i=1}^\infty E(A_i), \qquad A_i\in\mathcal{B}, \quad A_i \cap A_j = \emptyset, \quad i\neq j, \quad i,j=1,2,\ldots$, dove
$\sum_{i=1}^\infty E(A_i)=s-\lim_{n\to\infty} \sum_{i=1}^n E(A_i)$.
\end{enumerate}
\end{definizione}
\begin{definizione}
Sia $H$ uno spazio si Hilbert ed $\left\lbrace E(\lambda) \right\rbrace $ una famiglia di operatori di proiezione ortogonali dipendente dal parametro reale $\lambda$, $-1 < \lambda < 1$. Siano $M(\lambda)$ i sottospazi di $H$ su cui ciascun operatore $E(\lambda)$ proietta. Allora la famiglia $\left\lbrace E(\lambda) \right\rbrace $ è chiamata \textbf{famiglia spettrale} se essa gode delle seguenti proprietà:
\begin{enumerate}
\item \`{E} non decrescente, cioè se $\lambda < \mu$ allora $M(\lambda) \subseteq M(\mu)$ e si scriverà $E(\lambda) \leq E(\mu)$.
\item \`{E} continua a destra, cioè $s-\lim_{\varepsilon\to 0^+} E(\lambda+\varepsilon)=:E(\lambda + 0)=E(\lambda)$.
\item $E(-\infty)=s-\lim_{\lambda\to -\infty} E(\lambda)=0, \quad E(+\infty)=s-\lim_{\lambda\to +\infty} E(\lambda)=I$.
\end{enumerate}
\end{definizione}
Sia $H$ uno spazio di Hilbert e $\left\lbrace E(\lambda) \right\rbrace $ una famiglia spettrale ad esso associata. Si consideri una partizione di $(-\infty,+\infty)$
\begin{equation}
-\infty < \ldots < \lambda_{-2} < \lambda_{-1} < \lambda_0 < \lambda_1 < \lambda_2 < \ldots < +\infty
\end{equation}
e, fissato $x \in H$, si costruisca la sommatoria
\begin{equation}
\sum_{i=-\infty}^{+\infty} \lambda_i' [E(\lambda_{i+1})-E(\lambda_i)]x, \quad \lambda_i < \lambda_i' \leq \lambda_{i+1}.
\end{equation}
\`{E} possibile dimostrare che esiste un sottoinsieme $D$ di $H$ per cui tale sommatoria converge nella norma di $H$ per ogni partizione di $(-\infty,+\infty)$. Allora si dimostra che esiste il seguente limite,
\begin{equation}
\lim_{\max_i \vert \lambda_{i+1}-\lambda_i \vert \to 0} \sum_{i=-\infty}^{+\infty} \lambda_i' [E(\lambda_{i+1})-E(\lambda_i)]x, \quad\forall x \in D.
\end{equation}
Di conseguenza, è possibile porre
\begin{equation}
\int_{-\infty}^{+\infty} \! \lambda \, \diff E(\lambda)x := \lim_{\max_i \vert \lambda_{i+1}-\lambda_i \vert \to 0} \sum_{i=-\infty}^{+\infty} \lambda_i' [E(\lambda_{i+1})-E(\lambda_i)]x.
\end{equation}
Poiché si prova che $D$ definisce una varietà lineare densa in $H$, allora questa relazione definisce in $D$ un operatore lineare, cioè è possibile associare ad ogni $x\in D$ l'elemento $Ax\in H$, dove
\begin{equation}\label{EQ:CAP1:Rappresentazione_spettrale_3}
A\cdot := \int_{-\infty}^{+\infty} \! \lambda \, \diff E(\lambda) \cdot,
\end{equation}
che è detta \textbf{rappresentazione spettrale} di $A$. Infine, si ha il seguente importante risultato.
\begin{teorema}[Teorema spettrale]\label{TEOR:CAP1:Spettrale}
Ogni operatore autoaggiunto $A$ ammette la rappresentazione \eqref{EQ:CAP1:Rappresentazione_spettrale_3}, dove la famiglia spettrale $\left\lbrace E(\lambda) \right\rbrace $ è individuata univocamente da $A$.
\end{teorema}

\subsection{L'operatore di Fourier}
Risulterà importante dare una rappresentazione delle funzioni di $L^1(\mathbb{R})$. Essa è nota come \emph{trasformata di Fourier} le cui proprietà hanno un ruolo importante nell'ambito della matematica applicata e saranno usate anche nei successivi paragrafi di questa tesi.
\begin{definizione}\label{DEF:CAP1:Trasf_Fourier}
Data una funzione $f\in L^1(\mathbb{R})$ si chiama \textbf{trasformata di Fourier} di $f$ e si indica con $\hat{f}$ la funzione definita $\forall \lambda \in \mathbb{R}$ da
\begin{equation}
\hat{f}(\lambda) = \frac{1}{\sqrt{2\pi}} \int_{-\infty}^{+\infty} \! f(t)e^{-i\lambda t} \, \diff t.
\end{equation}
\end{definizione}
Si può dimostrare che la funzione $f$ può essere scritta come
\begin{equation}
f(x) = \frac{1}{\sqrt{2\pi}} \int_{-\infty}^{+\infty} \! \hat{f}(\lambda)e^{i\lambda x} \, \diff \lambda.
\end{equation}
Questa rappresentazione è detta \textbf{trasformazione di Fourier}. Resta infine definito l'operatore integrale
\begin{equation}\label{EQ:CAP1:Oper_Fourier}
F[\cdot] = \frac{1}{\sqrt{2\pi}} \int_{-\infty}^{+\infty} \! \cdot e^{-i\lambda t} \, \diff t,
\end{equation}
chiamato \textbf{operatore di Fourier}. Dalle precedenti definizioni si deduce immediatamente che $\hat{f}=F[f]$.

La trasformata di Fourier gode di importanti proprietà che risulteranno utili nei prossimi paragrafi.
\begin{itemize}
\item \emph{Trasformata di Fourier di una derivata.}\\
Sia $f\in L^1(\mathbb{R})$ e $f'$ la sua derivata prima. Se $f'$ è continua e inoltre $f'\in L^1(\mathbb{R})$ allora si ha
\begin{equation}
F[f'] = i\lambda F[f].
\end{equation}
\item \emph{Il Teorema di Plancherel.}\\
Si vuole estendere la trasformata di Fourier per funzioni di $L^2(\mathbb{R})$. Infatti $L^2(\mathbb{R})$ non è un sottoinsieme di $L^1(\mathbb{R})$ perché la misura di Lebesgue di $\mathbb{R}$ è infinita (cfr.~\citep{BOOK:Rudin}). Però, se $f\in L^1(\mathbb{R}) \cap L^2(\mathbb{R})$ la Definizione \ref{DEF:CAP1:Trasf_Fourier} si può applicare e si dimostra che $\hat{f} \in L^2(\mathbb{R})$. Inoltre, si può dimostrare anche che se $f\in L^2(\mathbb{R})$ si ha $\lVert\hat{f}\rVert_2=\norm{f}_2$.
\end{itemize}

\chapter{Il modello semiclassico per il trasporto di cariche} 

\label{Chapter2} 

\lhead{Capitolo 2. \emph{Il modello semiclassico per il trasporto di cariche}} 



\section{Descrizione matematica della meccanica quantistica}
Partendo dai presupposti fisici descritti nei precedenti paragrafi, si passerà adesso alla descrizione matematica. In primo luogo si definirà lo spazio delle funzioni d'onda, poi si tratteranno le grandezze fisiche e le loro misure, infine si concluderà con la dinamica dei sistemi quantistici.

I testi di riferimento per i successivi paragrafi saranno \citep{DISP:Sacchetti}, \citep{BOOK:Landau}, \citep{BOOK:Dirac} e \citep{BOOK:Sakurai}.

\subsection{Lo spazio delle funzioni d'onda}\label{PAR:CAP1:Spazio_f_d_o}
Si consideri, per semplicità, il sistema quantistico di una particella, mobile nello spazio euclideo $\mathbb{R}^3$. Essa è descritta da una funzione, detta \textbf{funzione d'onda}, del tipo
\begin{equation}
\psi(\mathbf{x},t) : \mathbb{R}^3 \times \mathbb{R} \longrightarrow \mathbb{C},
\end{equation}
dove $\mathbf{x}=(x_1,x_2,x_3)\in\mathbb{R}^3$ è un punto dello spazio e $t\in\mathbb{R}$ è un istante di tempo. 

Considerato un insieme $A\subset\mathbb{R}^3$, nella interpretazione di Copenaghen, la probabilità di trovare la particella in $A$ all'istante $t$ è data da
\begin{equation}\label{EQ:CAP1:Prob_funz_d_onda_x}
P^t(A)=\int_A \! \rho_t(\mathbf{x}) \, \diff \mathbf{x},
\end{equation}
dove $\rho_t(\mathbf{x})= \vert \psi(\mathbf{x},t) \vert^2 = \psi^*(\mathbf{x},t)\psi(\mathbf{x},t)$ rappresenta la densità di probabilità all'istante $t$ e $\psi^*(\mathbf{x},t)$ è il complesso coniugato di $\psi(\mathbf{x},t)$. Allora, essendo $\rho_t(\mathbf{x})$ una densità di probabilità, essa deve soddisfare le condizioni
\begin{equation}\label{EQ:CAP1:Cond_funz_d_onda_x}
\vert \psi(\mathbf{x},t) \vert^2\geq 0 \quad \forall\mathbf{x}\in\mathbb{R}^3, \quad \forall t\in\mathbb{R}, \qquad \int_{\mathbb{R}^3} \! \vert \psi(\mathbf{x},t) \vert^2 \, \diff \mathbf{x} =1 \quad \forall t \in \mathbb{R}.
\end{equation}
Si osservi che per essere valide le \eqref{EQ:CAP1:Cond_funz_d_onda_x} e la \eqref{EQ:CAP1:Prob_funz_d_onda_x} è necessario richiedere che gli integrali presenti in esse esistano. Pertanto si richiede che la funzione d'onda $\psi(\cdot,t)$ sia quadrato-sommabile su $\mathbb{R}^3$, cioè deve essere $\psi(\cdot,t)\in L^2(\mathbb{R}^3)$ per ogni $t\in\mathbb{R}$.

Per descrivere matematicamente l'insieme delle funzioni d'onda di un dato sistema quantistico, è opportuno assumere le seguenti ipotesi.
\begin{enumerate}
\item La funzione d'onda nulla non è associata ad alcuno stato.
\item Se $\psi$ è la funzione d'onda associata ad un certo stato, allora tutte le funzioni d'onda del tipo $c\psi$ con $c\in\mathbb{C}$ corrispondono allo stesso stato.
\item Se un sistema quantistico può trovarsi negli stati descritti dalle funzioni d'onda $\psi_1$ e $\psi_2$ allora può trovarsi anche in tutti gli stati della forma $\lambda_1\psi_1+\lambda_2\psi_2$ con $\lambda_1,\lambda_2\in\mathbb{C}$.\label{EQ:CAP1:Principio_sovrapposizione} 
\end{enumerate}
Si osservi che l'ipotesi \ref{EQ:CAP1:Principio_sovrapposizione} esprime in termini di funzioni d'onda il principio di sovrapposizione degli stati, illustrato nel Paragrafo \ref{PAR:CAP1:Principi_fond}. Inoltre, in virtù dei precedenti tre postulati, l'insieme $H$ delle funzioni d'onda di un sistema quantistico assume la struttura di spazio vettoriale su $\mathbb{C}$. In particolare, per quanto visto prima esso sarà un sottospazio di $L^2(\mathbb{R}^3)$.

Di conseguenza $H$ è uno spazio di Hilbert complesso, la cui norma è indotta dal prodotto scalare di $L^2(\mathbb{R}^3)$
\begin{equation}\label{EQ:CAP1:Ptto_scalare_L2_f_o}
\left\langle \psi_1,\psi_2 \right\rangle = \int_{\mathbb{R}^3} \! \psi_1^*(\mathbf{x},t)\psi_2(\mathbf{x},t) \, \diff \mathbf{x}, \quad \forall \psi_1,\psi_2\in H.
\end{equation}
Allora la condizione di normalizzazione diventa
\begin{equation}
\norm{\psi(\cdot,t)} = 1, \quad \mbox{dove} \quad \norm{\psi} = \sqrt{\left\langle \psi,\psi \right\rangle}.
\end{equation}
Inoltre, a partire da questa interpretazione, è possibile affermare che ogni \textbf{stato} del sistema è rappresentato da un elemento $\psi\in H$ che soddisfi la condizione di normalizzazione.
\begin{osservazione}
Poiché le grandezze fisiche dipendono dalla funzione d'onda attraverso il suo modulo, si ha che la funzione d'onda $\psi$ e la funzione d'onda $\psi e^{i\alpha}$, con $\alpha\in\mathbb{R}$, definiscono lo stesso stato quantistico, cioè la funzione d'onda è sempre definita a meno di un fattore di fase del tipo $e^{i\alpha}$. Ciò spiega l'ipotesi 2. 
\end{osservazione}
\begin{osservazione}
Poiché vale il principio di sovrapposizione degli stati, si ha che tutte le equazioni cui soddisfano le funzioni d'onda devono necessariamente essere lineari rispetto alla funzione d'onda $\psi$.
\end{osservazione}
In meccanica quantistica non è possibile misurare $\mathbf{x}$ esattamente e la particella non è mai localizzata in un punto ben preciso. Quindi, in virtù dell'interpretazione probabilistica, la posizione $\mathbf{x}$ della particella è una variabile aleatoria la cui previsione è data da
\begin{equation}
E_{\psi}^t(\mathbf{x}) = \int_{\mathbb{R}^3} \! \mathbf{x} \vert \psi(\mathbf{x},t) \vert^2 \, \diff \mathbf{x}, \quad \forall t \in \mathbb{R}.
\end{equation}
e la varianza è data da
\begin{equation}
[\Delta_{\psi}^t(\mathbf{x})]^2 = E_{\psi}^t(\mathbf{x}^2) - [E_{\psi}^t(\mathbf{x})]^2, \quad \forall t \in \mathbb{R}.
\end{equation}

\subsection{Misure di una grandezza fisica}
Si consideri una grandezza fisica $f$ a valori reali, che in meccanica quantistica è detta anche \textbf{osservabile} (ad esempio la posizione, il momento, l'energia, etc.). Generalmente, in meccanica classica essa può assumere una distribuzione continua di valori. Come si vedrà in questo paragrafo, in meccanica quantistica, invece, i valori che l'osservabile può assumere non sono, in generale, distribuiti con continuità.

In meccanica quantistica ogni osservabile classica $f$ corrisponde ad un operatore lineare $F$ definito sullo spazio di Hilbert $H$. Allora, se il sistema è rappresentato dallo stato $\psi$, la previsione di $F$ è data da
\begin{equation}
E_\psi(F) = \left\langle \psi,F\psi \right\rangle .
\end{equation}
Inoltre, poiché $E_\psi(F)$ rappresenta una grandezza fisica allora deve essere un numero reale e di conseguenza l'operatore $F$ deve essere simmetrico. Analogamente, la varianza è data da
\begin{equation}
[\Delta_{\psi}(F)]^2 = E_{\psi}(F^2) - [E_{\psi}(F)]^2.
\end{equation}
Si osservi che, in modo equivalente, si può definire la varianza come
\begin{equation}
[\Delta_{\psi}(F)]^2 = \norm{ (F-E_\psi(F))\psi }^2.
\end{equation}
Allora $\Delta_{\psi}(F)=0$ se e solo se $\psi$ soddisfa l'equazione agli autovalori
\begin{equation}
F\psi=E_\psi(F)\psi,
\end{equation}
ovvero se e solo se $\psi$ è un autovettore di $F$ corrispondente all'autovalore $E_\psi(F)$ e ciò avviene se e solo se $E_\psi(F)$ appartiene allo spettro di $F$. Quindi risulta fondamentale conoscere la rappresentazione spettrale dell'operatore $F$, perché per mezzo di essa è possibile decomporre lo stato $\psi$ in un certo numero di stati, i cui corrispondenti operatori hanno varianza nulla, e i quali possono presentarsi con una probabilità ben determinata.

\subsection{La notazione bra-ket}
Una notazione molto usata in meccanica quantistica per descrivere gli stati è la notazione bra-ket. Essa è stata introdotta da P.~Dirac nel 1939. Sia $H$ uno spazio di Hilbert complesso su cui è definito il prodotto scalare $\left<\cdot,\cdot\right>$ e sia $\alpha\in H$ un suo elemento. Solitamente si prende $H=L^2(\mathbb{R}^3)$ e come prodotto scalare quello definito dalla \eqref{EQ:CAP1:Ptto_scalare_L2_f_o}. Da ora in avanti un generico stato del sistema sarà indicato con il simbolo $\ket{\alpha}$, che sarà chiamato \textbf{ket}. In virtù delle ipotesi formulate nel Paragrafo \ref{PAR:CAP1:Spazio_f_d_o}, per i ket valgono le seguenti proprietà.
\begin{enumerate}
\item $\ket{0}$ non è associato a nessuno stato.
\item $\ket{\alpha}$ e $c\ket{\alpha}$ con $c\in\mathbb{C}$ rappresentano lo stesso stato.
\item $a\ket{\alpha} + b\ket{\beta}\in H$ con $\ket{\alpha}, \ket{\beta} \in H$ e $a,b\in \mathbb{C}$.
\end{enumerate}
Sia $H^*$ lo spazio duale di $H$, cioè l'insieme dei funzionali lineari e continui da $H$ in $\mathbb{C}$. Esso è uno spazio lineare complesso. Fissato $\ket{\alpha} \in H$, sia $\ket{\beta}$ un generico elemento di $H$. \`{E} possibile definire, utilizzando il prodotto scalare di $H$, il funzionale lineare e continuo
\begin{equation*}
\varphi_\alpha : H \longrightarrow \mathbb{C} \quad \mbox{che associa} \quad \ket{\beta} \longmapsto \left< \ket{\alpha}, \ket{\beta} \right>.
\end{equation*}
Si è ottenuta allora la corrispondenza
\begin{equation}\label{EQ:CAP1:Corr_bra_ket}
H \longrightarrow H^* \quad \mbox{che associa} \quad \ket{\alpha} \longmapsto \varphi_\alpha.
\end{equation}
Viceversa, sia $\varphi\in H^*$. Allora, per il Teorema di rappresentazione di Riesz, si ha che esiste un unico elemento $\ket{ \alpha}\in H$ tale che
\begin{equation*}
\varphi \left( \ket{\beta} \right) = \left< \ket{\alpha}, \ket{\beta} \right>.
\end{equation*}
Allora la corrispondenza definita con la \eqref{EQ:CAP1:Corr_bra_ket} è biettiva. Inoltre, un generico elemento di $H^*$ è chiamato \textbf{bra} e si indica con la notazione $\bra{\beta}$. Allora, sottintendendo la dipendenza dal funzionale, si ha la corrispondenza biettiva
\begin{equation}
H \longleftrightarrow H^* \quad \mbox{che associa} \quad \ket{\alpha} \longleftrightarrow \bra{\alpha}.
\end{equation}
Si osservi che si è soliti indicare gli elementi tra loro corrispondenti con la stessa lettera. Infine, lo spazio $H^*$ si può rendere spazio di Hilbert introducendo in esso il prodotto scalare $\left< \cdot , \cdot \right>^*$ definito da
\begin{equation*}
\left< \bra{\alpha}, \bra{\beta} \right>^* = \left< \ket{\alpha} , \ket{\beta} \right>, \qquad \forall \bra{\alpha}, \bra{\beta}\in H^*.
\end{equation*}
\`{E} possibile definire il prodotto di un bra per un ket mediante la relazione
\begin{equation}
\braket{\alpha | \beta} = \left< \ket{\alpha}, \ket{\beta} \right>, \qquad \forall \bra{\alpha}\in H^*, \, \forall \ket{\beta}\in H.
\end{equation}
Sia $A$ un operatore lineare definito nello spazio dei ket $H$. Applicando l'operatore $A$ al ket $\ket{\alpha}$ si avrebbe $A\cdot \ket{\alpha}$. Da ora in avanti per indicare tale operazione si scriverà semplicemente $A\ket{\alpha}$. Il risultato è ancora un ket, cioè si ha $A\ket{\alpha} = \ket{\beta}$. Inoltre, per l'operatore $A$ definito sui ket si può definire la sua azione sui bra, che si indica con $\bra{\beta} A$, mediante la relazione
\begin{equation}
\left( \bra{\beta} A \right) \ket{\alpha} =  \bra{\beta} \left( A \ket{\alpha} \right), \qquad \forall  \ket{\alpha} \in H.
\end{equation}
Si osservi che il membro di destra risulta ben definito in quanto $A \ket{\alpha}$ è un ket e si è definito il prodotto di un bra per un ket. Nella notazione usuale le parentesi vengono rimosse scrivendo semplicemente $\braket{\beta | A | \alpha}$. Un'altra nozione importante è la seguente. Dato un operatore $A$, se esistono $a\in \mathbb{C}$ e $\ket{\alpha}\in H$ tali che
\begin{equation*}
A \ket{\alpha} = a \ket{\alpha}
\end{equation*}
allora si dice che $a$ è un \textbf{autovalore} di $A$ e che $\ket{\alpha}$ è un \textbf{autoket} di $A$. I corrispondenti degli autoket nello spazio dei bra sono detti \textbf{autobra}.

\subsection{Rappresentazione spettrale delle osservabili}
Sia $f$ un'osservabile a cui corrisponde l'operatore $F$ definito nello spazio di Hilbert $H$. Se $\sigma(F)$ è il suo spettro, allora, come specificato nel Paragrafo , vale la suddivisione
\begin{equation}
\sigma(F) = \sigma_d(F) \cup \sigma_c(F) \cup \sigma_r(F).
\end{equation}
Sia $\ket{\alpha}\in H$ un generico stato del sistema per il quale si vuole misurare l'osservabile $f$, cioè si vuole valutare $F\ket{\alpha}$. In base alle ipotesi sull'operatore $F$ e alla dimensione dello spazio $H$ ciascun ket avrà una particolare rappresentazione.

Nel caso in cui $F$ sia un operatore compatto ed autoaggiunto, per il Corollario \ref{COR:CAP1:Hilbert_Schmidt}, è possibile determinare una base ortonormale in $H$ costituita da autovettori di $F$, qui detti autoket. Tale base può avere un numero finito o numerabile di elementi, a seconda che $0\in\sigma(F)$ oppure no. Siano $f_n$, $n=0,1,2,\ldots$ gli autovalori associati ad $F$ e siano $\ket{\alpha_n}$, $n=0,1,2,\ldots$ gli elementi della corrispondente base di autoket, con l'indice $n$ eventualmente variabile in un insieme finito. Allora il generico ket $\ket{\alpha}$ si può scrivere come combinazione lineare di autoket, cioè
\begin{equation}
\ket{\alpha} = \sum_n c_n \ket{\alpha_n},
\end{equation}
dove i coefficienti $c_n$ sono dati da $c_n = \braket{\alpha_n | \alpha}$. Inoltre la proprietà di ortonormalità può essere espressa scrivendo
\begin{equation}
\braket{\alpha_n | \alpha_m} = \delta_{nm},
\end{equation}
dove $\delta_{nm}$ è il simbolo di Kronecker. Infine, sfruttando la rappresentazione spettrale di $F$, si ha
\begin{equation}
F \ket{\alpha} = \sum_n f_n \ket{\alpha_n} \braket{\alpha_n | \alpha}.
\end{equation}
Moltiplicando a sinistra la relazione precedente per $\bra{\alpha}$, si ottiene
\begin{align*}
\braket{\alpha | F | \alpha} & = \sum_n f_n \braket{\alpha | \alpha_n} \braket{\alpha_n | \alpha}=\\
& = \sum_n f_n \left\vert \braket{\alpha_n | \alpha} \right\vert^2.
\end{align*}
Alla relazione precedente si dà la seguente interpretazione fisica. Se gli $f_n$ sono i possibili valori che l'osservabile $f$ può assumere e la quantità $\left\vert \braket{\alpha_n | \alpha} \right\vert^2$ rappresenti la probabilità che il risultato della misura di $\ket{\alpha}$ sia $\ket{\alpha_n}$, allora $\braket{\alpha | F | \alpha}$ si può interpretare come il valore di aspettazione dell'operatore $F$ sullo stato $\ket{\alpha}$ e si pone $\braket{F}=\braket{\alpha | F | \alpha}$.

Si consideri adesso il caso in cui l'operatore $F$ sia autoaggiunto e abbia spettro continuo. Anche qui un generico ket $\ket{\alpha}$ può essere scritto come sovrapposizione di autoket nella forma
\begin{equation}
\ket{\alpha} = \int_{\sigma_c(F)} \! c(\lambda) \ket{\alpha_\lambda} \, \diff E(\lambda),
\end{equation}
dove $\lambda\in\sigma_c(F)$, $\left\lbrace E(\lambda)\right\rbrace $ è una famiglia spettrale e i coefficienti $c(\lambda)$ sono dati da $c(\lambda) = \braket{\alpha_\lambda | \alpha}$. Inoltre, in questo caso si impone che la proprietà di ortonormalità sia
\begin{equation}
\braket{\alpha_{\lambda '} | \alpha_\lambda} = \delta(\lambda - \lambda '),
\end{equation}
dove il termine al secondo membro è la delta di Dirac. Infine, sfruttando la \eqref{EQ:CAP1:Rappresentazione_spettrale_3}, si ottiene la rappresentazione spettrale di $F$
\begin{equation}
F\ket{\alpha} = \int_{\sigma_c(F)} \! \lambda \ket{\alpha_\lambda} \braket{\alpha_\lambda | \alpha} \, \diff E(\lambda).
\end{equation}
Analogamente al caso discreto, moltiplicando a sinistra la relazione precedente per $\bra{\alpha}$, si ottiene
\begin{align*}
\braket{\alpha | F | \alpha} & = \int_{\sigma_c(F)} \! \lambda \braket{\alpha | \alpha_\lambda} \braket{\alpha_\lambda | \alpha} \, \diff E(\lambda)=\\
& = \int_{\sigma_c(F)} \! \lambda \left\vert \braket{\alpha_\lambda | \alpha} \right\vert^2 \, \diff E(\lambda).
\end{align*}
Alla relazione precedente si dà la seguente interpretazione fisica. Se $\lambda$ è ogni possibile valore che l'osservabile $f$ può assumere e la quantità $\left\vert \braket{\alpha_\lambda | \alpha} \right\vert^2$ rappresenta la distribuzione di probabilità del risultato della misura di $\ket{\alpha}$, allora $\braket{\alpha | F | \alpha}$ si può interpretare come il valore di aspettazione dell'operatore $F$ sullo stato $\ket{\alpha}$ e si pone $\braket{F}=\braket{\alpha | F | \alpha}$.

Infine, per un generico operatore $F$ avente uno spettro in parte discreto e in parte continuo, un generico $\ket{\alpha}$ si scrive
\begin{equation}
\ket{\alpha} = \sum_n \braket{\alpha_n | \alpha} \ket{\alpha} + \int_{\sigma_c(F)} \! \braket{\alpha_\lambda | \alpha} \ket{\alpha_\lambda} \, \diff E(\lambda).
\end{equation}

\subsection{Le osservabili di posizione e di impulso}
Le grandezze fisiche di interesse per gli scopi di questa tesi sono principalmente la posizione, l'impulso e l'energia. Pensando queste grandezze come osservabili quantistiche, è possibile associare ad esse degli operatori lineari. Si consideri in primo luogo il caso unidimensionale. Indicando con $x$ l'operatore posizione, si vuole misurare la posizione occupata da una particella che si trova nello stato $\ket{\alpha}$. Si postula che gli autoket dell'operatore posizione che soddisfano l'equazione agli autovalori
\begin{equation}
x\ket{x'}=x'\ket{x'}
\end{equation}
formino una base. Allora è possibile rappresentare lo stato $\ket{\alpha}$ come
\begin{equation}
\ket{\alpha} = \int_{-\infty}^{+\infty} \! \ket{x'}\braket{x' | \alpha} \, \diff x'.
\end{equation}
Si vuole vedere qual è il significato fisico della quantità $\braket{x' | \alpha}$. Allora, calcolando il valore di aspettazione della posizione nello stato $\ket{\alpha}$ si ha
\begin{align*}
\braket{\alpha | x | \alpha} & = \int_{-\infty}^{+\infty} \! \braket{\alpha | x | x'}\braket{x' | \alpha} \, \diff x' = \\
& = \int_{-\infty}^{+\infty} \! x'\braket{\alpha | x'}\braket{x' | \alpha} \, \diff x' =\\
& = \int_{-\infty}^{+\infty} \! x' \left\vert \braket{x' | \alpha} \right\vert^2 \, \diff x'.
\end{align*}
Allora la quantità $\left\vert \braket{x' | \alpha} \right\vert^2$ rappresenta la distribuzione di probabilità della posizione della particella avente lo stato $\ket{\alpha}$. Di conseguenza $\braket{x' | \alpha}$ è proprio la funzione d'onda per lo stato fisico rappresentato da $\ket{\alpha}$, cioè
\begin{equation}
\psi_\alpha(x') = \braket{x' | \alpha},
\end{equation}
e inoltre la probabilità di trovare la particella in un certo intervallo $[a,b]$ è data da
\begin{equation}
P(a, b) = \int_a^b \! \left\vert \braket{x' | \alpha} \right\vert^2 \, \diff x'.
\end{equation}
Generalizzando in tre dimensioni si assume che gli autoket di posizione $\ket{\mathbf{x}'}$ formino un insieme completo. Il generico ket di stato di una particella si può scrivere come
\begin{equation}
\ket{\alpha} = \int \! \ket{\mathbf{x}'}\braket{\mathbf{x}' | \alpha} \, \diff \mathbf{x},
\end{equation}
dove $\ket{\mathbf{x}'}$ è un autoket simultaneo delle osservabili $x$, $y$ e $z$, cioè
\begin{equation}
\ket{\mathbf{x}'} = \ket{x',y',z'},
\end{equation}
ciascuna rispondente ad una delle equazioni agli autovalori
\begin{equation}
x\ket{x'}=x'\ket{x'}, \qquad y\ket{y'}=y'\ket{y'}, \qquad z\ket{z'}=z'\ket{z'}.
\end{equation}

Per quanto riguarda la deduzione dell'operatore impulso, si consideri in questa sede il caso in cui lo stato del sistema sia rappresentabile attraverso un'onda piana stazionaria, cioè del tipo
\begin{equation}
\psi(\mathbf{x})=e^{i\mathbf{k}\cdot\mathbf{x}}.
\end{equation}
Allora, si calcoli
\begin{equation}
\nabla_{\mathbf{x}} \psi(\mathbf{x}) =i\mathbf{k}\psi(\mathbf{x}) = i \frac{\hbar \mathbf{k}}{\hbar} \psi(\mathbf{x}) = i \frac{\mathbf{p}}{\hbar} \psi(\mathbf{x}).
\end{equation}
Supponendo che tale relazione sia valida in generale per un generico vettore di stato $\ket{\psi}$, si ha
\begin{equation}
\nabla_\mathbf{x} \ket{\psi} = i \frac{\mathbf{p}}{\hbar} \ket{\psi}.
\end{equation}
Di conseguenza si ha
\begin{equation}\label{EQ:CAP1:Operatore_impulso_3d}
\mathbf{p}\ket{\psi} = -i\hbar\nabla_{\mathbf{x}}\ket{\psi}.
\end{equation}
Considerando nuovamente il caso unidimensionale, gli autostati dell'operatore impulso soddisfano l'equazione agli autovalori
\begin{equation}\label{EQ:CAP1:Eq_autov_impulso}
p\ket{p'} = p'\ket{p'}
\end{equation}
e si suppone inoltre che tali autostati formino un insieme completo. Le funzioni d'onda corrispondenti agli autostati dell'operatore impulso, cioè le quantità $\braket{x' | p'}$, sono dette autofunzioni dell'operatore impulso nella rappresentazione delle coordinate. Nel caso unidimensionale la \eqref{EQ:CAP1:Operatore_impulso_3d} diventa
\begin{equation*}
p \ket{\psi} = -i\hbar \frac{\partial}{\partial x} \ket{\psi}
\end{equation*}
Scrivendo quest'ultima equazione per un autostato dell'operatore impulso $p'$ e moltiplicandola per $\bra{x'}$, si ha
\begin{equation*}
\braket{x' | p | p'} = -i\hbar \frac{\partial}{\partial x'} \braket{x' | p'}.
\end{equation*}
Utilizzando la relazione \eqref{EQ:CAP1:Eq_autov_impulso} si ottiene
\begin{equation}
-i\hbar \frac{\partial}{\partial x'} \braket{x' | p'} = p' \braket{x' | p'}.
\end{equation}
Essa è un'equazione differenziale ordinaria del primo ordine, la cui soluzione è
\begin{equation}
\psi_{p'}(x') = \braket{x' | p'} = Ne^{\frac{i}{\hbar}p'x'}.
\end{equation}
Per ottenere la costante $N$ si consideri la relazione di ortogonalità
\begin{equation}
\braket{p' | p''} = \delta (p'-p'').
\end{equation}
Allora si ha
\begin{align*}
\delta (p'-p'') & = \braket{p' | p''} = \\
& = \int_{-\infty}^{+\infty} \! \braket{p' | x'}\braket{x' | p''} \, \diff x' = \\
& = \vert N \vert^2 \int_{-\infty}^{+\infty} \! e^{-\frac{i}{\hbar}(p'-p'')x'} \, \diff x' = \\
& = \vert N \vert^2 2\pi\hbar \delta (p'-p''),
\end{align*}
dove l'ultimo passaggio si è ottenuto utilizzando la rappresentazione di Fourier della delta di Dirac, cioè
\begin{equation}
\delta(x) = \frac{1}{2\pi} \int_{-\infty}^{+\infty} \! e^{ikx} \, \diff k.
\end{equation}
Allora la relazione per la costante $N$ è
\begin{equation}
\vert N \vert^2 2\pi\hbar = 1.
\end{equation}
Scegliendo per convenzione $N$ reale e positivo, si ottiene
\begin{equation}
\psi_{p'}(x') = \braket{x' | p'} = \frac{1}{\sqrt{2\pi\hbar}} e^{\frac{i}{\hbar}p'x'}.
\end{equation}
Si osservi che un generico vettore di stato $\ket{\alpha}$ si può rappresentare mediante gli autostati dell'operatore impulso scrivendo
\begin{equation}
\ket{\alpha} = \int_{-\infty}^{+\infty} \! \ket{p'} \braket{p' | \alpha} \, \diff p',
\end{equation}
dove il generico coefficiente di questo sviluppo, cioè la funzione
\begin{equation}
\varphi_\alpha (p') = \braket{p' | \alpha},
\end{equation}
è detta funzione d'onda nella rappresentazione degli impulsi. Analogamente al caso della posizione, $\vert \varphi_\alpha (p') \vert^2$ rappresenta la distribuzione di probabilità dei valori dell'impulso. Adesso si calcoli $\braket{x' | \alpha}$ nella rappresentazione degli impulsi, cioè
\begin{align*}
\psi_\alpha (x') & = \braket{x' | \alpha} = \int_{-\infty}^{+\infty} \! \braket{x' | p'} \braket{p' | \alpha} \, \diff p' = \\
& = \frac{1}{\sqrt{2\pi\hbar}} \int_{-\infty}^{+\infty} \! e^{\frac{i}{\hbar}p'x'} \braket{p' | \alpha} \, \diff p' = \\
& = \frac{1}{\sqrt{2\pi\hbar}} \int_{-\infty}^{+\infty} \! e^{\frac{i}{\hbar}p'x'} \varphi_\alpha (p') \, \diff p'.
\end{align*}
Analogamente si può ottenere la trasformazione inversa, cioè
\begin{equation}
\varphi_\alpha (p')= \frac{1}{\sqrt{2\pi\hbar}} \int_{-\infty}^{+\infty} \! e^{-\frac{i}{\hbar}p'x'} \psi_\alpha (x') \, \diff x'.
\end{equation}
L'importante osservazione che ne segue è che le due rappresentazioni corrispondono matematicamente, rispettivamente, alla trasformata e all'antitrasformata di Fourier. Tali rappresentazioni possono essere scritte anche in tre dimensioni e presentano una forma analoga (cfr. \citep{BOOK:Sakurai}).

\subsection{Osservabili misurabili simultaneamente}
Si considerino due osservabili $f$ e $g$ e i due operatori associati $F$ e $G$. Se le due osservabili possono essere misurate simultaneamente allora esse devono insistere sugli stessi autostati $\ket{\psi_n}$. Allora è possibile calcolare il prodotto tra i due operatori $F$ e $G$ scrivendo
\begin{equation}
FG\ket{\psi} = FG\sum_n a_n \ket{\psi_n} = \sum_n f_n g_n a_n \ket{\psi_n}.
\end{equation}
Analogamente, si calcola
\begin{equation}
GF\ket{\psi} = GF\sum_n a_n \ket{\psi_n} = \sum_n g_n f_n a_n \ket{\psi_n}.
\end{equation}
\begin{definizione}
Si dice che due operatori $F$ e $G$ \textbf{commutano} se e solo se si ha
\begin{equation}
[F,G] = FG-GF=0.
\end{equation}
Il simbolo $[F,G]$ è chiamato \textbf{commutatore} degli operatori $F$ e $G$.
\end{definizione}
Ne segue che date due osservabili $f$ e $g$ e dati gli operatori associati $F$ e $G$ allora le due osservabili sono misurabili simultaneamente se e solo se i due operatori associati commutano tra loro.

Ad esempio, se $\mathbf{x}=(x_1,x_2,x_3)$ e $\mathbf{p}=(p_1,p_2,p_3)$, il commutatore di $x_i$ e $p_j$ è
\begin{align*}
[x_i,p_j] \ket{\psi} & = (x_ip_j-p_jx_i)\ket{\psi} = \\
& = -x_i i\hbar\frac{\partial}{\partial x_j}\ket{\psi} +i\hbar \frac{\partial}{\partial x_j} (x_i\ket{\psi})= \\
& = -x_i i\hbar\frac{\partial}{\partial x_j}\ket{\psi} + i\hbar \delta_{ij}\ket{\psi} + x_i i\hbar\frac{\partial}{\partial x_j}\ket{\psi} = \\
& = i\hbar \delta_{ij}\ket{\psi}.
\end{align*}
Allora, per $i\neq j$ si ha
\begin{equation}
[x_i,p_j] = 0,
\end{equation}
invece
\begin{equation}
[x_1,p_1]=[x_2,p_2]=[x_3,p_3]=i\hbar.
\end{equation}
Analogamente si dimostra che
\begin{equation}
[x_i,x_j] = 0, \qquad [p_i,p_j] = 0.
\end{equation}
Quindi le componenti della posizione e dell'impulso sono osservabili compatibili eccetto quando si vuole misurare simultaneamente la posizione e l'impulso per la stessa componente.

\subsection{Deduzione del principio di indeterminazione di Heisenberg}
Si consideri una particella in una dimensione e siano $\psi(x)$ e $\varphi(p)$ le rappresentazioni della sua funzione d'onda rispettivamente nello spazio delle coordinate e nello spazio degli impulsi. Interpretando $\vert \psi(x) \vert^2$ e $\vert \varphi(p) \vert^2$ come distribuzioni di probabilità, è possibile definire il \textbf{valore medio} della posizione e dell'impulso ponendo
\begin{equation}
\bar{x} = \int_{-\infty}^{+\infty} \! x \vert \psi(x) \vert^2 \, \diff x, \qquad
\bar{p} = \int_{-\infty}^{+\infty} \! p \vert \varphi(p) \vert^2 \, \diff p,
\end{equation}
e inoltre è possibile definire anche le quantità dette \textbf{incertezza sulla posizione} e \textbf{incertezza sull'impulso}, ponendo
\begin{equation}\label{EQ:CAP1:Heis_inc_rel}
(\Delta x)^2 = \frac{\int_{-\infty}^{+\infty} \! (x-\bar{x})^2 \vert \psi(x) \vert^2 \, \diff x}{\int_{-\infty}^{+\infty} \! \vert \psi(x) \vert^2 \, \diff x}, \qquad
(\Delta p)^2 = \frac{\int_{-\infty}^{+\infty} \! (p-\bar{p})^2 \vert \varphi(p) \vert^2 \, \diff p}{\int_{-\infty}^{+\infty} \! \vert \varphi(p) \vert^2 \, \diff p},
\end{equation}
si tratta cioè degli errori relativi sulla posizione e sull'impulso ottenuti dividendo la varianza della grandezza per la sua norma. Allora si può dimostrare il seguente risultato.
\begin{teorema}[Relazione di indeterminazione di Heisenberg]
Sia $\psi\in L^2(\mathbb{R})$. Allora
\begin{equation}
(\Delta x)^2(\Delta p)^2 \geq \frac{\hbar^2}{4},
\end{equation}
dove $(\Delta x)^2$ e $(\Delta p)^2$ sono definite nella \eqref{EQ:CAP1:Heis_inc_rel}.
\end{teorema}
\begin{proof}
Si assuma che $\psi$ è continua e regolare a tratti, e che le funzioni $x\psi(x)$ e $\psi'(x)$ siano in $L^2(\mathbb{R})$. Per prima cosa si procede dimostrando il caso $\bar{x}=\bar{p}=0$. Allo scopo, si consideri
\begin{equation}
\int_{-\infty}^{+\infty} \! x \psi^*(x)\psi ' (x) \,\diff x
\end{equation}
e si effettui una integrazione per parti, ottenendo
\begin{align*}
\int_{-\infty}^{+\infty} \! x \psi^*(x)\psi ' (x) \,\diff x & = 
\left[ x \psi^*(x) \psi(x) \right] _{-\infty}^{+\infty} -
\int_{-\infty}^{+\infty} \! \left( x\psi^*(x) \right) ' \psi(x)  \,\diff x=\\
& = \left[ x \vert \psi(x) \vert^2 \right] _{-\infty}^{+\infty} -
\int_{-\infty}^{+\infty} \! \vert \psi(x) \vert^2 \,\diff x -
\int_{-\infty}^{+\infty} \! x (\psi^*(x))'\psi(x) \,\diff x.
\end{align*}
Essendo $x \vert \psi(x) \vert^2 \in L^2(\mathbb{R})$, allora $\left[ x \vert \psi(x) \vert^2 \right] _{-\infty}^{+\infty} =0$ e infine si ha che
\begin{align*}
\int_{-\infty}^{+\infty} \! \vert \psi(x) \vert^2 \,\diff x & = 
- \int_{-\infty}^{+\infty} \! x \psi^*(x)\psi ' (x) \,\diff x 
- \int_{-\infty}^{+\infty} \! x (\psi^*(x))'\psi(x) \,\diff x =\\
& = -2 \mathrm{Re} \int_{-\infty}^{+\infty} \! x\psi^*(x)\psi'(x) \,\diff x.
\end{align*}
Elevando entrambi i membri al quadrato ed applicando la diseguaglianza di Cauchy-Schwarz si ottiene che
\begin{equation}\label{EQ:CAP1:Heis_rel_ind_1}
\left( \int_{-\infty}^{+\infty} \! \vert \psi(x) \vert^2 \,\diff x \right)^2 \leq 4 \left( \int_{-\infty}^{+\infty} \! x^2\vert \psi(x) \vert^2 \,\diff x \right) \left( \int_{-\infty}^{+\infty} \! \vert \psi '(x) \vert^2 \,\diff x \right). 
\end{equation}
Infine, per il teorema di Plancherel, si ha
\begin{equation}\label{EQ:CAP1:Heis_rel_ind_2}
\int_{-\infty}^{+\infty} \! \vert \psi(x) \vert^2 \,\diff x = \frac{1}{2\pi} \int_{-\infty}^{+\infty} \! \vert \hat{\psi}(p) \vert^2 \,\diff p = \frac{1}{2\pi} \int_{-\infty}^{+\infty} \! \vert \varphi(p) \vert^2 \,\diff p.
\end{equation}
Allora applicando questa relazione a $\psi'(x)$, si ottiene
\begin{equation}\label{EQ:CAP1:Heis_rel_ind_3}
\begin{aligned}
\int_{-\infty}^{+\infty} \! \vert \psi '(x) \vert^2 \,\diff x & = \frac{1}{2\pi} \int_{-\infty}^{+\infty} \! \vert \hat{[\psi']}(p) \vert^2 \, \diff p = \frac{1}{2\pi} \int_{-\infty}^{+\infty} \! \frac{p^2}{\hbar^2} \vert \hat{\psi}(p) \vert^2 \, \diff p = \\
&=\frac{1}{2\pi\hbar^2} \int_{-\infty}^{+\infty} \! \vert \varphi(p) \vert^2 \, \diff p
\end{aligned}
\end{equation}
avendo usato la relazione $\hat{[\psi']}=\frac{i}{\hbar}p\hat{\psi}$. Allora sostituendo la \eqref{EQ:CAP1:Heis_rel_ind_2} e la \eqref{EQ:CAP1:Heis_rel_ind_3} nella \eqref{EQ:CAP1:Heis_rel_ind_1}, si ottiene
\begin{align*}
\left(\int_{-\infty}^{+\infty} \! \vert \psi(x) \vert^2 \,\diff x \right) & \left( \frac{1}{2\pi} \int_{-\infty}^{+\infty} \! \vert \varphi(p) \vert^2 \,\diff p \right) \\
& \leq 4 \left( \int_{-\infty}^{+\infty} \! x^2\vert \psi(x) \vert^2 \,\diff x \right) \left( \frac{1}{2\pi\hbar^2} \int_{-\infty}^{+\infty} \! \vert \varphi(p) \vert^2 \, \diff p \right),
\end{align*}
da cui segue la tesi. Il caso generale può essere ricondotto a quello appena trattato attraverso un cambiamento di variabile. Infatti ponendo
\begin{equation}
\Psi(x) = e^{-i\bar{p}x}\psi(x+\bar{x}),
\end{equation}
si ottiene che $\Delta_\Psi x = \Delta x$ e che $\Delta_{\hat{\Psi}} p =\Delta p$ (cfr. \citep{BOOK:Folland_Fourier}).
\end{proof}

\subsection{L'equazione di Schr\"{o}dinger}
La funzione d'onda si può ottenere come soluzione di un'equazione differenziale alle derivate parziali. In questo paragrafo si ricaverà tale equazione in un caso particolare, ovvero quello di un elettrone libero, soggetto ad un potenziale costante $V_0$. In meccanica classica la sua dinamica è determinata dalla funzione hamiltoniana \eqref{EQ:CAP1:Hamiltoniana_elettrone_libero} che scritta per il potenziale $V_0$ diviene
\begin{equation}\label{EQ:CAP1:Hamiltoniana_elettrone_libero_V_cost}
\mathcal{H}(\mathbf{x},\mathbf{p})=\frac{\vert \mathbf{p}\vert^2}{2m_e}-eV_0.
\end{equation}
Come già visto nel Paragrafo \ref{PAR:CAP1:Onde_materiali}, all'elettrone è possibile associare un'onda piana avente vettore d'onda $\mathbf{k}$ e pulsazione $\omega$, per cui la funzione d'onda sarà
\begin{equation}\label{EQ:CAP1:Funzione_d_onda}
\psi(\mathbf{x},t) = e^{i(\mathbf{k}\cdot\mathbf{x}-\omega t)}.
\end{equation}
Derivando la \eqref{EQ:CAP1:Funzione_d_onda} rispetto a $t$ e calcolando il laplaciano rispetto a $\mathbf{x}$ si ottengono le seguenti due equazioni
\begin{equation}\label{EQ:CAP1:Funzione_d_onda_Dt}
\frac{\partial \psi}{\partial t} = -i\omega\psi,
\end{equation}
\begin{equation}\label{EQ:CAP1:Funzione_d_onda_Dx}
\Delta_{\mathbf{x}} \psi = -\vert \mathbf{k} \vert^2 \psi.
\end{equation}
Utilizzando la relazione di Planck-Einstein \eqref{EQ:CAP1:Rel_Planck_Einstein} nella \eqref{EQ:CAP1:Funzione_d_onda_Dt}, si ottiene la seguente equazione
\begin{equation}\label{EQ:CAP1:Funzione_d_onda_e_P_E}
i\hbar\frac{\partial \psi}{\partial t} = E\psi,
\end{equation}
la quale permette di legare l'energia di un elettrone libero alla derivata temporale della funzione d'onda. Utilizzando la legge di de Broglie \eqref{EQ:CAP1:Legge_di_de_Broglie} nella \eqref{EQ:CAP1:Funzione_d_onda_Dx}, si ha
\begin{equation}
\Delta_{\mathbf{x}} \psi = - \frac{\vert \mathbf{p} \vert^2}{\hbar^2} \psi,
\end{equation}
da cui, ricavando $\vert \mathbf{p} \vert^2$ dalla \eqref{EQ:CAP1:Hamiltoniana_elettrone_libero_V_cost} e sostituendo, si ottiene
\begin{equation}\label{EQ:CAP1:Funzione_d_onda_Dx_mod}
-\frac{\hbar}{2m_e} \Delta_{\mathbf{x}}\psi-eV_0\psi=\mathcal{H}(\mathbf{x},\mathbf{p})\psi.
\end{equation}
Introducendo l'operatore hamiltoniano definito come
\begin{equation}
H=-\frac{\hbar}{2m_e} \Delta_{\mathbf{x}}-eV_0
\end{equation}
la \eqref{EQ:CAP1:Funzione_d_onda_Dx_mod} si può scrivere come
\begin{equation}\label{EQ:CAP1:Legame_hamiltoniana_op_hamiltoniano}
H\psi=\mathcal{H}(\mathbf{x},\mathbf{p})\psi.
\end{equation}
la funzione hamiltoniana classica non ha un significato diretto in meccanica quantistica perché, per quanto visto nel Paragrafo \ref{PAR:CAP1:Principi_fond}, non è possibile conoscere contemporaneamente con precisione le coordinate di $\mathbf{x}$ e di $\mathbf{p}$. Relativamente al caso dell'elettrone libero, questa relazione esprime un legame tra la funzione hamiltoniana classica $\mathcal{H}$ e l'operatore $H$, che continua ad essere ben definito.

Utilizzando la relazione \eqref{EQ:CAP1:Relazione_classica_energia} e poi la \eqref{EQ:CAP1:Legame_hamiltoniana_op_hamiltoniano} nella \eqref{EQ:CAP1:Funzione_d_onda_e_P_E}, si ottiene
\begin{equation}\label{EQ:CAP1:Equazione_di_Schroedinger}
i\hbar\frac{\partial \psi}{\partial t} = H\psi,
\end{equation}
che è nota come \textbf{equazione di Schr\"{o}dinger}. Essa è stata ricavata prima in un caso particolare. In generale, viene postulata la sua validità per descrivere il comportamento di una particella elementare oppure di un sistema quantistico. Inoltre, per la descrizione completa, è necessario imporre delle condizioni iniziali
\begin{equation}\label{EQ:CAP1:Equazione_di_Schroedinger_Cond_ini}
\psi(0,\mathbf{x})=\psi\ped{ini}(\mathbf{x}).
\end{equation}

\subsection{Il caso stazionario}
In questo paragrafo si tratterà lo studio dell'equazione di Schr\"{o}dinger nel caso stazionario, cioè in cui l'hamiltoniana non dipenda esplicitamente dal tempo. Si consideri l'equazione \eqref{EQ:CAP1:Equazione_di_Schroedinger} e si cerchino soluzioni del tipo
\begin{equation}\label{EQ:CAP1:Funzione_d_onda_variabili_separate}
\psi(\mathbf{x},t)=a(t)\phi(\mathbf{x}).
\end{equation}
Sostituendo questa espressione nella \eqref{EQ:CAP1:Equazione_di_Schroedinger} e osservando che l'hamiltoniana non dipenda esplicitamente dal tempo, si ottiene
\begin{equation}
i \hbar \frac{\partial a(t)}{\partial t} \phi(\mathbf{x}) = a(t)H\phi(\mathbf{x}),
\end{equation}
da cui si ottiene
\begin{equation}
\frac{i \hbar}{a(t)} \frac{\partial a(t)}{\partial t} \phi(\mathbf{x}) = \frac{H\phi(\mathbf{x})}{\phi(\mathbf{x})} \equiv E,
\end{equation}
per una certa costante $E$. Si osservi che conoscendo $E$ è possibile subito ricavare $a(t)$, fissando la condizione iniziale $a(0)=a_0$, con $a_0$ costante, ottenendo
\begin{equation}
a(t)= a_0 \exp \frac{Et}{i\hbar}.
\end{equation}
Inoltre la costante $E$ è calcolata insieme alla $\phi$ nel senso che esse sono rispettivamente autovalore ed autofunzione dell'operatore $H$, cioè si ha
\begin{equation}
H\phi=E\phi,
\end{equation}
la quale prende il nome di \textbf{equazione di Schr\"{o}dinger stazionaria}. Infine bisogna imporre la condizione iniziale \eqref{EQ:CAP1:Equazione_di_Schroedinger_Cond_ini}. Si osservi che in generale sarà
\begin{equation}
\psi\ped{ini}(\mathbf{x}) \neq a_0 \psi(\mathbf{x}).
\end{equation}
Supponendo che l'operatore $H$ ammette un insieme completo di autostati, è possibile sostituire la \eqref{EQ:CAP1:Funzione_d_onda_variabili_separate} con
\begin{equation}
\psi(\mathbf{x},t)= \sum_{\lambda\in\Lambda} a_\lambda (t) \psi_\lambda (\mathbf{x}).
\end{equation}
Ragionando come sopra, si può scrivere
\begin{equation}
\psi(\mathbf{x},t) = \sum_{\lambda\in\Lambda} a_{\lambda,\mbox{ini}} \exp \frac{E_\lambda t}{i\hbar} \phi_\lambda (\mathbf{x}),
\end{equation}
dove
\begin{equation}
H\phi_\lambda = E_\lambda \phi_\lambda,
\end{equation}
ed i coefficienti $a_{\lambda,\mbox{ini}}$ sono ricavati in modo che valga l'espressione
\begin{equation}
\psi\ped{ini} (\mathbf{x}) = \sum_{\lambda\in\Lambda} a_{\lambda,\mbox{ini}} \phi_\lambda (\mathbf{x}).
\end{equation}
\`{E} evidente che gli indici quantici hanno una grande importanza, per cui spesso si scrive l'equazione di Schr\"{o}dinger stazionaria nella forma
\begin{equation}
H\psi_\lambda = E_\lambda \psi_\lambda,
\end{equation}
specificando gli indici quantici $\lambda\in\Lambda$.


\section{Struttura della materia allo stato solido}
Nei solidi le posizioni di equilibrio degli atomi che compongono un materiale possono essere disposte o meno secondo una struttura regolare. Se la struttura è regolare, conoscerne a fondo le caratteristiche è di fondamentale importanza per la dinamica elettronica.

Il contenuto dei successivi paragrafi è tratto da \citep{DISP:Anile}, \citep{BOOK:Bassani}, \citep{BOOK:Ashcroft} e \citep{BOOK:Kittel}.

\subsection{Descrizione quantistica della materia allo stato solido}
Nella materia allo stato solido, gli elettroni si considerano vincolati ai nuclei atomici. Per studiarne il moto è opportuno specificare l'hamiltoniana $\mathcal{H}$ del sistema fisico costituito da elettroni e nuclei atomici. Per gli scopi di questa tesi, un materiale solido si può pensare come costituito da nuclei atomici ed elettroni. I nuclei si possono supporre indivisibili, dotati di massa $M$, carica $+Ze$, spin e momento magnetico trascurabili; gli elettroni si possono considerare dotati di massa $m$, carica $-e$, spin $\frac{1}{2}$ e momento magnetico $\mu_0$.

Sia $\mathbf{r}$ il vettore contenente le coordinate di posizione e di spin di tutti gli elettroni del sistema, $\mathbf{R}$ il vettore delle coordinate di posizione e di spin di tutti i nuclei. Allora la funzione d'onda degli elettroni e dei nuclei è del tipo $\psi(\mathbf{r},\mathbf{R},t)$ ed è determinata dall'equazione di Schr\"{o}dinger scritta per l'operatore hamiltoniano $H$ associato ad $\mathcal{H}$. \`{E} possibile decomporre l'hamiltoniana come somma algebrica di tre contributi:
\begin{equation}
\mathcal{H}=\mathcal{H}_e + \mathcal{H}_{eN} + \mathcal{H}_N,
\end{equation}
dove $\mathcal{H}_e$ è l'hamiltoniana relativa ai soli elettroni, $\mathcal{H}_{eN}$ l'hamiltoniana relativa alle interazioni tra elettroni e nuclei, $\mathcal{H}_N$ l'hamiltoniana relativa ai soli nuclei. Esplicitamente si ha
\begin{equation}
\mathcal{H}_e = \sum_i \frac{\vert \mathbf{p}_i \vert^2}{2m_e} +\frac{1}{2} \sum_{i\neq j} \frac{q^2}{\vert \mathbf{r}_i-\mathbf{r}_j \vert},
\end{equation}
\begin{equation}
\mathcal{H}_{eN} = -\sum_{i,I} \frac{Z_I q^2}{\vert \mathbf{r}_i - \mathbf{R}_I \vert},
\end{equation}
\begin{equation}
\mathcal{H}_N = \sum_I \frac{\vert \mathbf{p}_I \vert^2}{2m_I} + \frac{1}{2} \sum_{I\neq J} \frac{Z_I Z_J q^2}{\vert \mathbf{R}_I - \mathbf{R}_J \vert},
\end{equation}
dove $m_e$ è la massa dell'elettrone, $m_I$ e $Z_I$ sono rispettivamente la massa e il numero atomico nel nucleo contrassegnato con $I$.

Poiché l'hamiltoniana $\mathcal{H}$ non dipende esplicitamente dal tempo, ci si può ridurre a risolvere l'equazione di Schr\"{o}dinger stazionaria
\begin{equation}\label{EQ:CAP1:Eq_Schroedinger_stazionaria_autovalori}
H \psi_{n,v} = E_{n,v} \psi_{n,v},
\end{equation}
dove gli indici $n$ e $v$ si riferiscono rispettivamente agli stati degli elettroni e dei nuclei e l'operatore hamiltoniano $H$ si considera decomponibile in
\begin{equation}
H= H_e + H_{eN} + H_N,
\end{equation}
dove $H_e$, $H_{eN}$ e $H_N$ sono gli operatori hamiltoniani associati rispettivamente alle hamiltoniane $\mathcal{H}_e$, $\mathcal{H}_{eN}$ e $\mathcal{H}_N$.

\subsection{Approssimazione di Born-Oppenheimer}
I nuclei sono più massivi e meno mobili degli elettroni, pertanto è possibile decomporre il moto del sistema nelle oscillazioni dei nuclei intorno alle configurazioni di equilibrio e nel moto degli elettroni rispetto ai nuclei. Questa idea prende il nome di \textbf{approssimazione di Born-Oppenheimer}. Più precisamente, si suppone che la funzione d'onda globale $\psi_{n,v}$ si possa decomporre come
\begin{equation}\label{EQ:CAP1:Approssimazione_B_O_1}
\psi_{n,v} (\mathbf{r},\mathbf{R}) = \psi_n (\mathbf{r},\mathbf{R}) F_{n,v} (\mathbf{R}),
\end{equation}
dove $\psi_n (\mathbf{r},\mathbf{R})$ è la funzione d'onda che descrive il sistema degli elettroni e $F_{n,v} (\mathbf{R})$ è la funzione d'onda che descrive il sistema dei nuclei. Inoltre si suppone che l'azione, sulla funzione d'onda, dell'operatore hamiltoniano relativo ai nuclei può essere approssimata come
\begin{equation}\label{EQ:CAP1:Approssimazione_B_O_2}
H_N (\psi_n F_{n,v}) \approx \psi_n H_N (F_{n,v}).
\end{equation}
Infine, osservando che $F_{n,v}$ non dipende esplicitamente da $\mathbf{r}$, si ha anche
\begin{equation}\label{EQ:CAP1:Approssimazione_B_O_3}
H_e (\psi_n F_{n,v}) = F_{n,v} H_e (\psi_n).
\end{equation}
Allora considerando la \eqref{EQ:CAP1:Eq_Schroedinger_stazionaria_autovalori}, sostituendo la \eqref{EQ:CAP1:Approssimazione_B_O_1} e utilizzando le \eqref{EQ:CAP1:Approssimazione_B_O_2} e \eqref{EQ:CAP1:Approssimazione_B_O_3}, si ottiene
\begin{equation}
F_{n,v} H_e (\psi_n) + H_{eN}(\psi_n F_{n,v}) + \psi_n H_N (F_{n,v}) = E_{n,v} (\psi_n F_{n,v}).
\end{equation}
Dividendo per $\psi_n F_{n,v}$ si ha l'equazione
\begin{equation}\label{EQ:CAP1:Approssimazione_B_O_4}
\frac{H_e (\psi_n)}{\psi_n} + H_{eN} + \frac{H_N (F_{n,v})}{F_{n,v}} = E_{n,v}.
\end{equation}
I primi due termini devono essere uguali ad una costante indipendente da $\mathbf{r}$ perché sommati al terzo termine, che dipende solo da $\mathbf{R}$, devono essere uguali ad una costante. Di conseguenza si ha
\begin{equation}
\frac{H_e (\psi_n)}{\psi_n} + H_{eN} = E_n (\mathbf{R}),
\end{equation}
da cui, sostituendo nella \eqref{EQ:CAP1:Approssimazione_B_O_4}, si ricava
\begin{equation}
\frac{H_N (F_{n,v})}{F_{n,v}} + E_n (\mathbf{R}) = E_{n,v}.
\end{equation}
Si ottengono così due equazioni accoppiate che scritte esplicitamente diventano
\begin{equation}\label{EQ:CAP1:Sistema_B_O}
\left\lbrace 
\begin{aligned}
& (H_e + H_{eN}(\mathbf{r},\mathbf{R}))\psi_n = E_n(\mathbf{R})\psi_n \\
& (H_N + E_n(\mathbf{R})) F_{n,v} = E_{n,v} F_{n,v}
\end{aligned}
\right.
\end{equation}
Quindi, dalla prima equazione, per ogni autostato elettronico (contrassegnato dall'indice $n$), è possibile determinare gli autostati dei nuclei (contrassegnati dall'indice $v$) attraverso la seconda equazione.

Si vuole decomporre l'equazione \eqref{EQ:CAP1:Sistema_B_O}$_1$ in una parte statica, che descrive l'interazione degli elettroni con il reticolo dei nuclei, e una parte dinamica, che descrive gli effetti di interazione degli elettroni con le vibrazioni del reticolo (fononi). Cioè si scrive
\begin{equation}
H_e + H_{eN}(\mathbf{r},\mathbf{R}) = (H_e + H_{eN}(\mathbf{r},\mathbf{R}_0)) + (H_{eN}(\mathbf{r},\mathbf{R}) - H_{eN}(\mathbf{r},\mathbf{R}_0)),
\end{equation}
dove $\mathbf{R}_0$ è lo stato fondamentale del sistema dei nuclei, relativo all'energia più bassa. \`{E} possibile verificare che lo stato fondamentale $\mathbf{R}_0$ corrisponde ad un minimo per il potenziale
\begin{equation}
V_n(\mathbf{R}) = \frac{1}{2} \sum_{I\neq J} \frac{Z_I Z_J q^2}{\vert \mathbf{R}_I - \mathbf{R}_J \vert} + E_n (\mathbf{R}).
\end{equation}
Nota la funzione $E_n (\mathbf{R})$, è possibile determinare lo stato fondamentale
dei nuclei. Di conseguenza, il moto dei nuclei sarà costituito da piccole oscillazioni intorno allo stato fondamentale determinato dal potenziale $V_n$.

Si consideri l'equazione statica per gli elettroni, cioè
\begin{equation}
(H_e + H_{eN}(\mathbf{r},\mathbf{R}_0)) \psi_n = E_n (\mathbf{R}_0) \psi_n.
\end{equation}
Una ulteriore approssimazione, detta degli elettroni indipendenti, consiste nel supporre che la funzione degli elettroni si possa scrivere come prodotto delle funzioni d'onda di un solo elettrone, cioè
\begin{equation}
\psi_n (\mathbf{r},\mathbf{R}_0) = \prod_i \psi_n^i (\mathbf{r}^i,\mathbf{R}_0).
\end{equation}
In questa approssimazione si considera trascurabile l'effetto dell'interazione tra elettrone ed elettrone. Ciascuna funzione d'onda $\psi_n^i$ deve soddisfare un'equazione di Schr\"{o}dinger del tipo
\begin{equation}
-\frac{\hbar^2}{2m}\Delta_{\mathbf{r}^i} \psi_n^i (\mathbf{r}^i,\mathbf{R}_0) + U(\mathbf{r}^i,\mathbf{R}_0)\psi_n^i(\mathbf{r}^i,\mathbf{R}_0) = E_n^i \psi_n^i (\mathbf{r}^i,\mathbf{R}_0),
\end{equation}
dove il potenziale $U$ dipende dalla posizione dei nuclei nello stato fondamentale.

\subsection{Le strutture cristalline}
Gli atomi che compongono la materia allo stato solido soddisfano due proprietà fondamentali: le distanze interatomiche sono molto piccole (dell'ordine di $\si{\num{e-10} \metre}$) e le posizioni di equilibrio degli atomi sono fissate. A seconda di come sono disposte tali posizioni di equilibrio si può avere o meno una struttura regolare nel solido. Più precisamente, fissato un nucleo atomico come origine, è possibile definire la funzione $g(x)$ come la probabilità di trovare un altro nucleo a distanza $x$ dall'origine. Tale funzione è detta \textbf{funzione di correlazione a coppie} e in un solido essa presenta dei massimi anche per valori di $x$ grandi. In base a questa funzione, i solidi possono trovarsi in tre stati: cristallino, policristallino e amorfo. In un \textbf{cristallo} la funzione $g(x)$ presenta dei picchi discreti e la funzione non dipende dalla posizione del nucleo fissato; in un \textbf{policristallo} si ha che esso può essere pensato come un aggregato di tanti cristalli orientati in modo diverso e in questo caso la funzione di correlazione a coppie appare a tratti come quella di un cristallo ed essa cambia quando si passa da un cristallo all'altro; in un \textbf{solido amorfo}, infine, la funzione di correlazione a coppie presenta dei picchi accentuati solo per piccoli valori di $x$, per poi oscillare intorno ad un valore medio per valori più grandi di $x$ (cfr. \citep{BOOK:Bassani}).

I cristalli godono di proprietà di regolarità e simmetria che consentono uno studio teorico approfondito. Da adesso in poi si tratterà solamente la struttura dei cristalli, in quanto gli altri tipi di solidi esulano dagli scopi di questa tesi. Una proprietà fondamentale nella modellizzazione dei cristalli è la seguente: fissato un punto $\mathbf{r}$, tutti i punti associati ad esso mediante specifici vettori di traslazione $\mathbf{r}_n$ sono ad esso equivalenti, cioè indistinguibili fisicamente. In formule, se
\begin{equation*}
\mathbf{r}' = \mathbf{r} + \mathbf{r}_n,
\end{equation*}
allora si può dire che $\mathbf{r}'\equiv\mathbf{r}$ e tutte le osservabili fisiche hanno lo stesso valore in $\mathbf{r}$ e in $\mathbf{r}'$. Il concetto di cristallo può essere formalizzato matematicamente mediante la nozione di reticolo cristallino (cfr. \citep{DISP:Anile}).
\begin{definizione}
Si chiama \textbf{reticolo cristallino} un sottoinsieme numerabile di $\mathbb{R}^d$, con $d=1,2,3$, generato da $d$ vettori indipendenti $\mathbf{a}_1,\ldots,\mathbf{a}_d$, cioè
\begin{equation}
R=\left\lbrace \mathbf{a}\in\mathbb{R}^d \mid \mathbf{a}=n_1\mathbf{a}_1+\ldots+n_d\mathbf{a}_d, \quad n_1,\ldots,n_d\in\mathbb{Z} \right\rbrace .
\end{equation}
Inoltre l'insieme dei vettori $\left\lbrace \mathbf{a}_1,\ldots,\mathbf{a}_d \right\rbrace $ è detto \textbf{base} di $R$, mentre i vettori $\mathbf{a}_i$ per $i=1,\ldots,d$ sono detti \textbf{vettori primitivi} e si dice che essi generano $R$.
\end{definizione}
Questa definizione si può interpretare nel modo seguente. Dato il vettore $\mathbf{a} = \sum_{i=1}^d n_i \mathbf{a}_i$ e fissato un punto del cristallo, partendo da esso e spostandosi di $n_i$ passi di lunghezza $a_i$ nella direzione di $\mathbf{a}_i$ per $i=1,\ldots,d$ si troverà un altro punto del cristallo equivalente al primo nel senso spiegato in precedenza. Al variare di $n_i \in \mathbb{Z}$ per $i=1,\ldots,d$ si determineranno tutti i punti del cristallo equivalenti tra loro (cfr. \citep{BOOK:Kittel}). Inoltre si osservi che è possibile scegliere la base del reticolo in infiniti modi equivalenti. Infatti basta prendere $d$ vettori $\mathbf{a}'_1,\ldots,\mathbf{a}'_d$ con
\begin{equation}
\mathbf{a}'_i = \sum_{j=1}^d m_{ij} \mathbf{a}_j,
\end{equation}
dove $\det(M)=1$ con $M=(m_ij)$. Spesso si sceglie una base formata da vettori di lunghezza minima.
\begin{definizione}
Fissato un punto di un reticolo cristallino, i punti del reticolo ad esso più vicini sono detti \textbf{primi vicini}. Il numero di primi vicini si dice \textbf{numero di coordinazione}.
\end{definizione}

Le proprietà di equivalenza valgono in tutto lo spazio quindi il reticolo cristallino deve essere infinito, però ovviamente i cristalli sono finiti. Pertanto è necessario assumere che la maggior parte dei punti sarà sufficientemente distante dalla superficie. Frequentemente, però, i reticoli cristallini si considerano finiti non tanto perché gli effetti dovuti alla superficie sono importanti, ma semplicemente in quanto la descrizione risulta più semplice (cfr. \citep{BOOK:Ashcroft}).

Riprendendo la modellizzazione risultano utili le seguenti definizioni.
\begin{definizione}
Si chiama \textbf{cella primitiva} (o \textbf{elementare}) di un reticolo $R$ in $\mathbb{R}^d$, un qualunque sottoinsieme $\mathcal{D}_R$ di $\mathbb{R}^d$ contenente un unico elemento di $R$ (di solito l'origine) e tale che i suoi traslati formino una partizione di $\mathbb{R}^d$.
\end{definizione}
Si osservi che in uno stesso reticolo esistono infinite scelte di celle primitive.
\begin{definizione}
Si chiama \textbf{cella unitaria} la cella primitiva definita dai vettori di base di lunghezza minima, cioè esplicitamente
\begin{equation*}
\mathcal{D} = \left\lbrace \mathbf{x} = \alpha_1\mathbf{a}_1+\ldots+\alpha_d\mathbf{a}_d, \, \mbox{con} \, \alpha_1,\ldots,\alpha_d\in [0,1[ \right\rbrace
\end{equation*}
inoltre le dimensioni della cella unitaria sono dette \textbf{costanti di reticolo}.
\end{definizione}
\begin{definizione}
Si chiama \textbf{cella primitiva di Wigner-Seitz} la regione $\mathcal{D}_{WS}$ intorno all'origine i cui punti sono più vicini all'origine che ad ogni altro punto del reticolo, cioè in formule
\begin{equation*}
\mathcal{D}_{WS} = \left\lbrace \mathbf{x}\in\mathbb{R}^d \mid \vert \mathbf{x} \vert \leq \vert \mathbf{x}+\mathbf{a} \vert, \, \forall \mathbf{a}\in R \right\rbrace.
\end{equation*}
\end{definizione}
Per chiarire la precedente definizione si osservi la Figura \ref{FIG:CAP1:Wigner_Seitz}.
\begin{figure}[ht]
\centering
\includegraphics[width=0.5\columnwidth]{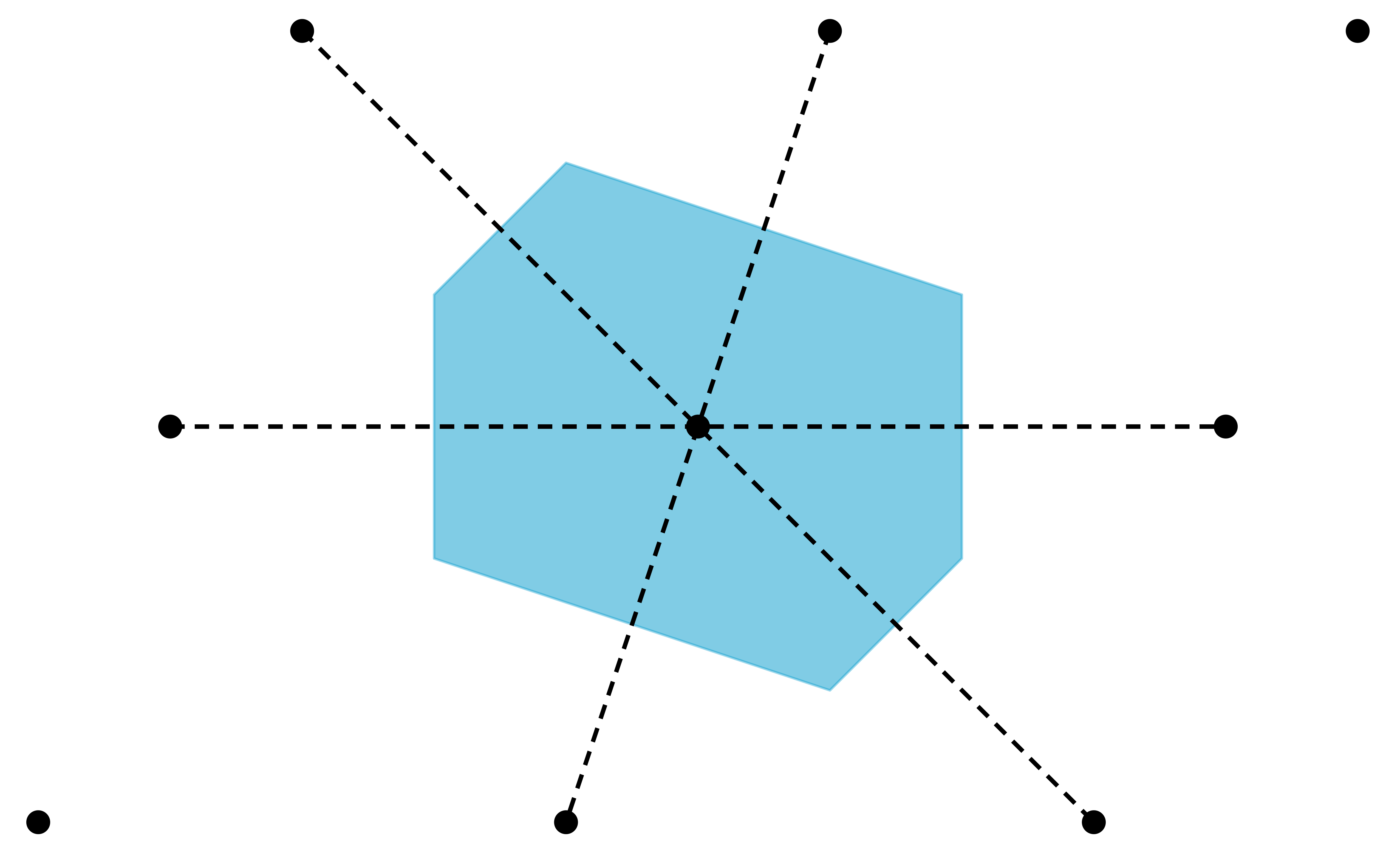}
\caption{Rappresentazione della cella primitiva Wigner-Seitz per un reticolo cristallino bidimensionale (cfr. \citep{BOOK:Ashcroft}).}
\label{FIG:CAP1:Wigner_Seitz}
\end{figure}
\FloatBarrier
I reticoli cristallini vengono classificati in base alle loro proprietà di traslazione. Cioè si consideri una traslazione $T_\mathbf{a}:\mathbb{R}^d\rightarrow\mathbb{R}^d$ definita da
\begin{equation*}
T_\mathbf{a} (\mathbf{x}) = \mathbf{x}+\mathbf{a}.
\end{equation*}
Allora si ha $T_\mathbf{a} (R)=R, \quad \forall \mathbf{a}\in R$ e viceversa, cioè un reticolo è invariante per traslazione se e solo se viene traslato rispetto ad un suo elemento. \`{E} possibile dimostrare che esiste solo un numero finito di reticoli di traslazione, detti \textbf{reticoli di Bravais}. Nel caso unidimensionale esiste un solo reticolo di Bravais, nel caso bidimensionale ne esistono 5 mentre nel caso tridimensionale ne esistono 14.

Per la trattazione dei prossimi capitoli sarà utile approfondire i reticoli di Bravais bidimensionali, per cui è opportuno darne una descrizione più dettagliata. Nel caso bidimensionale una cella elementare è costituita da due vettori, che saranno indicati con $\mathbf{a}_1$, $\mathbf{a}_2$ e dall'angolo che essi formano, che sarà indicato con $\varphi$. In base alle diverse scelte dei vettori primitivi e dell'angolo da essi formato, si dimostra che esistono 5 reticoli di Bravais bidimensionali: obliquo, rettangolare, rettangolare centrato, quadrato ed esagonale. La descrizione di questi reticoli è riassunta nella Tabella \ref{TAB:CAP1:Bravais_2D}.

Sia $R$ un reticolo di Bravais di base $\left\lbrace \mathbf{a}_1,\ldots,\mathbf{a}_d \right\rbrace$. Si consideri una funzione d'onda del tipo
\begin{equation}\label{EQ:CAP1:Funz_onda_Bravais}
\psi(\mathbf{x}) = e^{i\mathbf{k}\cdot\mathbf{x}},
\end{equation}
dove $\mathbf{k}$ è un generico vettore d'onda. In generale $\psi(\mathbf{x})$ non ha la periodicità di $R$, però imponendo questa proprietà si ha
\begin{equation*}
e^{i\mathbf{k}\cdot(\mathbf{x}+\mathbf{r})} = e^{i\mathbf{k}\cdot\mathbf{x}},
\end{equation*}
per ogni $\mathbf{r}\in R$. Inoltre, semplificando si ottiene
\begin{equation}\label{EQ:CAP1:Funz_onda_Bravais_pta}
e^{i\mathbf{k}\cdot \mathbf{r}} = 1,
\end{equation}
per ogni $\mathbf{r}\in R$. Allora le funzioni d'onda del tipo \eqref{EQ:CAP1:Funz_onda_Bravais} e soddisfacenti la \eqref{EQ:CAP1:Funz_onda_Bravais_pta} hanno la stessa periodicità di $R$. Si ha la definizione seguente.
\begin{definizione}
Dato un reticolo di traslazione $R$ di base $\left\lbrace \mathbf{a}_1,\ldots,\mathbf{a}_d \right\rbrace$, si chiama \textbf{reticolo inverso} il reticolo $G$ di base $\left\lbrace \mathbf{b}_1,\ldots,\mathbf{b}_d \right\rbrace$ con questi vettori definiti dalla relazione
\begin{equation}\label{EQ:CAP1:Relazione_reticolo_inverso}
\mathbf{b}_i\cdot\mathbf{a}_j = 2\pi\delta_{ij},
\end{equation}
dove $\delta_{ij}$ è il simbolo di Kronecker.
\end{definizione}
Osservando che $\mathbf{r}=n_1\mathbf{a}_1+\ldots+n_d\mathbf{a}_d$ allora se i $\mathbf{b}_i$ soddisfano la relazione \eqref{EQ:CAP1:Relazione_reticolo_inverso}, la \eqref{EQ:CAP1:Funz_onda_Bravais_pta} è automaticamente soddisfatta. Inoltre $\mathbf{b}_1,\ldots,\mathbf{b}_d$ sono $d$ vettori ciascuno dei quali è ortogonale a tutti gli elementi tranne uno della base di $R$. Pertanto $\left\lbrace \mathbf{b}_1,\ldots,\mathbf{b}_d \right\rbrace$ è una base di $\mathbb{R}^d$ e l'insieme
\begin{equation*}
G = \left\lbrace \mathbf{x}\in\mathbb{R}^d \mid \mathbf{x} = m_1\mathbf{b}_1+m_d\mathbf{b}_d, \, \mbox{con} \, m_1,\ldots,m_d\in\mathbb{Z} \right\rbrace
\end{equation*}
definisce un nuovo reticolo cristallino. Inoltre se $R$ è un reticolo di Bravais allora anche $G$ lo è.
\FloatBarrier
\begin{table}[!ht]
\centering
\begin{tabular}{|c|c|c|c|}
\hline
Reticolo cristallino & Sistema & Cella elementare & Parametri \\
\hline
\includegraphics[width=0.3\columnwidth]{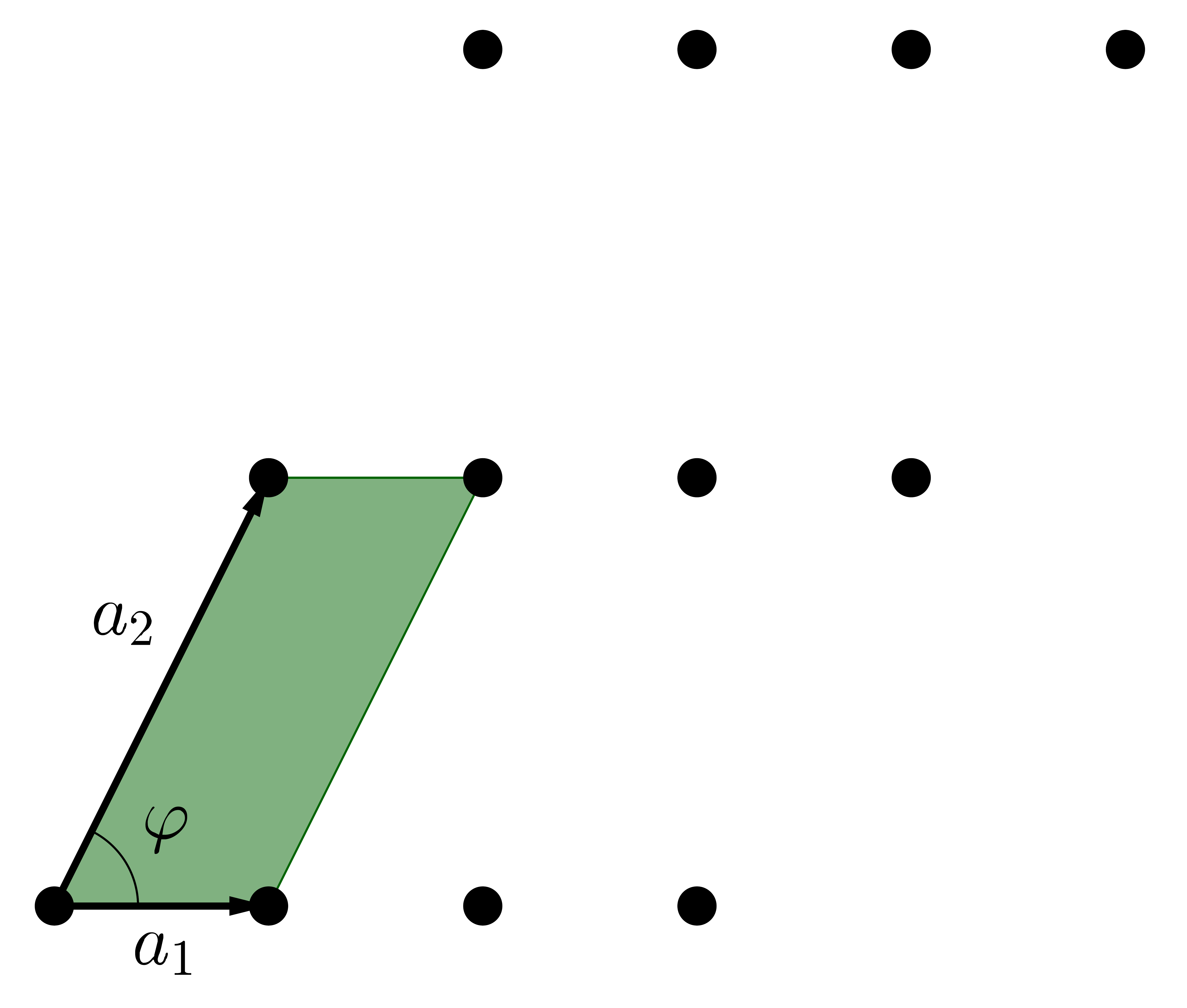} & obliquo & parallelogrammo & $a_1 \neq a_2, \quad \varphi\neq 90^\circ$ \\
\hline
\includegraphics[width=0.25\columnwidth]{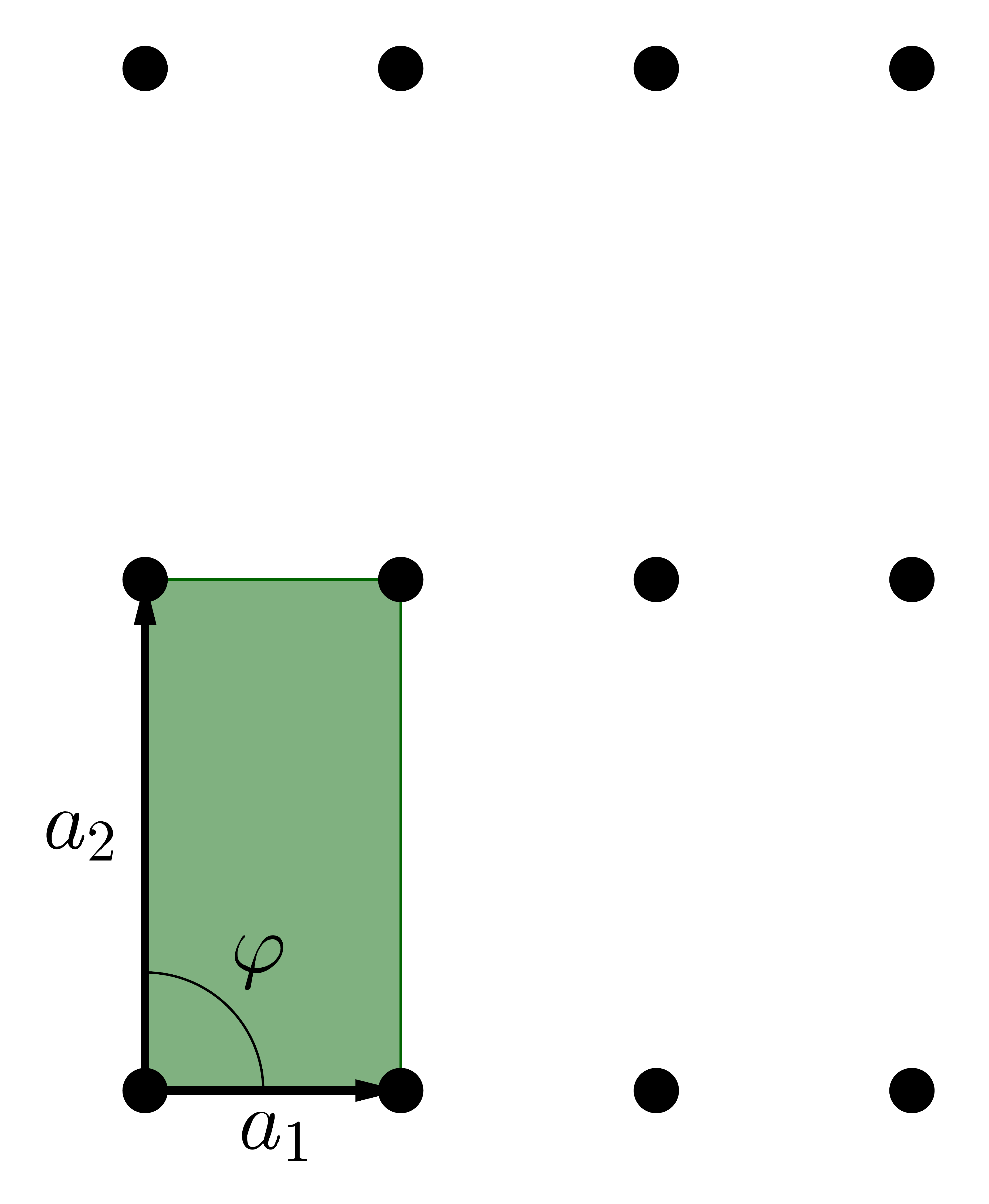} & rettangolare & rettangolo & $a_1 \neq a_2, \quad \varphi = 90^\circ$ \\
\hline
\includegraphics[width=0.35\columnwidth]{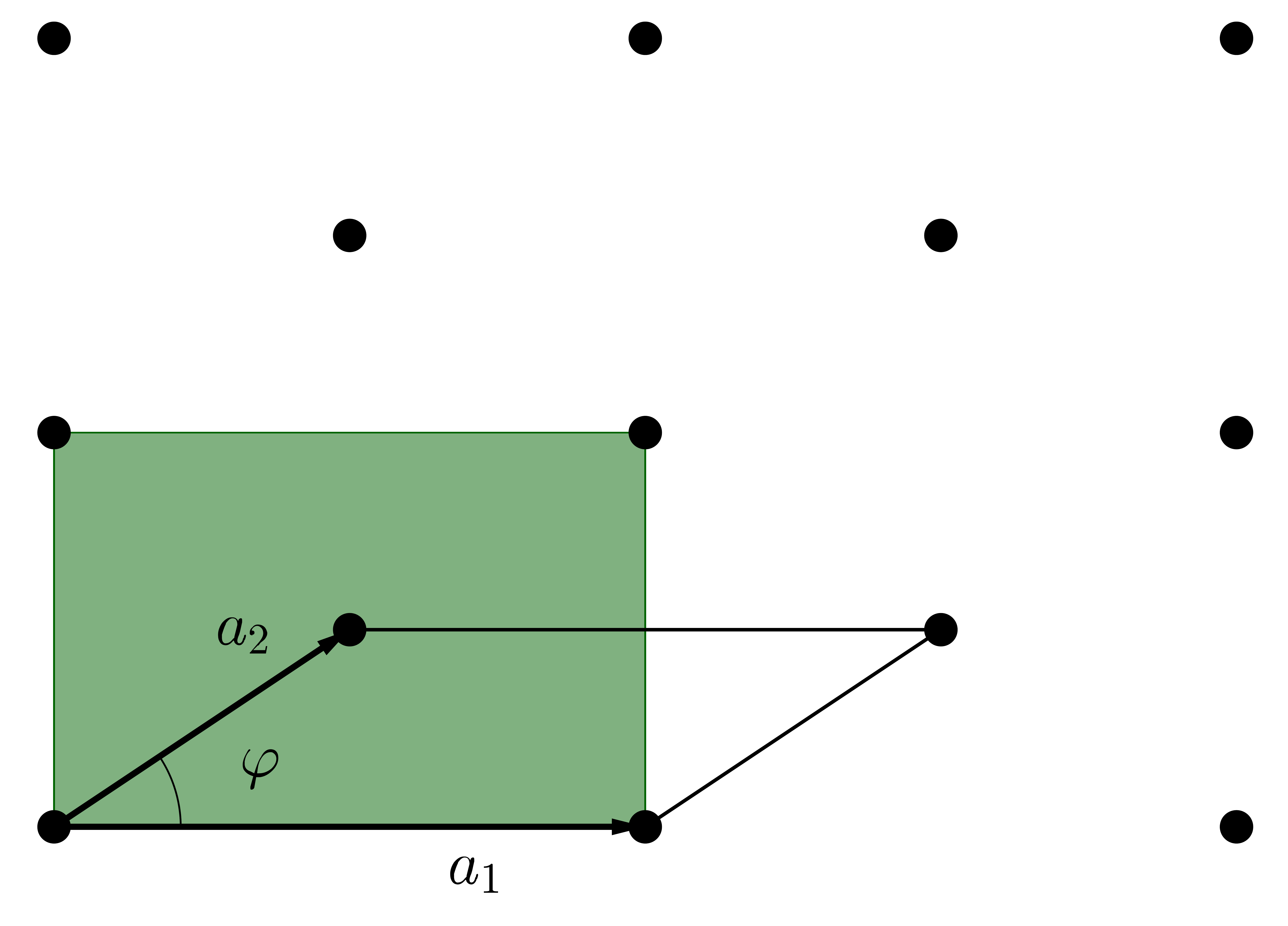} & rettangolare & parallelogrammo & $a_1 \neq a_2, \quad \varphi\neq 90^\circ$ \\
\hline
\includegraphics[width=0.3\columnwidth]{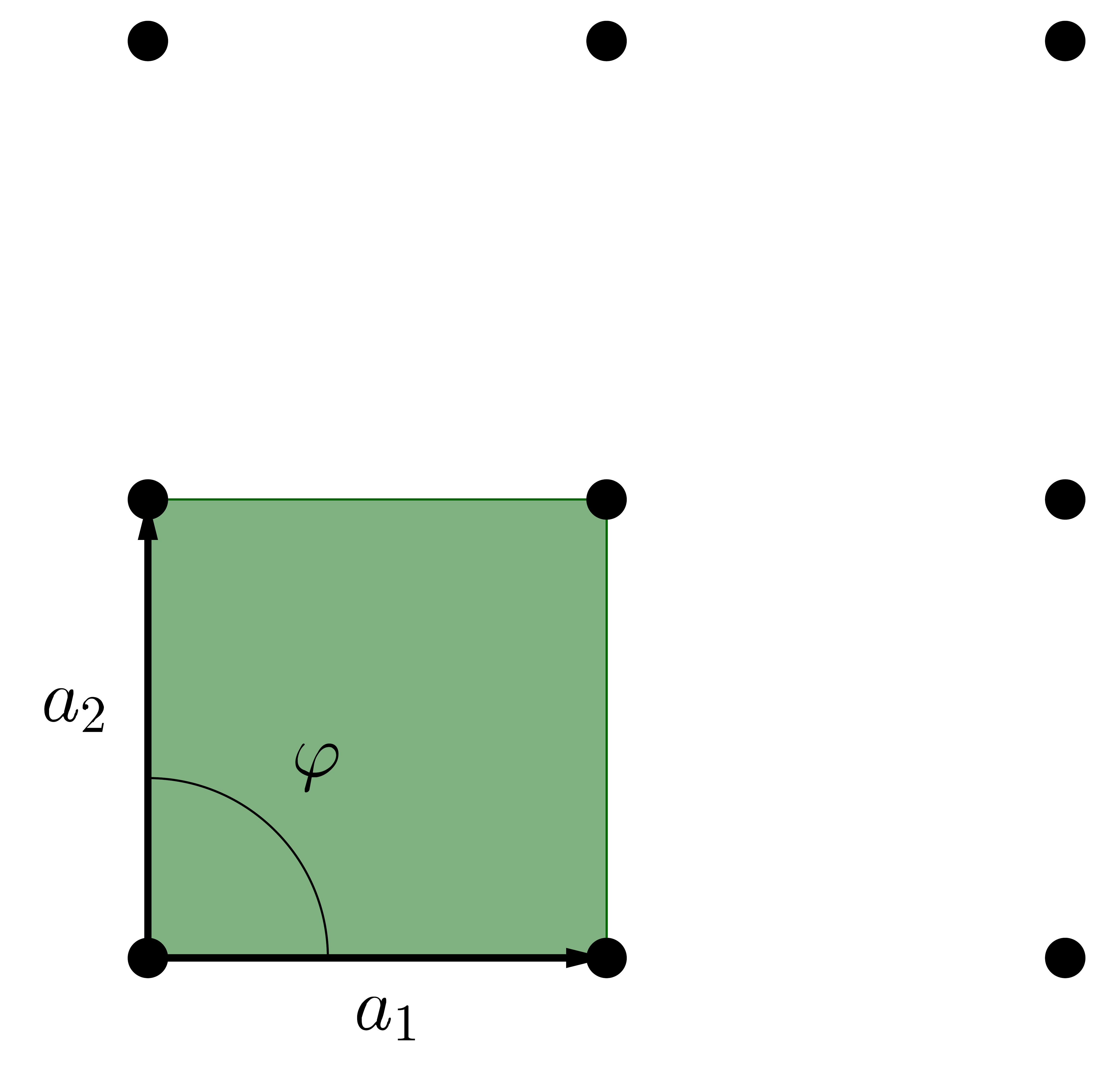} & quadrato & quadrato & $a_1 = a_2, \quad \varphi = 90^\circ$ \\
\hline
\includegraphics[width=0.35\columnwidth]{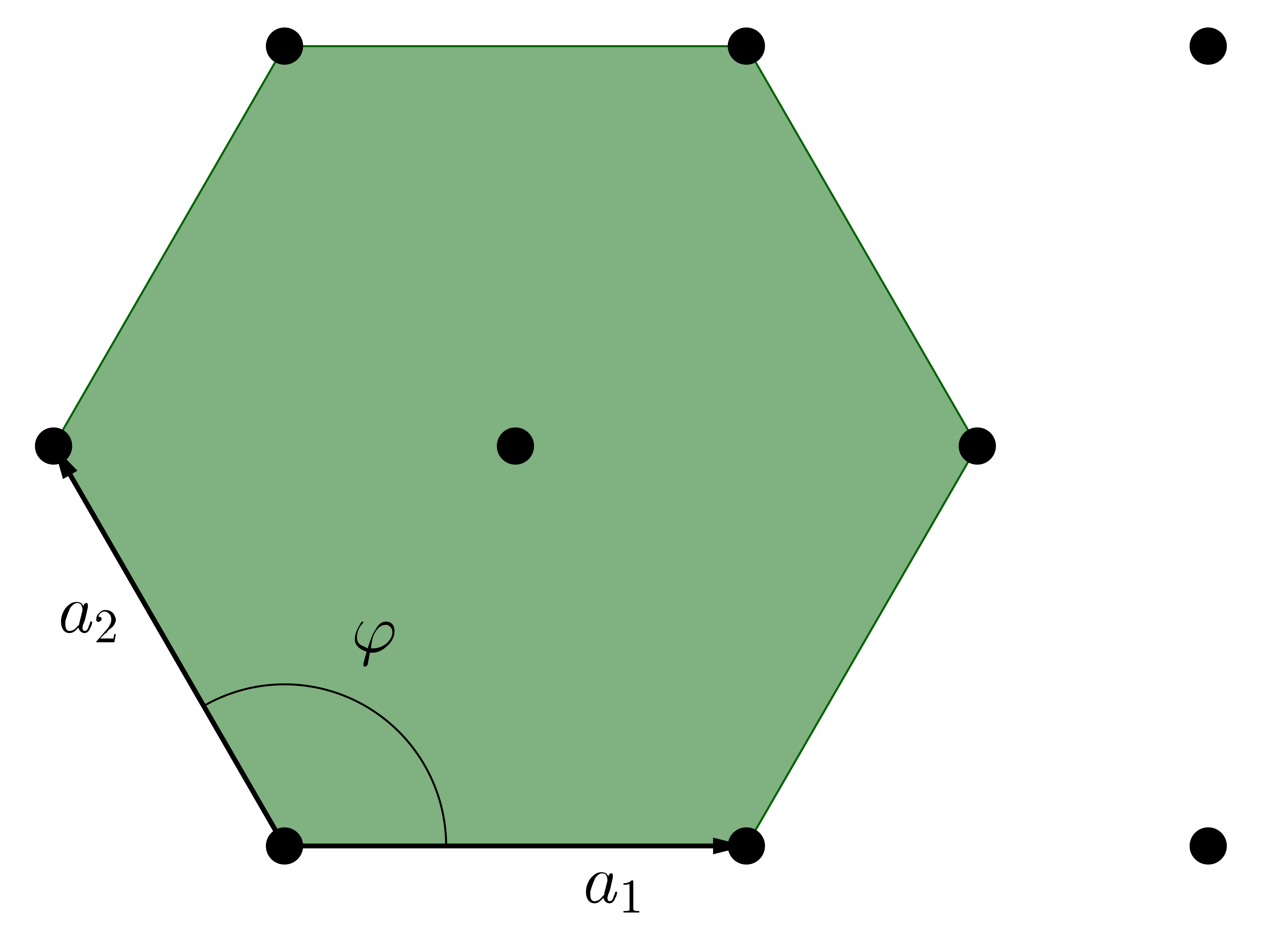} & esagonale & rombo & $a_1 = a_2, \quad \varphi = 120^\circ$ \\
\hline
\end{tabular}
\caption{Reticoli di Bravais bidimensionali. Le parti evidenziate rappresentano le celle elementari nei vari casi.}
\label{TAB:CAP1:Bravais_2D}
\end{table}
\FloatBarrier
\newpage
\begin{definizione}
Si chiama \textbf{prima zona di Brillouin} la cella primitiva di Wigner-Seitz del reticolo inverso, cioè l'insieme
\begin{equation}
\mathcal{B}=\left\lbrace \mathbf{k}\in\mathbb{R}^3 \mid \vert \mathbf{k} \vert \leq \vert \mathbf{k}+\mathbf{b} \vert \quad \forall \mathbf{b}\in G \right\rbrace.
\end{equation}
Il centro della prima zona di Brillouin è detto \textbf{punto} $\Gamma$.
\end{definizione}

\subsection{Moto di un elettrone in presenza di un potenziale periodico}\label{PAR:CAP2:Moto_elettrone}
I nuclei atomici collocati nei punti di un reticolo $R$ generano un potenziale elettrostatico $U_R$, che è periodico con la stessa periodicità del reticolo cristallino, cioè
\begin{equation}
U_R (\mathbf{x}) = U_R (\mathbf{x} + \mathbf{a} ) \qquad \forall \mathbf{a} \in R.
\end{equation}
Prima di illustrare un'importante conseguenza dell'introduzione del potenziale periodico è opportuno dare la seguente
\begin{definizione}
Data una funzione d'onda $\psi(\mathbf{x})$ e un reticolo cristallino $R$, si dice che $\psi$ è \textbf{quasi-periodica} rispetto a $R$ se
\begin{equation}
\psi(\mathbf{x}) = u(\mathbf{x})\exp(i\mathbf{k}\cdot \mathbf{x}),
\end{equation}
per $\mathbf{k}\in\mathbb{R}^3$ fissato e con $u$ soddisfacente la relazione $u(\mathbf{x})=u(\mathbf{x}+\mathbf{a}) \quad \forall \mathbf{a}\in R$.
\end{definizione}
Inoltre vale il seguente
\begin{teorema}[di Bloch]
Le autofunzioni di una hamiltoniana con un potenziale periodico rispetto ad un reticolo $R$, possono essere scelte in modo tale da essere quasi-periodiche rispetto a $R$.
\end{teorema}
\begin{proof}
Consultare \citep{DISP:Anile}.
\end{proof}
In virtù del Teorema di Bloch è possibile indicizzare le autofunzioni (e quindi gli autovalori) di $H$ mediante i valori di $\mathbf{k}$. Di conseguenza, la funzione $\psi_\mathbf{k}$ e il livello energetico $E_\mathbf{k}$ sono, al variare di $\mathbf{k}\in\mathbb{R}^3$, soluzione del seguente problema agli autovalori in una cella primitiva $\mathcal{D}$ del reticolo cristallino $R$
\begin{equation}\label{EQ:CAP1:Probl_autov_reticolo}
\left\lbrace 
\begin{aligned}
& -\frac{\hbar}{2m_e}\Delta_\mathbf{x} \psi_\mathbf{k} + U_R \psi_\mathbf{k} = E_\mathbf{k} \psi_\mathbf{k} \quad \mbox{in} \; \mathcal{D} \\
& \psi_\mathbf{k} (\mathbf{x}+\mathbf{a}) = e^{i\mathbf{k}\cdot\mathbf{a}} \psi_\mathbf{k}(\mathbf{x}), \quad \forall \mathbf{x},\mathbf{x}+\mathbf{a}\in\partial\mathcal{D}.
\end{aligned}
\right.
\end{equation}
Inoltre, la funzione d'onda $\psi_\mathbf{k}$ e lo stato energetico $E_\mathbf{k}$, viste come funzioni di $\mathbf{k}$, sono periodiche rispetto al reticolo inverso $G$, cioè
\begin{equation}
\psi_{\mathbf{k}+\mathbf{b}} = \psi_\mathbf{k}, \quad E_{\mathbf{k}+\mathbf{b}} = E_\mathbf{k}, \quad \forall \mathbf{b}\in G.
\end{equation}
Quindi, essendo $\psi_\mathbf{k}$ periodica rispetto a $G$, è sufficiente far variare $\mathbf{k}$ in una cella primitiva del reticolo inverso. Solitamente si prende $\mathbf{k}\in\mathcal{B}$.

\subsection{La struttura elettronica a bande}
Nei cristalli gli elettroni che si trovano nei livelli energetici più bassi li occupano stabilmente, perché le loro funzioni d'onda non raggiungono gli atomi vicini in modo significativo. La stessa cosa non si può dire per gli elettroni che si trovano nei livelli più esterni. In particolare, per studiare quelli che contribuiscono ai legami chimici è necessario considerare la presenza degli atomi vicini. Un modo per avere un'idea qualitativa della struttura elettronica a bande dei semiconduttori è detto metodo \emph{tight-binding}. Esso consiste nel considerare un cristallo dotato della stessa struttura di quello in esame ma avente una costante di reticolo arbitrariamente grande. Tale sistema può essere descritto da un insieme di atomi isolati e quindi le funzioni d'onda rappresentano esattamente gli stati di tutti gli elettroni. Facendo decrescere la costante di reticolo, raggiunta una certa distanza, gli elettroni più esterni iniziano a sentire la presenza degli atomi più vicini. Allora le funzioni d'onda si sovrappongono e formano stati che si estendono su tutto il cristallo. L'effetto di questa sovrapposizione consiste nel fatto che gli orbitali atomici di pari energia si suddividono in livelli differenti. Partendo dagli stati più esterni i singoli livelli diventano bande di energia strette che crescono e alla fine si sovrappongono, come mostrato in Figura \ref{FIG:CAP2:tight_binding}.
\begin{figure}[ht]
\centering
\includegraphics[width=0.4\columnwidth]{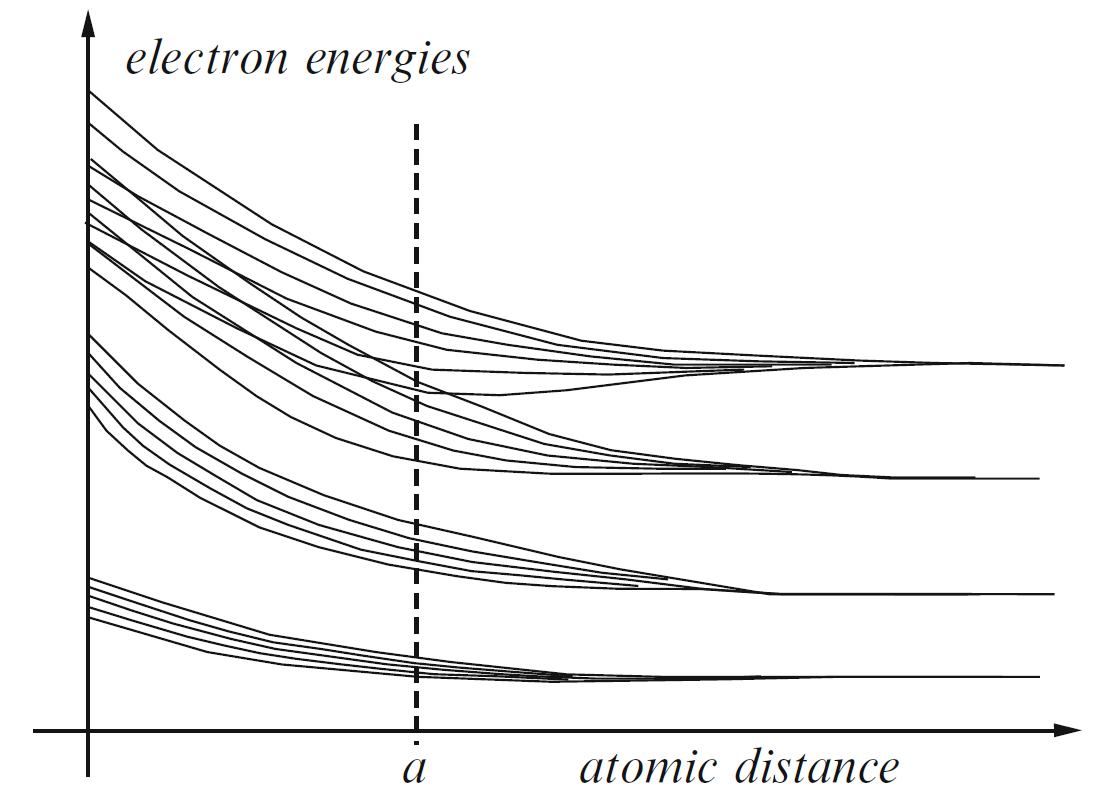}
\caption{Rappresentazione schematica della formazione delle bande di energia. La costante di reticolo è indicata con $a$.}
\label{FIG:CAP2:tight_binding}
\end{figure}
\FloatBarrier
Si consideri il problema agli autovalori \eqref{EQ:CAP1:Probl_autov_reticolo}. Fissato $\mathbf{k}\in\mathcal{B}$, la \eqref{EQ:CAP1:Probl_autov_reticolo}$_1$ è un'equazione ellittica definita in un insieme compatto di $\mathbb{R}^3$. Allora è possibile dimostrare (cfr.~\citep{DISP:Anile}) che esiste una successione di autovalori $E_{n,\mathbf{k}}$, con relative autofunzioni $\psi_{n,\mathbf{k}}$, con $n \in\mathbb{N}$, che soddisfano il problema \eqref{EQ:CAP1:Probl_autov_reticolo}. Si osservi che $E_{n,\mathbf{k}}$ rappresenta l'energia totale, cinetica più potenziale, dell'elettrone che si trova nell'autostato $\psi_{n,\mathbf{k}}$. Le funzioni $k \rightarrow E_{n,\mathbf{k}}$ sono dette \textbf{bande energetiche}. Inoltre tali funzioni sono numerate in modo tale che, fissato $\mathbf{k}$, la banda di numero più piccolo ha energia più bassa. In tal modo con la notazione $\mathcal{E}_n(\mathbf{k})$ è indicata l'$n$-esima banda energetica.

Allora, ricordando il legame tra energia e pulsazione stabilito dalla relazione di Planck-Einstein, per ogni $\mathbf{k}$ fissato è possibile determinare in corrispondenza di $\mathcal{E}_n(\mathbf{k})$ una pulsazione
\begin{equation}
\omega_n(\mathbf{k})=\frac{1}{\hbar}\mathcal{E}_n(\mathbf{k}).
\end{equation}
Quindi la $n$-esima banda energetica $\mathcal{E}_n(\mathbf{k})$, può anche essere interpretata come relazione di dispersione di un pacchetto d'onda associato alla sovrapposizione di onde di Bloch $e^{i\mathbf{k}\cdot\mathbf{x}}u_{n,\mathbf{k}}$. La corrispondente velocità di gruppo è definita dalla relazione
\begin{equation}\label{EQ:CAP1:Vel_gruppo_elettroni}
\mathbf{v}_n(\mathbf{k})=\frac{1}{\hbar} \nabla_\mathbf{k} \mathcal{E}_n(\mathbf{k})
\end{equation}
e rappresenta la velocità del pacchetto d'onda elettronico.

Vi possono essere valori di energia che non sono assunti da nessuna delle funzioni $\mathcal{E}_n(\mathbf{k})$, dette \textbf{energie proibite} e sono tali che non possono essere raggiunte da alcun elettrone che si muove nel cristallo e costituiscono i cosiddetti \textbf{gap energetici}. Allo zero assoluto gli elettroni occuperanno gli stati disponibili a più bassa energia nelle bande di energia del cristallo. L'energia del più alto livello energetico completamente occupato è detto \textbf{livello di Fermi} del materiale ed è indicato con $\varepsilon_F$. Si possono presentare due situazioni ben diverse tra di loro.
\begin{enumerate}
\item Un certo numero di bande possono essere completamente riempite, mentre tutte le altre rimangono completamente vuote. In questo caso, si chiama \textbf{banda di valenza}, e si indica con $\mathcal{E}_v(\mathbf{k})$, la più alta banda riempita; si chiama \textbf{banda di conduzione}, e si indica con $\mathcal{E}_c(\mathbf{k})$, la più bassa banda vuota. Si chiama \textbf{gap energetico} tra le due bande, e si indica con $E_g$, la quantità
\begin{equation}
E_g = \min_{\mathbf{k}\in\mathcal{B}} \mathcal{E}_c(\mathbf{k}) - \max_{\mathbf{k}\in\mathcal{B}} \mathcal{E}_v(\mathbf{k}).
\end{equation}
A seconda di quanto sia grande questo gap energetico, i cristalli sono detti \textbf{isolanti} o \textbf{semiconduttori}. Tipicamente $E_g$ varia da $1$ a parecchi $\si{\electronvolt}$ nel caso degli isolanti ed è compreso tra $0.1$ e $\si{\num{0.5}\electronvolt}$ per i semiconduttori.
\item Un certo numero di bande sono solo parzialmente riempite. In questo caso si dice che il cristallo è un \textbf{conduttore}.
\end{enumerate}
Se la temperatura non è nulla, un certo numero di elettroni, per eccitazione termica, passerà nella banda di conduzione e lascerà anche un numero uguale di stati non occupati nella banda di valenza. Questo avviene più agevolmente nei semiconduttori, perché il gap energetico tra la banda di valenza e quella di conduzione non è molto grande. Sotto l'azione di un campo elettrico esterno, gli elettroni passati alla banda di conduzione, hanno a disposizione un gran numero di stati liberi nei quali si possono spostare e pertanto possono partecipare alla conduzione elettrica. Questi elettroni lasciano degli stati vuoti nella banda di valenza, i quali costituiscono degli stati liberi in cui si possono spostare gli elettroni di valenza. Allora è possibile studiare la dinamica di questi stati vuoti in modo analogo alla dinamica degli elettroni passati alla banda di conduzione. Tali stati si pensano come se fossero occupati da particelle di carica positiva, dette \textbf{lacune}, che si muovono in verso opposto a quello degli elettroni.

Nelle regioni attorno ai minimi della banda di conduzione, dette valli, o attorno ai massimi della banda di valenza, la funzione $\mathcal{E}(\mathbf{k})$ può essere approssimata da una funzione quadratica in $\mathbf{k}$ (cfr.~\citep{BOOK:Jacoboni}). Questa approssimazione delle bande è detta \textbf{approssimazione parabolica}.

\subsection{Vibrazioni reticolari e fononi}
I nuclei degli atomi che costituiscono il reticolo non sono fissi, ma oscillano attorno alle loro posizioni di equilibrio, che corrispondono al minimo dell'energia potenziale del sistema. Per comprendere meglio questo concetto si consideri, ad esempio, un reticolo cristallino unidimensionale e lo si studi dal punto di vista classico. Esso si può vedere come una catena lineare di punti materiali aventi uguale massa e collegati da molle, rappresentanti i legami chimici, e posti a distanza pari alla costante del reticolo. Si suppone per semplicità che il moto possa avvenire solo nella direzione della catena e che essa formi un grande anello in modo da imporre condizioni al bordo periodiche (il problema si riporta al caso della corda vibrante in un intervallo finito). Si può dimostrare dalla meccanica classica (cfr.~\citep{BOOK:Jacoboni}) che le soluzioni delle equazioni di moto sono onde piane e imponendo le condizioni periodiche si dimostra che i vettori d'onda possibili sono del tipo
\begin{equation}
k_n = \frac{2\pi}{L}n, \qquad n=0,\pm 1, \pm 2, \ldots,
\end{equation}
dove $L$ è la lunghezza della catena. Questi vettori d'onda definiscono i modi vibrazionali del sistema. Quanto appena scritto serve solamente a dare un'idea di come vibri un reticolo cristallino, tuttavia descrivere dettagliatamente i modi vibrazionali di un sistema complesso non è necessario per la stesura di questa tesi e si rimanda a \citep{BOOK:Jacoboni}. Si conclude questa breve parentesi con una definizione. Si chiama \textbf{densità degli stati}, il numero di vettori d'onda per unità di lunghezza, cioè nel caso unidimensionale precedente
\begin{equation}
g(k) = \frac{L}{2\pi}.
\end{equation}
Il trasporto degli elettroni è fortemente influenzato dalle vibrazioni reticolari. Inoltre, supponendo che la loro ampiezza è piccola rispetto alle distanze tra i nuclei, le vibrazioni reticolari sono descritte come oscillazioni armoniche indipendenti. In meccanica quantistica i valori di energia assunti da ogni modo vibrante sono in numero discreto, cioè
\begin{equation}
\varepsilon_n = (n+1/2)\hbar\omega, \qquad n=0,1,2,\ldots,
\end{equation}
dove $\omega$ è la frequenza associata all'oscillatore armonico. Allora i modi corrispondenti a questi valori di energia discreti sono detti \textbf{fononi}. Un fonone è quindi un quanto vibrazionale del reticolo e rappresenta l'analogo di quanto visto per la quantizzazione dei campi elettromagnetici che invece dà origine ai fotoni. Se l'energia è $\varepsilon_n$ allora si dice che vi sono $n$ fononi corrispondenti. Quando un fonone è sottoposto a qualche interazione con il reticolo allora il suo stato viene modificato e si dice che avviene un fenomeno di \textbf{scattering}. In particolare, se lo stato del fonone considerato ha energia $\varepsilon_n$, può avvenire una transizione o allo stato con energia $\varepsilon_{n-1}$ oppure allo stato con energia $\varepsilon_{n+1}$. Nel primo caso si dice che è avvenuto un \textbf{assorbimento}, nel secondo caso si dice che è avvenuta una \textbf{emissione}.

Nei cristalli si possono avere parecchi tipi di vibrazioni reticolari, che vengono però raggruppate in due grandi classi: le onde ottiche e le onde acustiche. La relazione di dispersione delle onde acustiche è lineare, cioè
\begin{equation}
\hbar \omega\ped{ac} (\vert \mathbf{k} \vert) \simeq c_s \hbar \vert \mathbf{k} \vert,
\end{equation}
dove $c_s$ è la velocità del suono nel cristallo. Inoltre l'energia di queste onde è piccola. Le onde ottiche hanno una relazione di dispersione costante, cioè
\begin{equation}
\hbar \omega\ped{op} (\vert \mathbf{k} \vert) \simeq \hbar \omega\ped{op} = \mbox{costante}.
\end{equation}
Inoltre i valori dell'energia sono molto più grandi rispetto a quelli delle onde acustiche. Poiché l'energia di queste vibrazioni è quantizzata, esse danno origine ai fononi, che di conseguenza si distinguono in \textbf{fononi ottici} e \textbf{fononi acustici}. Poiché la relazione di dispersione è in generale periodica e il suo periodo coincide con l'ampiezza della prima zona di Brillouin (cfr.~\citep{BOOK:Jacoboni}) allora i vettori d'onda si considerano variabili nella prima zona di Brillouin.

\subsection{Densità degli stati}\label{PAR:CAP1:Densita_stati}
Si consideri un reticolo cristallino tridimensionale finito di vettori primitivi $\mathbf{a}_1$, $\mathbf{a}_2$ e $\mathbf{a}_3$. Sia $N_i$ il numero di vertici del reticolo nella direzione di $\mathbf{a}_i$. Allora il numero totale di vertici è $N=N_1 N_2 N_3$ che corrisponde al numero di celle primitive del reticolo. Poiché il potenziale del reticolo è periodico, per il Teorema di Bloch, le autofunzioni di una certa hamiltoniana assumono la forma
\begin{equation}
\psi_{\mathbf{k}} (\mathbf{x})= u(\mathbf{x})e^{i\mathbf{k}\cdot\mathbf{x}},
\end{equation}
dove $u(\mathbf{x})$ è una funzione periodica di periodo pari a quello del potenziale. Imponendo le condizioni al bordo
\begin{equation}
\psi_{\mathbf{k}} (\mathbf{0}) = \psi_{\mathbf{k}} (N_i\mathbf{a}_i), \qquad i=1,2,3,
\end{equation}
si può scegliere $u(\mathbf{x})=\psi_{\mathbf{k}} (\mathbf{x})$ e di conseguenza si ottiene
\begin{equation}
e^{i\mathbf{k}\cdot\mathbf{x}}=1.
\end{equation}
Quindi $\mathbf{k}$ deve essere un vettore del reticolo reciproco, cioè si può scrivere nella forma
\begin{equation}
\mathbf{k} = x_1\mathbf{b}_1+x_2\mathbf{b}_2+x_3\mathbf{b}_3, \qquad x_i=m_i/N_i, \quad m_1,m_2,m_3\in\mathbb{Z}.
\end{equation}
Allora l'incremento minimo si ha per $m_1=m_2=m_3=1$ che corrisponde ad un volume $\delta\mathbf{k}$ nello spazio reciproco,
\begin{equation}
\delta\mathbf{k}=\frac{\mathbf{b}_1}{N_1}\cdot\frac{\mathbf{b}_2}{N_2}\times \frac{\mathbf{b}_3}{N_3} = \frac{1}{N}\mbox{mis}(\mathcal{B})=\frac{1}{N}\frac{(2\pi)^3}{V_c} = \frac{(2\pi)^3}{V},
\end{equation}
dove $V$ è il volume del reticolo cristallino e $V_c$ è il volume di una cella primitiva. Allora la densità degli stati nel caso tridimensionale è
\begin{equation}
g(\mathbf{k}) = \frac{V}{(2\pi)^3}.
\end{equation}
Nei cristalli reali, la distanza tra due vettori d'onda è molto piccola, ovvero la densità degli stati è molto grande. Sia $f(\mathbf{k})$ una funzione regolare che si vuole sommare al variare di $\mathbf{k}$ in $\mathcal{B}$. Si può assumere che in un piccolo intervallo $\Delta \mathbf{k}$ la funzione è costante e il numero di stati è dato da $g(\mathbf{k})\Delta\mathbf{k}$, cioè si ha
\begin{equation}
\sum_\mathbf{k} f(\mathbf{k}) = \sum_j f(\mathbf{k}_j)g(\mathbf{k})\Delta\mathbf{k} = \sum_j f(\mathbf{k}_j)\frac{V}{(2\pi)^3}\Delta\mathbf{k}\approx \frac{V}{(2\pi)^3}\int_{\mathcal{B}} \! f(\mathbf{k}) \, \diff^3 \mathbf{k},
\end{equation}
dove l'ultimo passaggio si ottiene quando $\Delta\mathbf{k}$ diventa sufficientemente piccolo.

Si consideri un sistema di elettroni o di fononi all'interno di un reticolo cristallino. Se $f(\mathcal{E}(\mathbf{k}))$ rappresenta il numero di occupazione degli stati con energia $\mathcal{E}(\mathbf{k})$ e se si conosce l'espressione di $\mathcal{E}(\mathbf{k})$, allora invertendo tale espressione si può calcolare il numero totale di elettroni o fononi, cioè
\begin{equation}
N = \int \! f(\mathcal{E}(\mathbf{k}))\mathcal{N}(\mathcal{E}) \, \diff \mathcal{E},
\end{equation}
avendo effettuato un cambiamento di variabile nell'integrale, per cui $\mathcal{N}(\mathcal{E})$ rappresenta ancora la densità degli stati. La distribuzione del numero di occupazione degli stati è diversa a seconda che si consideri un sistema di fononi o di elettroni. 

Nel caso dei fononi, il numero $n_{\boldsymbol{\xi}}$ dei possibili stati che un fonone di vettore d'onda $\boldsymbol{\xi}$ può assumere segue la \textbf{distribuzione di Bose-Einstein}, cioè
\begin{equation}
n_{\boldsymbol{\xi}} = \frac{1}{e^{\hbar\omega_{\boldsymbol{\xi}}/k_B T}-1},
\end{equation}
dove $\omega_{\boldsymbol{\xi}}$ è la frequenza del fonone, $T$ è la temperatura del reticolo cristallino e $k_B$ è la costante di Boltzmann, pari a
\begin{equation}
k_B = \si{\num{1.38064852(79)e-23} \joule\per\kelvin}.
\end{equation}
Si osservi che la distribuzione degli stati dipende dall'energia del fonone, che è rappresentata dalla quantità $\hbar\omega_{\boldsymbol{\xi}}$.

Invece per quanto riguarda gli elettroni, si assume che la distribuzione del numero di occupazione degli stati con energia $\varepsilon(\mathbf{k})$ segue la \textbf{statistica di Fermi-Dirac}, cioè
\begin{equation}
f_{FD}(\varepsilon(\mathbf{k})) = \frac{1}{1+\exp\left( \frac{\varepsilon(\mathbf{k})-\varepsilon_F}{k_B T} \right)},
\end{equation}
dove $\varepsilon_F$ è il livello di Fermi, $k_B$ è la costante di Boltzmann e $T$ è la temperatura del reticolo cristallino.


\section{I modelli cinetici di trasporto}
Nei successivi paragrafi si costruirà un modello per descrivere il moto di un certo numero di elettroni soggetti ad un potenziale esterno, utilizzando un approccio di tipo statistico. Dopo aver illustrato l'approssimazione semiclassica si procederà costruendo l'equazione semiclassica di Liouville da cui si dedurrà l'equazione semiclassica di Vlasov, con un metodo analogo a quello usato per i gas rarefatti. Infine, si dedurrà l'equazione semiclassica di Boltzmann, specificandone il termine collisionale ottenuto dalla descrizione dei meccanismi di scattering ed in virtù della regola d'oro di Fermi.

Per il contenuto dei successivi paragrafi si farà riferimento a \citep{DISP:Anile}, \citep{BOOK:Kardar}, \citep{BOOK:Jacoboni}, \citep{BOOK:Jacoboni_Lugli} e \citep{BOOK:Lundstrom}.

\subsection{L'approssimazione semiclassica}
L'approssimazione semiclassica consiste nel considerare gli elettroni e le lacune come se fossero particelle puntiformi. Un elettrone che si trova nella $n$-esima banda energetica $\mathcal{E}_n(\mathbf{k})$ è descritto come una particella puntiforme che si muove con una velocità  $\mathbf{v}_n(\mathbf{k})$, data dalla \eqref{EQ:CAP1:Vel_gruppo_elettroni}. Inoltre si suppone che esso si muova come se fosse un elettrone libero con impulso pari all'impulso del cristallo. Per descriverne il moto occorre scrivere le equazioni di Hamilton per la coppia di variabili coniugate $(\mathbf{x},\mathbf{p})=(\mathbf{x},\hbar\mathbf{k})$, corrispondenti all'hamiltoniana
\begin{equation}
\mathcal{H}(\mathbf{x},\mathbf{p})=\mathcal{E}_n\left( \frac{\mathbf{p}}{\hbar} \right) -eV(\mathbf{x}),
\end{equation}
dove $V$ è un potenziale elettrico esterno che agisce sul sistema. Allora, le equazioni del moto sono date da
\begin{equation}
\dot{\mathbf{x}} = \mathbf{v}_n(\mathbf{k}), \qquad \dot{\mathbf{k}} = \frac{e}{\hbar} \nabla_{\mathbf{x}} V(\mathbf{x}).
\end{equation}
Per quanto riguarda le lacune, occorre osservare che il comportamento di una lacuna nella $n$-esima banda è quello di una particella con carica elettrica positiva e con relazione di dispersione data da $-\mathcal{E}_n(\mathbf{k})$. In particolare, applicando le precedenti considerazioni alle bande di conduzione e di valenza, si ha
\begin{equation}
\dot{\mathbf{x}} = \mathbf{v}_c(\mathbf{k}), \qquad \dot{\mathbf{k}} = \frac{e}{\hbar} \nabla_{\mathbf{x}} V(\mathbf{x}), \qquad \mathbf{v}_c(\mathbf{k})=\frac{1}{\hbar}\nabla_{\mathbf{k}} E^c(\mathbf{k}),
\end{equation}
per la banda di conduzione, e
\begin{equation}
\dot{\mathbf{x}} = \mathbf{v}_v(\mathbf{k}), \qquad \dot{\mathbf{k}} = -\frac{e}{\hbar} \nabla_{\mathbf{x}} V(\mathbf{x}), \qquad \mathbf{v}_c(\mathbf{k})=-\frac{1}{\hbar}\nabla_{\mathbf{k}} E^v(\mathbf{k}),
\end{equation}
per la banda di valenza. L'approssimazione semiclassica ha significato fisico se i campi esterni applicati al solido variano lentamente e non sono molto intensi. Poiché nelle applicazioni tecnologiche, le bande di energia più elevata sono scarsamente popolate, solitamente si studia il trasporto di carica limitatamente alla banda di conduzione più bassa in energia.

\subsection{Equazione semiclassica di Liouville}
Il trasporto di carica nei solidi è dovuto ad un numero $N$ di elettroni il cui moto può essere studiato risolvendo l'equazione di Schr\"{o}dinger per la funzione d'onda del sistema degli $N$ elettroni. Utilizzando invece l'approssimazione semiclassica, gli elettroni si considerano come $N$ particelle puntiformi. Si indichi con $\mathbf{x}_i$ e con $\mathbf{k}_i$ rispettivamente la posizione e il vettore d'onda della particella $i$-esima. Supponendo che le forze $\mathbf{F}_i$ siano indipendenti da tutti i $\mathbf{k}_i$ e possano essere rappresentate da un campo avente potenziale $U(\mathbf{x}_1, \ldots , \mathbf{x}_N)$, il moto è descritto da un sistema classico
con hamiltoniana
\begin{equation}
\mathcal{H}(\mathbf{x}_1,\ldots,\mathbf{x}_N,\hbar\mathbf{k}_1,\ldots,\hbar\mathbf{k}_N)=\sum_{i=1}^N \mathcal{E}_i(\mathbf{k}_i)+U(\mathbf{x}_1, \ldots , \mathbf{x}_N),
\end{equation}
dove $\mathcal{E}_i(\mathbf{k}_i)$ rappresenta la banda di conduzione più bassa in energia relativa alla particella $i$-esima. Le equazioni di tale sistema sono quindi
\begin{equation}\label{EQ:CAP1:Sist_semicl_Hamilton}
\dot{\mathbf{x}}_i = \frac{1}{\hbar}\nabla_{\mathbf{k}_i}\mathcal{H} \equiv \mathbf{v}_i, \qquad \dot{\mathbf{k}}_i = -\frac{1}{\hbar}\nabla_{\mathbf{x}_i}\mathcal{H} \equiv \frac{1}{\hbar}\mathbf{F}_i.
\end{equation}
Inoltre occorre aggiungere le condizioni iniziali
\begin{equation}
\mathbf{x}_i(0)=\mathbf{x}_{i,0}, \qquad \mathbf{k}_i(0)=\mathbf{k}_{i,0}.
\end{equation}
Poiché il numero $N$ è molto grande, questo problema presenta delle difficoltà a livello computazionale. Pertanto si ricorre alla meccanica statistica, le cui basi sono state già introdotte nel Paragrafo \ref{PAR:CAP1:Meccanica_statistica}. Sia
\begin{equation}
P_N(\mathbf{x}_1,\ldots,\mathbf{x}_N,\mathbf{k}_1,\ldots,\mathbf{k}_N,t)
\end{equation}
la densità di probabilità congiunta che, per ogni istante $t\in\mathbb{R}$, definisce la distribuzione di probabilità degli elettroni nello spazio delle fasi $\mathcal{F}=\mathbb{R}^{3N}\times\mathcal{B}^N\subset\mathbb{R}^{6N}$. Si introduca la notazione compatta
\begin{equation}
\mathbf{x}=(\mathbf{x}_1,\ldots,\mathbf{x}_N), \qquad \mathbf{k}=(\mathbf{k}_1,\ldots,\mathbf{k}_N).
\end{equation}
Allora, procedendo come nel Paragrafo \ref{PAR:CAP1:Meccanica_statistica}, è possibile trovare l'equazione che descrive l'evoluzione temporale della densità di probabilità congiunta, cioè dalla \eqref{EQ:CAP1:Liouville_4} si ha
\begin{equation}
\frac{\partial P_N}{\partial t} + \dot{\mathbf{x}} \cdot \nabla_{\mathbf{x}} \: P_N + \dot{\mathbf{k}} \cdot \nabla_{\mathbf{k}} \: P_N=0.
\end{equation}
Infine, utilizzando le \eqref{EQ:CAP1:Sist_semicl_Hamilton}, si ottiene
\begin{equation}
\frac{\partial P_N}{\partial t} + \sum_{i=1}^N \mathbf{v}(\mathbf{k}_i) \cdot \nabla_{\mathbf{x}_i} P_N + \frac{1}{\hbar} \sum_{i=1}^N \mathbf{F}_i \cdot \nabla_{\mathbf{k}_i} P_N=0,
\end{equation}
detta \textbf{equazione semiclassica di Liouville}. Essa può essere scritta anche utilizzando parentesi di Poisson scrivendo
\begin{equation}\label{EQ:CAP1:Eq_semicl_Liouville_Poisson}
\frac{\partial P_N}{\partial t} + \left\lbrace P_N,\mathcal{H} \right\rbrace =0.
\end{equation}

\subsection{La gerarchia BBGKY}
Si supponga che le forze esterne possono essere espresse come la somma di un campo elettrico esterno e del potenziale di interazione a due particelle, cioè
\begin{equation}\label{EQ:CAP1:Potenziale}
\mathcal{U}(\mathbf{x}) = \sum_{i=1}^N \mathcal{U}\ap{ext}(\mathbf{x}_i) + \frac{1}{2} \sum_{\substack{i,j=1\\ i \neq j}}^N \mathcal{U}\ap{int}(\mathbf{x}_i,\mathbf{x}_j),
\end{equation}
con il potenziale di interazione simmetrico, cioè
\begin{equation}
\mathcal{U}\ap{int}(\mathbf{x}_i,\mathbf{x}_j)=\mathcal{U}\ap{int}(\mathbf{x}_j,\mathbf{x}_i).
\end{equation}
Di conseguenza, le forze esterne esercitate sul sistema per mezzo del potenziale $\mathcal{U}\ap{ext}(\mathbf{x}_i)$ sono date da
\begin{equation}\label{EQ:CAP1:Forze_esterne}
\mathbf{F}\ap{ext}(\mathbf{x}_i)=-\nabla_{\mathbf{x}_i}\mathcal{U}\ap{ext}(\mathbf{x}_i), \quad \forall i=1,\ldots,N
\end{equation}
e le forze di interazione tra due particelle, esercitate sul sistema per mezzo del potenziale $\mathcal{U}\ap{int}(\mathbf{x}_i,\mathbf{x}_j)$, sono data da
\begin{equation}\label{EQ:CAP1:Forze_interne}
\mathbf{F}\ap{int}(\mathbf{x}_i,\mathbf{x}_j)=-\nabla_{\mathbf{x}_{ij}}\mathcal{U}\ap{int}(\mathbf{x}_i,\mathbf{x}_j), \quad \forall i,j=1,\ldots,N, \quad i\neq j,
\end{equation}
dove $\mathbf{x}_{ij}=(\mathbf{x}_i,\mathbf{x}_j)$. Questa scelta permette di dimostrare che la densità $P_N$ è indipendente dalla numerazione delle particelle, cioè esse risultano \textbf{indistinguibili} l'una dall'altra, ovvero
\begin{equation}\label{EQ:CAP1:Particelle_indistinguibili}
P_N(\mathbf{x}_1,\ldots,\mathbf{x}_N,\mathbf{k}_1,\ldots,\mathbf{k}_N,t)=P_N(\mathbf{x}_{\pi(1)},\ldots,\mathbf{x}_{\pi(N)},\mathbf{k}_{\pi(1)},\ldots,\mathbf{k}_{\pi(N)},t),
\end{equation}
per ogni permutazione $\pi$ di $\left\lbrace 1,\ldots,N \right\rbrace $, con $\mathbf{x}_i\in\mathbb{R}^3$, $\mathbf{k}_i\in\mathcal{B}$. Inoltre se la \eqref{EQ:CAP1:Particelle_indistinguibili} è verificata per $t=0$ allora essa è verificata per ogni $t$. \`{E} possibile introdurre la densità di probabilità congiunta per un sottoinsieme di $d$ particelle estratte da un insieme $N$ di particelle nel modo seguente
\begin{equation}\label{EQ:CAP1:Densita_d_particelle}
\begin{aligned}
P_N^{(d)}(\mathbf{x}_1,\ldots,\mathbf{x}_d, & \mathbf{k}_1,\ldots,\mathbf{k}_d,t)=\\
&=\int_{(\mathbb{R}^3\times\mathcal{B})^{N-d}} \! P_N(\mathbf{x}_1,\ldots,\mathbf{x}_N,\mathbf{k}_1,\ldots,\mathbf{k}_N,t) \, \diff \mathbf{x}_{d+1} \diff \mathbf{k}_{d+1} \cdots \diff \mathbf{x}_{N} \diff \mathbf{k}_{N}.
\end{aligned}
\end{equation}
Si vogliono determinare le leggi di evoluzione nel tempo delle $P_N^{(d)}$, con $1\leq d\leq N-1$. Esse si ottengono integrando l'equazione di Liouville rispetto a $3(N-d)$ variabili di posizioni e vettori d'onda. Per fare ciò si consideri la seguente hamiltoniana
\begin{equation}
\mathcal{H}(\mathbf{x},\hbar\mathbf{k})=\sum_{i=1}^N\frac{\vert \mathbf{p}_i \vert^2}{2m_e} +\mathcal{U}(\mathbf{x})
\end{equation}
dove $\mathcal{U}(\mathbf{x})$ dato dalla \eqref{EQ:CAP1:Potenziale}, $m_e$ è la massa dell'elettrone e ricordando che $\mathbf{p}_i=\hbar\mathbf{k}_i$. Per valutare l'evoluzione nel tempo delle $P_N^{(d)}$, è opportuno suddividere l'hamiltoniana in
\begin{equation}
\mathcal{H}=\mathcal{H}_d+\mathcal{H}_{N-d}+\mathcal{H}',
\end{equation}
dove $\mathcal{H}_d$ e $\mathcal{H}_{N-d}$ includono le interazioni tra ogni gruppo di particelle, essendo
\begin{equation}\label{EQ:CAP1:H_d}
\mathcal{H}_d = \sum_{n=1}^d \left[ \frac{\vert \mathbf{p}_n \vert^2}{2m} + \mathcal{U}\ap{ext}\left( \mathbf{x}_n \right)  \right] + \frac{1}{2} \sum_{\substack{n,m=1\\ n \neq m}}^d \mathcal{U}\ap{int} \left( \mathbf{x}_n,\mathbf{x}_m \right),
\end{equation}
\begin{equation}\label{EQ:CAP1:H_N_d}
\mathcal{H}_{N-s} = \sum_{i=d+1}^N \left[ \frac{\vert \mathbf{p}_i \vert^2}{2m} + \mathcal{U}\ap{ext}\left( \mathbf{x}_i \right)  \right] + \frac{1}{2}  \sum_{\substack{i,j=d+1\\ i \neq j}}^N \mathcal{U}\ap{int} \left( \mathbf{x}_i,\mathbf{x}_j \right),
\end{equation}
invece le interazioni tra le molecole sono contenute in
\begin{equation}\label{EQ:CAP1:H_primo}
\mathcal{H}'= \sum_{n=1}^d \sum_{i=d+1}^N \mathcal{U}\ap{int} \left( \mathbf{x}_n,\mathbf{x}_i \right).
\end{equation}
Dalla \eqref{EQ:CAP1:Densita_d_particelle} e dalla \eqref{EQ:CAP1:Eq_semicl_Liouville_Poisson} si ha che l'evoluzione nel tempo delle $P_N^{(d)}$ può essere ottenuta calcolando
\begin{equation}\label{EQ:CAP1:Evoluzione_P_N_d}
\frac{\partial P_N^{(d)}}{\partial t} = \int \! \frac{\partial P_N}{\partial t} \, \prod_{i=d+1}^N \diff V_i = - \int \! \left\lbrace P_N,\mathcal{H}_d+\mathcal{H}_{N-d}+\mathcal{H}' \right\rbrace  \, \prod_{i=d+1}^N \diff V_i,
\end{equation}
in cui le parentesi di Poisson all'ultimo membro possono essere calcolate separatamente in virtù della proprietà \ref{PTA:CAP1:Parentesi_Poisson_pta_2} della Proposizione \ref{PROP:CAP1:Parentesi_Poisson}.

Poiché l'integrazione non riguarda le prime $d$ coordinate allora è possibile scambiare il segno di integrale e le parentesi di Poisson nel primo termine che si viene a creare applicando la proprietà \ref{PTA:CAP1:Parentesi_Poisson_pta_2} della Proposizione \ref{PROP:CAP1:Parentesi_Poisson} nella \eqref{EQ:CAP1:Evoluzione_P_N_d}, cioè si ha
\begin{equation}\label{EQ:CAP1:Evoluzione_H_d}
\int \! \left\lbrace P_N,\mathcal{H}_d \right\rbrace  \, \prod_{i=d+1}^N \diff V_i = \left\lbrace \left( \int \! P_N \, \prod_{i=d+1}^N \diff V_i \right),\mathcal{H}_d  \right\rbrace = \left\lbrace  P_N^{(d)},\mathcal{H}_d\right\rbrace .
\end{equation}
Per quanto riguarda il secondo termine della \eqref{EQ:CAP1:Evoluzione_P_N_d}, scrivendo esplicitamente le parentesi di Poisson e integrando successivamente per parti, si ottiene
\begin{equation}\label{EQ:CAP1:Evoluzione_H_N_d}
-\int \! \left\lbrace P_N,\mathcal{H}_{N-d} \right\rbrace  \, \prod_{i=d+1}^N \diff V_i = 0.
\end{equation}
Infine, per quanto riguarda il terzo termine della \eqref{EQ:CAP1:Evoluzione_P_N_d}, scrivendo esplicitamente le parentesi di Poisson, suddividendo la sommatoria, integrando per parti e semplificando si ha
\begin{equation}\label{EQ:CAP1:Evoluzione_H_primo}
-\int \! \left\lbrace P_N,\mathcal{H}' \right\rbrace \, \prod_{i=d+1}^N \diff V_i = (N-d) \sum_{n=1}^d \int \! \diff V_{d+1} \nabla_{\mathbf{x}_n} \mathcal{U}\ap{int}(\mathbf{x}_n,\mathbf{x}_{d+1}) \cdot \nabla_{\mathbf{p}_n} P_N^{(d+1)}
\end{equation}
(ulteriori dettagli su come ottenere \eqref{EQ:CAP1:Evoluzione_H_N_d} e \eqref{EQ:CAP1:Evoluzione_H_primo} sono presenti in \citep{BOOK:Kardar}).
Allora, sommando \eqref{EQ:CAP1:Evoluzione_H_d}, \eqref{EQ:CAP1:Evoluzione_H_N_d} e \eqref{EQ:CAP1:Evoluzione_H_primo} e sostituendo nella \eqref{EQ:CAP1:Evoluzione_P_N_d}, si ha
\begin{equation}\label{EQ:CAP1:Gerarchia_BBGKY_P_N}
\frac{\partial P_N^{(d)}}{\partial t} - \left\lbrace \mathcal{H}_d,P_N^{(d)} \right\rbrace = (N-d) \sum_{n=1}^d \int \! \diff V_{d+1} \nabla_{\mathbf{x}_n} \mathcal{U}\ap{int}(\mathbf{x}_n,\mathbf{x}_{d+1}) \cdot \nabla_{\mathbf{p}_n} P_N^{(d+1)}.
\end{equation}
Infine, scrivendo esplicitamente le parentesi di Poisson e ricordando che $\mathbf{p}_i=\hbar\mathbf{k}_i$ per ogni $i=1,\ldots,N$, si ottiene
\begin{equation}\label{EQ:CAP1:Poisson_H_d}
\left\lbrace \mathcal{H}_d,P_N^{(d)} \right\rbrace = \frac{1}{\hbar} \sum_{i=1}^d \left( \nabla_{\mathbf{x}_i} \mathcal{H}_d \cdot \nabla_{\mathbf{k}_i} P_N^{(d)} + \nabla_{\mathbf{k}_i} \mathcal{H}_d \cdot \nabla_{\mathbf{x}_i} P_N^{(d)} \right),
\end{equation}
in cui la sommatoria è estesa per $i=1,\ldots,d$ in quanto le funzioni che in essa compaiono sono nelle variabili $\mathbf{x}_1,\ldots,\mathbf{x}_d,\mathbf{k}_1,\ldots,\mathbf{k}_d$.
 
Per il calcolo del $\nabla_{\mathbf{x}_i} \mathcal{H}_d$ si osservi che nelle sommatorie della \eqref{EQ:CAP1:H_d} sopravvivono soltanto i termini che contengono $\mathbf{x}_i$ e inoltre si elimina il termine dipendente dai $\mathbf{p}_n$, ottenendo
\begin{equation}
\begin{aligned}
\nabla_{\mathbf{x}_i} \mathcal{H}_d & = \nabla_{\mathbf{x}_i} \mathcal{U}\ap{ext} (\mathbf{x}_i) +\frac{1}{2} \sum_{\substack{m=1\\ m \neq i}}^d  \nabla_{\mathbf{x}_i} \mathcal{U}\ap{int}(\mathbf{x}_i,\mathbf{x}_m) + \frac{1}{2} \sum_{\substack{n=1\\ n \neq i}}^d  \nabla_{\mathbf{x}_i} \mathcal{U}\ap{int}(\mathbf{x}_n,\mathbf{x}_i) =\\
 & = \nabla_{\mathbf{x}_i} \mathcal{U}\ap{ext} (\mathbf{x}_i) + \sum_{\substack{m=1\\ m \neq i}}^d  \nabla_{\mathbf{x}_i} \mathcal{U}\ap{int}(\mathbf{x}_i,\mathbf{x}_m)
\end{aligned}
\end{equation}
in cui l'ultima uguaglianza è ottenuta sfruttando la simmetria di $\mathcal{U}\ap{int}$. Inoltre in virtù delle \eqref{EQ:CAP1:Forze_esterne} e \eqref{EQ:CAP1:Forze_interne}, si ha
\begin{equation}\label{EQ:CAP1:Grad_x_H_d}
\nabla_{\mathbf{x}_i} \mathcal{H}_d = -\mathbf{F}\ap{ext}(\mathbf{x}_i) - \sum_{\substack{m=1\\ m \neq i}}^d  \mathbf{F}\ap{int}(\mathbf{x}_i,\mathbf{x}_m).
\end{equation}
Per quanto riguarda il calcolo del $\nabla_{\mathbf{k}_i} \mathcal{H}_d$ si osservi che nella \eqref{EQ:CAP1:H_d} sopravvive solo il termine contenente $\mathbf{p}_i$, da cui, ricordando che $\mathbf{p}_i=\hbar\mathbf{k}_i$ e che $\mathbf{p}_i=m_e\mathbf{v}(\mathbf{k}_i)$, si ottiene
\begin{equation}\label{EQ:CAP1:Grad_k_H_d}
\nabla_{\mathbf{k}_i} \mathcal{H}_d = \nabla_{\mathbf{k}_i} \frac{\vert \hbar \mathbf{k}_i \vert^2}{2m_e} = \frac{\hbar^2}{m_e} \mathbf{k}_i = \hbar \frac{\mathbf{p}_i}{m_e} = \hbar \mathbf{v}(\mathbf{k}_i).
\end{equation}
Sostituendo le \eqref{EQ:CAP1:Grad_x_H_d} e \eqref{EQ:CAP1:Grad_k_H_d} nella \eqref{EQ:CAP1:Poisson_H_d} si ha
\begin{equation}\label{EQ:CAP1:Poisson_P_N_mod}
\begin{aligned}
\left\lbrace \mathcal{H}_d,P_N^{(d)} \right\rbrace = & - \frac{1}{\hbar} \sum_{i=1}^d \mathbf{F}\ap{ext}(\mathbf{x}_i) \cdot \nabla_{\mathbf{k}_i} P_N^{(d)} -\frac{1}{\hbar} \sum_{i=1}^d \sum_{\substack{j=1\\ j \neq i}}^d  \mathbf{F}\ap{int}(\mathbf{x}_i,\mathbf{x}_j)\cdot\nabla_{\mathbf{k}_i} P_N^{(d)}\\
&- \sum_{i=1}^d \mathbf{v}(\mathbf{k}_i) \cdot \nabla_{\mathbf{x}_i} P_N^{(d)}.
\end{aligned}
\end{equation}
Inoltre si consideri il secondo membro della \eqref{EQ:CAP1:Gerarchia_BBGKY_P_N}, osservando che
\begin{equation}
\nabla_{\mathbf{x}_n} \mathcal{U}\ap{int}(\mathbf{x}_n,\mathbf{x}_{d+1})=-\mathbf{F}\ap{int}(\mathbf{x}_n,\mathbf{x}_{d+1})
\end{equation}
e ricordando che $\mathbf{p}_n=\hbar\mathbf{k}_n$ si ottiene
\begin{equation}\label{EQ:CAP1:Gerarchia_BBGKY_mod}
\begin{aligned}
(N-d) \sum_{n=1}^d \int \! \diff V_{d+1} & \nabla_{\mathbf{x}_n} \mathcal{U}\ap{int}(\mathbf{x}_n,\mathbf{x}_{d+1}) \cdot \nabla_{\mathbf{p}_n} P_N^{(d+1)} =\\
&-\frac{1}{\hbar} (N-d) \sum_{n=1}^d \int \! \diff V_{d+1} \mathbf{F}\ap{int}(\mathbf{x}_n,\mathbf{x}_{d+1}) \cdot \nabla_{\mathbf{k}_n} P_N^{(d+1)}
\end{aligned}
\end{equation}
In definitiva, sostituendo la \eqref{EQ:CAP1:Poisson_P_N_mod} e la \eqref{EQ:CAP1:Gerarchia_BBGKY_mod} nella \eqref{EQ:CAP1:Gerarchia_BBGKY_P_N}, si ha
\begin{equation}\label{EQ:CAP1:Gerarchia_BBGKY}
\begin{aligned}
\frac{\partial P_N^{(d)}}{\partial t} & + \frac{1}{\hbar} \sum_{i=1}^d \mathbf{F}\ap{ext}(\mathbf{x}_i) \cdot \nabla_{\mathbf{k}_i} P_N^{(d)} + \frac{1}{\hbar} \sum_{i=1}^d \sum_{\substack{j=1\\ j \neq i}}^d  \mathbf{F}\ap{int}(\mathbf{x}_i,\mathbf{x}_j)\cdot\nabla_{\mathbf{k}_i} P_N^{(d)} \\
& + \sum_{i=1}^d \mathbf{v}(\mathbf{k}_i) \cdot \nabla_{\mathbf{x}_i} P_N^{(d)} \\
& + \frac{1}{\hbar} (N-d) \sum_{i=1}^d \int \! \diff V_{d+1} \mathbf{F}\ap{int}(\mathbf{x}_i,\mathbf{x}_{d+1}) \cdot \nabla_{\mathbf{k}_i} P_N^{(d+1)} =0.
\end{aligned}
\end{equation}
La \eqref{EQ:CAP1:Gerarchia_BBGKY} è detta \textbf{gerarchia BBGKY}, attribuita agli scienziati Bogoliubov, Born, Green, Kirkwood e Yvon che per primi l'hanno studiata. Si tratta di una gerarchia di equazioni nel senso che l'equazione per la $P_N^{(d)}$ dipende dalla successiva, cioè dalla $P_N^{(d+1)}$.

\subsection{L'equazione semiclassica di Vlasov}
A partire dalle equazioni della gerarchia \eqref{EQ:CAP1:Gerarchia_BBGKY} si vuole ricavare un'equazione per la densità di probabilità di una particella, cioè la $P_N^{(1)}$, estratta da un insieme molto grande di particelle, facendo cioè tendere $N\rightarrow\infty$. Questa funzione ha una interpretazione fisica molto importante, poiché permette di calcolare la densità del numero di elettroni. Si consideri lo spazio delle fasi $\mathcal{F}=\mathbb{R}^3\times\mathcal{B}$. Se vi sono $N$ elettroni, la densità del numero di elettroni nel volumetto $\diff \mathbf{x}\diff \mathbf{k}$ centrato in $(\mathbf{x}, \mathbf{k})$ è data da
\begin{equation}
N P_N^{(1)}(\mathbf{x},\mathbf{k},t)\diff \mathbf{x}\diff \mathbf{k}.
\end{equation}
Quindi la densità $n$ del numero di elettroni per unità di volume si ottiene integrando sull'insieme di tutti i possibili valori di $\mathbf{k}$. Ponendo $f_N(\mathbf{x},\mathbf{k},t)=NP_N^{(1)}(\mathbf{x},\mathbf{k},t)$, si ha
\begin{equation}\label{EQ:CAP1:Densita_elettroni}
n(\mathbf{x},t) = \int_{\mathcal{B}} \!  f_N(\mathbf{x},\mathbf{k},t) \, \diff \mathbf{k}.
\end{equation}
Analogamente, la velocità media $\mathbf{u}$ è data da
\begin{equation}
\begin{aligned}
\mathbf{u}(\mathbf{x},t)&=\frac{\int_{\mathcal{B}} \! \mathbf{v}(\mathbf{k}) N P_N^{(1)}(\mathbf{x},\mathbf{k},t) \, \diff \mathbf{k}}{\int_{\mathcal{B}} \!  N P_N^{(1)}(\mathbf{x},\mathbf{k},t) \, \diff \mathbf{k}}=\\
&= \frac{1}{n(\mathbf{x},t)}\int_{\mathcal{B}} \! \mathbf{v}(\mathbf{k}) N P_N^{(1)}(\mathbf{x},\mathbf{k},t) \, \diff \mathbf{k}.
\end{aligned}
\end{equation}
Invece, la funzione densità $P_N^{(2)}$ fornisce informazioni sull'interazione tra due particelle estratte da un insieme di $N$, nello spazio delle fasi $\mathcal{F}^2$. Per proseguire la trattazione risulterà di fondamentale importanza supporre che le particelle siano \textbf{indipendenti}, cioè
\begin{equation}\label{EQ:CAP1:Densita_2_part_indipendenti}
P_N^{(2)}(\mathbf{x},\mathbf{y},\mathbf{k},\mathbf{k}',t)=P_N^{(1)}(\mathbf{x},\mathbf{k},t)P_N^{(1)}(\mathbf{y},\mathbf{k}',t).
\end{equation}
Inoltre sarebbe opportuno che dopo il passaggio al limite continui ad avere significato l'espressione \eqref{EQ:CAP1:Densita_elettroni}. Per ricavare l'equazione per la densità di probabilità di una particella, si procede scrivendo la gerarchia BBGKY per $d=1$, ottenendo
\begin{equation}\label{EQ:CAP1:Densita_2_particelle}
\begin{aligned}
\frac{\partial P_N^{(1)}}{\partial t} & + \frac{1}{\hbar} \mathbf{F}\ap{ext}(\mathbf{x}) \cdot \nabla_{\mathbf{k}} P_N^{(1)} + \mathbf{v}(\mathbf{k}) \cdot \nabla_{\mathbf{x}} P_N^{(1)} \\
& + \frac{1}{\hbar} (N-1) \iint \! \mathbf{F}\ap{int}(\mathbf{x},\mathbf{y}) \cdot \nabla_{\mathbf{k}} P_N^{(2)} \, \diff \mathbf{y} \diff \mathbf{k}'=0.
\end{aligned}
\end{equation}
Considerando l'ultimo termine della \eqref{EQ:CAP1:Densita_2_particelle} e sostituendo la \eqref{EQ:CAP1:Densita_2_part_indipendenti}, si ha
\begin{equation}\label{EQ:CAP1:Densita_2_particelle_term_fin}
\begin{aligned}
\frac{1}{\hbar} (N-1) & \iint \! \mathbf{F}\ap{int}(\mathbf{x},\mathbf{y}) \cdot \nabla_{\mathbf{k}} P_N^{(2)} \, \diff \mathbf{y} \diff \mathbf{k}'=\\
&=\frac{1}{\hbar} (N-1) \iint \! \mathbf{F}\ap{int}(\mathbf{x},\mathbf{y}) \cdot \nabla_{\mathbf{k}} \left(  P_N^{(1)}(\mathbf{x},\mathbf{k},t)P_N^{(1)}(\mathbf{y},\mathbf{k}',t) \right)  \, \diff \mathbf{y} \diff \mathbf{k}'=\\
&= \frac{1}{\hbar} (N-1) \left[  \int \! \mathbf{F}\ap{int}(\mathbf{x},\mathbf{y}) \left(  \int \! P_N^{(1)}(\mathbf{y},\mathbf{k}',t) \, \diff \mathbf{k}' \right) \, \diff \mathbf{y} \right]  \cdot \nabla_{\mathbf{k}} P_N^{(1)}(\mathbf{x},\mathbf{k},t)=\\
&= \frac{1}{\hbar} \left( 1-\frac{1}{N} \right)  \left[  \int \! \mathbf{F}\ap{int}(\mathbf{x},\mathbf{y}) n(\mathbf{y},t) \, \diff \mathbf{y} \right]  \cdot \nabla_{\mathbf{k}} P_N^{(1)}(\mathbf{x},\mathbf{k},t).
\end{aligned}
\end{equation}
Inoltre, sostituendo la \eqref{EQ:CAP1:Densita_2_particelle_term_fin} nella \eqref{EQ:CAP1:Densita_2_particelle} e moltiplicando tutta l'espressione ottenuta per $N$, si ha
\begin{equation}\label{EQ:CAP1:Densita_2_f_N}
\begin{aligned}
\frac{\partial f_N}{\partial t} & + \frac{1}{\hbar} \mathbf{F}\ap{ext}(\mathbf{x}) \cdot \nabla_{\mathbf{k}} f_N + \mathbf{v}(\mathbf{k}) \cdot \nabla_{\mathbf{x}} f_N \\
& + \frac{1}{\hbar} \left( 1-\frac{1}{N} \right)  \left[  \int \! \mathbf{F}\ap{int}(\mathbf{x},\mathbf{y}) n(\mathbf{y},t) \, \diff \mathbf{y} \right]  \cdot \nabla_{\mathbf{k}} f_N=0.
\end{aligned}
\end{equation}
Supponendo che esista il limite per $N\rightarrow\infty$ di $f_N$ e ponendo
\begin{equation}
f(\mathbf{x},\mathbf{k},t)=\lim_{n\to\infty} f_N(\mathbf{x},\mathbf{k},t)
\end{equation}
è possibile passare al limite la \eqref{EQ:CAP1:Densita_2_f_N}, ottenendo
\begin{equation}\label{EQ:CAP1:Eq_Vlasov}
\frac{\partial f}{\partial t} + \mathbf{v}(\mathbf{k}) \cdot \nabla_{\mathbf{x}} f + \frac{1}{\hbar} \mathbf{F}\ap{eff}(\mathbf{x},t) \cdot \nabla_{\mathbf{k}} f=0,
\end{equation}
dove la forza efficace $\mathbf{F}\ap{eff}$ è definita come
\begin{equation}\label{EQ:CAP1:F_eff}
\mathbf{F}\ap{eff}(\mathbf{x},t) = \mathbf{F}\ap{ext}(\mathbf{x}) +  \int_{\mathbb{R}^3} \! \mathbf{F}\ap{int}(\mathbf{x},\mathbf{y}) n(\mathbf{y},t) \, \diff \mathbf{y},
\end{equation}
con
\begin{equation}
n(\mathbf{x},t) = \int_{\mathcal{B}} \! f(\mathbf{x},\mathbf{k},t) \, \diff \mathbf{k}.
\end{equation}
La funzione $f$ si chiama funzione di distribuzione e la \eqref{EQ:CAP1:Eq_Vlasov} è detta \textbf{equazione semiclassica di Vlasov}. Essa vale per un sistema di particelle semiclassiche soggette ad un potenziale esterno e ad un potenziale di interazione binaria. Per distinguere tra elettroni e lacune è necessario specificare questi potenziali.

\subsection{Il sistema di Vlasov-Poisson}
Si consideri l'equazione \eqref{EQ:CAP1:F_eff} e si supponga che le particelle siano elettroni. Se agisce sul sistema una distribuzione di carica esterna, cioè non trasportata dagli elettroni, indicata con $\rho\ap{ext}$, allora essa induce un campo elettrico $\mathbf{E}\ap{ext}$, legato alla forza esterna esercitata sul sistema $\mathbf{F}\ap{ext}$ per mezzo della relazione
\begin{equation}
\mathbf{F}\ap{ext} = -e\mathbf{E}\ap{ext}.
\end{equation}
Inoltre il legame tra la densità di carica $\rho\ap{ext}$ e il campo elettrico da essa indotto $\mathbf{E}\ap{ext}$ è dato dalla legge di Gauss per il campo elettrico
\begin{equation}
\nabla_{\mathbf{x}} \cdot (\varepsilon_s \mathbf{E}\ap{ext} ) = \rho\ap{ext},
\end{equation}
dove $\varepsilon_s$ è la costante dielettrica del semiconduttore.

Per quanto riguarda la forza $\mathbf{F}\ap{int}$ di interazione elettrone-elettrone, essa è data dalla legge di Coulomb,
\begin{equation}
\mathbf{F}\ap{int}(\mathbf{x},\mathbf{y}) = -\frac{e^2}{4\pi\varepsilon_s} \frac{\mathbf{x}-\mathbf{y}}{\vert \mathbf{x}-\mathbf{y} \vert^3}.
\end{equation}
Il campo elettrico $\mathbf{E}\ap{int}(\mathbf{x},t)$ cui è soggetto un elettrone che occupa la posizione $\mathbf{x}$ al tempo $t$ si ottiene calcolando la forza totale che agisce su di esso per effetto delle interazioni con gli altri elettroni del sistema, cioè
\begin{equation}
\int_{\mathbb{R}^3} \! \mathbf{F}\ap{int}(\mathbf{x},\mathbf{y})n(\mathbf{y},t) \, \diff \mathbf{y},
\end{equation}
e dividendo per la carica dell'elettrone. Quindi si ha
\begin{equation}
\begin{aligned}
\mathbf{E}\ap{int}(\mathbf{x},t) & = \frac{1}{-e} \int_{\mathbb{R}^3} \! \mathbf{F}\ap{int}(\mathbf{x},\mathbf{y})n(\mathbf{y},t) \, \diff \mathbf{y} =\\
&=\frac{e}{4\pi\varepsilon_s} \int_{\mathbb{R}^3} \! \frac{\mathbf{x}-\mathbf{y}}{\vert \mathbf{x}-\mathbf{y} \vert^3} n(\mathbf{y},t) \, \diff \mathbf{y}.
\end{aligned}
\end{equation}
Poiché $n(\mathbf{x},t)$ rappresenta in questo caso la densità del numero di elettroni per unità di volume, la densità di carica per unità di volume è data da
\begin{equation}
-e n(\mathbf{x},t).
\end{equation}
Allora, la legge di Gauss per il campo elettrico assume la forma
\begin{equation}
\nabla_{\mathbf{x}} \cdot (\varepsilon_s \mathbf{E}\ap{int} ) = -e n(\mathbf{x},t).
\end{equation}
Inoltre, ponendo
\begin{equation}
\mathbf{E}\ap{eff}=\mathbf{E}\ap{ext}+\mathbf{E}\ap{int},
\end{equation}
si ha che
\begin{equation}
\mathbf{F}\ap{eff}=-e\mathbf{E}\ap{eff}.
\end{equation}
Allora, il campo elettrico efficace soddisfa la relazione
\begin{equation}
\nabla_{\mathbf{x}} \cdot (\varepsilon_s \mathbf{E}\ap{eff} ) = \rho\ap{ext} -e n(\mathbf{x},t).
\end{equation}
Infine, trascurando effetti magnetici, è possibile definire il potenziale elettrico efficace, $V\ap{eff}$, scrivendo
\begin{equation}
\mathbf{E}\ap{eff}(\mathbf{x},t)=-\nabla_{\mathbf{x}} V\ap{eff}(\mathbf{x},t).
\end{equation}
Allora si ottiene la seguente equazione di Poisson per il potenziale efficace,
\begin{equation}\label{EQ:CAP1:Eq_Poisson}
-\nabla_{\mathbf{x}} \cdot (\varepsilon_s\nabla_{\mathbf{x}} V\ap{eff})=\rho\ap{ext} -e n.
\end{equation}
Da questo punto in poi si ometterà il suffisso $\ap{eff}$ e si parlerà semplicemente di campo elettrico e potenziale elettrico. Accoppiando la \eqref{EQ:CAP1:Eq_Poisson} con la \eqref{EQ:CAP1:Eq_Vlasov}, si ottiene il seguente sistema
\begin{equation}\label{EQ:CAP1:Sist_Vlasov_Poisson}
\left\lbrace
\begin{aligned}
&\frac{\partial f}{\partial t} + \mathbf{v}(\mathbf{k}) \cdot \nabla_{\mathbf{x}} f - \frac{e}{\hbar} \mathbf{E}(\mathbf{x},t) \cdot \nabla_{\mathbf{k}} f=0\\
&-\nabla_{\mathbf{x}} \cdot (\varepsilon_s\nabla_{\mathbf{x}} V)=\rho\ap{ext} -e n
\end{aligned}
\right.
\end{equation}
detto \textbf{sistema di Vlasov-Poisson}, in cui vale la relazione $\mathbf{E}(\mathbf{x},t)=-\nabla_{\mathbf{x}} V(\mathbf{x},t)$.

\`{E} possibile ripetere quanto fin qui esposto per gli elettroni anche per le lacune, sostituendo $-e$ con $e$. In tal caso, indicando con $f_v$ la funzione di distribuzione delle lacune, il sistema di Vlasov-Poisson assume la forma
\begin{equation}
\left\lbrace
\begin{aligned}
&\frac{\partial f_h}{\partial t} + \mathbf{v}_h(\mathbf{k}) \cdot \nabla_{\mathbf{x}} f_h + \frac{e}{\hbar} \mathbf{E}(\mathbf{x},t) \cdot \nabla_{\mathbf{k}} f_h=0\\
&-\nabla_{\mathbf{x}} \cdot (\varepsilon_s\nabla_{\mathbf{x}} V)=\rho\ap{ext} +e p
\end{aligned}
\right.
\end{equation}
dove
\begin{equation}
p(\mathbf{x},t)=\int_{\mathcal{B}} \! f_v(\mathbf{x},\mathbf{k},t) \, \diff \mathbf{k}.
\end{equation}
Infine, nel caso di sistemi bipolari, cioè in cui intervengono sia elettroni che lacune, è possibile scrivere la densità $\rho\ap{ext}$ come
\begin{equation}
\rho\ap{ext}(\mathbf{x})=e(N_D(\mathbf{x})-N_A(\mathbf{x})),
\end{equation}
dove $N_D$ e $N_A$ sono rispettivamente le densità degli ioni \textbf{donori}, cioè ioni del reticolo cristallino che possono donare elettroni, e degli ioni \textbf{accettori}, cioè ioni del reticolo cristallino che possono accettare elettroni. Allora l'equazione di Poisson per il potenziale elettrico diventa
\begin{equation}
\nabla_{\mathbf{x}} \cdot (\varepsilon_s(\mathbf{x})\nabla_{\mathbf{x}} V(\mathbf{x},t))=-e(N_D(\mathbf{x})-N_A(\mathbf{x})-n(\mathbf{x},t)+p(\mathbf{x},t)),
\end{equation}

\subsection{Le interazioni elettroniche}
Il moto degli elettroni in presenza di un potenziale periodico, come quello definito da un cristallo, è descritto dalle funzioni d'onda di Bloch, come visto nel Paragrafo \ref{PAR:CAP2:Moto_elettrone}. Le onde di Bloch si muovono liberamente all'interno del cristallo. Occasionalmente l'elettrone incontra una perturbazione causata da una vibrazione reticolare oppure dalla presenza di impurità o difetti nel materiale. Quando ciò accade si dice che avviene uno \textbf{scattering}. Esso produce una variazione dell'energia e dell'impulso dell'elettrone (cfr.~\citep{BOOK:Lundstrom}). Inoltre, una transizione può essere:
\begin{itemize}
\item \textbf{intra-band}, se la variazione di energia lascia l'elettrone nella stessa banda;
\item \textbf{inter-band}, se la variazione di energia porta l'elettrone in un'altra banda.
\end{itemize}
Gli scattering possono essere classificati in base alle valli in cui si trovano gli stati iniziale e finale. Una transizione può essere:
\begin{itemize}
\item \textbf{intra-valley}, se gli stati iniziale e finale stanno nella stessa valle;
\item \textbf{inter-valley} se gli stati iniziale e finale stanno in valli diverse.
\end{itemize}
Merita una descrizione più approfondita il caso dei fononi. Lo scattering elettrone-fonone si può ricondurre a due fenomeni elementari: l'assorbimento di un fonone, e l'emissione di un fonone. 

Quando avviene un assorbimento, un fonone di vettore d'onda $\xi$ ed energia $\hbar\omega$ viene assorbito da un elettrone di pseudo-vettore d'onda $\mathbf{k}$ ed energia $\mathcal{E}_c(\mathbf{k})$, dove con $\mathcal{E}_c$ si indica l'espressione dell'energia nella banda di conduzione. Dopo l'interazione si ha
\begin{equation}
\mathbf{k}'=\mathbf{k}+\xi+\mathbf{g}, \qquad \mathcal{E}_c(\mathbf{k}')=\mathcal{E}_c(\mathbf{k})+\hbar\omega,
\end{equation}
dove $\mathbf{g}$ è un vettore del reticolo inverso tale che $\mathbf{k}'\in\mathcal{B}$. In questo processo si ha dunque la conservazione dell'impulso, a meno di un vettore del reticolo inverso, e dell'energia totale.

Quando avviene un'emissione un elettrone di pseudo-vettore d'onda $\mathbf{k}$ ed energia $\mathcal{E}_c(\mathbf{k})$ modifica il suo pseudo-vettore d'onda in $\mathbf{k}'$ e la sua energia in $\mathcal{E}_c(\mathbf{k}')$ dando origine anche ad un fonone di vettore d'onda $\xi$ ed energia $\hbar\omega$. Si hanno quindi le relazioni
\begin{equation}
\mathbf{k}'=\mathbf{k}-\xi+\mathbf{g}, \qquad \mathcal{E}_c(\mathbf{k}')=\mathcal{E}_c(\mathbf{k})-\hbar\omega.
\end{equation}

\subsection{L'equazione semiclassica di Boltzmann}\label{PAR:CAP1:Eq_Boltz_semi}
La descrizione presentata fino a questo punto è valida se il cristallo è perfettamente periodico. Nei semiconduttori reali la periodicità è distrutta da varie cause: dislocazioni nel cristallo (cioè linee o piani delle unità del cristallo la cui posizione è spostata rispetto a quella ideale), atomi mancanti, atomi che occupano regioni che nel caso ideale sarebbero vuote, stress meccanici, drogaggio con impurezze e vibrazioni termiche del reticolo. I difetti che si presentano maggiormente sono gli ultimi due. Nella trattazione che seguirà sarà considerato soltanto l'ultimo dei difetti prima menzionati e saranno trascurati gli altri. Le vibrazioni termiche del reticolo sono rappresentate come delle particelle fittizie che perturbano il potenziale del reticolo, quindi le interazioni si possono vedere come scattering tra fononi. Formalmente questi effetti si introducono nel modello inserendo un membro di destra non nullo nell'equazione di Vlasov \eqref{EQ:CAP1:Sist_Vlasov_Poisson}$_1$ che descrive gli effetti di scattering, cioè si ha
\begin{equation}\label{EQ:CAP1:Eq_Boltzmann}
\frac{\partial f}{\partial t} + \mathbf{v}(\mathbf{k}) \cdot \nabla_{\mathbf{x}} f -\frac{e}{\hbar} \mathbf{E} \cdot \nabla_{\mathbf{k}} f=\mathcal{C}[f],
\end{equation}
dove $\mathcal{C}[f]$ è detto \textbf{termine collisionale per gli elettroni}. La \eqref{EQ:CAP1:Eq_Boltzmann} prende il nome di \textbf{equazione semiclassica di Boltzmann per gli elettroni nella banda di conduzione dei semiconduttori}. Analogamente, per le lacune si ottiene
\begin{equation}
\frac{\partial f_h}{\partial t} + \mathbf{v}_h(\mathbf{k}) \cdot \nabla_{\mathbf{x}} f_h +\frac{e}{\hbar} \mathbf{E} \cdot \nabla_{\mathbf{k}} f_h=\mathcal{C}_h[f_h],
\end{equation}
dove $\mathcal{C}[f]$ è detto \textbf{termine collisionale per le lacune}.

Il termine collisionale $\mathcal{C}[f]$ nella \eqref{EQ:CAP1:Eq_Boltzmann} deve essere interpretato come un termine che modifica la distribuzione $f$ per effetto degli scattering. Per mezzo di essi il valore di $f(\mathbf{x},\mathbf{k},t)$ può variare in due modi: una particella che si trova nello stato $(\mathbf{x},\mathbf{k}',t)$ assumerà dopo lo scattering lo stato $(\mathbf{x},\mathbf{k},t)$; una particella che si trova nello stato $(\mathbf{x},\mathbf{k},t)$ occuperà dopo lo scattering lo stato $(\mathbf{x},\mathbf{k}',t)$. Nel primo caso si avrà un incremento del numero di particelle per unità di volume dello spazio delle fasi, nel secondo caso una diminuzione. Si osservi che gli effetti degli scattering intervengono modificando solamente i vettori d'onda e non la posizione. Allora la variazione netta del valore di $f(\mathbf{x},\mathbf{k},t)$ si ottiene integrando su tutti i possibili valori di $\mathbf{k}'$. Inoltre, è possibile separare $\mathcal{C}[f]$ in due termini, cioè si ha
\begin{equation}
\mathcal{C}[f] = \int_\mathcal{B} \! \Psi\ap{in} - \Psi\ap{out} \, \diff \mathbf{k}',
\end{equation}
dove $\Psi\ap{in}$ rappresenta il termine di guadagno, o \textbf{gain}, e $\Psi\ap{out}$ rappresenta il termine di perdita, o \textbf{loss}.

Si analizzeranno adesso i due termini separatamente. Si consideri il termine $\Psi\ap{in}$. Si vuole determinare la distribuzione di probabilità che una particella raggiunga lo stato $(\mathbf{x},\mathbf{k},t)$ supposto che essa si trovi nello stato $(\mathbf{x},\mathbf{k}',t)$. Affinché ciò avvenga è necessario e sufficiente che si verifichino contemporaneamente i seguenti tre eventi: la particella deve occupare lo stato di vettore d'onda $\mathbf{k}'$; deve avvenire uno scattering dallo stato di vettore d'onda $\mathbf{k}'$ a quello di vettore d'onda $\mathbf{k}$; lo stato di vettore d'onda $\mathbf{k}$ deve essere libero, in accordo con il principio di esclusione di Pauli. Questi eventi si verificano seguendo delle distribuzioni di probabilità, che sono rispettivamente $f(\mathbf{x},\mathbf{k}',t)$, $P(\mathbf{k}',\mathbf{k})$ e $1-f(\mathbf{x},\mathbf{k},t)$, dove la distribuzione $P$ sarà determinata nel Paragrafo \ref{PAR:CAP1:Regola_Fermi}. Allora, il termine di gain assume la forma
\begin{equation}
\Psi\ap{in} = P(\mathbf{k}',\mathbf{k})f(\mathbf{x},\mathbf{k}',t)(1-f(\mathbf{x},\mathbf{k},t)).
\end{equation}
Analogamente, considerando il termine $\Psi\ap{in}$, occorre seguire il procedimento precedente con la differenza che lo stato di partenza è questa volta $(\mathbf{x},\mathbf{k},t)$ e lo stato di arrivo è $(\mathbf{x},\mathbf{k}',t)$. Di conseguenza, il termine di loss si può scrivere come
\begin{equation}
\Psi\ap{out} = P(\mathbf{k},\mathbf{k}')f(\mathbf{x},\mathbf{k},t)(1-f(\mathbf{x},\mathbf{k}',t)).
\end{equation}
In definitiva, il termine collisionale assume la forma
\begin{equation}
\mathcal{C}[f] = \int_\mathcal{B} \! P(\mathbf{k}',\mathbf{k})f(\mathbf{k}')(1-f(\mathbf{k})) - P(\mathbf{k},\mathbf{k}')f(\mathbf{k})(1-f(\mathbf{k}')) \, \diff \mathbf{k}',
\end{equation}
in cui per semplicità di scrittura è stato posto $f(\mathbf{k})=f(\mathbf{x},\mathbf{k},t)$ e $f(\mathbf{k}')=f(\mathbf{x},\mathbf{k}',t)$.

\subsection{Il metodo perturbativo}
Solamente pochi problemi quantistici hanno una soluzione esatta. In molti casi è necessario ricorrere ad una soluzione approssimata. Uno dei metodi più usati è il \textbf{metodo perturbativo}. Si consideri l'hamiltoniana $\mathcal{H}$ del sistema, a cui è associato un operatore hamiltoniano $H$. Si supponga che esso possa essere scritto nella forma
\begin{equation}\label{EQ:CAP1:Metodo_perturbativo}
H = H_0 + \lambda V',
\end{equation}
dove $H_0$ è un operatore autoaggiunto, indipendente dal tempo e di cui si conosce lo spettro discreto, $V'$ è un operatore, detto \textbf{perturbazione}, e $\lambda$ è un parametro chiamato \textbf{costante di accoppiamento}. Si possono distinguere due casi: quello in cui l'operatore $V'$ sia indipendente dal tempo e il caso in cui ne sia dipendente.

Si consideri il primo caso. Indicando con $\varepsilon_{0,m}$ gli autovalori di $V'$ e con $\left\vert\phi_{0,m}\right\rangle$ i suoi autovettori corrispondenti, si ha che essi soddisfano l'equazione agli autovalori
\begin{equation}\label{EQ:CAP1:EQ_autov_metod_pertur}
H_0 \left\vert \phi_{0,m} \right\rangle = \varepsilon_{0,m} \left\vert \phi_{0,m} \right\rangle,
\end{equation}
in cui gli autovettori soddisfano la seguente relazione di ortonormalità
\begin{equation}\label{EQ:CAP1:Rel_orton_autovett}
\left\langle \phi_{0,n}\vert\phi_{0,m} \right\rangle =\delta_{nm}.
\end{equation}
L'obiettivo da raggiungere sarà determinare gli autovalori $\varepsilon_m$ e gli autovettori $\left\vert\phi_m\right\rangle$ dell'operatore hamiltoniano totale $H$, i quali devono soddisfare l'equazione agli autovalori
\begin{equation}\label{EQ:CAP1:Eq_autoval_ham_tot}
H \left\vert\phi_m\right\rangle = \varepsilon_m \left\vert\phi_m\right\rangle.
\end{equation}
Si procede supponendo di poter espandere le quantità incognite in potenze di $\lambda$, cioè si ha
\begin{equation}\label{EQ:CAP1:Espansioni_lambda}
\left\vert\phi_m\right\rangle = \left\vert\phi_m^{(0)}\right\rangle +\lambda \left\vert\phi_m^{(1)}\right\rangle + \lambda^2 \left\vert\phi_m^{(2)}\right\rangle +\ldots,
\qquad
\varepsilon_m = \varepsilon_m^{(0)} + \lambda \varepsilon_m^{(1)} + \lambda^2 \varepsilon_m^{(2)}+\ldots.
\end{equation}
Successivamente, si sostituiscono le espressioni \eqref{EQ:CAP1:Espansioni_lambda} nella \eqref{EQ:CAP1:Eq_autoval_ham_tot}, ottenendo un'eguaglianza tra due polinomi in $\lambda$, i cui coefficienti devono uguagliarsi per ogni potenza di $\lambda$. Allora, tenendo conto di \eqref{EQ:CAP1:Metodo_perturbativo}, si hanno le seguenti relazioni
\begin{equation}\label{EQ:CAP1:Eq_metodo_perturb}
\begin{aligned}
&\left[ H_0-\varepsilon_m^{(0)} \right] \left\vert \phi_m^{(0)} \right\rangle =0\\
&\left[ H_0-\varepsilon_m^{(0)} \right] \left\vert \phi_m^{(1)} \right\rangle = \left[ \varepsilon_m^{(1)} -V' \right] \left\vert \phi_m^{(0)} \right\rangle \\
&\left[ H_0-\varepsilon_m^{(0)} \right] \left\vert \phi_m^{(2)} \right\rangle = \left[ \varepsilon_m^{(1)} -V' \right] \left\vert \phi_m^{(1)} \right\rangle + \varepsilon_m^{(2)} \left\vert \phi_m^{(0)} \right\rangle  \\
&\ldots
\end{aligned}
\end{equation}
\`{E} possibile rappresentare i vettori delle equazioni precedenti nella base di autovettori relativa all'operatore $H_0$, cioè si ha
\begin{equation}\label{EQ:CAP1:Rapp_vett_metodo_perturb}
\left\vert \phi_m^{(i)} \right\rangle = \sum_n C_n^{(i)}(m) \left\vert \phi_{0,m} \right\rangle.
\end{equation}
Sostituendo tali relazioni nelle equazioni \eqref{EQ:CAP1:Eq_metodo_perturb}, si ottiene
\begin{equation}\label{EQ:CAP1:Eq_metodo_perturb_2}
\begin{aligned}
&\left[ H_0-\varepsilon_m^{(0)} \right] \sum_n C_n^{(0)}(m) \left\vert \phi_{0,n} \right\rangle =0\\
&\left[ H_0-\varepsilon_m^{(0)} \right] \sum_n C_n^{(1)}(m) \left\vert \phi_{0,n} \right\rangle = \left[ \varepsilon_m^{(1)} -V' \right] \sum_n C_n^{(0)}(m) \left\vert \phi_{0,n} \right\rangle \\
&\left[ H_0-\varepsilon_m^{(0)} \right] \sum_n C_n^{(2)}(m) \left\vert \phi_{0,n} \right\rangle = \left[ \varepsilon_m^{(1)} -V' \right] \sum_n C_n^{(1)}(m) \left\vert \phi_{0,n} \right\rangle + \varepsilon_m^{(2)} \sum_n C_n^{(0)}(m) \left\vert \phi_{0,n} \right\rangle  \\
&\ldots
\end{aligned}
\end{equation}
Si osservi che all'ordine zero gli autovalori dell'operatore $H$ coincidono con quelli dell'operatore $H_0$. Allora la prima delle \eqref{EQ:CAP1:Eq_metodo_perturb_2}, diventa
\begin{equation}
\sum_n \left[ \varepsilon_{0,n}-\varepsilon_{0,m} \right]  C_n^{(0)}(m) \left\vert \phi_{0,n} \right\rangle =0.
\end{equation}
Se lo spettro è non degenere, ovvero se $\varepsilon_{0,n}\neq\varepsilon_{0,m}$ per $n\neq m$, allora tutti i coefficienti $C_n^{(0)}(m)$ devono annullarsi per $n\neq m$, invece i coefficienti $C_m^{(0)}(m)$ restano indeterminati e si considerano uguali a 1 per normalizzazione. Allora, in questo caso, anche gli autovettori dell'operatore $H$ coincidono con quelli dell'operatore $H_0$. In definitiva si ha
\begin{equation}\label{EQ:CAP1:Rel_metod_pert_ord_zero}
\varepsilon_m^{(0)} = \varepsilon_{0,m}, \qquad C_n^{(0)}(m)=\delta_{nm}, \qquad \left\vert \phi_m^{(0)} \right\rangle = \left\vert \phi_{0,m} \right\rangle.
\end{equation}
Si consideri ancora il caso in cui lo spettro dell'operatore $H_0$ sia non degenere. Con lo scopo di calcolare la correzione sugli autovalori e autovettori relativa al primo ordine in $\lambda$, si consideri la seconda equazione delle \eqref{EQ:CAP1:Eq_metodo_perturb_2} e la si moltiplichi a sinistra per $\left\langle \phi_{0,m} \right\vert$, allora si ha
\begin{equation}
\left\langle \phi_{0,m} \right\vert \left[ H_0-\varepsilon_m^{(0)} \right] \sum_n C_n^{(1)}(m) \left\vert \phi_{0,n} \right\rangle = \left\langle \phi_{0,m} \right\vert \left[ \varepsilon_m^{(1)} -V' \right] \sum_n C_n^{(0)}(m) \left\vert \phi_{0,m} \right\rangle,
\end{equation}
inoltre, utilizzando la \eqref{EQ:CAP1:Rel_orton_autovett}, si ottiene
\begin{equation}
\left\langle \phi_{0,m} \right\vert \left[ H_0-\varepsilon_m^{(0)} \right] C_m^{(1)}(m) \left\vert \phi_{0,m} \right\rangle = \left\langle \phi_{0,m} \right\vert \left[ \varepsilon_m^{(1)} -V' \right] C_m^{(0)}(m) \left\vert \phi_{0,m} \right\rangle.
\end{equation}
Si osservi che, in virtù della \eqref{EQ:CAP1:EQ_autov_metod_pertur}, il primo membro di quest'ultima equazione è nullo. Allora, dividendo per $C_m^{(0)}(m)$, si ha
\begin{equation}\label{EQ:CAP1:Espress_autov_metod_pert}
\varepsilon_m^{(1)} = \left\langle \phi_{0,m} \right\vert V' \left\vert \phi_{0,m} \right\rangle.
\end{equation}
Questa relazione permette di calcolare la correzione al primo ordine perturbativo degli autovalori $\varepsilon_{0,m}$. Mentre per determinare le correzioni al primo ordine perturbativo degli autovettori, si consideri ancora la seconda equazione delle \eqref{EQ:CAP1:Eq_metodo_perturb_2} e la si moltiplichi a sinistra questa volta per $\left\langle \phi_{0,n} \right\vert$, con $m\neq n$, ottenendo
\begin{equation}
\left\langle \phi_{0,n} \right\vert \left[ H_0-\varepsilon_m^{(0)} \right] \sum_n C_n^{(1)}(m) \left\vert \phi_{0,n} \right\rangle = \left\langle \phi_{0,m} \right\vert \left[ \varepsilon_m^{(1)} -V' \right] \sum_n C_n^{(0)}(m) \left\vert \phi_{0,n} \right\rangle,
\end{equation}
da cui in virtù della \eqref{EQ:CAP1:Rel_orton_autovett}, si ha
\begin{equation}
C_n^{(1)}(m) \left[  \left\langle \phi_{0,n} \right\vert H_0 \left\vert \phi_{0,n} \right\rangle -\varepsilon_m^{(0)} \right]  = C_n^{(0)}(m) \left[ \varepsilon_m^{(1)}- \left\langle \phi_{0,m} \right\vert V' \left\vert \phi_{0,n} \right\rangle \right].
\end{equation}
Allora, utilizzando la \eqref{EQ:CAP1:EQ_autov_metod_pertur} e la \eqref{EQ:CAP1:Espress_autov_metod_pert}, si ottiene
\begin{equation}
C_n^{(1)}(m) \left[  \varepsilon_{0,n} -\varepsilon_m^{(0)} \right]  = C_n^{(0)}(m) \left[ \varepsilon_m^{(1)}- \varepsilon_n^{(1)} \right],
\end{equation}
e, infine, ricordando le \eqref{EQ:CAP1:Rel_metod_pert_ord_zero}, si ha
\begin{equation}
C_n^{(1)}(m) \left[  \varepsilon_{0,n} -\varepsilon_{0,m} \right]  = \left[ \varepsilon_m^{(1)}- \varepsilon_n^{(1)} \right],
\end{equation}
da cui si ricavano le espressioni per i coefficienti, cioè
\begin{equation}
C_n^{(1)}(m) = \frac{\varepsilon_m^{(1)}- \varepsilon_n^{(1)}}{\varepsilon_{0,n} -\varepsilon_{0,m}}.
\end{equation}
Di conseguenza, ricordando la \eqref{EQ:CAP1:Rapp_vett_metodo_perturb}, si ottengono le espressioni per le correzioni al primo ordine perturbativo degli autovettori, ovvero
\begin{equation}
\left\vert \phi_m^{(1)} \right\rangle = \sum_n \frac{\varepsilon_m^{(1)}- \varepsilon_n^{(1)}}{\varepsilon_{0,n} -\varepsilon_{0,m}} \left\vert \phi_{0,m} \right\rangle.
\end{equation}
Procedendo in modo analogo si ottengono per ricorsione le espressioni per le correzioni degli autovalori e degli autovettori fino all'ordine perturbativo voluto. Si osservi che questa procedura è valida nel caso in cui gli autovalori sono tutti distinti. Per quanto concerne il caso degenere, si consulti \citep{BOOK:Schiff}.

L'altro casi si ha quando la perturbazione è dipendente dal tempo. Supponendo di conoscere, come il primo caso, gli autovalori $\varepsilon_{0,n}$ e gli autovettori $\psi_{0,n}$ dell'operatore hamiltoniano imperturbato $H_0$, bisogna cercare la soluzione dell'equazione di Schr\"{o}dinger dipendente dal tempo
\begin{equation}
i\hbar \frac{\partial}{\partial t} \left\vert \Psi(t) \right\rangle = H \left\vert \Psi(t) \right\rangle.
\end{equation}
Ricordando che uno stato quantico è rappresentato a meno di un fattore di fase, è opportuno rappresentare gli autostati dell'operatore hamiltoniano imperturbato $H_0$ specificando il fattore di fase, cioè
\begin{equation}
\left\vert \Phi_{0,n}(t) \right\rangle = \left\vert \phi_{0,n} \right\rangle e^{-i\omega_{0,n}(t-t_0)}, \qquad \omega_{0,n}=\frac{\varepsilon_{0,n}}{\hbar}.
\end{equation}
Allora è possibile rappresentare un generico stato $\left\vert \Psi(t) \right\rangle$ in questa base scrivendo
\begin{equation}\label{EQ:CAP1:Generico_stato_perturb_dip_t}
\left\vert \Psi(t) \right\rangle = \sum_n a_n(t)\left\vert \phi_{0,n} \right\rangle e^{-i\omega_{0,n}(t-t_0)}.
\end{equation}
Allora, ricordando la \eqref{EQ:CAP1:EQ_autov_metod_pertur}, l'equazione di Schr\"{o}dinger diventa
\begin{equation}
\begin{aligned}
& i\hbar \sum_n \dot{a}_n(t)\left\vert \phi_{0,n} \right\rangle e^{-i\omega_{0,n}(t-t_0)} + i\hbar \sum_n a_n(t)\left\vert \phi_{0,n} \right\rangle (-i\omega_{0,n}) e^{-i\omega_{0,n}(t-t_0)} = \\
& \sum_n a_n(t) \varepsilon_{0,n} \left\vert \phi_{0,n} \right\rangle e^{-i\omega_{0,n}(t-t_0)} +  \sum_n H'(t) a_n(t) \left\vert \phi_{0,n} \right\rangle e^{-i\omega_{0,n}(t-t_0)}
\end{aligned}
\end{equation}
Osservando che, poiché $\omega_{0,n}=\varepsilon_{0,n} / \hbar$, nell'equazione precedente il secondo termine del primo membro è uguale al primo termine del secondo membro, si ottiene
\begin{equation}
i\hbar \sum_n \dot{a}_n(t)\left\vert \phi_{0,n} \right\rangle e^{-i\omega_{0,n}(t-t_0)} = \sum_n H'(t) a_n(t) \left\vert \phi_{0,n} \right\rangle e^{-i\omega_{0,n}(t-t_0)}.
\end{equation}
Moltiplicando quest'ultima equazione a sinistra per $\left\langle \phi_{0,m} \right\vert e^{i\omega_{0,m}(t-t_0)}$ e ricordando che $\left\langle \phi_{0,m} \vert \phi_{0,n} \right\rangle = \delta_{mn} $, si ha
\begin{equation}
i\hbar \, \dot{a}_m(t) = \sum_n a_n(t) \left\langle \phi_{0,m} \right\vert H'(t) \left\vert \phi_{0,n} \right\rangle e^{i(\omega_{0,m}-\omega_{0,n})(t-t_0)}.
\end{equation}
Allora, ponendo
\begin{equation}
\omega_{mn} = \omega_{0,m}-\omega_{0,n} \qquad \mbox{e} \qquad H'_{mn}(t)=\left\langle \phi_{0,m} \right\vert H'(t) \left\vert \phi_{0,n} \right\rangle,
\end{equation}
si ottiene
\begin{equation}\label{EQ:CAP1:Eq_coeffi_met_pert_dip_t}
\dot{a}_m(t) = \frac{1}{i\hbar} \sum_n H'_{mn}(t) a_n(t) e^{i\omega_{mn}(t-t_0)}.
\end{equation}
A questo punto si consideri l'espansione dei coefficienti che determinano la perturbazione, cioè
\begin{equation}\label{EQ:CAP1:Esp_coeff_met_per_dip_t}
a_m(t) = a_m^{(0)}(t) + \lambda a_m^{(1)}(t) + \lambda^2 a_m^{(2)}(t)+\ldots.
\end{equation}
Allora, sostituendo tale espansione nella \eqref{EQ:CAP1:Eq_coeffi_met_pert_dip_t} e osservando che ponendo
\begin{equation}
V'_{mn}(t) = \left\langle \phi_{0,m} \right\vert V'(t) \left\vert \phi_{0,n} \right\rangle
\end{equation}
si ottiene che $H'_{mn}(t)=\lambda V'_{mn}(t)$, allora la \eqref{EQ:CAP1:Eq_coeffi_met_pert_dip_t} diventa
\begin{equation}
\begin{aligned}
\dot{a}_m^{(0)}(t)+\lambda\dot{a}_m^{(1)} & +\lambda^2\dot{a}_m^{(2)}(t)+\ldots =\\ 
& \frac{1}{i\hbar}\lambda \sum_n V'_{mn}(t) a_n^{(0)}(t) e^{i\omega_{mn}(t-t_0)} + \frac{1}{i\hbar}\lambda^2 \sum_n V'_{mn}(t) a_n^{(1)}(t) e^{i\omega_{mn}(t-t_0)} + \ldots.
\end{aligned}
\end{equation}
Quindi si è determinata un'eguaglianza tra due polinomi in $\lambda$. Di conseguenza, separando i vari ordini, si ottiene
\begin{equation}\label{EQ:CAP1:Svilup_ord_perturb_dip_t}
\dot{a}_m^{(0)}(t)=0, \qquad \dot{a}_m^{(s+1)}(t)=\frac{1}{i\hbar} \sum_n V'_{mn}(t) a_n^{(s)}(t) e^{i\omega_{mn}(t-t_0)}.
\end{equation}
Dalla prima di queste equazioni, si ottiene
\begin{equation}\label{EQ:CAP1:Eq_ordine_pert_zero_dip_t}
a_m^{(0)}(t)= \mbox{cost} = a_m^{(0)}(t_0),
\end{equation}
e le soluzioni per gli altri ordini si ottengono per ricorrenza. Arrestando lo sviluppo \eqref{EQ:CAP1:Svilup_ord_perturb_dip_t} al primo ordine perturbativo, si ha
\begin{equation}\label{EQ:CAP1:Svilup_arrestato_ord_perturb_dip_t}
a_m^{(0)}(t) = a_m^{(0)}(t_0), \qquad \dot{a}_m^{(1)}(t)=\frac{1}{i\hbar} \sum_n V'_{mn}(t) a_n^{(0)}(t_0) e^{i\omega_{mn}(t-t_0)}.
\end{equation}
Inoltre, supponendo che all'istante iniziale $t_0$ il sistema si trovi nello stato $\left\vert \phi_{0,i}\right\rangle$, si vuole conoscere la probabilità che al tempo $t$ il sistema si trovi nello stato $\left\vert \phi_{0,f}\right\rangle$. Si osservi che entrambi sono autostati dell'operatore hamiltoniano imperturbato $H_0$. Inoltre, poiché il sistema al tempo $t_0$ si trova nello stato $\left\vert \phi_{0,i}\right\rangle$, allora per la \eqref{EQ:CAP1:Generico_stato_perturb_dip_t}, si deve avere
\begin{equation}\label{EQ:CAP1:Rel_coeff_pert_dip_t_1}
a_n(t_0)=\delta_{in},
\end{equation}
da cui avendo supposto che all'istante $t_0$ il sistema fosse imperturbato, si ha
\begin{equation}\label{EQ:CAP1:Rel_coeff_pert_dip_t}
a_n^{(0)}(t_0)=\delta_{in}.
\end{equation}
Allora, per lo stato finale al tempo $t$, dalla \eqref{EQ:CAP1:Rel_coeff_pert_dip_t_1} si ottiene la condizione iniziale
\begin{equation}\label{EQ:CAP1:Cond_ini_met_per_dip_t}
a_f(t_0)=0.
\end{equation}
Invece, considerando la seconda delle \eqref{EQ:CAP1:Svilup_arrestato_ord_perturb_dip_t} e sostituendo la relazione \eqref{EQ:CAP1:Rel_coeff_pert_dip_t}, per lo stato finale al tempo $t$, si ottiene
\begin{equation}
\begin{aligned}
\dot{a}_f^{(1)}(t) & = \frac{1}{i\hbar} \sum_n V'_{fn}(t) \delta_{in} e^{i\omega_{fn}(t-t_0)} \\
& = \frac{1}{i\hbar} V'_{fi}(t) e^{i\omega_{fi}(t-t_0)}.
\end{aligned}
\end{equation}
Infine, considerando lo sviluppo \eqref{EQ:CAP1:Esp_coeff_met_per_dip_t} arrestato al primo ordine perturbativo e derivando rispetto a $t$, si ricava l'equazione approssimata
\begin{equation}
\begin{aligned}
\dot{a}_f(t) & \approx \dot{a}_f^{(0)}(t)+\lambda\dot{a}_f^{(1)}(t) \\
& = 0 + \lambda\frac{1}{i\hbar} V'_{fi}(t) e^{i\omega_{fi}(t-t_0)} \\
& = \frac{1}{i\hbar} H'_{fi}(t) e^{i\omega_{fi}(t-t_0)}.
\end{aligned}
\end{equation}
Integrando quest'ultima equazione con la condizione iniziale \eqref{EQ:CAP1:Cond_ini_met_per_dip_t}, si ha la relazione approssimata
\begin{equation}\label{EQ:CAP1:Rel_a_i_f_met_pert_dip_t}
a_{i\to f}(t) \approx \frac{1}{i\hbar} \int_{t_0}^t \! H'_{fi}(t') e^{i\omega_{fi}(t'-t_0)} \, \diff t',
\end{equation}
in cui con il simbolo $i\to f$ si è voluta specificare la dipendenza da $i$. Poiché sia lo stato iniziale che quello finale sono autostati, allora la probabilità di transizione $P_{i \to f}(t)$ dall'autostato $\left\vert \phi_{0,i}\right\rangle$ all'autostato $\left\vert \phi_{0,f}\right\rangle$ al tempo $t$ si calcola scrivendo
\begin{equation}
\begin{aligned}
P_{i \to f}(t) = \left\vert \left\langle \phi_{0,f} \vert \phi_{0,i} \right\rangle \right\vert^2 = \left\vert a_{i\to f}(t) \right\vert^2 & \approx \left\vert \frac{1}{i\hbar} \int_{t_0}^t \! H'_{fi}(t') e^{i\omega_{fi}(t'-t_0)} \, \diff t' \right\vert^2 \\
& = \left\vert \frac{1}{i\hbar} e^{-i\omega_{fi}t_0} \int_{t_0}^t \! H'_{fi}(t') e^{i\omega_{fi}t'} \, \diff t' \right\vert^2 \\
& = \left\vert \frac{1}{i\hbar} \int_{t_0}^t \! H'_{fi}(t') e^{i\omega_{fi}t'} \, \diff t' \right\vert^2,
\end{aligned}
\end{equation}
dove l'ultimo passaggio si ottiene in quanto il fattore costante $e^{-i\omega_{fi}t_0}$ è un numero immaginario puro e quindi il suo modulo quadro vale 1. Inoltre, valgono le relazioni
\begin{equation}\label{EQ:CAP1:Omega_fi}
H'_{fi}(t)=\left\langle \phi_{0,f} \right\vert H'(t) \left\vert \phi_{0,i} \right\rangle  \qquad \mbox{e} \qquad \omega_{fi} = \frac{\varepsilon_{0,f}-\varepsilon_{0,i}}{\hbar}.
\end{equation}
Quindi in definitiva si ha
\begin{equation}\label{EQ:CAP1:Prob_trans}
P_{i \to f}(t) \approx \left\vert \frac{1}{i\hbar} \int_{t_0}^t \! \left\langle \phi_{0,f} \right\vert H'(t') \left\vert \phi_{0,i} \right\rangle e^{i\omega_{fi}t'} \, \diff t' \right\vert^2,
\end{equation}
con $\omega_{fi} = \frac{\varepsilon_{0,f}-\varepsilon_{0,i}}{\hbar}$.

\subsection{La regola d'oro di Fermi}\label{PAR:CAP1:Regola_Fermi}
Per risolvere completamente il problema di trovare la distribuzione di probabilità per un cambiamento di stato di un sistema quantistico, utilizzando il metodo perturbativo, è necessario assegnare nella \eqref{EQ:CAP1:Prob_trans} l'operatore $H'(t)$. Poiché qualsiasi perturbazione può essere espressa come sovrapposizione di perturbazioni armoniche, queste ultime rivestono un ruolo molto importante. Nel caso della perturbazione armonica si sceglie $H'(t)$ della forma
\begin{equation}
H'(t) = F e^{-i \omega t} + F^\dagger e^{i \omega t},
\end{equation}
dove $F$ è l'operatore integrale di Fourier definito dalla \eqref{EQ:CAP1:Oper_Fourier} e $F^\dagger$ è l'operatore hermitiano aggiunto di $F$. Si osservi che $F$ e $F^\dagger$ sono operatori indipendenti dal tempo. Gli elementi di matrice si calcolano scrivendo
\begin{equation}
H'_{fi}(t) = \left\langle \phi_{0,f} \right\vert H'(t) \left\vert \phi_{0,i} \right\rangle = \left\langle \phi_{0,f} \right\vert F e^{-i \omega t} + F^\dagger e^{i \omega t} \left\vert \phi_{0,i} \right\rangle = F_{fi} e^{-i \omega t} + F_{if}^\dagger e^{i \omega t}.
\end{equation}
Sostituendo quest'ultima formula nella \eqref{EQ:CAP1:Rel_a_i_f_met_pert_dip_t}, si ottiene
\begin{equation}
\begin{aligned}
a_{i\to f}(t) & \approx \frac{1}{i\hbar} e^{-i\omega_{fi}t_0} \int_{t_0}^t \! \left(  F_{fi} e^{-i \omega t'} + F_{if}^\dagger e^{i \omega t'} \right)  e^{i\omega_{fi}t'} \, \diff t'\\
& = \frac{1}{i\hbar} e^{-i\omega_{fi}t_0} \left[ F_{fi} \int_{t_0}^t \! e^{i (\omega_{fi}-\omega) t'} \, \diff t' + F_{if}^\dagger \int_{t_0}^t \! e^{i (\omega_{fi}+\omega) t'} \, \diff t' \right]\\
& =  \frac{1}{i\hbar} e^{-i\omega_{fi}t_0} \left[ F_{fi} \frac{e^{i(\omega_{fi}-\omega)t}-e^{i(\omega_{fi}-\omega)t_0}}{i(\omega_{fi}-\omega)} + F_{if}^\dagger \frac{e^{i(\omega_{fi}+\omega)t}-e^{i(\omega_{fi}+\omega)t_0}}{i(\omega_{fi}+\omega)} \right].
\end{aligned}
\end{equation}
Si osservi che uno dei due termini diventa dominante quando il denominatore si annulla, cioè quando
\begin{equation}
\omega_{fi}\mp\omega\approx 0.
\end{equation}
Sostituendo nell'equazione precedente la seconda delle \eqref{EQ:CAP1:Omega_fi}, si ha
\begin{equation}
\varepsilon_{0,f}\approx\varepsilon_{0,i}\pm\hbar\omega.
\end{equation}
Ciò accade se nel cambiamento di stato avviene un assorbimento oppure un'emissione di un quanto di energia. Di conseguenza, è dominante il primo termine se avviene un assorbimento, lo è il secondo se avviene un'emissione.

Si può calcolare la probabilità di trovare il sistema al tempo $t$ nello stato finale $\ket{\phi_{0,f}}$ dopo un assorbimento, ovvero
\begin{equation}\label{EQ:CAP1:Fermi_ass}
\begin{aligned}
\left\vert a_f\ap{(ass)}(t) \right\vert^2 & = \frac{1}{\hbar^2} \left\vert F_{fi} \right\vert^2 \frac{\left\vert e^{i(\omega_{fi}-\omega)t} - e^{i(\omega_{fi}-\omega)t_0} \right\vert^2}{(\omega_{fi}-\omega)^2}\\
& = \frac{2}{\hbar^2} \left\vert F_{fi} \right\vert^2 \frac{1-\cos\left(\left( \omega_{fi}-\omega \right)(t-t_0)\right)}{(\omega_{fi}-\omega)^2} \\
& = \frac{1}{\hbar^2} \left\vert F_{fi} \right\vert^2 \frac{\sin^2\left[ \frac{\omega_{fi}-\omega}{2} \Delta t \right]}{\left( \frac{\omega_{fi}-\omega}{2} \right)^2},
\end{aligned}
\end{equation}
dove $\Delta t = t-t_0$.

Si consideri la funzione
\begin{equation}
f_t(\alpha) =
\begin{cases}
\frac{\sin^2(\alpha t)}{\pi \alpha^2 t} &\mbox{se } \alpha\neq 0\\
\frac{t}{\pi} &\mbox{se } \alpha = 0
\end{cases}
\end{equation}
nella variabile $\alpha$, dove $t$ è un parametro reale. La funzione risulta continua in $\mathbb{R}$ e inoltre
\begin{equation}\label{EQ:CAP1:Fermi_app_1}
\lim_{t\to +\infty} f_t(\alpha) =
\begin{cases}
0 & \mbox{se } \alpha \neq 0\\
+\infty & \mbox{se } \alpha = 0
\end{cases}
\end{equation}
Si vuole calcolare l'integrale su $\mathbb{R}$ di $f_t(\alpha)$, cioè
\begin{equation}
\int_{-\infty}^{+\infty} \! f_t(\alpha) \, \diff \alpha = \int_{-\infty}^{+\infty} \! \frac{\sin^2(\alpha t)}{\pi \alpha^2 t} \, \diff \alpha = \frac{1}{\pi} \int_{-\infty}^{+\infty} \! \frac{\sin^2 \xi}{\xi^2} \, \diff \xi.
\end{equation}
Per calcolare l'ultimo integrale si consideri la funzione
\begin{equation}
\chi_{[-1,1]}(x) =
\begin{cases}
1 & \mbox{se } x\in [-1,1]\\
0 & \mbox{altrimenti}
\end{cases}
\end{equation}
e si calcoli la sua trasformata di Fourier $\hat{\chi}_{[-1,1]}(y)$, cioè
\begin{equation}
\begin{aligned}
\hat{\chi}_{[-1,1]}(y) & = \int_{-\infty}^{+\infty} \! \chi_{[-1,1]}(x)e^{-2\pi i x y} \, \diff x = \int_{-1}^{1} \! e^{-2\pi i x y} \, \diff x = \left[ \frac{e^{-2\pi i x y}}{-2\pi i y} \right]_{x=-1}^{x=1} \\
& = \frac{e^{-2\pi i y}-e^{2\pi i y}}{-2\pi i y} = \frac{\sin (2\pi y)}{\pi y}.
\end{aligned}
\end{equation}
Poiché si ha ovviamente $\chi_{[-1,1]}(x)\in L^2(\mathbb{R})$, applicando il Teorema di Plancherel, si ottiene
\begin{equation}
\begin{aligned}
2 & = \int_{-\infty}^{+\infty} \! \left\vert \chi_{[-1,1]}(x) \right\vert^2 \, \diff x = \int_{-\infty}^{+\infty} \! \left\vert \hat{\chi}_{[-1,1]}(y) \right\vert^2 \, \diff y = 4 \int_{-\infty}^{+\infty} \! \left\vert \frac{\sin (2\pi y)}{2\pi y} \right\vert^2 \, \diff y \\
& = 4 \int_{-\infty}^{+\infty} \! \left\vert \frac{\sin \xi}{\xi} \right\vert^2 \frac{1}{2\pi} \, \diff \xi = \frac{2}{\pi} \int_{-\infty}^{+\infty} \! \left\vert \frac{\sin \xi}{\xi} \right\vert^2 \, \diff \xi
\end{aligned}
\end{equation}
e quindi in definitiva si ha
\begin{equation}\label{EQ:CAP1:Fermi_app_2}
\int_{-\infty}^{+\infty} \! f_t(\alpha) \, \diff \alpha = 1.
\end{equation}
Di conseguenza, per valori di $\Delta t$ sufficientemente grandi, dalla \eqref{EQ:CAP1:Fermi_app_1} e dalla \eqref{EQ:CAP1:Fermi_app_2} vale l'approssimazione
\begin{equation}
f_t(\alpha) \approx \delta (\alpha),
\end{equation}
dove $\delta$ è la delta di Dirac.

Allora approssimando la \eqref{EQ:CAP1:Fermi_ass} si ha
\begin{equation}
\left\vert a_f\ap{(ass)} \right\vert^2 = \frac{\pi}{\hbar^2} \vert F_{fi} \vert^2 \Delta t \delta \left( \frac{\omega_{fi}-\omega}{2} \right) = \frac{2\pi}{\hbar} \vert F_{fi} \vert^2 \Delta t \delta (\varepsilon_f - \varepsilon_i -\hbar \omega ).
\end{equation}
Inoltre, la probabilità di transizione per unità di tempo, chiamata anche \textbf{transition rate}, è data da
\begin{equation}
P(i\rightarrow f) = \frac{2\pi}{\hbar} \vert F_{fi} \vert^2 \delta (\varepsilon_f - \varepsilon_i -\hbar \omega ),
\end{equation}
che è detta \textbf{regola d'oro di Fermi}. Nel caso in cui lo stato finale faccia parte di un continuo di stati di cui si conosce la densità in energia $g(\varepsilon)$, si può considerare la probabilità di transizione verso uno degli stati aventi una certa energia e la regola d'oro di Fermi assume la forma
\begin{equation}
P(i\rightarrow f) = \frac{2\pi}{\hbar} \vert F_{fi} \vert^2 \delta (\varepsilon_f - \varepsilon_i -\hbar \omega )g(\varepsilon).
\end{equation}
Con un procedimento analogo è possibile calcolare la probabilità di transizione per unità di tempo dopo un'emissione, ovvero
\begin{equation}
P(i\rightarrow f) = \frac{2\pi}{\hbar} \vert F^\dagger_{fi} \vert^2 \delta (\varepsilon_f - \varepsilon_i +\hbar \omega )g(\varepsilon).
\end{equation}


\subsection{Il nucleo collisionale}\label{PAR:CAP1:Nucleo_coll}
Si è visto nel Paragrafo \ref{PAR:CAP1:Eq_Boltz_semi} che il termine collisionale $\mathcal{C}[f]$ assume la forma 
\begin{equation}
\mathcal{C}[f] = \int_\mathcal{B} \! P(\mathbf{k}',\mathbf{k})f(\mathbf{k}')(1-f(\mathbf{k})) - P(\mathbf{k},\mathbf{k}')f(\mathbf{k})(1-f(\mathbf{k}')) \, \diff \mathbf{k}',
\end{equation}
in cui per semplicità di scrittura è stato posto $f(\mathbf{k})=f(\mathbf{x},\mathbf{k},t)$ e $f(\mathbf{k}')=f(\mathbf{x},\mathbf{k}',t)$. Occorre dunque scrivere in modo esplicito la probabilità di transizione $P(\mathbf{k},\mathbf{k}')$. Sapendo che nel caso dei fononi la densità in energia degli stati è data dalla distribuzione di Bose-Einstein $n_{\boldsymbol{\xi}}$, dove $\boldsymbol{\xi}$ indica il vettore d'onda del tipo di fonone considerato, e in virtù della regola d'oro di Fermi si può ottenere (cfr.~\citep{DISP:Anile},\citep{BOOK:Jacoboni},\citep{BOOK:Lundstrom}) la seguente forma generale
\begin{equation}
P(\mathbf{k},\mathbf{k}') = \mathcal{G}(\mathbf{k},\mathbf{k}')[(n_{\boldsymbol{\xi}}+1)]\delta(\mathcal{E}'-\mathcal{E}+\hbar\omega_{\boldsymbol{\xi}})+n_{\boldsymbol{\xi}}\delta(\mathcal{E}'-\mathcal{E}-\hbar\omega_{\boldsymbol{\xi}}),
\end{equation}
dove $\mathcal{G}(\mathbf{k},\mathbf{k}')$ è detto \textbf{fattore di sovrapposizione} e dipende dalla struttura a bande e dal particolare tipo di interazione. Inoltre gode delle proprietà
\begin{equation}
\mathcal{G}(\mathbf{k},\mathbf{k}')=\mathcal{G}(\mathbf{k}',\mathbf{k}) \quad \mbox{e} \quad \mathcal{G}(\mathbf{k},\mathbf{k}')\geq 0.
\end{equation}
Infine si osservi che se $\mathcal{E}'=\mathcal{E}+\hbar\omega$ si ha un assorbimento di energia, se $\mathcal{E}'=\mathcal{E}-\hbar\omega$ si ha emissione di energia e se $\mathcal{E}'=\mathcal{E}$ lo scattering è elastico.


\chapter{La simulazione Monte Carlo per il grafene} 

\label{Chapter3} 

\lhead{Capitolo 3. \emph{La simulazione Monte Carlo per il grafene}} 



\section{La struttura del grafene}
Il grafene è un materiale dalle innumerevoli proprietà fisiche (cfr.~\citep{ART:Castro-Neto}, \citep{ART:GRAPH2}, \citep{ART:Geim-Novoselov}). \`{E} noto che gli elettroni nel grafene si comportano come particelle relativistiche prive di massa (fermioni di Dirac). Questo fatto conferisce delle proprietà molto particolari a questo materiale come l'effetto Hall quantistico anomalo e l'assenza di localizzazione. Ulteriori proprietà del grafene sono l'elevata mobilità elettronica a temperatura ambiente ($\si{\num{250000}.\centi\metre^2\per\volt\second}$), l'eccezionale conduttività termica ($\si{\num{5000}.\watt\metre^{-1}\kelvin^{-1}}$) e le superiori proprietà meccaniche con un modulo di Young pari a $\si{\num{1}.\tera\pascal}$.

Al fine di applicare il modello descritto nel Capitolo~2 al caso del grafene, è di fondamentale importanza descrivere la struttura del suo reticolo cristallino e delle bande energetiche.

I successivi paragrafi sono tratti da \citep{THESIS:Lichtenberger} e \citep{ART:Castro-Neto}.

\subsection{La configurazione elettronica}
Il grafene è costituito da un singolo strato piano di atomi di carbonio, disposti in un reticolo bidimensionale a nido d'ape (cfr. \cite{ART:Geim-Novoselov}). Esso è un costituente fondamentale per i materiali di altre dimensionalità derivanti dalla grafite: può essere avvolto nei fullereni (0D), arrotolato nei nanotubi (1D) o accatastato nella grafite (3D) (vedi Figura \ref{FIG:CAP2:graphene}).
\begin{figure}[ht]
\centering
\includegraphics[width=0.8\columnwidth]{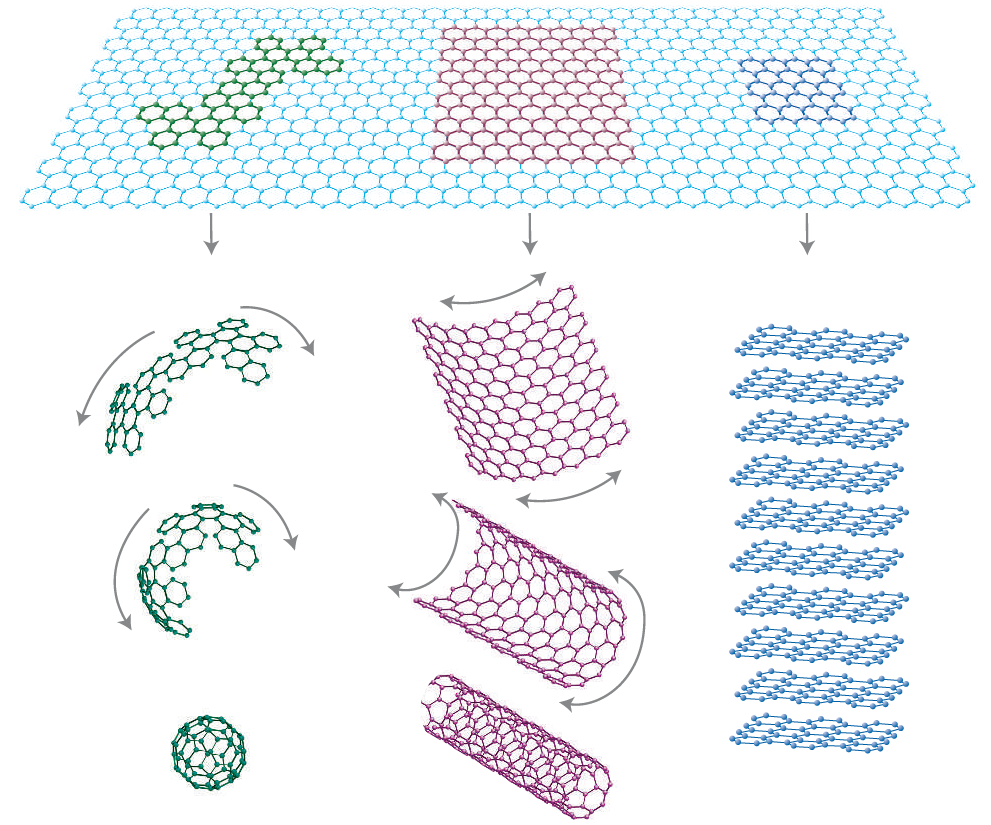}
\caption{In alto, monostrato di grafene (2D). In basso a partire da sinistra, fullerene (0D), nanotubo di carbonio (1D) e grafite (3D).}
\label{FIG:CAP2:graphene}
\end{figure}

Il grafene è un cristallo costituito da atomi di carbonio (C). Il carbonio è un elemento chimico dotato di 6 protoni e 6 elettroni. In virtù delle regole enunciate nel Paragrafo \ref{PAR:CAP1:Modello_shell}, la configurazione elettronica del carbonio presenta dunque gli orbitali $1s$ e $2s$ completamente riempiti e due degli orbitali $2p$ contenenti un solo elettrone. In forma compatta, lo stato fondamentale del carbonio è $1s^2 \, 2s^2 \, 2p^1$.

Il carbonio presenta 2 elettroni di core e 4 di valenza. Nel grafene ciascun atomo di carbonio possiede un legame covalente con altri due atomi di carbonio. Ogni atomo presenta quindi tre orbitali ibridi di tipo $sp^2$, giacenti sullo stesso piano e formanti angoli di $120^\circ$. Il quarto elettrone è invece descritto da un orbitale $2p_z$ il cui asse è perpendicolare al piano degli orbitali ibridi e dà luogo alle bande (cfr.~\citep{ART:Castro-Neto}).

\subsection{Il reticolo cristallino a nido d'ape}
I reticoli di Bravais non esauriscono la descrizione dei cristalli. Infatti esistono casi in cui i punti del reticolo non si ottengono per traslazione applicando i vettori primitivi a partire dai punti vicini. Il grafene è descritto da un reticolo bidimensionale di atomi di carbonio disposti a nido d'ape. Un reticolo a nido d'ape non è un reticolo di Bravais. Infatti la configurazione dei vertici del reticolo non si può ottenere solamente per traslazione. Dalla Figura \ref{FIG:CAP2:Reticolo_nido_d_ape}, si evince che i punti $A$ e $B$ non sono equivalenti ma per ottenere uno dall'altro occorre un ribaltamento di $180^\circ$.
\begin{figure}[ht]
\centering
\includegraphics[width=0.4\columnwidth]{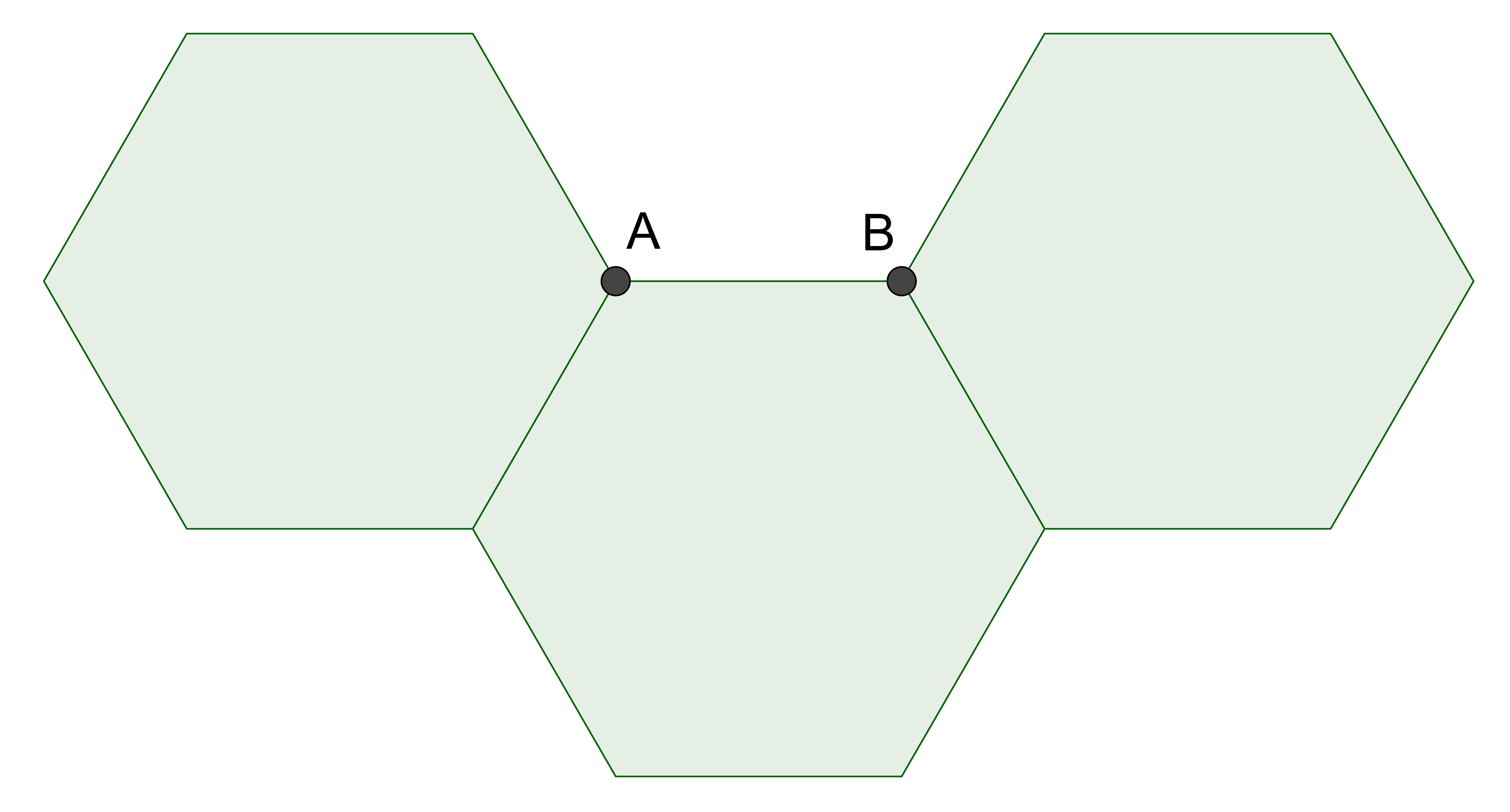}
\caption{Alcune celle di un reticolo a nido d'ape.}
\label{FIG:CAP2:Reticolo_nido_d_ape}
\end{figure}
\`{E} possibile comunque procedere ad una descrizione di questa tipologia di reticoli utilizzando la nozione di \textbf{reticolo con base}. In questo caso l'unità elementare a partire dalla quale si considera la traslazione non è il singolo punto, bensì un insieme di punti. In particolare, il reticolo a nido d'ape può essere rappresentato come un reticolo bidimensionale di Bravais esagonale con una base formata da due punti (Figura \ref{FIG:CAP2:Reticolo_nido_d_ape_2}).
\begin{figure}[ht]
\centering
\includegraphics[width=0.45\columnwidth]{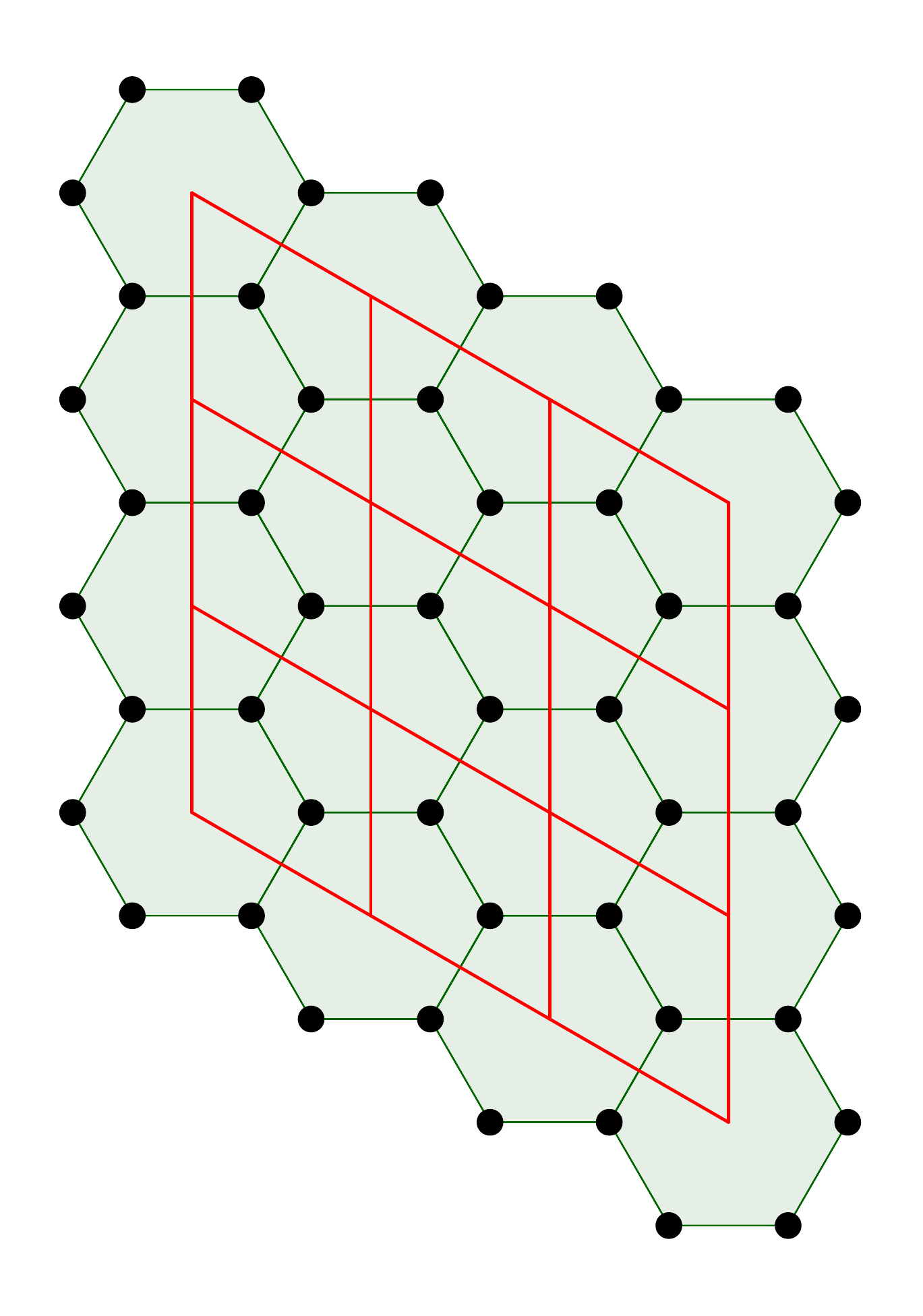}
\caption{Reticolo a nido d'ape in cui ciascuna cella primitiva (parallelogrammi) contiene i due punti di base.}
\label{FIG:CAP2:Reticolo_nido_d_ape_2}
\end{figure}
Infatti esso può essere pensato come composto da due reticoli triangolari compenetranti, indicati con le lettere $A$ e $B$. Le posizioni degli atomi di uno di questi reticoli si trovano nei centri dei triangoli definiti dall'altro reticolo (Figura \ref{FIG:CAP2:Reticolo_nido_d_ape_3}).
\begin{figure}[ht]
\centering
\includegraphics[width=0.6\columnwidth]{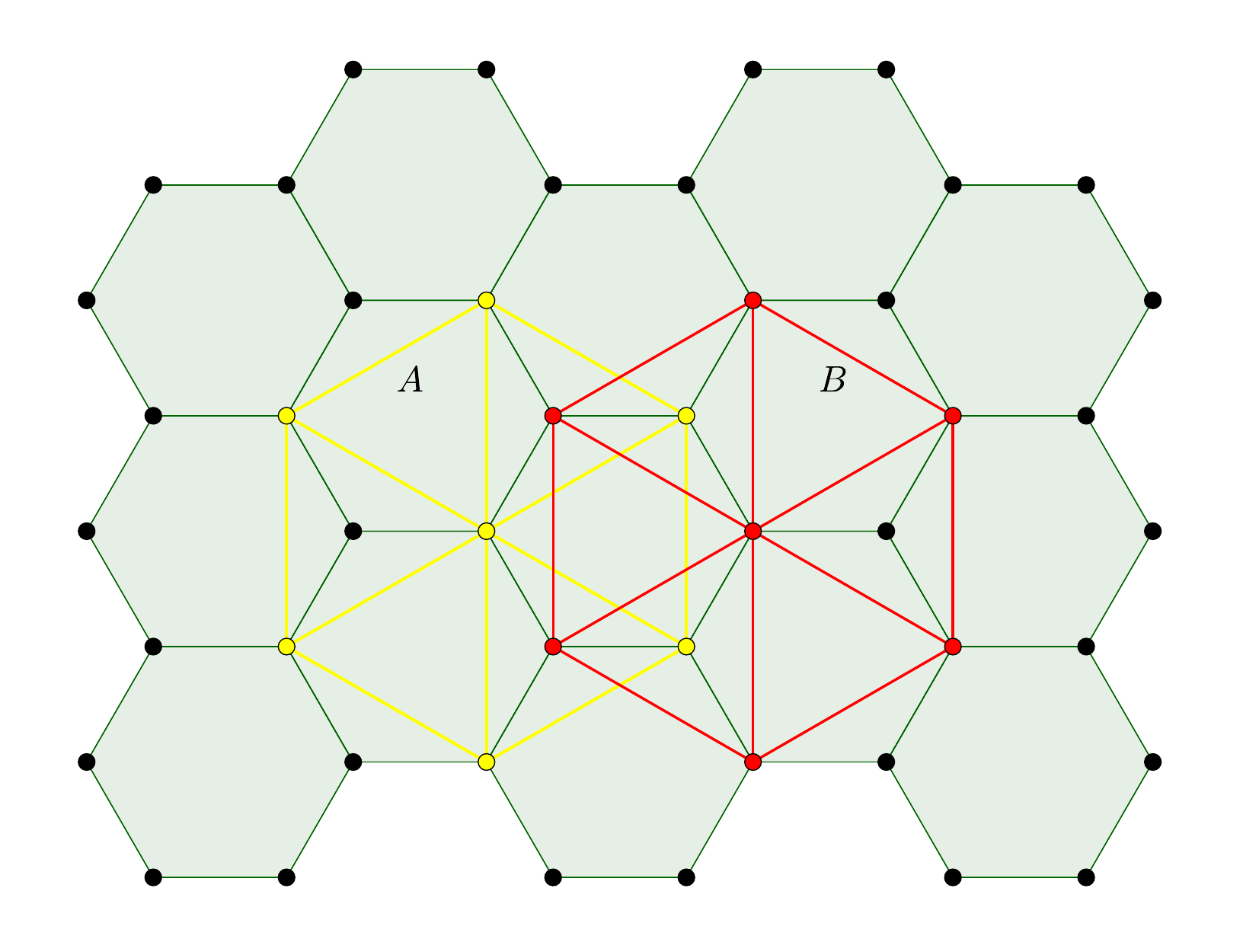}
\caption{Struttura del reticolo a nido d'ape.}
\label{FIG:CAP2:Reticolo_nido_d_ape_3}
\end{figure}
Inoltre ogni atomo di carbonio è circondato solamente da atomi di carbonio che appartengono all'altro reticolo. I vettori primitivi sono
\begin{equation}
\mathbf{a}_1 = \frac{a}{2} (3,\sqrt{3}), \qquad \mathbf{a}_2 = \frac{a}{2} (3,-\sqrt{3}),
\end{equation}
dove $a\approx 1.42 \si{\angstrom}$ è la distanza tra due atomi di carbonio. Poiché la base è biatomica questa non è la costante di reticolo, che invece è $a\sqrt{3} \approx 2.46 \si{\angstrom}$. Si può calcolare che vi sono approssimativamente $\si{\num{3.8d15}}$ atomi di carbonio per $\si{\square\centi\meter}$ e che la densità del grafene è pari a $\rho = \si{\num{7.63d-11}\kilogram.\centi\meter\squared}$. Relativamente al sottoreticolo $A$, i primi vicini si trovano tramite i vettori
\begin{equation}
\boldsymbol{\delta}_1 = \frac{a}{2}(1,\sqrt{3}), \qquad \boldsymbol{\delta}_2 = \frac{a}{2}(1,-\sqrt{3}), \qquad \boldsymbol{\delta}_3 = -a(1,0). 
\end{equation}
Invece per ottenere quelli del sottoreticolo $B$ è sufficiente considerare gli opposti, cioè
\begin{equation}
\boldsymbol{\gamma}_1 = -\frac{a}{2}(1,\sqrt{3}), \qquad \boldsymbol{\gamma}_2 = \frac{a}{2}(-1,\sqrt{3}), \qquad \boldsymbol{\gamma}_3 = a(1,0). 
\end{equation}
La Figura \ref{FIG:CAP2:Reticolo_nido_d_ape_4} riassume le caratteristiche sopraelencate.
\begin{figure}[ht]
\centering
\includegraphics[width=0.7\columnwidth]{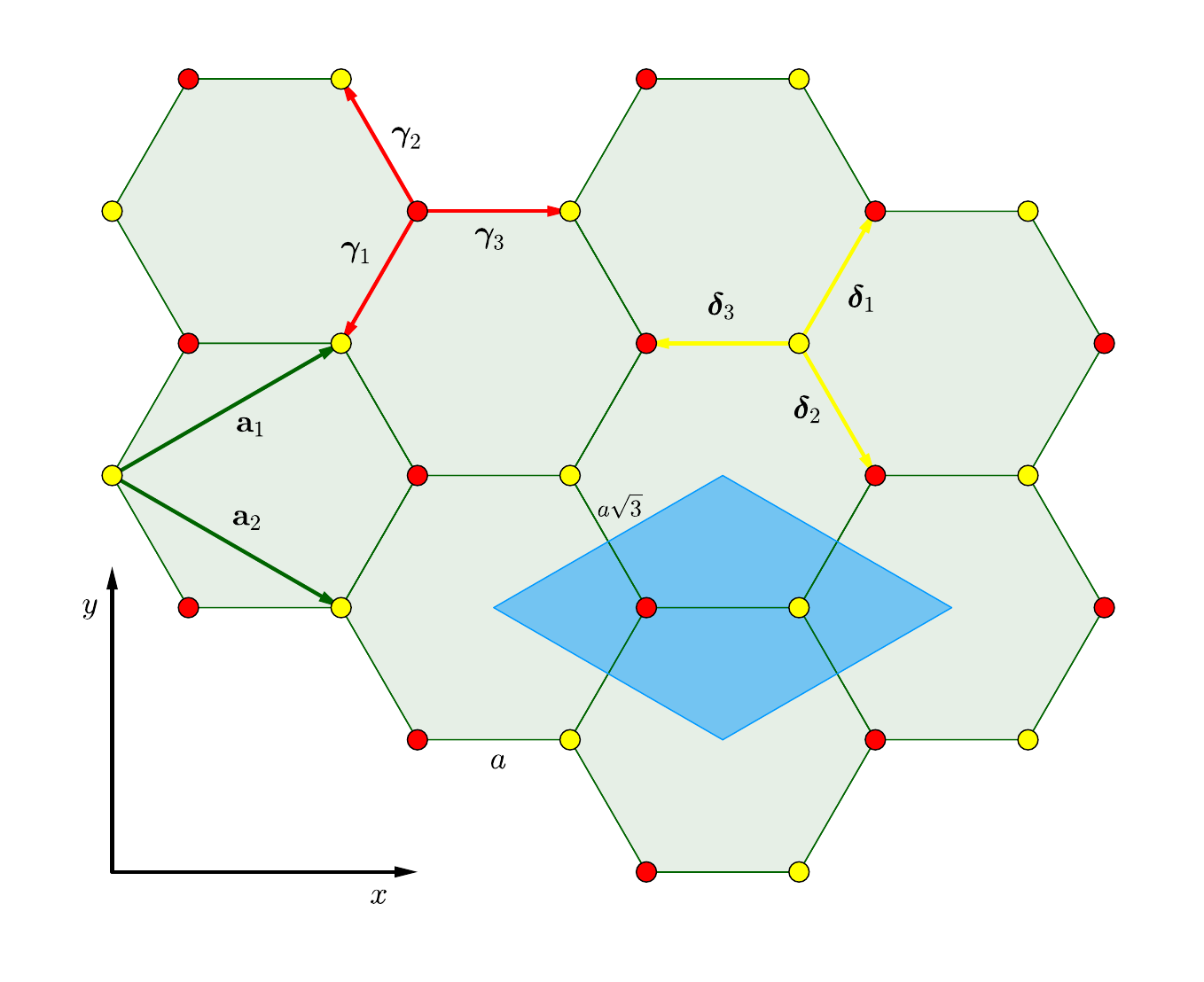}
\caption{Struttura del reticolo a nido d'ape in cui i sottoreticoli $A$ e $B$ sono indicati rispettivamente in giallo e rosso. Inoltre sono rappresentati con lo stesso colore i rispettivi primi vicini. Infine la cella unitaria è evidenziata in azzurro.}
\label{FIG:CAP2:Reticolo_nido_d_ape_4}
\end{figure}
Dalla relazione \eqref{EQ:CAP1:Relazione_reticolo_inverso}, con semplici calcoli si ottiene che la base del reticolo inverso è costituita dai vettori
\begin{equation}
\mathbf{b}_1 = \frac{2\pi}{3a}(1,\sqrt{3}), \qquad \mathbf{b}_2 = \frac{2\pi}{3a}(1,-\sqrt{3}).
\end{equation}
Inoltre si calcola facilmente che $\mathbf{b}_1$ e $\mathbf{b}_2$ verificano le proprietà del reticolo di Bravais esagonale (Tabella \ref{TAB:CAP1:Bravais_2D}). Pertanto il reticolo inverso è un reticolo di Bravais esagonale.

Costruendo le zone di Brillouin si osserva che il reticolo inverso presenta la stessa struttura a nido d'ape del reticolo cristallino, ma ruotata di $90^\circ$ e con dimensioni differenti. Inoltre il punto $\Gamma$ si trova al centro degli esagoni costituenti il reticolo.
Con semplici nozioni di base di geometria piana si costruisce la prima zona di Brillouin ed è possibile calcolare le coordinate dei suoi vertici. Solitamente se ne individuano soltanto due consecutivi che si indicano con $\mathbf{K}$ e $\mathbf{K}'$ e sono detti \textbf{punti di Dirac}, si veda la Figura \ref{FIG:CAP2:Brillouin_zone}, le cui coordinate sono
\begin{equation}
\mathbf{K} = \left( \frac{2\pi}{3a}, \frac{2\pi}{3\sqrt{3}a} \right), \qquad \mathbf{K}' = \left( \frac{2\pi}{3a}, -\frac{2\pi}{3\sqrt{3}a} \right).
\end{equation}
Si può verificare che ogni altro vertice è equivalente ad uno dei punti di Dirac in quanto è possibile collegarli con un vettore del reticolo inverso.
\begin{figure}[ht]
\centering
\includegraphics[width=0.45\columnwidth]{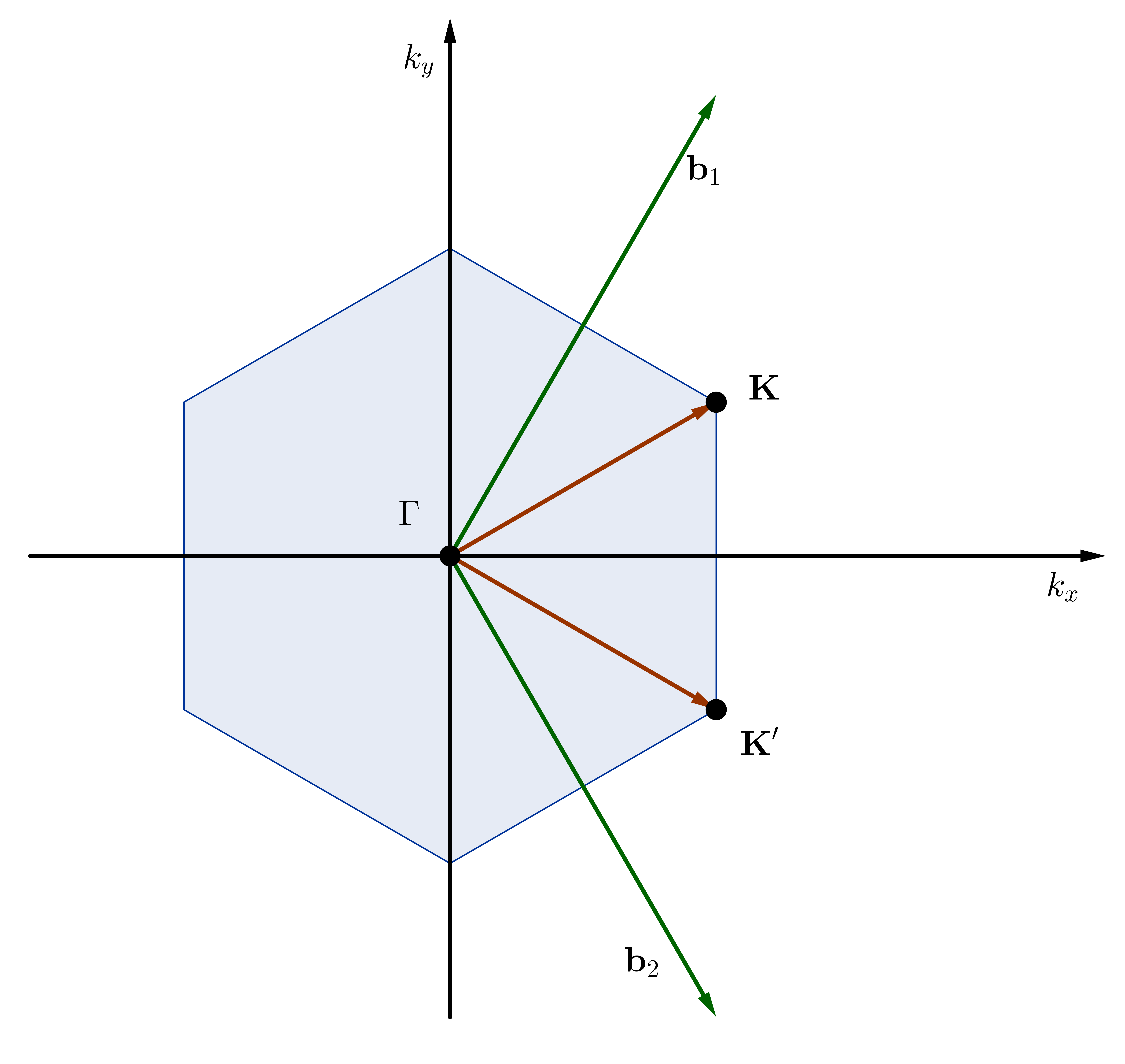}
\caption{Rappresentazione grafica del reticolo inverso. La prima zona di Brillouin è evidenziata in blu.}
\label{FIG:CAP2:Brillouin_zone}
\end{figure}
\FloatBarrier
Solitamente le valli in prossimità di $\mathbf{K}$ e $\mathbf{K}'$ sono trattate come equivalenti ad una sola valle.

\subsection{La struttura elettronica a bande nel grafene}\label{PAR:CAP2:Struttura_bande_grafene}
Per ottenere le bande energetiche nel caso del grafene, si scrive l'hamiltoniana relativa agli elettroni utilizzando il metodo tight-binding. L'approssimazione che si effettua consiste nel supporre che gli elettroni possono passare solo agli atomi primi vicini oppure a quelli secondi vicini (crf. \citep{ART:Castro-Neto}). Le bande di energia che si ottengono hanno la forma
\begin{equation}
E_{\pm}(\mathbf{k}) = \pm \gamma_1 \sqrt{3+f(\mathbf{k})}-\gamma_2 f(\mathbf{k}),
\end{equation}
con
\begin{equation}
f(\mathbf{k})=2\cos \left( \sqrt{3}k_y a \right) +4\cos \left( \frac{\sqrt{3}}{2}k_y a \right)\cos\left( \frac{3}{2}k_x a \right),
\end{equation}
dove $\gamma_1 \approx \si{\num{2.8}.\electronvolt}$ è l'energia di salto tra due primi vicini, cioè tra sottoreticoli diversi e $\gamma_2 \approx \si{\num{0.1}.\electronvolt}$ è l'energia di salto tra due secondi vicini, cioè nello stesso sottoreticolo. Le coordinate di $\mathbf{k}$ sono espresse rispetto al punto $\Gamma$. Gli stati elettronici dati da $E_{+}(\mathbf{k})$ formano la banda $\pi^*$, quelli dati da $E_{-}(\mathbf{k})$ formano la banda $\pi$. Un semplice calcolo permette di stabilire che $E_{+}(\mathbf{k})=E_{-}(\mathbf{k})$ in $\mathbf{K}$ e $\mathbf{K}'$. Questo significa che le due bande si toccano esattamente in corrispondenza dei punti di Dirac. Trascurando l'energia di salto tra secondi vicini (cioè ponendo $\gamma_2=0$), le due bande risultano simmetriche, cioè si ha
\begin{equation}
E_{+}(\mathbf{k})=-E_{-}(\mathbf{k}), \qquad E_{\pm}(\mathbf{K})=E_{\pm}(\mathbf{K}')=0.
\end{equation}
Si può osservare che le bande hanno forma conica in prossimità dei punti di Dirac, quindi una buona approssimazione per la relazione di dispersione è
\begin{equation}
E_{\pm}(\mathbf{k}) \approx \pm\hbar v_F \vert \mathbf{k} \vert + \mathcal{O} (\vert \mathbf{k} \vert^2),
\end{equation}
dove $v_F$ è detta velocità di Fermi ed è data teoricamente da
\begin{equation}
v_F = \frac{3\gamma_1 a}{2\hbar},
\end{equation}
Il cui valore ottenuto sperimentalmente è $v_F \approx \si{\num{1e6}\metre\per\second}$. La Figura \ref{FIG:CAP2:band_structure} mostra graficamente quanto sopra esposto.
\begin{figure}[ht]
\centering
\includegraphics[width=0.6\columnwidth]{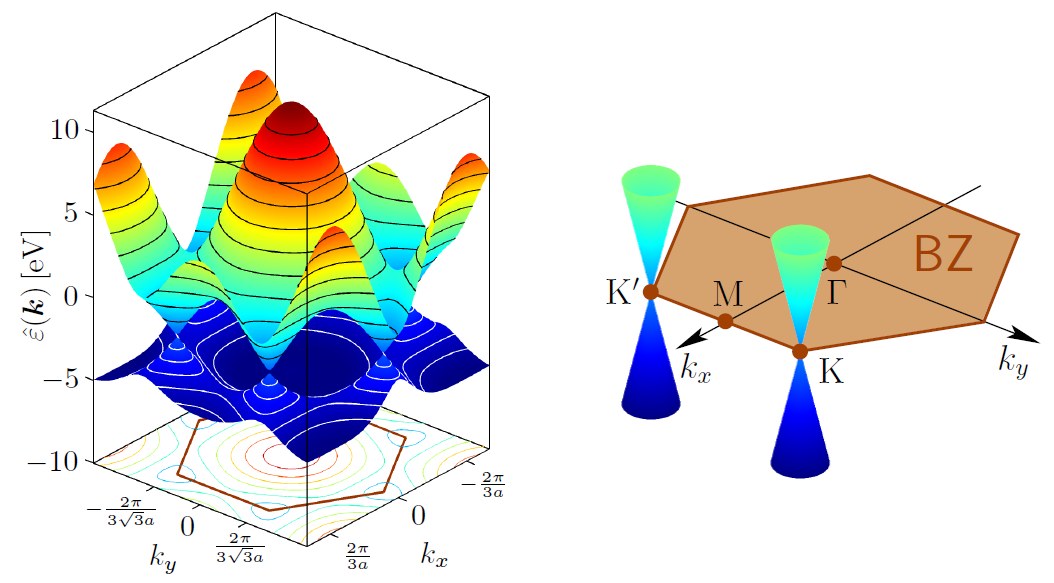}
\caption{A sinistra, grafico della relazione di dispersione delle bande energetiche del grafene. A destra, rappresentazione della prima zona di Brillouin dove sono mostrati i doppi coni che approssimano le bande.}
\label{FIG:CAP2:band_structure}
\end{figure}
\FloatBarrier


\section{Il modello semiclassico nel caso del grafene}
Si vuole applicare il modello semiclassico per il trasporto di cariche al caso del grafene. Allo scopo, occorre determinare esplicitamente il termine di collisione considerando i meccanismi di scattering che intervengono maggiormente. 

I paragrafi successivi sono tratti da \citep{ART:RoMaCo_JCP} e \citep{THESIS:Lichtenberger}.

\subsection{L'equazione di Boltzmann}
Nel modello semiclassico il trasporto di cariche nel grafene è descritto da quattro equazioni di Boltzmann, una per gli elettroni nella banda di valenza $\pi$ e una per gli elettroni nella banda di conduzione $\pi^*$ in ciascuna delle valli $\mathbf{K}$ e $\mathbf{K}'$. Sinteticamente le quattro equazioni possono essere scritte
\begin{equation}\label{EQ:CAP2:Eq_Boltz_grafene}
\frac{\partial f_{l,s}(t,\mathbf{x},\mathbf{k})}{\partial t} + \mathbf{v}_{l,s} \cdot \nabla_{\mathbf{x}} f_{l,s}(t,\mathbf{x},\mathbf{k}) - \frac{e}{\hbar} \mathbf{E} \cdot \nabla_{\mathbf{k}}f_{l,s}(t,\mathbf{x},\mathbf{k}) = \left( \frac{\diff f_{l,s}}{\diff t} (t,\mathbf{x},\mathbf{k}) \right)\ped{coll},
\end{equation}
dove $f_{l,s}(t,\mathbf{x},\mathbf{k})$ è la funzione di distribuzione dei portatori di carica, relativa alla banda $\pi$ ($s=-1$) o $\pi^*$ ($s=1$) nella valle $l$ ($\mathbf{K}$ o $\mathbf{K}'$), nella posizione $\mathbf{x}$, tempo $t$ e impulso $\mathbf{k}$. La velocità di gruppo $\mathbf{v}_{l,s}$ è legata alla banda di energia $\varepsilon_{l,s}$ da
\begin{equation}
\mathbf{v}_{l,s} = \frac{1}{\hbar}\nabla_{\mathbf{k}} \varepsilon_{l,s}.
\end{equation}
Inoltre la carica elementare è indicata con $e$ ed $\mathbf{E}$ è il campo elettrico, ottenibile mediante un'equazione di Poisson che è possibile accoppiare alle equazioni \eqref{EQ:CAP2:Eq_Boltz_grafene}. Come già scritto nel Paragrafo \ref{PAR:CAP2:Struttura_bande_grafene}, una buona approssimazione della relazione di dispersione per le bande energetiche vicino ai punti di Dirac è quella lineare, cioè
\begin{equation}
\varepsilon_{l,s} = s\hbar v_F \vert \mathbf{k}-\mathbf{k}_l\vert,
\end{equation}
dove $v_F$ è la velocità di Fermi, $\hbar$ è la costante di Planck ridotta e $\mathbf{k}_l$ è la posizione del punto di Dirac $l$. Il secondo membro della \eqref{EQ:CAP2:Eq_Boltz_grafene} è il termine collisionale in cui sono inclusi i meccanismi di scattering tra elettroni e fononi.

\subsection{Il termine di collisione}
Il modello collisionale descritto in questa tesi considera come sorgente di scattering le interazioni tra elettroni e fononi (cfr.~\citep{ART:RoMaCo_JCP}). Nel Paragrafo \ref{PAR:CAP1:Eq_Boltz_semi} si è data la forma esplicita del termine collisionale che nel caso della \eqref{EQ:CAP2:Eq_Boltz_grafene} assume la forma
\begin{equation}
\begin{aligned}
\left( \frac{\diff f_{l,s}}{\diff t} (t,\mathbf{x},\mathbf{k}) \right)\ped{coll} = & \sum_{l',s'} \left[ \int \! S_{l',s',l,s} (\mathbf{k}',\mathbf{k})f_{l',s'}(t,\mathbf{x},\mathbf{k}')\left( 1-f_{l,s}(t,\mathbf{x},\mathbf{k}) \right) \, \diff \mathbf{k}' \right.\\
&- \left. \int \! S_{l,s,l',s'} (\mathbf{k},\mathbf{k}')f_{l,s}(t,\mathbf{x},\mathbf{k})\left( 1-f_{l',s'}(t,\mathbf{x},\mathbf{k}') \right)  \, \diff \mathbf{k}' \right],
\end{aligned}
\end{equation}
dove $S_{l',s',l,s} (\mathbf{k}',\mathbf{k})$ è la distribuzione di probabilità di avere uno scattering dallo stato $\mathbf{k}'$ nella banda $s'$ e nella valle $l'$ allo stato $\mathbf{k}$ nella banda $s$ e nella valle $l$. Applicando quanto visto nel Paragrafo \ref{PAR:CAP1:Nucleo_coll} la distribuzione di probabilità appena menzionata presenta la forma
\begin{equation}\label{EQ:CAP2:Prob_scatter}
\begin{aligned}
S_{l',s',l,s} (\mathbf{k}',\mathbf{k}) = & \sum_{\nu} \left\vert G^{(\nu)}_{l',s',l,s}(\mathbf{k}',\mathbf{k}) \right\vert^2 \times \left[ \left( n_{\mathbf{q}}^{(\nu)}+1 \right)\delta \left( \varepsilon_{l,s}(\mathbf{k})-\varepsilon_{l',s'}(\mathbf{k}')+\hbar \omega_{\mathbf{q}}^{(\nu)} \right)\right. \\
& \left. + n_{\mathbf{q}}^{(\nu)} \delta \left( \varepsilon_{l,s}(\mathbf{k})-\varepsilon_{l',s'}(\mathbf{k}')-\hbar \omega_{\mathbf{q}}^{(\nu)} \right)  \right],
\end{aligned}
\end{equation}
dove $\delta$ indica la delta di Dirac, l'indice $\nu$ rappresenta il $\nu$-esimo modo vibrazionale, $G^{(\nu)}_{l',s',l,s}(\mathbf{k}',\mathbf{k})$ è il fattore di sovrapposizione corrispondente, $\omega_{\mathbf{q}^{(\nu)}}$ è la frequenza del $\nu$-esimo modo vibrazionale e $n_{\mathbf{q}}^{(\nu)}$ è la distribuzione di Bose-Einstein corrispondente, cioè
\begin{equation}
n_{\mathbf{q}}^{(\nu)} = \frac{1}{e^{\hbar\omega_{\mathbf{q}^{(\nu)}} / k_B T} -1 },
\end{equation}
in cui $k_B$ è la costante di Boltzmann e $T$ è la temperatura del reticolo di grafene. Se per un certo modo vibrazionale di tipo $(\nu_*)$ risulta $\hbar\omega_{\mathbf{q}}^{(\nu_*)}\ll k_B T$ allora vale la cosiddetta approssimazione elastica, cioè $n_{\mathbf{q}}^{(\nu)} \approx k_B T / \hbar\omega_{\mathbf{q}}^{(\nu_*)}$ ed è possibile eliminare i termini $\hbar\omega_{\mathbf{q}}^{(\nu_*)}$ negli argomenti delle delta nella \eqref{EQ:CAP2:Prob_scatter}.

\subsection{Fononi e meccanismi di scattering}
Nel grafene i fononi sono descritti da vettori d'onda bidimensionali. Gli atomi del reticolo possono vibrare sia nella direzione del vettore d'onda che perpendicolarmente ad esso, generando così modi vibrazionali longitudinali e trasversali, rispettivamente. I due atomi di carbonio della cella unitaria danno luogo all'ulteriore suddivisione in modi fononici ottici ed acustici (cfr. \citep{THESIS:Lichtenberger}). Inoltre, gli atomi di carbonio possono anche muoversi perpendicolarmente al piano su cui giace il grafene. Di conseguenza ai precedenti quattro modi fononici se ne aggiungono altri due, chiamati \textbf{fononi di tipo Z} e suddivisi a loro volta in ottici e acustici. In definitiva, nel grafene si hanno sei modi vibrazionali: fononi longitudinali ottici (LO) ed acustici (LA), fononi trasversali ottici (TO) ed acustici (TA) e fononi di tipo Z ottici (ZO) ed acustici (ZA).

I modi vibrazionali che intervengono in modo rilevante nel trasporto di carica nel grafene sono tre di tipo ottico e uno di tipo acustico. I tre fononi rilevanti di tipo ottico sono quelli longitudinali e trasversali con vettore d'onda vicino al punto $\Gamma$ e quelli vicini al punto $\mathbf{K}$. Questi tre tipi di fononi sono indicati con $\Gamma$-LO, $\Gamma$-TO e $\mathbf{K}$-phonons. Infine, i fononi acustici rilevanti sono di tipo longitudinale e hanno vettore d'onda vicino al punto $\Gamma$.

Le interazioni con i fononi acustici sono intra-valley e intra-band, invece quelle con i fononi ottici sono solamente intra-vally e possono essere sia intra-band che inter-band. Infine gli scattering con i $\mathbf{K}$-phonons sono inter-valley.

Per i fononi acustici vale l'approssimazione elastica, per cui il loro contributo al termine collisionale è dato da
\begin{equation}
2 n_{\mathbf{q}}\ap{(ac)} \left\vert G\ap{(ac)}(\mathbf{k}',\mathbf{k}) \right\vert^2 = \frac{1}{(2\pi)^2} \frac{\pi D^2\ped{ac} k_B T}{2\hbar\sigma\ped{m}v^2\ped{p}}(1+\cos \vartheta_{\mathbf{k},\mathbf{k}'}),
\end{equation}
dove $D\ped{ac}$ è la costante di accoppiamento dei fononi acustici, $v\ped{p}$ è la velocità del suono nel grafene, $\sigma\ped{m}$ è la densità del grafene e $\vartheta_{\mathbf{k},\mathbf{k}'})$ è l'ampiezza dell'angolo convesso tra $\mathbf{k}$ e $\mathbf{k}'$.

Si procede adesso calcolando le probabilità di transizione per i $\Gamma$-LO, $\Gamma$-TO e $\mathbf{K}$-phonons. Esse sono date rispettivamente da (cfr. \citep{ART:RoMaCo_JCP})
\begin{equation}\label{EQ:CAP2:Gamma_LO}
\left\vert G\ap{(LO)}(\mathbf{k}',\mathbf{k})  \right\vert^2 = \frac{1}{(2\pi)^2} \frac{\pi D\ped{O}^2}{\sigma\ped{m}\omega\ped{O}} (1-\cos(\vartheta_{\mathbf{k},\mathbf{k}'-\mathbf{k}}+\vartheta_{\mathbf{k}',\mathbf{k}'-\mathbf{k}} ))
\end{equation}
\begin{equation}\label{EQ:CAP2:Gamma_TO}
\left\vert G\ap{(TO)}(\mathbf{k}',\mathbf{k})  \right\vert^2 = \frac{1}{(2\pi)^2} \frac{\pi D\ped{O}^2}{\sigma\ped{m}\omega\ped{O}} (1+\cos(\vartheta_{\mathbf{k},\mathbf{k}'-\mathbf{k}}+\vartheta_{\mathbf{k}',\mathbf{k}'-\mathbf{k}} ))
\end{equation}
\begin{equation}
\left\vert G\ap{(K)}(\mathbf{k}',\mathbf{k})  \right\vert^2 = \frac{1}{(2\pi)^2} \frac{2\pi D\ped{K}^2}{\sigma\ped{m}\omega\ped{K}} (1-\cos \vartheta_{\mathbf{k},\mathbf{k}'}),
\end{equation}
dove $D\ped{O}$ è la costante di accoppiamento dei fononi ottici, $\omega\ped{O}$ è la frequenza dei fononi ottici, $D\ped{K}$ è la costante di accoppiamento dei $\mathbf{K}$-phonons e $\omega\ped{K}$ è la frequenza dei $\mathbf{K}$-phonons. Con $\vartheta_{\mathbf{k},\mathbf{k}'-\mathbf{k}}$ e $\vartheta_{\mathbf{k}',\mathbf{k}'-\mathbf{k}}$ si indicano le ampiezze degli angoli convessi rispettivamente tra $\mathbf{k}$ e $\mathbf{k}'-\mathbf{k}$ e tra $\mathbf{k}'$ e $\mathbf{k}'-\mathbf{k}$.

\subsection{Scattering rates}\label{PAR:CAP3:Scatt_rates}
Le frequenze di collisione associate ai vari meccanismi di scattering sono calcolate integrando le probabilità di transizione su tutti i possibili valori di $\mathbf{k}'$, cioè si ha
\begin{equation}
\Gamma (\mathbf{k}) = \int \! S_{l',s',l,s} (\mathbf{k},\mathbf{k}') \, \diff \mathbf{k}'.
\end{equation}
In realtà $\Gamma (\mathbf{k})$ dipende da $\mathbf{k}$ tramite l'energia, cioè si ha $\Gamma (\mathbf{k})=\Gamma(\varepsilon)$. Allora, per i fononi acustici, la frequenza di collisione è data da
\begin{equation}
\Gamma\ped{ac}(\varepsilon) = \frac{D^2\ped{ac} k_B T}{4\hbar^3 v_F^2 \sigma\ped{m}v^2\ped{p}}\varepsilon.
\end{equation}
Invece la frequenza di collisione totale per i fononi ottici è data dalla somma dei contributi longitudinali e trasversali,
\begin{equation}
\Gamma\ped{op}(\varepsilon) = \frac{D^2\ped{O}}{\sigma\ped{m}\omega\ped{O}\hbar^2 v_F^2} \left[ (\varepsilon-\hbar\omega\ped{O})\left( n_{\mathbf{q}}\ap{(O)}+1 \right)H(\varepsilon+\hbar\omega\ped{O})+(\varepsilon+\hbar\omega\ped{O})n_{\mathbf{q}}\ap{(O)} \right],
\end{equation}
e la frequenza di collisione per i $\mathbf{K}$-phonons è data da
\begin{equation}
\Gamma\ped{K}(\varepsilon) = \frac{D^2\ped{K}}{\sigma\ped{m}\omega\ped{K}\hbar^2 v_F^2} \left[ (\varepsilon-\hbar\omega\ped{K})\left( n_{\mathbf{q}}\ap{(K)}+1 \right)H(\varepsilon-\hbar\omega\ped{K})+(\varepsilon+\hbar\omega\ped{K})n_{\mathbf{q}}\ap{(K)} \right],
\end{equation}
dove $H$ è la funzione di Heaviside.


\section{Il metodo Monte Carlo}
In questa sezione si darà una introduzione alle tecniche Monte Carlo, grazie alle quali è possibile simulare le distribuzioni di probabilità. Si illustrerà poi il cosiddetto metodo di Simulazione Diretta Monte Carlo, seguendo l'approccio descritto in \citep{ART:CoMajRo_Ric_mat_2016}, che permette alla simulazione di essere in accordo con il principio di esclusione di Pauli. Si concluderà con la questione del costo computazionale e il confronto dei tempi di esecuzione delle simulazioni.

Il materiale di riferimento per i successivi paragrafi è \citep{BOOK:Jacoboni_Lugli}, \citep{ART:CoMajRo_Ric_mat_2016} e \citep{BOOK:Jacoboni}.

\subsection{Introduzione}
Quando si parla di \emph{metodi Monte Carlo} (MC) si fa riferimento ad un insieme di tecniche che si servono di variabili aleatorie, generate artificialmente al calcolatore, per risolvere problemi matematici. Approcci ai problemi di questo tipo sono conosciuti da molto tempo ed il primo esempio di utilizzo dei numeri casuali per risolvere integrali definiti risale al 1777, formulato da Georges-Luis Leclerc. Per molto tempo questi metodi sono stati usati solo sporadicamente. Il primo utilizzo vero e proprio dei metodi MC è dovuto al gruppo di Enrico Fermi negli anni '40, durante lo sviluppo del progetto della bomba atomica. Con il diffondersi di calcolatori sempre più potenti, i metodi MC hanno preso piede e oggi sono utilizzati in molti settori della ricerca scientifica \cite{BOOK:Rotondi}.

Il nome ``Monte Carlo'' apparve per la prima volta in un articolo dal titolo ``The Monte Carlo Method'' dei matematici Nicholas C. Metropolis e Stanislaw Ulam pubblicato nel 1949 \cite{ART:Metropolis}. Il nome fa riferimento alla nota città del Principato di Monaco, famosa per il suo casinò. L'idea dell'appellativo nasce dal fatto che la roulette è un semplice mezzo meccanico con cui è possibile generare variabili aleatorie \cite{BOOK:Jacoboni}.

L'utilizzo di questi metodi per la risoluzione di un problema matematico porta ad un risultato che inevitabilmente è affetto da errore statistico. Tuttavia spesso può essere molto complicato o addirittura impossibile risolvere certi problemi con l'utilizzo di metodi analitici o numerici e di conseguenza un approccio di tipo MC diventa molto utile.

\subsection{La generazione di numeri random}
Le tecniche Monte Carlo sono basate sulla generazione di numeri random. In particolare, è necessario generare numeri random che seguono distribuzioni di probabilità predefinite. I più comuni linguaggi di programmazione hanno funzioni che generano numeri random uniformemente distribuiti tra 0 e 1.

A partire dalla distribuzione uniforme tra 0 e 1, $U([0,1])$, è possibile generare numeri random che seguono distribuzioni di probabilità volute. Esistono varie tecniche che permettono di fare questo (cfr.~\citep{BOOK:Jacoboni_Lugli}).

\textbf{La tecnica diretta}\\
Sia $f(x)$ la distribuzione di probabilità di una variabile aleatoria continua definita in un intervallo $(a,b)$, limitato o illimitato. Posto
\begin{equation}
A=\int_a^b \! f(x) \, \diff x,
\end{equation}
si indichi con $F(x)$ la funzione integrale di $f$. Se $A=1$, dato un numero $r\sim U([0,1])$, si può determinare in corrispondenza di esso un numero $x_r$ tale che
\begin{equation}\label{EQ:CAP3:tec_dir}
r=F(x_r)=\int_a^{x_r} \! f(x) \, \diff x.
\end{equation}
In questo modo $x_r$ è una nuova variabile aleatoria che ha come distribuzione proprio $f(x)$. Se $A\neq 1$, si sostituisce la \eqref{EQ:CAP3:tec_dir} con
\begin{equation}
r=\frac{1}{A}\int_a^{x_r} \! f(x) \, \diff x
\end{equation}
e la variabile aleatoria $x_r$ ha ancora distribuzione $f(x)$.

Nel caso in cui la variabile aleatoria sia discreta, sia $S=\left\lbrace s_0, s_1, s_2 \ldots \right\rbrace$ l'insieme (finito o numerabile) dei possibili valori che essa può assumere. Ricordando che la funzione di ripartizione $F(x)$ di una variabile aleatoria discreta è costante a tratti e i valori assunti da $F$ sono quindi dati da
\begin{equation}
y_i = F(x_i), \qquad \forall x_i\in [s_i, s_{i+1}[, \qquad i=0,1,2, \ldots,
\end{equation}
con $0\leq y_i\leq 1$ e $y_i < y_{i+1}$ per $i=0,1,2,\ldots$. Sia dato un numero $r\sim U([0,1])$. Allora esiste un indice $\overline{i}$ tale che $y_{\overline{i}}< r < y_{\overline{i}+1}$. Si osservi che in corrispondenza di $s_{\overline{i}}$ si ha $y_{\overline{i}}=F(s_{\overline{i}})$. Allora $s_{\overline{i}}$ è distribuito secondo la funzione di ripartizione $F$.

\textbf{Il metodo del rigetto}\\
Sia $X$ una variabile aleatoria continua e sia $f(x)$ la sua distribuzione di probabilità, definita in un intervallo finito $(a,b)$. Supponendo che $f$ sia limitata, sia $C$ un numero reale positivo tale che $C\geq f(x)$, per ogni $x\in (a,b)$ e siano $r_1$ ed  $r_1'$ due numeri random tali che $r_1,r_1'\sim U((0,1))$. Allora, posto
\begin{equation}\label{EQ:CAP3:Cond_rigetto}
x_1 = a+(b-a)r_1, \qquad f_1 = r_1' C
\end{equation}
si ha che $x_1\sim U((a,b))$ e $f_1\sim U((0,C))$.

Se $f_1\leq f(x_1)$ si prende $x_1$ come una buona scelta di $X$, altrimenti si rigetta $x_1$ e si ripete la procedura con una nuova coppia di numeri random $r_2$ e $r_2'$, fino a che la \eqref{EQ:CAP3:Cond_rigetto} non è soddisfatta.

\subsection{Il metodo di Simulazione Diretta Monte Carlo (DSMC)}
Il metodo consiste nella simulazione del moto degli elettroni nel cristallo, soggetti all'azione di un campo elettrico esterno $\mathbf{E}$ e dopo aver assegnato i meccanismi di scattering. Vengono determinati in modo stocastico il free-flight, cioè il tempo che intercorre tra due collisioni successive, e gli eventi di scattering le cui probabilità sono determinate dalla meccanica quantistica.

La simulazione inizia considerando un elettrone che si trova in una data condizione iniziale con vettore d'onda $\mathbf{k}_0$. Viene scelta la durata del free-flight con una certa distribuzione di probabilità determinata dalle probabilità di transizione. Durante il free-flight, il campo esterno agisce in accordo con la dinamica semiclassica $\hbar\dot{\mathbf{k}}=e\mathbf{E}$. Vengono calcolate le grandezze fisiche di interesse come velocità, energia, ecc. \`{E} scelto un meccanismo di scattering come responsabile della fine del free-flight, in accordo con le probabilità relative rispetto a tutti i meccanismi di scattering. Dopo lo scattering è scelto un nuovo stato che viene posto come stato iniziale di un nuovo free-flight. A questo punto la procedura si ripete iterativamente. Mano a mano che la simulazione va avanti i risultati diventano più accurati fino a quando non viene ottenuta la precisione desiderata. A quel punto la simulazione si arresta. 

L'implementazione utilizzata per questa tesi riguarda il caso spazialmente omogeneo, cioè in cui la distribuzione dei portatori di carica non dipende da $\mathbf{x}$, cioè è del tipo $f(t,\mathbf{k})$. In questo caso l'equazione di Boltzmann relativa alla generica valle $\mathbf{K}$ diviene
\begin{equation}\label{EQ:CAP3:Eq_Bolz_omo}
\begin{aligned}
\frac{\partial f(t,\mathbf{k})}{\partial t} -\frac{e}{\hbar}\mathbf{E}\cdot\nabla_{\mathbf{k}}f(t,\mathbf{k}) = & \int \! S(\mathbf{k}',\mathbf{k})f(t,\mathbf{k}')(1-f(t,\mathbf{k})) \, \diff \mathbf{k}'\\
& - \int \! S(\mathbf{k},\mathbf{k}')f(t,\mathbf{k})(1-f(t,\mathbf{k}')) \, \diff \mathbf{k}'.
\end{aligned}
\end{equation}
Un'equazione simile vale per la valle $\mathbf{K}'$ (cfr.~\citep{ART:RoMaCo_JCP}). Per quanto riguarda la condizione iniziale, supponendo che l'insieme degli elettroni considerati sia in equilibrio termico, si assume che la distribuzione del numero di occupazione degli stati con energia $\varepsilon(\mathbf{k})$ segue la statistica di Fermi-Dirac, introdotta nel Paragrafo \ref{PAR:CAP1:Densita_stati}, cioè si ha
\begin{equation}
f(0,\mathbf{k})=\frac{1}{1+\exp\left( \frac{\varepsilon(\mathbf{k})-\varepsilon_F}{k_B T} \right)},
\end{equation}
dove $\varepsilon_F$ è il livello di Fermi. Lo spazio degli impulsi è approssimato dall'insieme $[-k_{\mbox{x max}},k_{\mbox{x max}}] \times [-k_{\mbox{y max}},k_{\mbox{y max}}]$, dove $k_{\mbox{x max}}$ e $k_{\mbox{y max}}$ sono scelti in modo tale che il numero di elettroni aventi vettore d'onda fuori da tale insieme sia trascurabile. Si osservi che elettroni e lacune popolano maggiormente gli stati vicini alle valli $\mathbf{K}$ e $\mathbf{K}'$. Inoltre, considerare un valore alto del livello di Fermi è equivalente ad effettuare un drogaggio di tipo $n$. Allora gli elettroni che stanno nella banda di conduzione non raggiungeranno la banda di valenza e viceversa. Quindi la dinamica delle lacune può essere trascurata.

Si vuole descrivere dettagliatamente il metodo DSMC per questo problema. Si supponga da ora in avanti che il campo elettrico applicato $\mathbf{E}$ sia costante. Come scritto sopra, il free-flight avviene seguendo la dinamica semiclassica. Per cui la variazione dell'impulso in un intervallo di tempo $\Delta t$ in cui avviene il free-flight è dato da
\begin{equation}
\hbar\Delta\mathbf{k}=-e\mathbf{E}\Delta t.
\end{equation}
L'intervallo di tempo $\Delta t$ è scelto in modo random per ogni particella mediante la relazione
\begin{equation}\label{EQ:CAP3:Delta_t}
\Delta t = -\frac{\ln\xi}{\Gamma\ped{tot}},
\end{equation}
dove $\xi$ è un numero random che segue una distribuzione uniforme nell'intervallo $[0,1]$ e $\Gamma\ped{tot}$ proviene dagli scattering rates determinati per il grafene nel Paragrafo~\ref{PAR:CAP3:Scatt_rates}, cioè è definita da
\begin{equation}
\Gamma\ped{tot} = \Gamma\ped{ac}(\varepsilon(t))+\Gamma\ped{op}(\varepsilon(t))+\Gamma\ped{K}(\varepsilon(t))+\Gamma\ped{ss}(\varepsilon(t)),
\end{equation}
dove $\Gamma\ped{ss}(\varepsilon(t))$ è detto \textbf{self-scattering rate} ed è lo scattering rate associato a uno scattering fittizio che non cambia lo stato dell'elettrone, aggiunto con lo scopo di rendere costante $\Gamma\ped{tot}$ e calcolare la \eqref{EQ:CAP3:Delta_t}. Per fare ciò a livello pratico, si potrebbe prendere il valore massimo $\Gamma_M$ di $ \Gamma\ped{ac}(\varepsilon(t))+\Gamma\ped{op}(\varepsilon(t))+\Gamma\ped{K}(\varepsilon(t))$ e porre $\Gamma\ped{tot}=\alpha\Gamma_M$, con $\alpha>1$ (ad esempio $\alpha=1.1$). In questo modo si può ottenere $\Gamma\ped{ss}(\varepsilon(t))$ per differenza. Questo metodo non è molto adatto al problema in esame in quanto gli scattering rates variano in un intervallo piuttosto ampio e con ordini di grandezza diversi, come si vede dalla Figura \ref{FIG:CAP3:Scatt_rates}. 
\begin{figure}[ht]
	\centering
	\includegraphics[width=9cm]{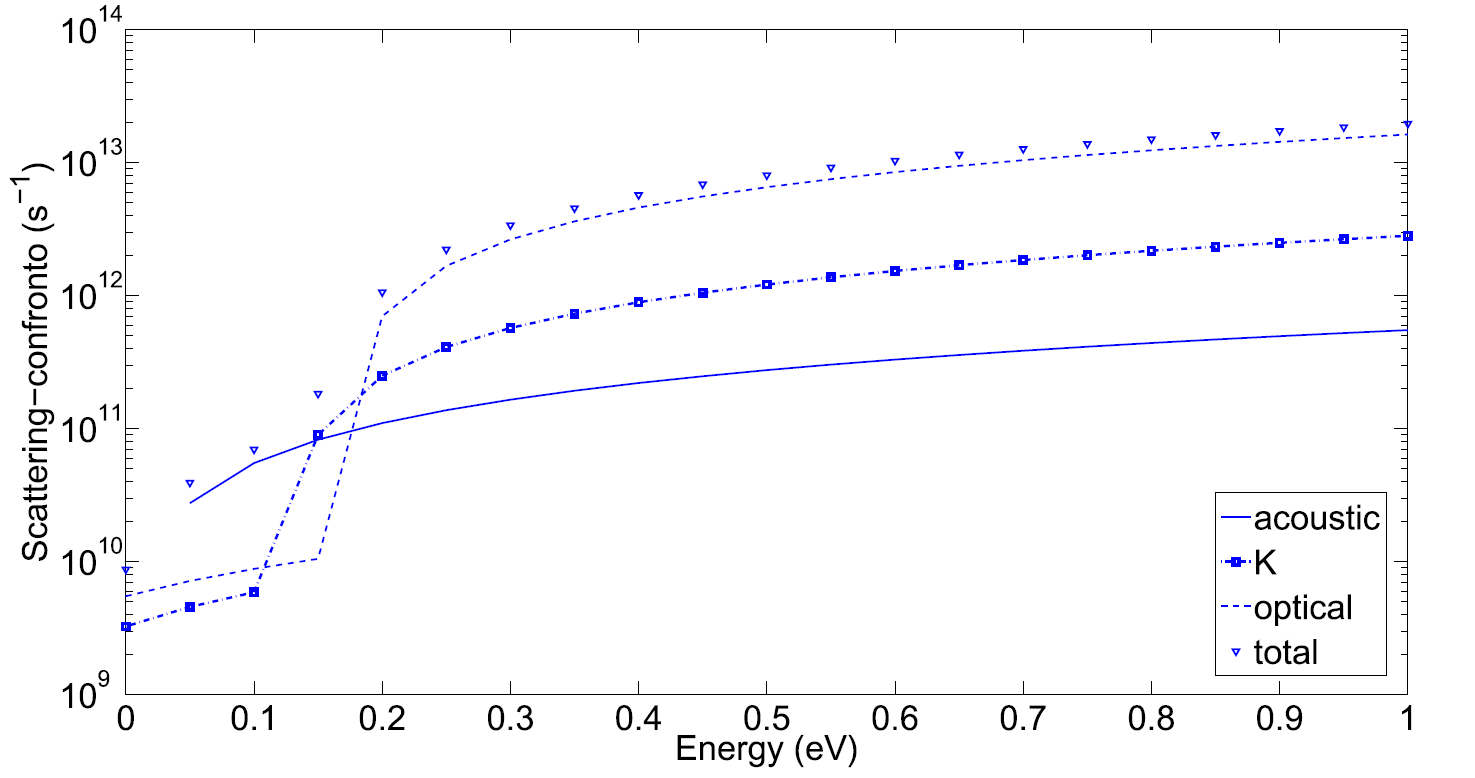}
 	\caption{Grafico degli scatering rates.}\label{FIG:CAP3:Scatt_rates}
\end{figure}
Questo comporta il verificarsi di un gran numero di self-scattering e conseguentemente il costo computazionale aumenta (cfr.~\citep{ART:RoMaCo_JCP}). Una soluzione a questo problema consiste nel considerare un valore di $\Gamma\ped{tot}$ variabile e di volta in volta scelto opportunamente per ogni particella e nell'istante di tempo considerato, cioè si pone
\begin{equation}
\Gamma\ped{tot} = \alpha \left( \Gamma\ped{ac}(\varepsilon(t))+\Gamma\ped{op}(\varepsilon(t))+\Gamma\ped{K}(\varepsilon(t)) \right).
\end{equation}
Dopo il free-flight viene scelto in modo random un tipo di scattering in accordo con la distribuzione data dagli scattering-rates. In pratica si calcola la funzione di ripartizione della distribuzione delle probabilità di transizione e si stabilisce la tipologia di scattering con il metodo di inversione. A questo punto viene determinato il nuovo stato in accordo con il tipo di scattering scelto. Questa scelta deve rispettare la distribuzione di probabilità della tipologia di scattering considerata. Per fare ciò si utilizza il metodo del rigetto.

Gli scattering rates sono calcolati tenendo conto del principio di esclusione di Pauli. Tuttavia può accadere che una particella alla fine del free-flight raggiunga una cella nello spazio degli impulsi già completamente occupata. Di conseguenza la distribuzione $f$ supera il valore 1. Il processo risulta quindi errato dal punto di vista fisico anche se i valori ottenuti delle quantità medie di energia e velocità sono accettabili. Questo problema si presenta più frequentemente alzando il livello di Fermi (cfr.~\citep{ART:RoMaCo_JCP}).

\subsection{Approccio DSMC in accordo con il principio di Pauli}
Lo spunto principale per questa tesi nasce da un algoritmo proposto da V.~Romano, A.~Majorana e M.~Coco nel 2015 e descritto in \citep{ART:RoMaCo_JCP} in cui si utilizza un approccio che consente il rispetto del principio di esclusione di Pauli anche durante il free-flight. L'idea consiste nel suddividere la procedura in due parti: prima si effettua il free-flight per tutte le particelle e poi le collisioni. Più dettagliatamente, si consideri l'equazione \eqref{EQ:CAP3:Eq_Bolz_omo}. Dopo aver fissato una griglia temporale uniforme e quindi un passo costante $\Delta t$, si vuole trovare la distribuzione degli stati al tempo $t+\Delta t$, nota quella al tempo $t$. Allora si risolve la parte dell'equazione che coinvolge il free-flight, cioè
\begin{equation}\label{EQ:CAP3:DSMC_Pauli_1}
\frac{\partial f(t,\mathbf{k})}{\partial t}- \frac{e}{\hbar}\mathbf{E}\cdot\nabla_{\mathbf{k}}f(t,\mathbf{k})=0,
\end{equation}
prendendo come condizione iniziale la distribuzione al tempo $t$. Successivamente si effettuano le collisioni risolvendo
\begin{equation}
\frac{\partial f(t,\mathbf{k})}{\partial t} = \left( \frac{\diff f}{\diff t}(t,\mathbf{k}) \right)\ped{coll},
\end{equation}
prendendo come condizione iniziale la soluzione della \eqref{EQ:CAP3:DSMC_Pauli_1}. Questa procedura fornisce un'approssimazione della $f(t+\Delta t,\mathbf{k})$ al primo ordine in $\Delta t$. Inoltre il principio di esclusione di Pauli è rispettato completamente. Si osservi infine che in \citep{ART:RoMaCo_JCP} il metodo è stato convalidato mediante un confronto con una soluzione numerica ottenuta attraverso il metodo \emph{discontinuous Galerkin}.

\subsection{Risultati numerici nel caso sospeso}
I risultati delle simulazioni sono stati ottenuti con differenti valori del campo elettrico applicato e del livello di Fermi considerato. Il valore del livello di Fermi è scelto in modo sufficientemente grande da rendere trascurabili le interazioni intra-band e quindi la dinamica delle lacune. La temperatura del reticolo è fissata in $\si{\num{300}.\kelvin}$, il numero di particelle è pari a $\si{\num{e5}}$. Per quanto riguarda le costanti fisiche sono stati considerati i valori riportati nella Tabella~\ref{TAB:CAP3:Cost_fis_sospeso}.

Si è scelto di considerare campi elettrici aventi soltanto la componente $x$ diversa da zero. Di conseguenza, soltanto la componente $x$ della velocità media $\mathbf{v}$ è rilevante. Essa è definita da
\begin{equation}
\mathbf{v}(t) = \frac{2}{(2\pi)^2\rho} \int \! f(t,\mathbf{k)}\frac{1}{\hbar}\nabla_{\mathbf{k}}\varepsilon(\mathbf{k}) \, \diff \mathbf{k}.
\end{equation}
La velocità è correlata alla corrente $\mathbf{J}$ tramite la relazione
\begin{equation}
\mathbf{J} = -e\rho\mathbf{v},
\end{equation} 
e inoltre la velocità è legata alla mobilità elettronica $\mu(\mathbf{E})$ da
\begin{equation}
\mathbf{v} = \mu(\mathbf{E})\mathbf{E}.
\end{equation}
\FloatBarrier
\begin{table}[!ht]
\centering
\begin{tabular}{c|c}
\textbf{Parametro fisico} & \textbf{Valore} \\
\hline
$\sigma_m$ & $\si{\num{7.6e-8}.\gram\per\centi\metre\squared}$ \\
$v_F$ & $\si{\num{e6}.\metre\per\second}$ \\
$v_p$ & $\si{\num{2e4}.\metre\per\second}$ \\
$D\ped{ac}$ & $\si{\num{6.8}.\electronvolt}$ \\
$\hbar\omega\ped{O}$ & $\si{\num{164.6}.\milli\electronvolt}$ \\
$D\ped{O}$ & $\si{\num{e9}.\electronvolt\per\centi\metre}$ \\
$\hbar\omega\ped{K}$ & $\si{\num{124}.\milli\electronvolt}$ \\
$D\ped{K}$ & $\si{\num{3.5e8}.\electronvolt\per\centi\metre}$ \\
\end{tabular}
\caption{Costanti fisiche utilizzate nelle simulazioni.}
\label{TAB:CAP3:Cost_fis_sospeso}
\end{table}
\FloatBarrier
I seguenti grafici mostrano i risultati di alcune simulazioni eseguite con il nuovo approccio al metodo DSMC. Nel primo gruppo sono mostrati gli andamenti della velocità e dell'energia nei vari casi. Nel secondo gruppo sono invece riportati i grafici delle distribuzioni, da cui si evince che $0\leq f\leq 1$, pertanto il principio di esclusione di Pauli è rispettato.
\FloatBarrier
\begin{figure}[ht]
	\centering
	\subfigure[]
   		{\includegraphics[width=7.1cm]{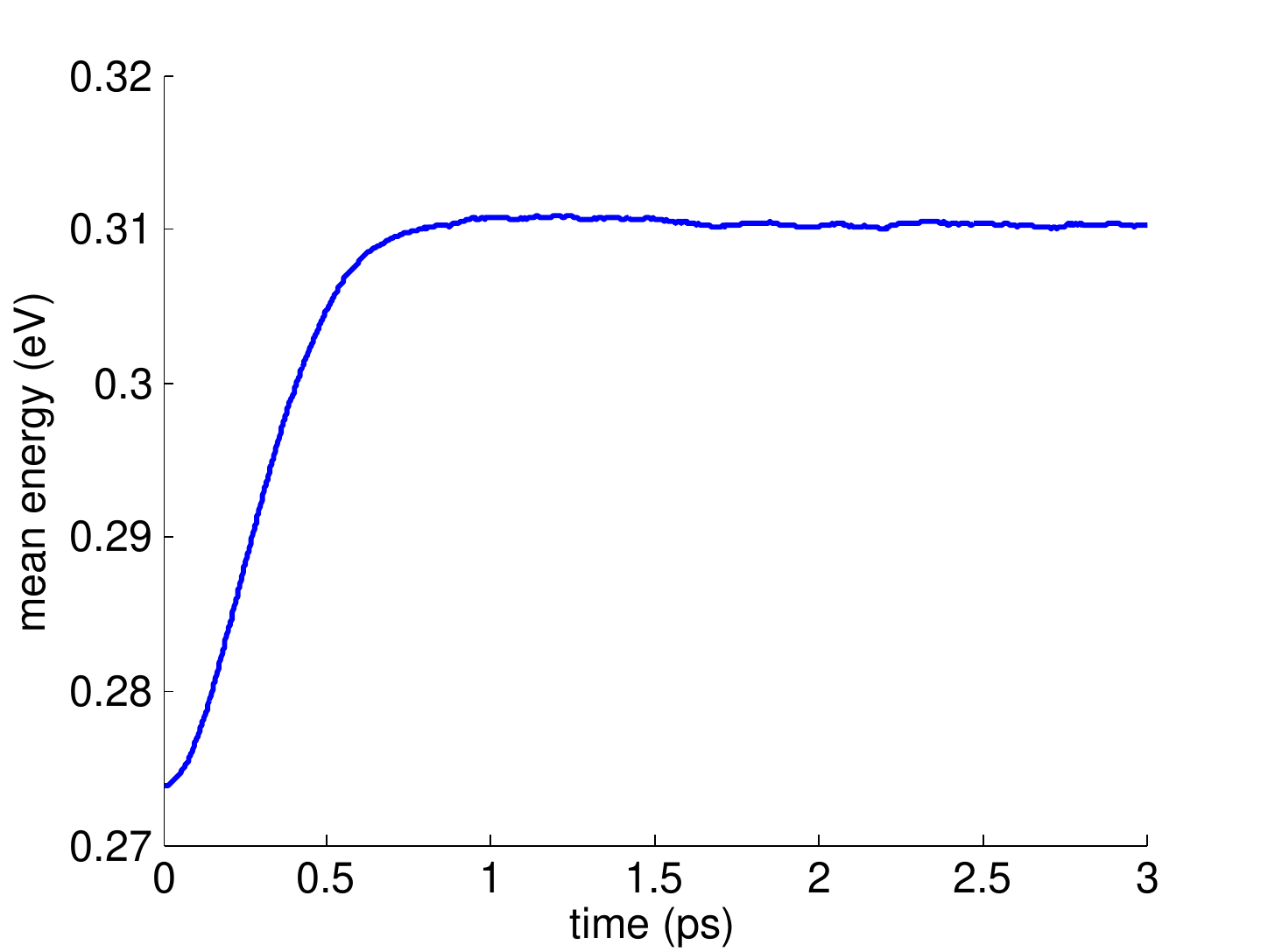}}
 	\,
 	\subfigure[]
   		{\includegraphics[width=7.1cm]{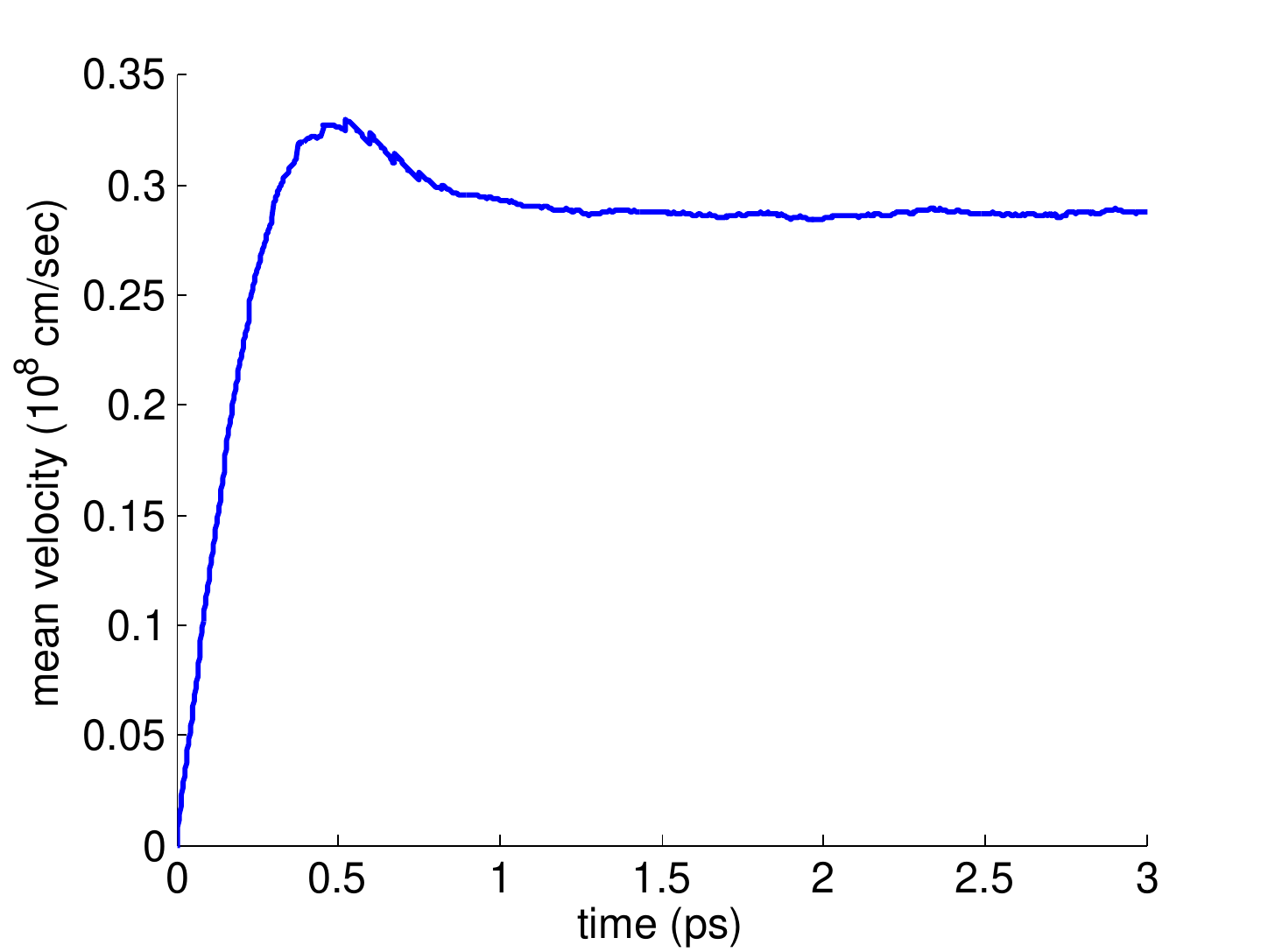}}
 	\caption{Energia (a) e velocità (b) medie ottenute con un campo elettrico applicato di $\si{\num{5}.\kilo\volt\per\centi\metre}$ e un livello di Fermi pari a $\si{\num{0.4}.\electronvolt}$.}\label{FIG:CAP3:DSMC_new_1}
\end{figure}
\begin{figure}[ht]
	\centering
	\subfigure[]
   		{\includegraphics[width=7.1cm]{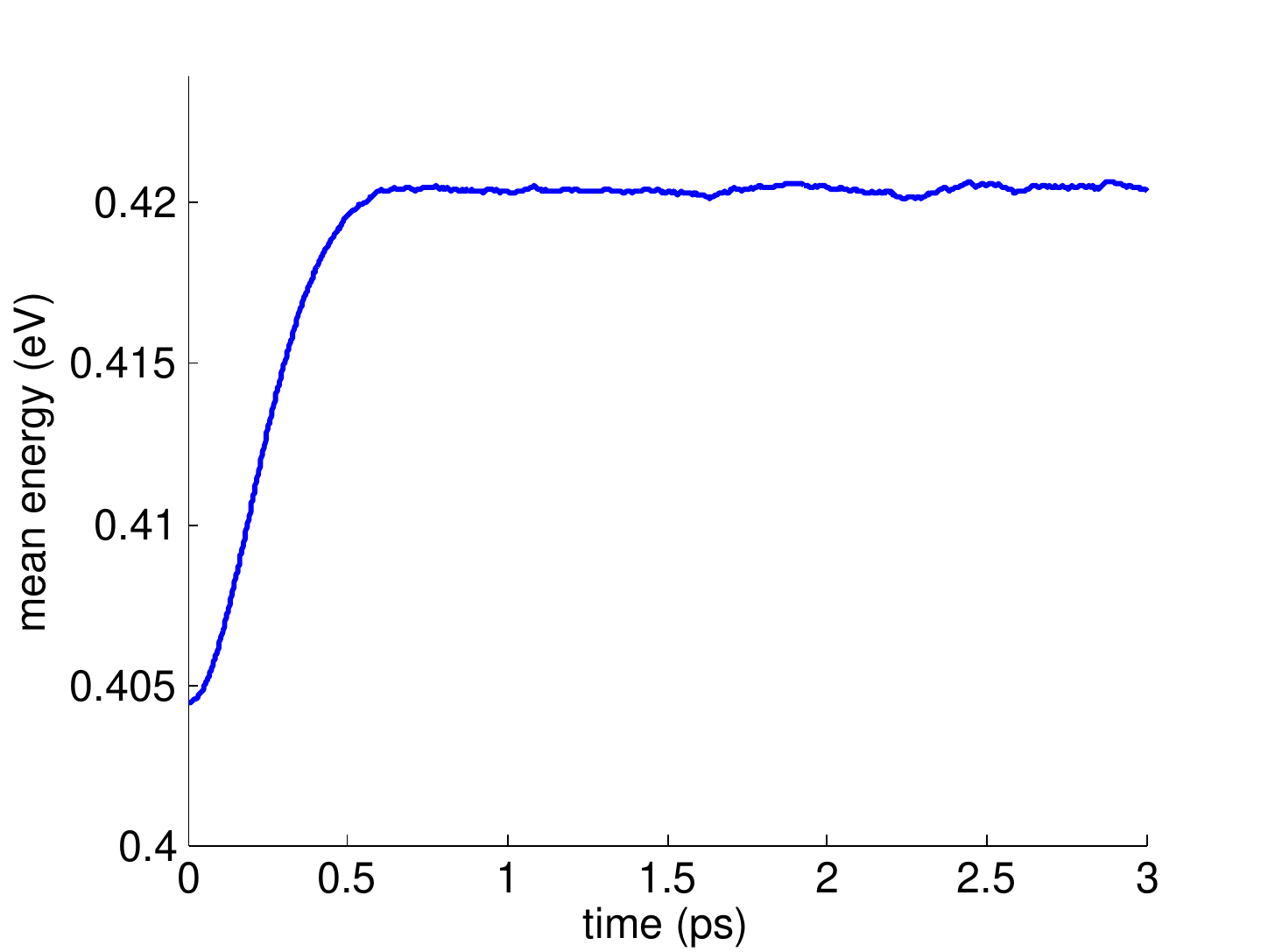}}
 	\,
 	\subfigure[]
   		{\includegraphics[width=7.1cm]{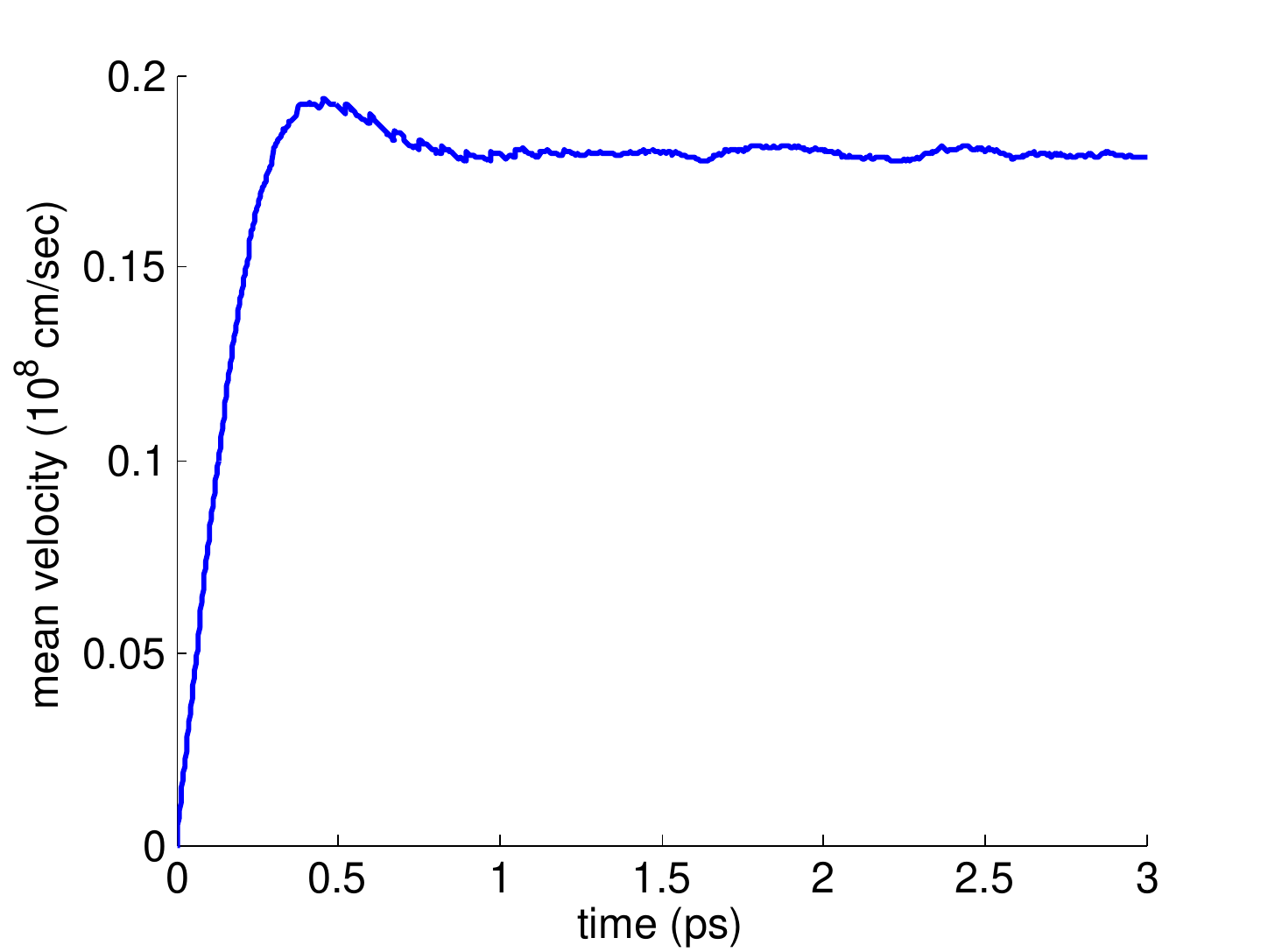}}
 	\caption{Energia (a) e velocità (b) medie ottenute con un campo elettrico applicato di $\si{\num{5}.\kilo\volt\per\centi\metre}$ e un livello di Fermi pari a $\si{\num{0.6}.\electronvolt}$.}\label{FIG:CAP3:DSMC_new_2}
\end{figure}
\begin{figure}[ht]
	\centering
	\subfigure[]
   		{\includegraphics[width=7.1cm]{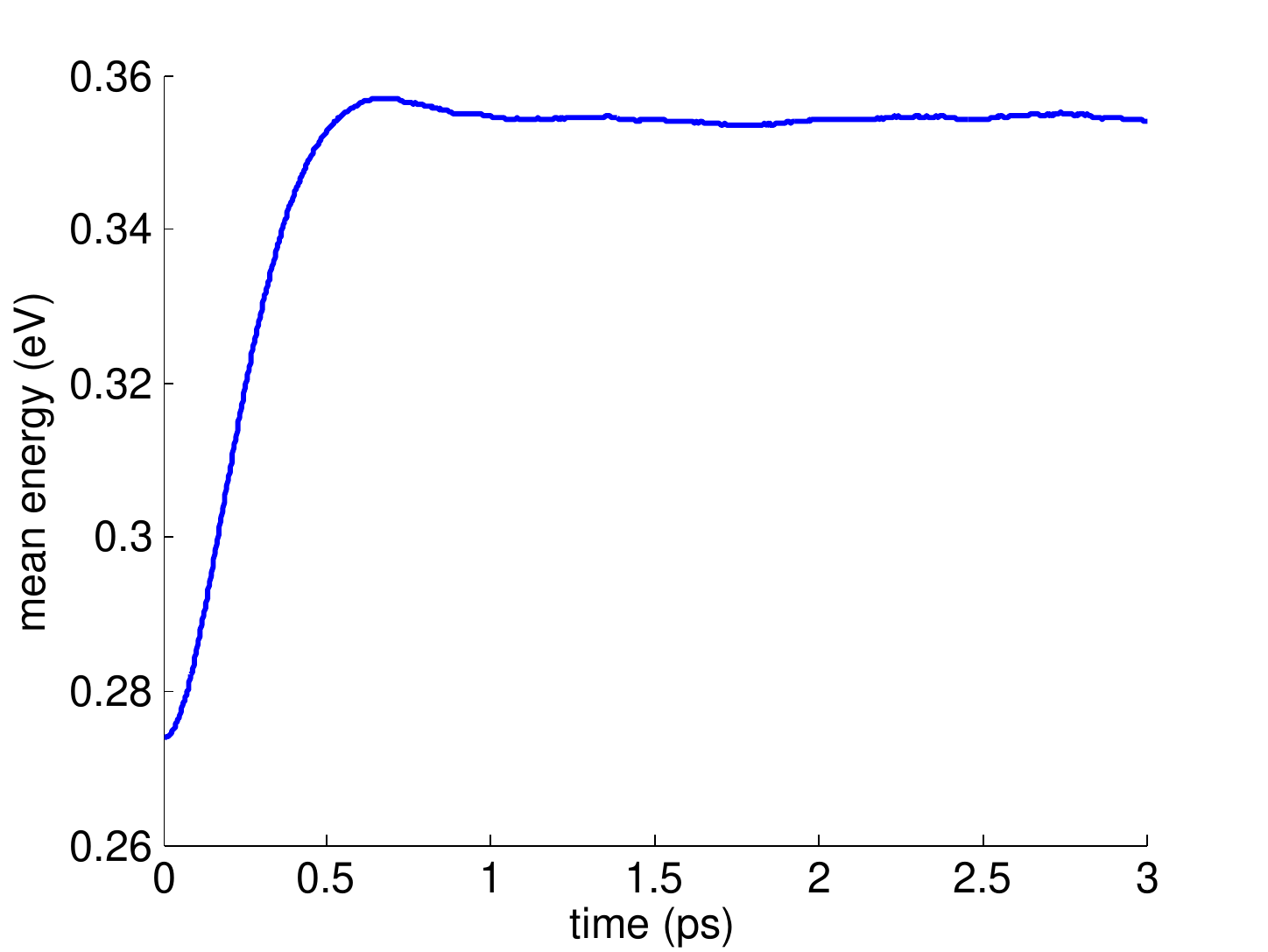}}
 	\,
 	\subfigure[]
   		{\includegraphics[width=7.1cm]{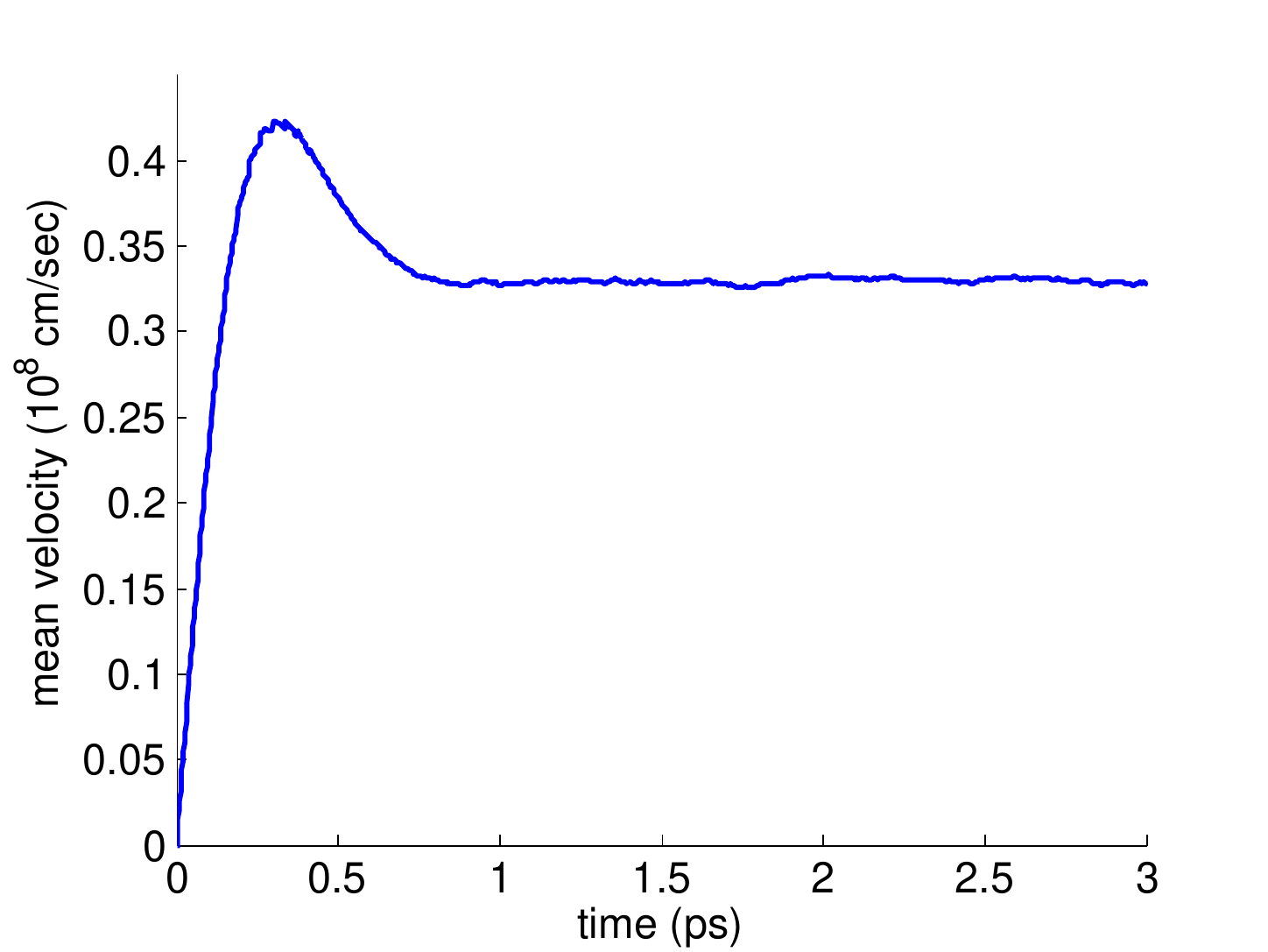}}
 	\caption{Energia (a) e velocità (b) medie ottenute con un campo elettrico applicato di $\si{\num{10}.\kilo\volt\per\centi\metre}$ e un livello di Fermi pari a $\si{\num{0.4}.\electronvolt}$.}\label{FIG:CAP3:DSMC_new_3}
\end{figure}
\begin{figure}[ht]
	\centering
	\includegraphics[width=12cm]{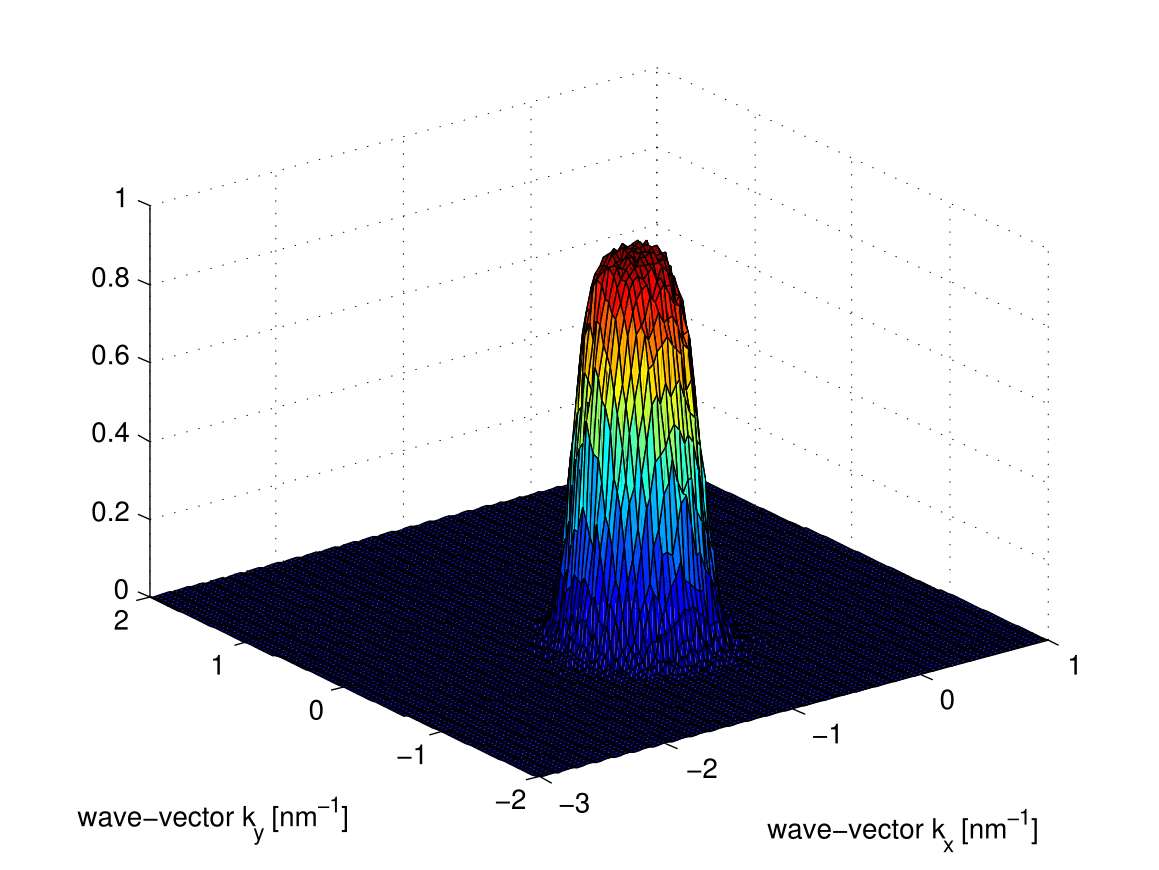}
 	\caption{Grafico della funzione di distribuzione, ottenuto con un campo elettrico applicato di $\si{\num{5}.\kilo\volt\per\centi\metre}$ e un livello di Fermi pari a $\si{\num{0.4}.\electronvolt}$.}
\end{figure}
\begin{figure}[ht]
	\centering
	\includegraphics[width=12cm]{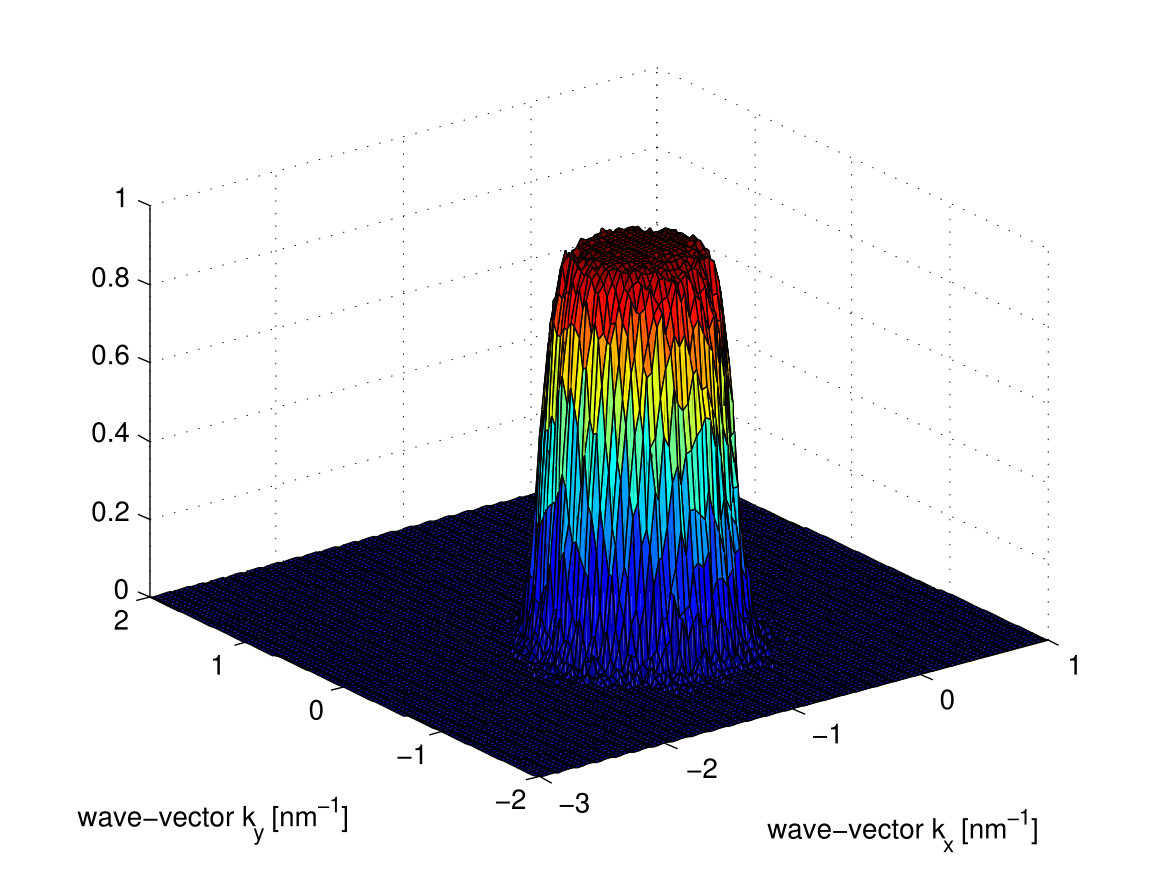}
 	\caption{Grafico della funzione di distribuzione, ottenuto con un campo elettrico applicato di $\si{\num{5}.\kilo\volt\per\centi\metre}$ e un livello di Fermi pari a $\si{\num{0.6}.\electronvolt}$.}
\end{figure}
\FloatBarrier
\begin{figure}[ht]
	\centering
	\includegraphics[width=12cm]{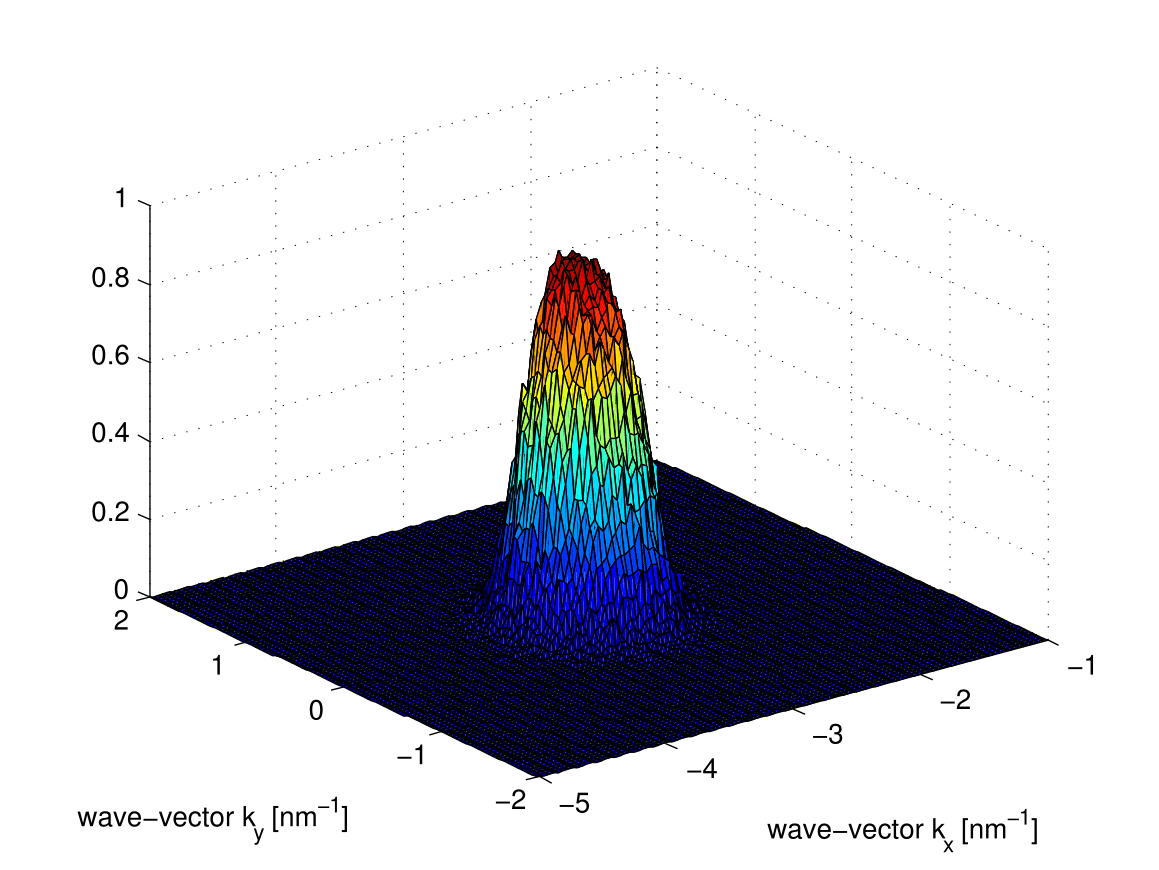}
 	\caption{Grafico della funzione di distribuzione, ottenuto con un campo elettrico applicato di $\si{\num{10}.\kilo\volt\per\centi\metre}$ e un livello di Fermi pari a $\si{\num{0.4}.\electronvolt}$.}
\end{figure}
\FloatBarrier

\subsection{Il problema del costo computazionale}
La prima implementazione dell'algoritmo discusso nei paragrafi precedenti è stata eseguita utilizzando il software di calcolo scientifico \textbf{MATLAB}. I codici scritti in linguaggio MATLAB hanno il vantaggio di essere piuttosto snelli, in quanto spesso non è richiesto di definire i tipi di variabili utilizzate e inoltre vi è un'ottima gestione di vettori e matrici, che permette la scrittura delle istruzioni in forma vettoriale ove necessario. MATLAB ha anche un gran numero di funzioni statistiche già implementate, il che è molto utile nell'ambito della programmazione Monte Carlo. Un altro punto di forza è la gestione dei numeri random. MATLAB ha già implementati diversi algoritmi per la generazione di numeri pseudocasuali che seguono le distribuzioni di probabilità più importanti. Tuttavia questa implementazione ha un costo computazionale piuttosto elevato. Questo riscontro ha fatto scaturire l'inizio di questo lavoro di tesi che è partito dalla ricerca di una implementazione più efficiente. Poiché nel codice non è richiesto un utilizzo avanzato di vettori e matrici né di funzioni particolarmente ostiche da implementare, si è optato per utilizzare un linguaggio di più basso livello. La scelta è ricaduta sul linguaggio \textbf{Fortran~90}, in quanto è ampiamente utilizzato per la programmazione scientifica e non è difficile trovare pacchetti che permettono di utilizzare funzioni la cui implementazione non è affatto banale. Inoltre la sintassi e lo stile di programmazione in certe circostanze è simile a MATLAB, il che ne ha resto più semplice il porting. La Tabella \ref{TAB:CAP3:Costo_comp} mostra il confronto tra i tempi di esecuzione dei due algoritmi al variare di alcuni parametri.
\FloatBarrier
\begin{table}[!ht]
\centering
\begin{tabular}{c|cc}
\textbf{N. particelle} & \textbf{MATLAB} & \textbf{Fortran 90} \\
\hline
$\si{\num{e3}}$ & $\si{\num{165}.\second}$ & $\si{\num{1.75}.\second}$ \\
$\si{\num{e4}}$ & $\si{\num{1853}.\second}$ & $\si{\num{6.59}.\second}$ \\
$\si{\num{e5}}$ & $>\si{\num{e4}.\second}$ & $\si{\num{55}.\second}$ \\
\end{tabular}
\caption{Confronto del tempo di esecuzioni tra il codice scritto in MATLAB e quello scritto in Fortran~90.}
\label{TAB:CAP3:Costo_comp}
\end{table}
\FloatBarrier
Il problema che si è presentato durante l'implementazione è stato quello di stabilire con certezza la validità del codice. Infatti, l'utilizzo di numeri pseudocasuali porta a risultati che sono inevitabilmente affetti da errore statistico. Di conseguenza, i profili risultanti dai vari test non risultano mai perfettamente identici. Anche se è possibile stabilire numericamente la bontà del codice, si è preferito adottare un confronto brutale forzando i due codici a fornire lo stesso output. L'algoritmo scelto per la generazione dei numeri pseudocasuali nell'implementazione MATLAB è quello di default e si chiama \textbf{Mersenne Twister}, come recita la documentazione del software\footnote{http://it.mathworks.com/help/matlab/ref/randstream.list.html}. Questo algoritmo è stato proposto da M.~Matsumoto e T.~Nishimura nel 1998 (cfr. \citep{ART:Matsumoto_Nishimura}) e fornisce un periodo elevatissimo pari a $2^{19937}-1$. Per questa ragione risulta molto efficace nelle tecniche di simulazione Monte Carlo. Di questo algoritmo esistono molte implementazioni nei linguaggi di programmazione più diffusi, tra cui Fortran~90, e vengono fornite dagli stessi autori\footnote{http://www.math.sci.hiroshima-u.ac.jp/~m-mat/MT/VERSIONS/eversions.html}. Inoltre occorre sincronizzare i due generatori sullo stesso seme. Per una buona inizializzazione, gli autori suggeriscono di utilizzare il valore 5489, come risulta da una implementazione fornita dagli autori in linguaggio C\footnote{http://www.math.sci.hiroshima-u.ac.jp/~m-mat/MT/MT2002/emt19937ar.html}. Tale valore è anche quello utilizzato di default da MATLAB\footnote{http://it.mathworks.com/help/matlab/math/updating-your-random-number-generator-syntax.html}. Pertanto, nell'implementazione in Fortran~90 si è utilizzato il generatore \texttt{mt95.f90}, reperibile sul sito web degli sviluppatori di Mersenne Twister\footnote{http://www.math.sci.hiroshima-u.ac.jp/~m-mat/MT/VERSIONS/FORTRAN/mt95.f90}. Mediante questa tecnica le due implementazioni utilizzano esattamente la stessa sequenza di numeri random, pertanto i valori in output delle simulazioni presentano una massima discrepanza tra l'implementazione in MATLAB e quella in Fortran~90 che è inferiore a $\si{\num{e-7}}$.


\section{Il trasporto di cariche nel grafene su substrato}
La descrizione fin qui trattata è relativa al caso del grafene sospeso. Lo studio del grafene in questa forma è utile per comprendere a fondo le proprietà di trasporto di questo materiale. Nelle situazioni reali, il foglio di grafene è posizionato su un substrato ossido che si comporta come una sorgente aggiuntiva di scattering tra gli elettroni del grafene responsabili della conduzione e le impurità del substrato (cfr.~\citep{ART:MajMaRo}).  

In questa sezione si descriverà dal punto di vista modellistico il caso del grafene su substrato. Prendendo come punto di partenza i risultati ottenuti con il metodo DSMC (cfr.~\citep{ART:CoMajRo_Ric_mat_2016}), si è voluto utilizzare il codice prodotto per questa tesi effettuando delle simulazioni in cui sono state fatte delle ipotesi sulla distribuzione delle impurezze. Infine, le stesse simulazioni sono state ripetute facendo variare il tipo di substrato.

\subsection{Descrizione}
Al fine di migliorare le proprietà di conduzione dei semiconduttori, una tecnica molto diffusa consiste nell'aggiungere nel materiale una piccola percentuale di atomi di natura diversa, dette \textbf{impurezze}. L'inserimento di impurezze in un semiconduttore viene detto \textbf{drogaggio} e si può effettuare, ad esempio, mediante impiantazione ionica, che consiste nell'introduzione di atomi carichi ad elevata energia in un substrato, o mediante diffusione nel campione di droganti allo stato gassoso.

Più in dettaglio, il \textbf{drogaggio di tipo n} consiste nell'utilizzare come impurezze degli atomi donori. In questo modo si crea un eccesso di elettroni liberi nel materiale che contribuiscono alla banda di conduzione. Invece, il \textbf{drogaggio di tipo p} consiste nell'aggiungere atomi accettori. Si crea così un eccesso di lacune nella banda di valenza.

Di conseguenza, nel caso del grafene su substrato, oltre ai meccanismi di scattering già presentanti nel caso sospeso, è necessario includere gli effetti dovuti ai fononi e alle impurità del substrato. Si può vedere una schematizzazione della situazione fisica in Figura~\ref{FIG:CAP2:graphene_substrate}.
\begin{figure}[ht]
\centering
\includegraphics[width=0.5\columnwidth]{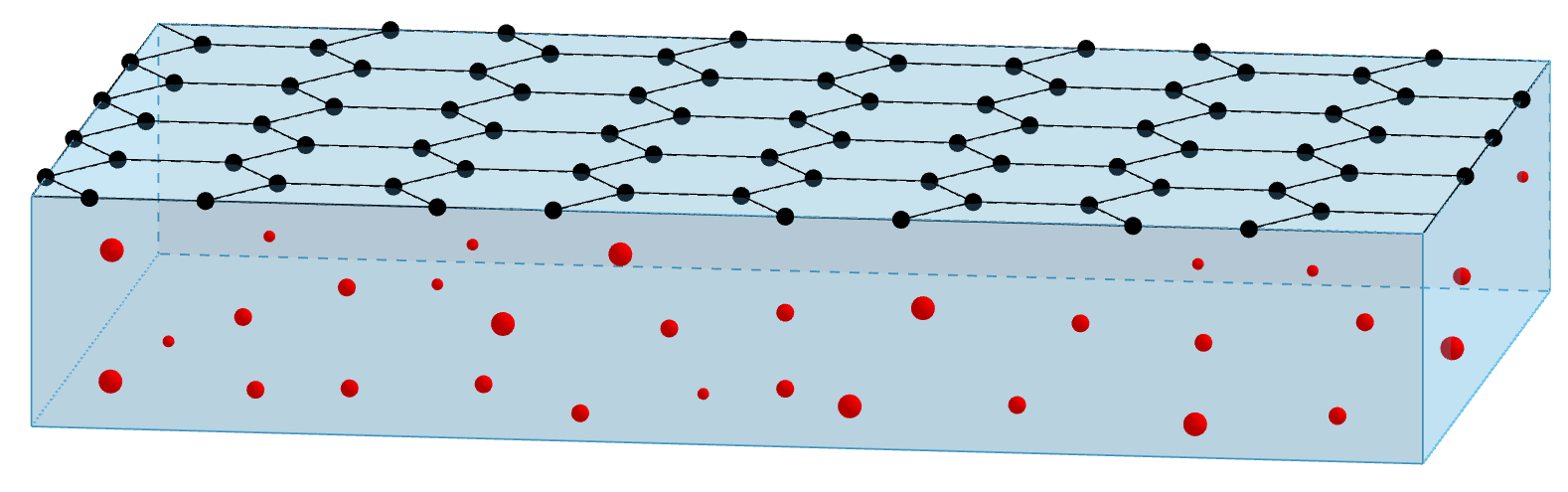}
\caption{Foglio di grafene su substrato.}
\label{FIG:CAP2:graphene_substrate}
\end{figure}
\FloatBarrier
Il campione è realizzato posizionando un foglio di grafene su un substrato e, dopo averlo portato a temperatura elevata, si osserva una riorganizzazione del materiale. Gli elettroni del foglio di grafene interagiscono con i fononi e con le impurità del substrato. Le interazioni tra gli elettroni del grafene e i fononi ottici del substrato avvengono seguendo le probabilità di transizione della forma \eqref{EQ:CAP2:Gamma_LO} e \eqref{EQ:CAP2:Gamma_TO}. Per quanto riguarda le interazioni tra elettroni del grafene e impurezze del substrato, si assume che esse si trovino su un piano a distanza $d$ dal foglio di grafene. Definire le probabilità di transizione in questo caso è piuttosto complicato. Senza entrare nel dettaglio, si adotta la seguente forma per le frequenze di collisione,
\begin{equation}
S\ap{(imp)}(\mathbf{k},\mathbf{k}')=\frac{2\pi}{\hbar}\frac{n_i}{(2\pi)^2} \left\vert \frac{V_i(\left\vert \mathbf{k}-\mathbf{k}',d \right\vert)}{\varepsilon (\left\vert \mathbf{k}-\mathbf{k}' \right\vert)} \right\vert^2 \frac{1+\cos \theta_{\mathbf{k},\mathbf{k}'}}{2}\delta \left( \varepsilon(\mathbf{k}')-\varepsilon(\mathbf{k}) \right),
\end{equation}
dove $n_i$ è il numero di impurità per unità di area. $V_i$ è dato da
\begin{equation}
V_i(\left\vert \mathbf{k}-\mathbf{k}',d \right\vert) = 2\pi e^2 \frac{\exp (-d \left\vert \mathbf{k}-\mathbf{k}' \right\vert )}{\tilde{\kappa} \left\vert \mathbf{k}-\mathbf{k}' \right\vert},
\end{equation}
in cui $d$ è la distanza a cui si trovano le impurità e $\tilde{\kappa}$ è la costante dielettrica effettiva, definita da $4\pi\varepsilon_0 (\kappa\ped{top}+\kappa\ped{bottom})/2$, dove $\varepsilon_0$ è la costante dielettrica nel vuoto, $\kappa\ped{top}$ e $\kappa\ped{bottom}$ sono le costanti dielettriche relative del mezzo al di sopra e al di sotto dello strato di grafene. Infine la funzione dielettrica $\varepsilon (\left\vert \mathbf{k}-\mathbf{k}' \right\vert)$ è data da
\begin{equation}
\varepsilon (\left\vert \mathbf{k}-\mathbf{k}' \right\vert)= 
\begin{cases}
1+\frac{q_s}{\left\vert \mathbf{k}-\mathbf{k}' \right\vert}-\frac{\pi q_s}{8 k_F} & \mbox{se } \left\vert \mathbf{k}-\mathbf{k}' \right\vert<2k_F\\
1+\frac{q_s}{\left\vert \mathbf{k}-\mathbf{k}' \right\vert} - \frac{q_s \sqrt{\left\vert \mathbf{k}-\mathbf{k}' \right\vert^2 -4k_F^2}}{2\left\vert \mathbf{k}-\mathbf{k}' \right\vert^2}-\frac{q_s}{4k_F}\arcsin \left( \frac{2k_F}{\left\vert \mathbf{k}-\mathbf{k}' \right\vert} \right) & \mbox{altrimenti}
\end{cases}
\end{equation}
Inoltre
\begin{equation}
q_s = \frac{4e^2 k_F}{\tilde{\kappa}\hbar v_F}
\end{equation}
è detto vettore d'onda di Thomas-Fermi per il grafene e
\begin{equation}
k_F = \frac{\varepsilon_F}{\hbar v_F}
\end{equation}
è detto vettore d'onda di Fermi.

La frequenza di collisione per le impurità deve essere valutata numericamente calcolando il seguente integrale,
\begin{equation}
\Gamma\ped{imp}(\varepsilon) = \int \! S\ap{(imp)}(\mathbf{k},\mathbf{k}')(1-\cos\theta_{\mathbf{k},\mathbf{k}'}) \, \diff \mathbf{k}'.
\end{equation}
Per quanto riguarda la mobilità elettronica, dall'analisi della velocità media è possibile stimare gli effetti delle impurezze sulla mobilità. Ci si aspetta che gli scattering con tali impurezza conducano a un decadimento della mobilità più o meno accentuato in base al materiale.

\subsection{Risultati numerici su \texorpdfstring{$\ce{SiO2}$}{SiO2}}\label{PAR:CAP3:Subs_1}
Il substrato utilizzato per le simulazioni in \citep{ART:CoMajRo_Ric_mat_2016} è il diossido di silicio ($\ce{SiO2}$). I valori delle costanti fisiche utilizzate sono, oltre a quelli riportati nella Tabella \ref{TAB:CAP3:Cost_fis_sospeso}, il valore dell'energia delle impurezze pari a $\hbar\omega\ped{ac-op}=\si{\num{55}.\milli\electronvolt}$ e il potenziale di deformazione pari a $D_f = \si{\num{5.14e7}.\electronvolt\per\centi\metre}$. Inoltre la temperatura scelta è quella ambiente pari a $\si{\num{300}.\kelvin}$, la densità superficiale delle impurezze è $n_i=\si{\num{2.5e11}.\centi\metre^{-2}}$ ed il numero di particelle è pari a $n_P=\si{\num{e4}}$. Le simulazioni seguenti sono eseguite al variare del campo e elettrico e del livello di Fermi. Inoltre è effettuato un confronto fissando di volta in volta dei valori costanti per la distanza delle impurezze $d$ pari a $\si{\num{0}.\nano\meter}$, $\si{\num{0.5}.\nano\meter}$ e $\si{\num{1}.\nano\meter}$.
\FloatBarrier
\begin{figure}[ht]
	\centering
	\subfigure[]
   		{\includegraphics[width=7.1cm]{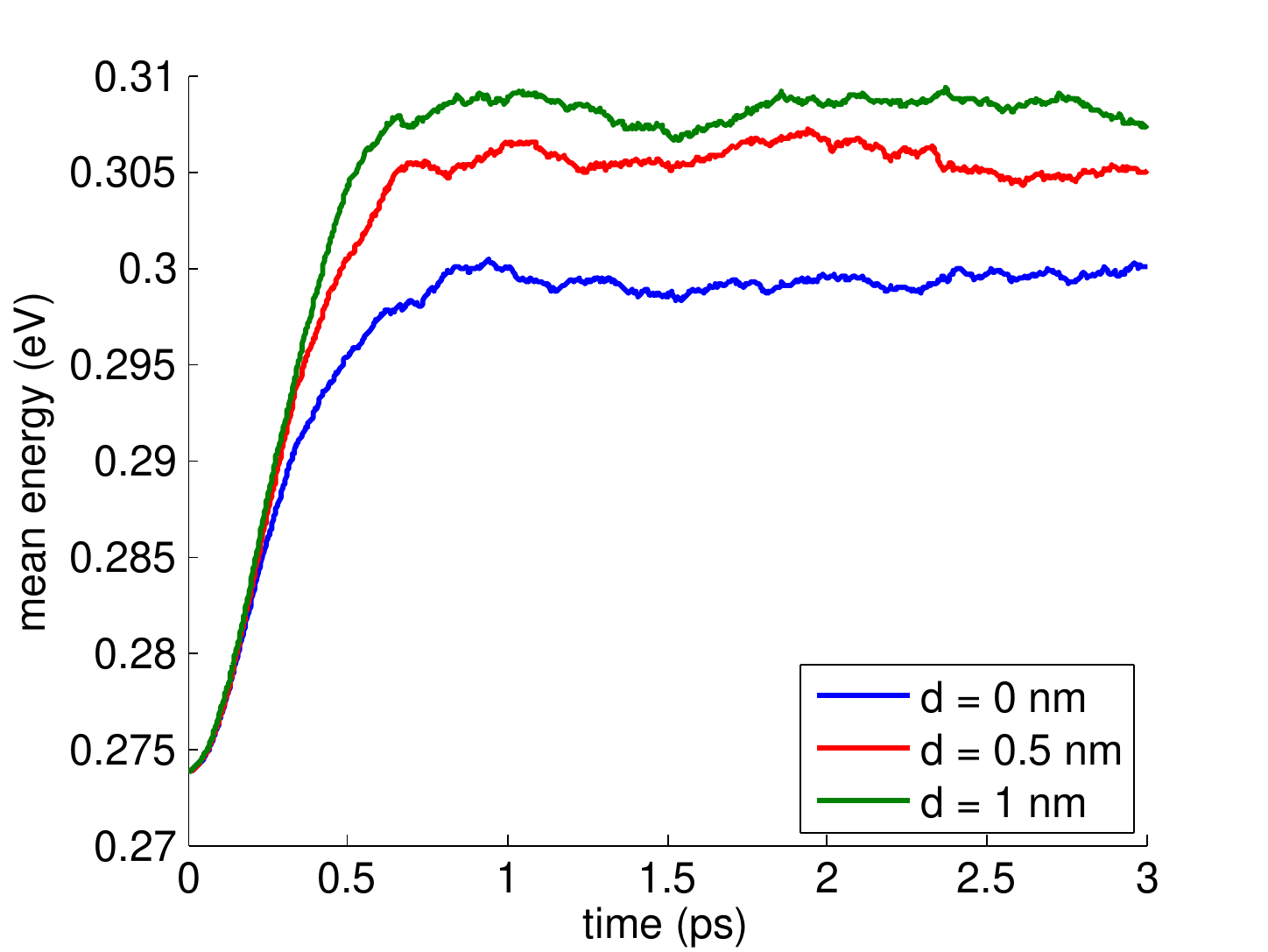}}
 	\,
 	\subfigure[]
   		{\includegraphics[width=7.1cm]{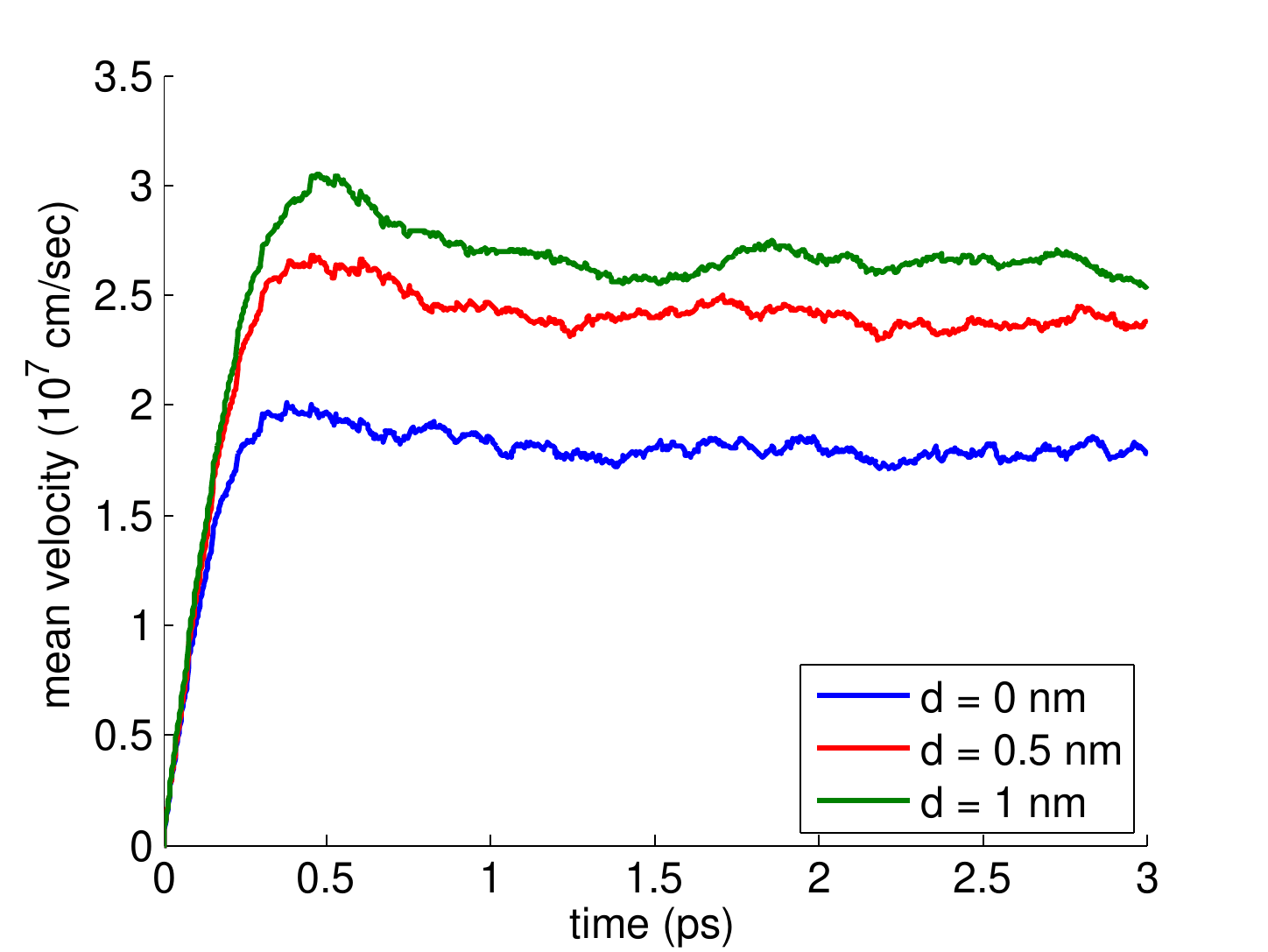}}
 	\caption{Energia (a) e velocità (b) medie ottenute con un campo elettrico applicato di $\si{\num{5}.\kilo\volt\per\centi\metre}$ e un livello di Fermi pari a $\si{\num{0.4}.\electronvolt}$.}\label{FIG:CAP3:DSMC_new_imp_1}
\end{figure}
\begin{figure}[ht]
	\centering
	\subfigure[]
   		{\includegraphics[width=7.1cm]{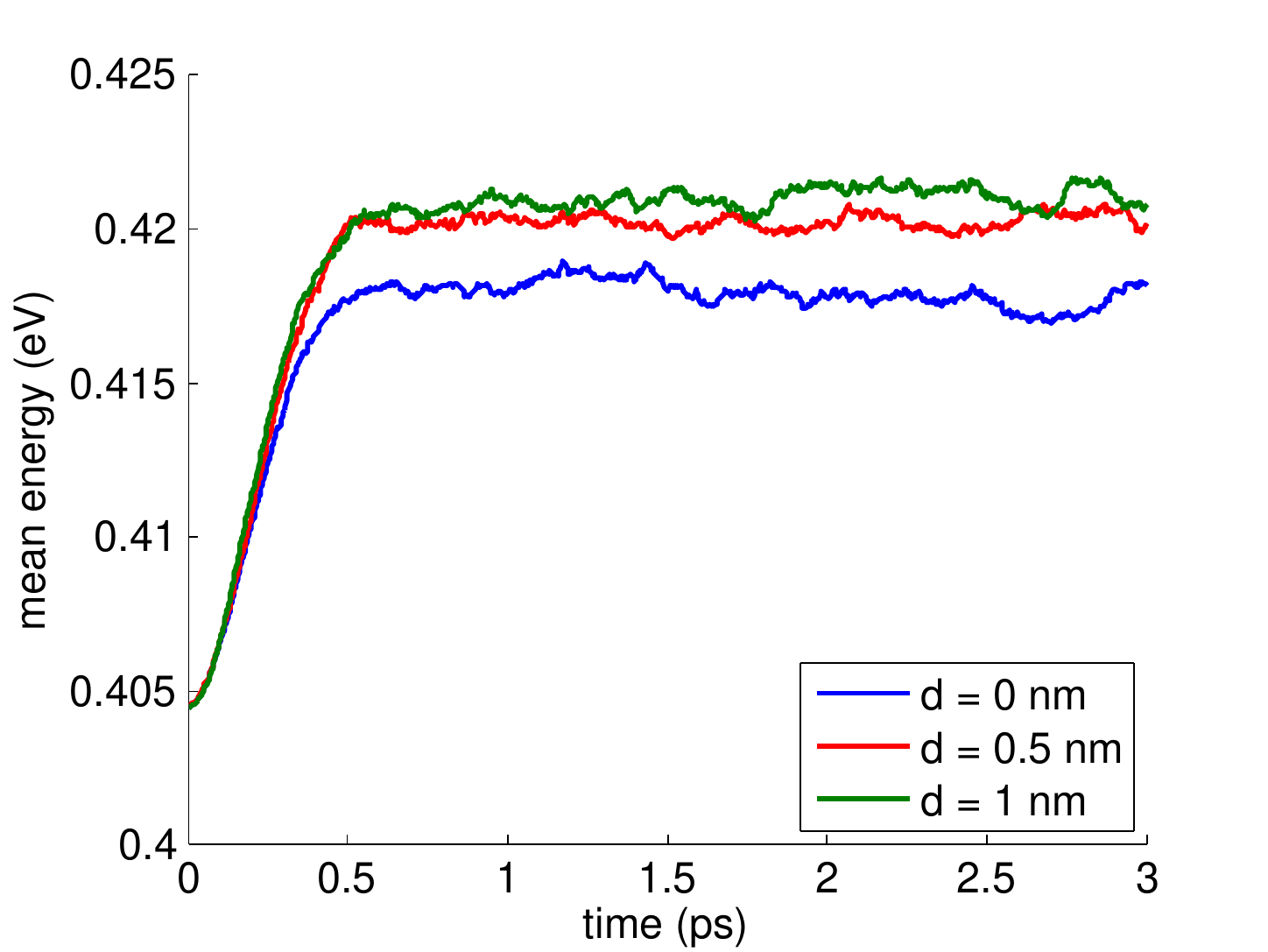}}
 	\,
 	\subfigure[]
   		{\includegraphics[width=7.1cm]{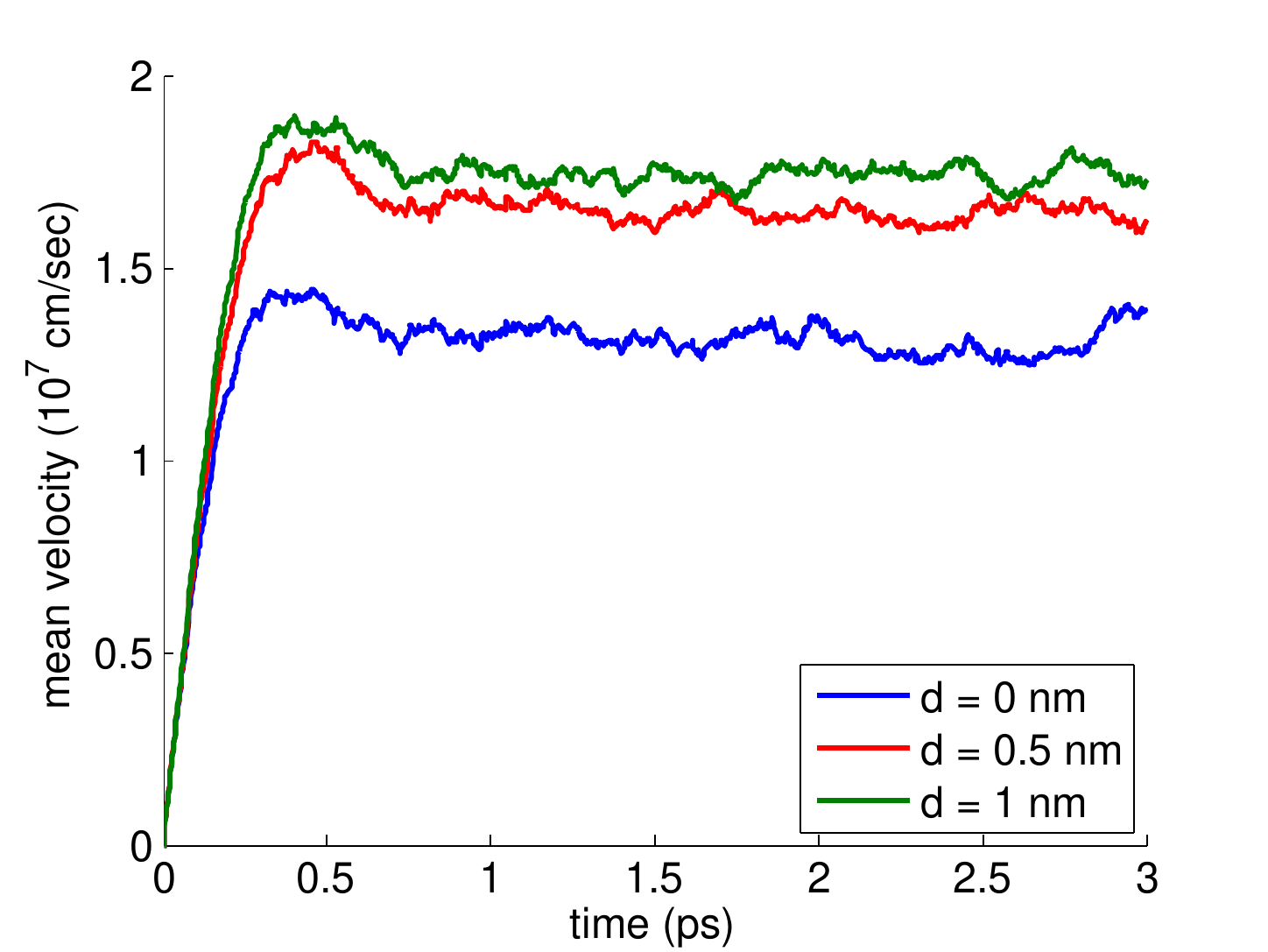}}
 	\caption{Energia (a) e velocità (b) medie ottenute con un campo elettrico applicato di $\si{\num{5}.\kilo\volt\per\centi\metre}$ e un livello di Fermi pari a $\si{\num{0.6}.\electronvolt}$.}\label{FIG:CAP3:DSMC_new_imp_2}
\end{figure}
\begin{figure}[ht]
	\centering
	\subfigure[]
   		{\includegraphics[width=7.1cm]{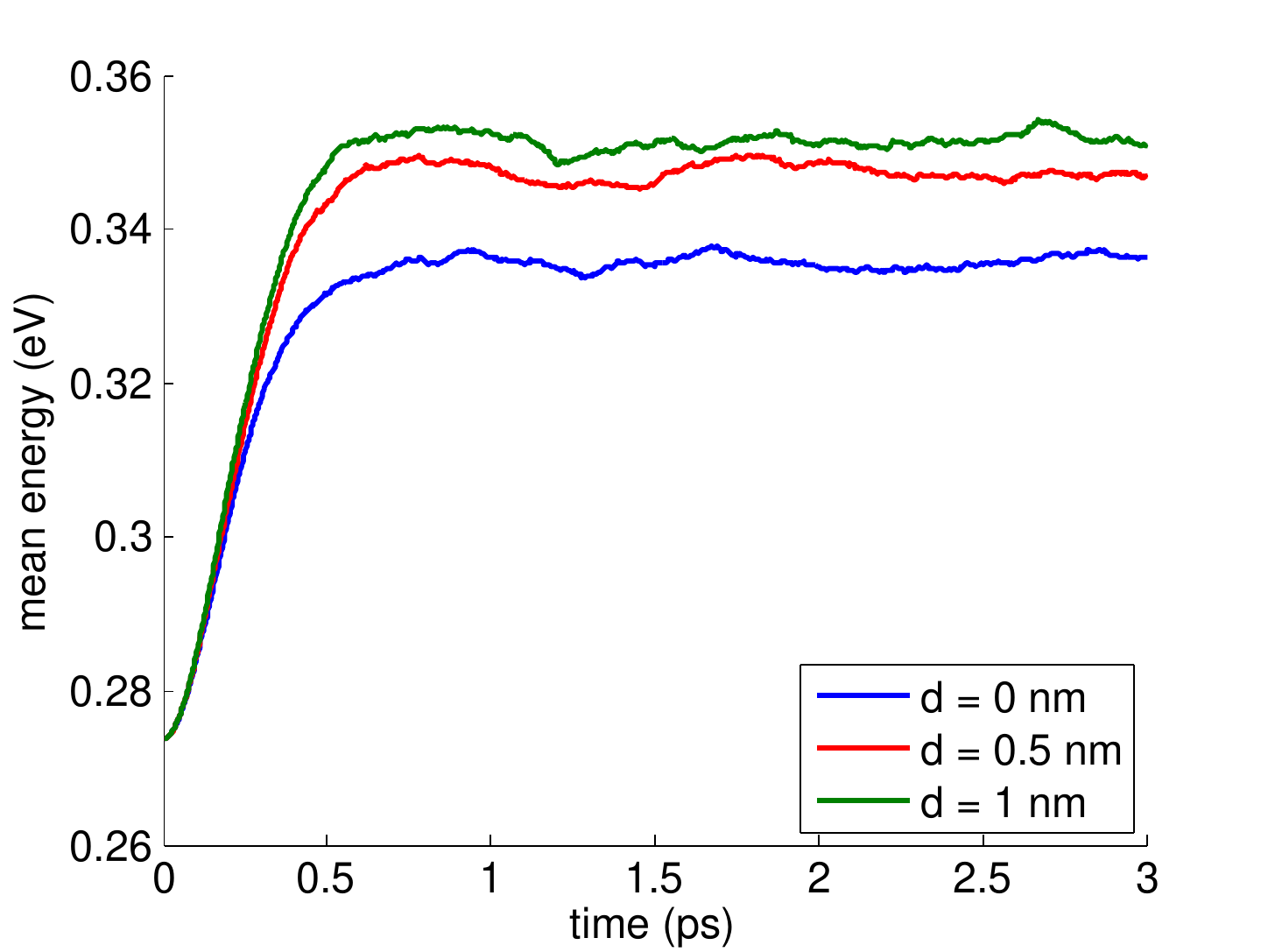}}
 	\,
 	\subfigure[]
   		{\includegraphics[width=7.1cm]{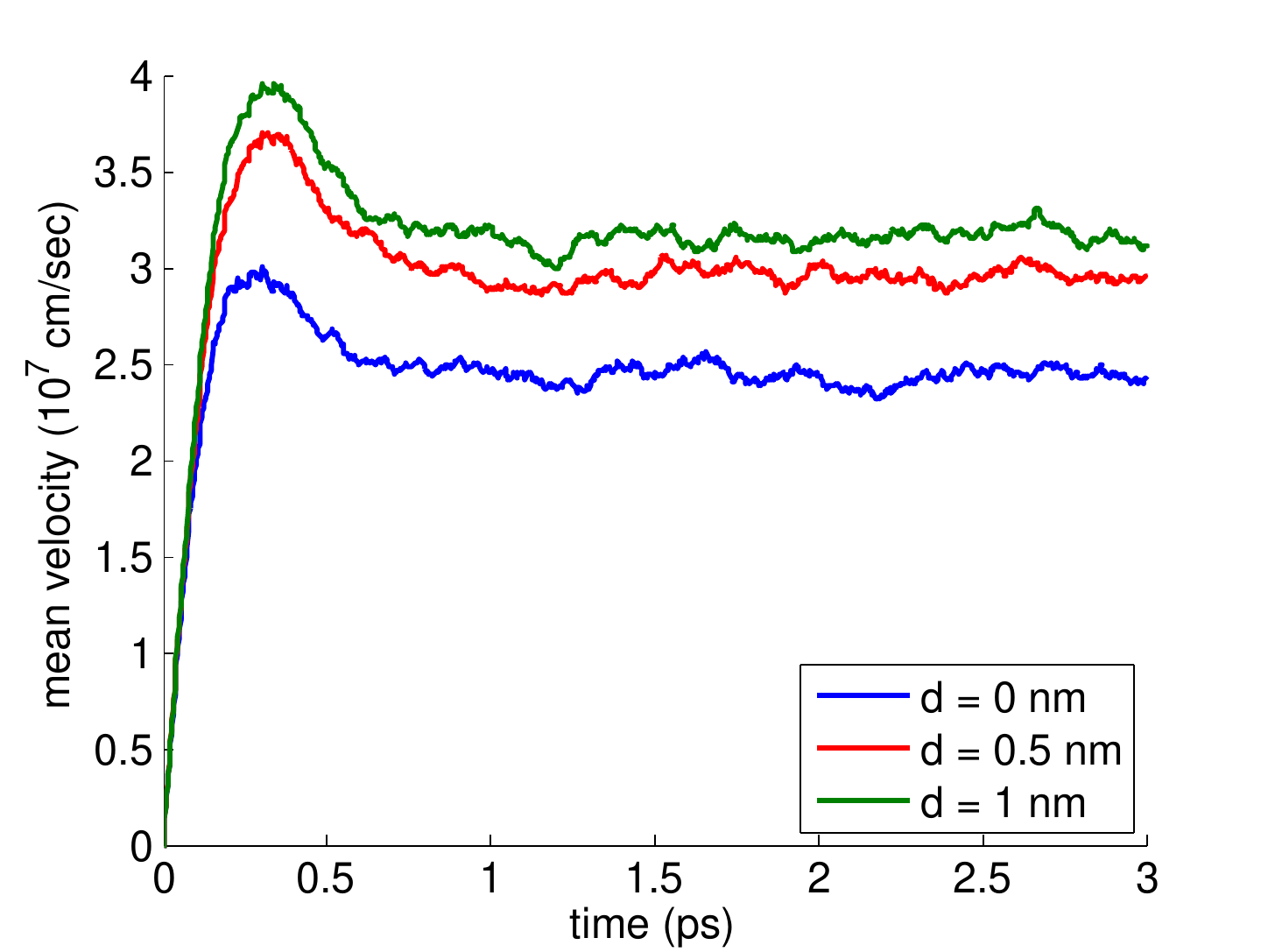}}
 	\caption{Energia (a) e velocità (b) medie ottenute con un campo elettrico applicato di $\si{\num{10}.\kilo\volt\per\centi\metre}$ e un livello di Fermi pari a $\si{\num{0.4}.\electronvolt}$.}\label{FIG:CAP3:DSMC_new_imp_3}
\end{figure}
\FloatBarrier

\subsection{Ipotesi modellistiche sulla distanza delle impurezze}\label{PAR:CAP3:Subs_2}
Il parametro più importante è la distanza $d$ a cui si trovano le impurità del substrato rispetto al foglio di grafene. Essa è dell'ordine di alcuni angstrom ma il valore esatto può variare da un campione all'altro. Per questa ragione si è voluto complicare il modello, facendo delle ipotesi sulla distribuzione di queste distanze.

Si è visto che i valori di $d$ scelti nel caso costante sono pari a $\si{\num{0}.\nano\meter}$, $\si{\num{0.5}.\nano\meter}$ e $\si{\num{1}.\nano\meter}$. Si è pensato allora in primo luogo di sostituire la distanza costante con una variabile aleatoria distribuita uniformemente tra $\si{\num{0}.\nano\meter}$ e $\si{\num{1}.\nano\meter}$. Cioè, si assume che $d$ abbia la forma,
\begin{equation}
d = \tilde{d}\cdot d^*, \qquad \mbox{con } \tilde{d}\sim U([0,1]), \quad d^*=\si{\num{1}.\nano\meter}.
\end{equation}
Il valore atteso per la variabile aleatoria $\tilde{d}$ è pari a $1/2$. Pertanto ci si aspetta che i risultati non siano molto distanti da quelli ottenuti con $d=\si{\num{0.5}.\nano\meter}$, come in effetti avviene. Infatti, i grafici riportati di seguito mostrano il confronto tra i valori medi dell'energia e della velocità ottenuti con $d$ costante e quelli ottenuti con quest'ultima forma per $d$.
\FloatBarrier
\begin{figure}[ht]
	\centering
	\subfigure[]
   		{\includegraphics[width=7.1cm]{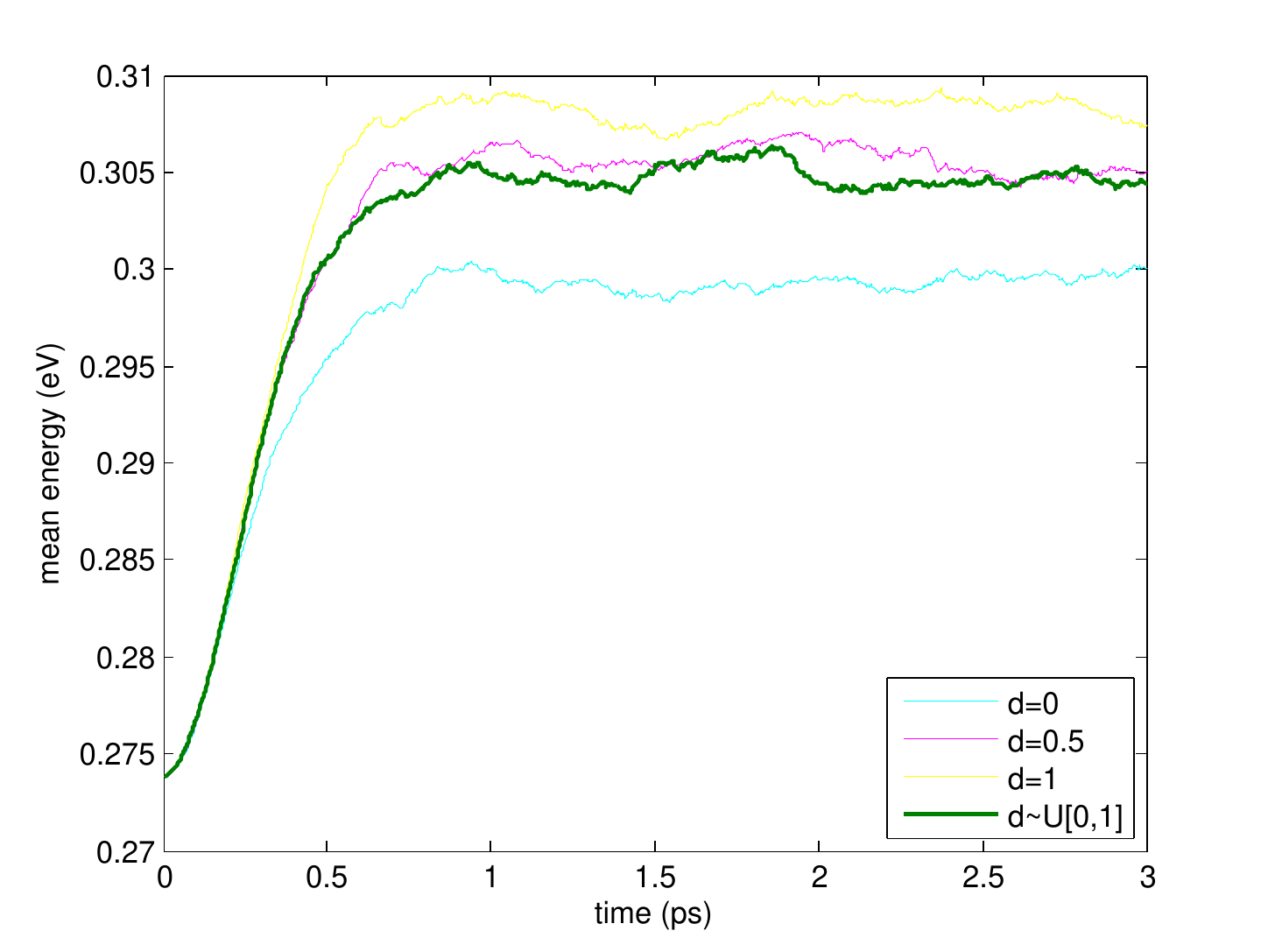}}
 	\,
 	\subfigure[]
   		{\includegraphics[width=7.1cm]{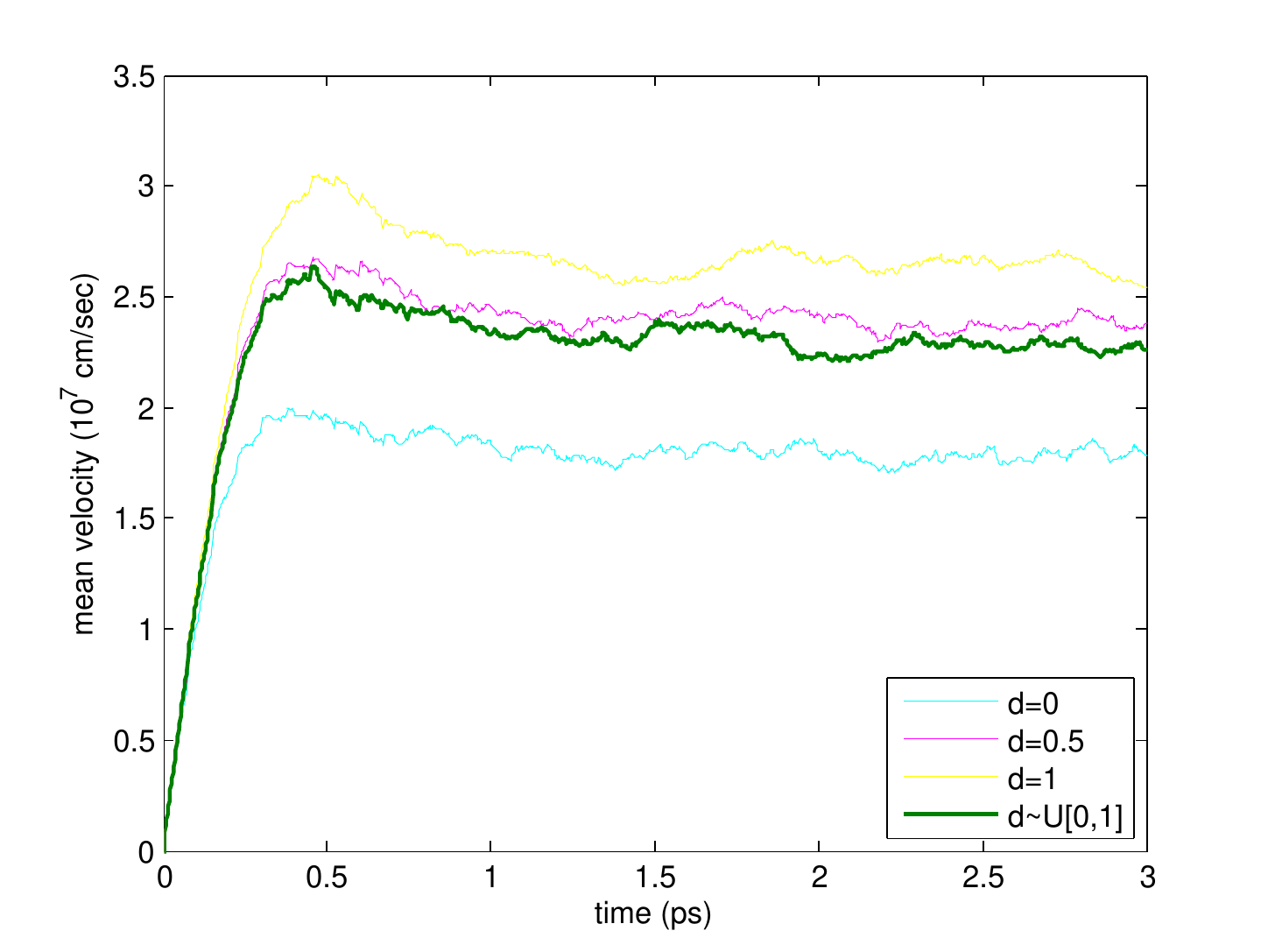}}
 	\caption{Energia (a) e velocità (b) medie ottenute con un campo elettrico applicato di $\si{\num{5}.\kilo\volt\per\centi\metre}$, un livello di Fermi pari a $\si{\num{0.4}.\electronvolt}$ e $d\sim U([0,1])$.}\label{FIG:CAP3:DSMC_new_dunif_1}
\end{figure}
\begin{figure}[ht]
	\centering
	\subfigure[]
   		{\includegraphics[width=7.1cm]{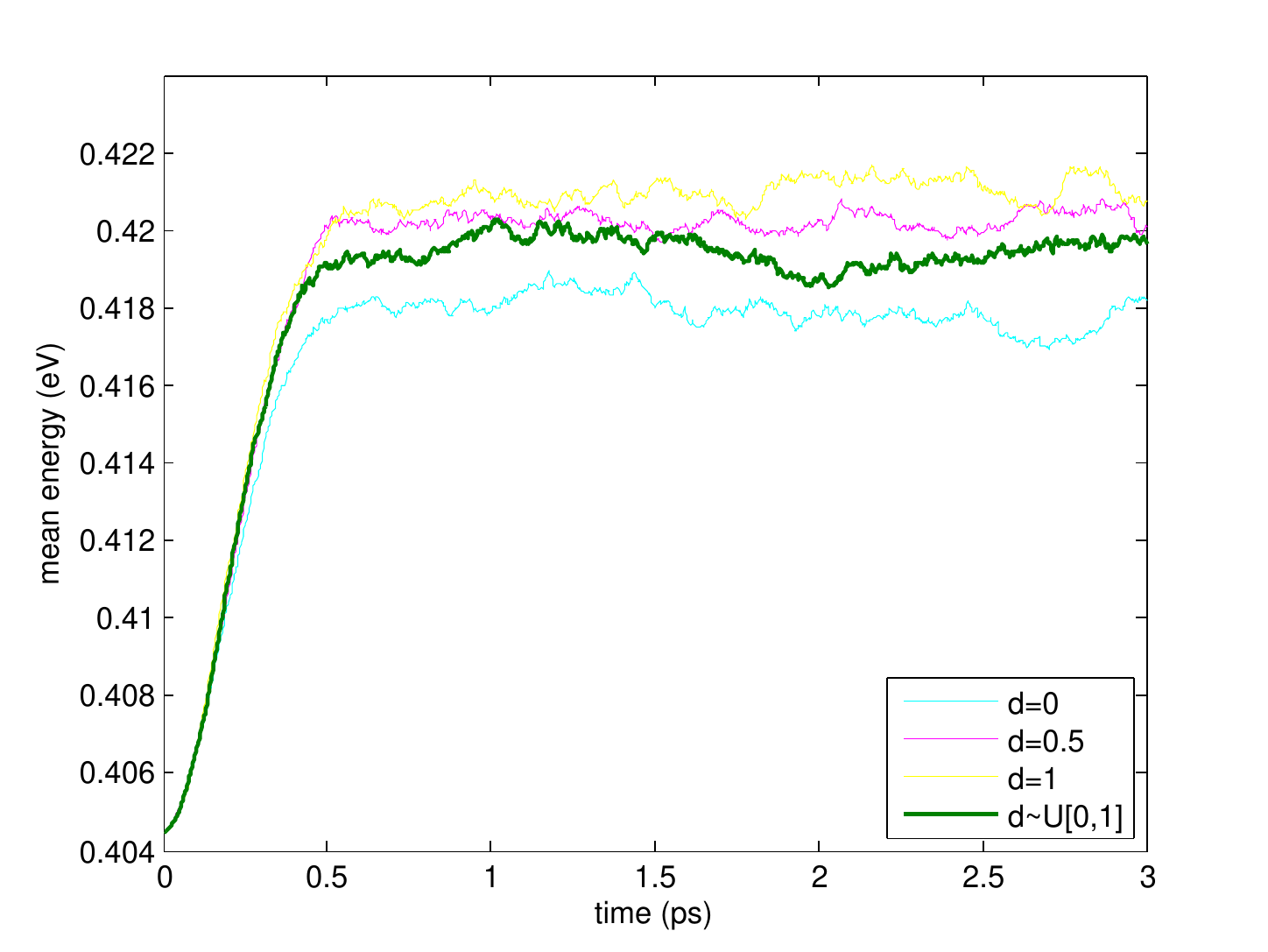}}
 	\,
 	\subfigure[]
   		{\includegraphics[width=7.1cm]{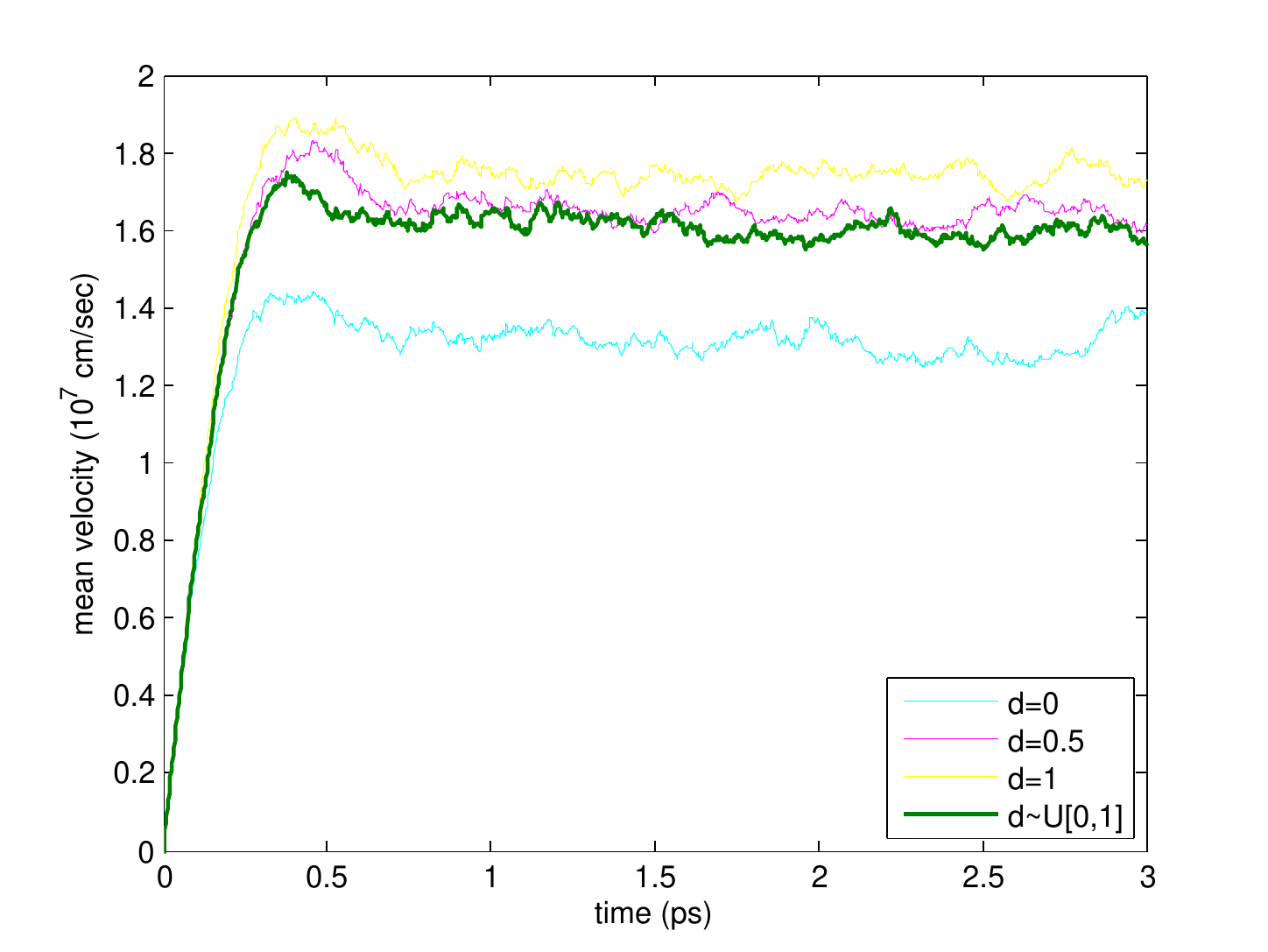}}
 	\caption{Energia (a) e velocità (b) medie ottenute con un campo elettrico applicato di $\si{\num{5}.\kilo\volt\per\centi\metre}$, un livello di Fermi pari a $\si{\num{0.6}.\electronvolt}$ e $d\sim U([0,1])$.}\label{FIG:CAP3:DSMC_new_dunif_2}
\end{figure}
\begin{figure}[ht]
	\centering
	\subfigure[]
   		{\includegraphics[width=7.1cm]{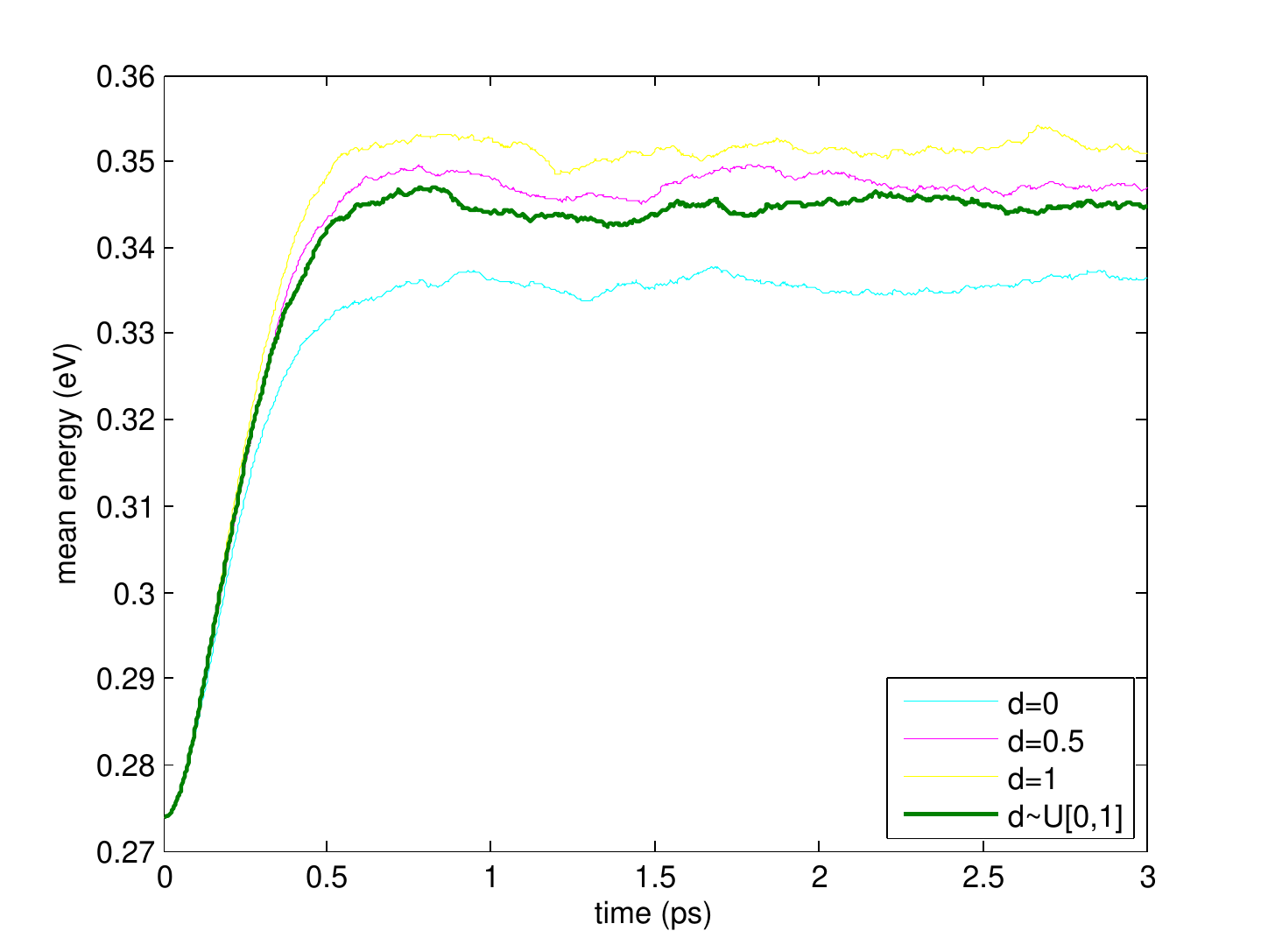}}
 	\,
 	\subfigure[]
   		{\includegraphics[width=7.1cm]{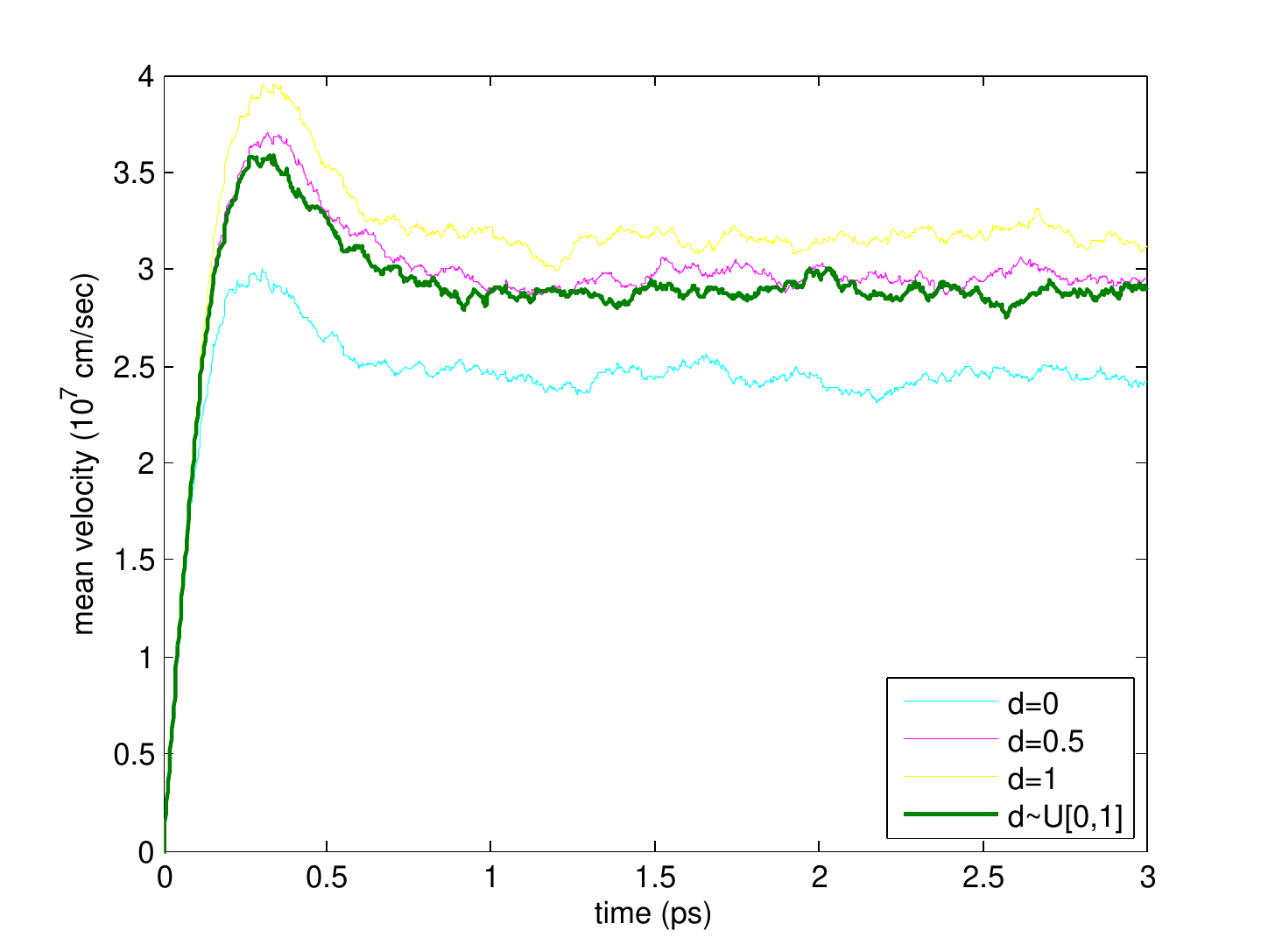}}
 	\caption{Energia (a) e velocità (b) medie ottenute con un campo elettrico applicato di $\si{\num{10}.\kilo\volt\per\centi\metre}$, un livello di Fermi pari a $\si{\num{0.4}.\electronvolt}$ e $d\sim U([0,1])$.}\label{FIG:CAP3:DSMC_new_dunif_3}
\end{figure}
\FloatBarrier
Successivamente si è pensato di adottare una scelta più sofisticata per la distanza $d$. Si è voluta adottare una distribuzione che abbia una densità di valori grossomodo tra $\si{\num{0}.\nano\meter}$ e $\si{\num{1}.\nano\meter}$ con un picco in prossimità dello strato di grafene. La scelta è ricaduta nella famiglia delle distribuzioni \textbf{Gamma}, di cui si dà una breve descrizione.

Si dice che una variabile aleatoria $X$ segue una distribuzione $\Gamma(k,\theta)$, dove $k>0$ è detto \textbf{parametro di forma} e $\theta>0$ è detto \textbf{parametro di scala}, se ha densità
\begin{equation}
f(x) = \frac{1}{\Gamma(k)\theta^k}x^{k-1}e^{-\frac{x}{\theta}}, \qquad \mbox{per }x>0
\end{equation}
e $f(x) =0$ per $x\leq 0$, dove $\Gamma(k)$ è data da
\begin{equation}
\Gamma(k) = \int_0^{+\infty} \! x^{k-1}e^{-x} \, \diff x.
\end{equation}
Inoltre se $k\in\mathbb{N}^{+}$ si ha $\Gamma(k)=(k-1)!$.



Nel problema in esame si sono presi per il parametro $k$ i valori $k=2,3,4$ e per il parametro $\theta$ il valore 2. Si osservi che con questa scelta vi è una probabilità non nulla di ottenere valori per la variabile aleatoria per ogni $x>0$. Tuttavia, si può calcolare che la probabilità di trovarne al di fuori dell'intervallo $[0,5]$ è di circa l'$1\%$, come si evince dalla Figura \ref{FIG:CAP2:Funz_rip_Gamma}. Pertanto, al fine di scegliere come scala per la distanza il valore di $\si{\num{1}\nano\metre}$, si è scelto di prendere
\begin{equation}
d = \tilde{d}\cdot d^*/5, \qquad \mbox{con } \tilde{d}\sim\Gamma(k,2)/5, \quad k=2,3,4, \quad d^*=\si{\num{1}.\nano\meter}.
\end{equation}
\FloatBarrier
\begin{figure}[ht]
\centering
\includegraphics[width=0.5\columnwidth]{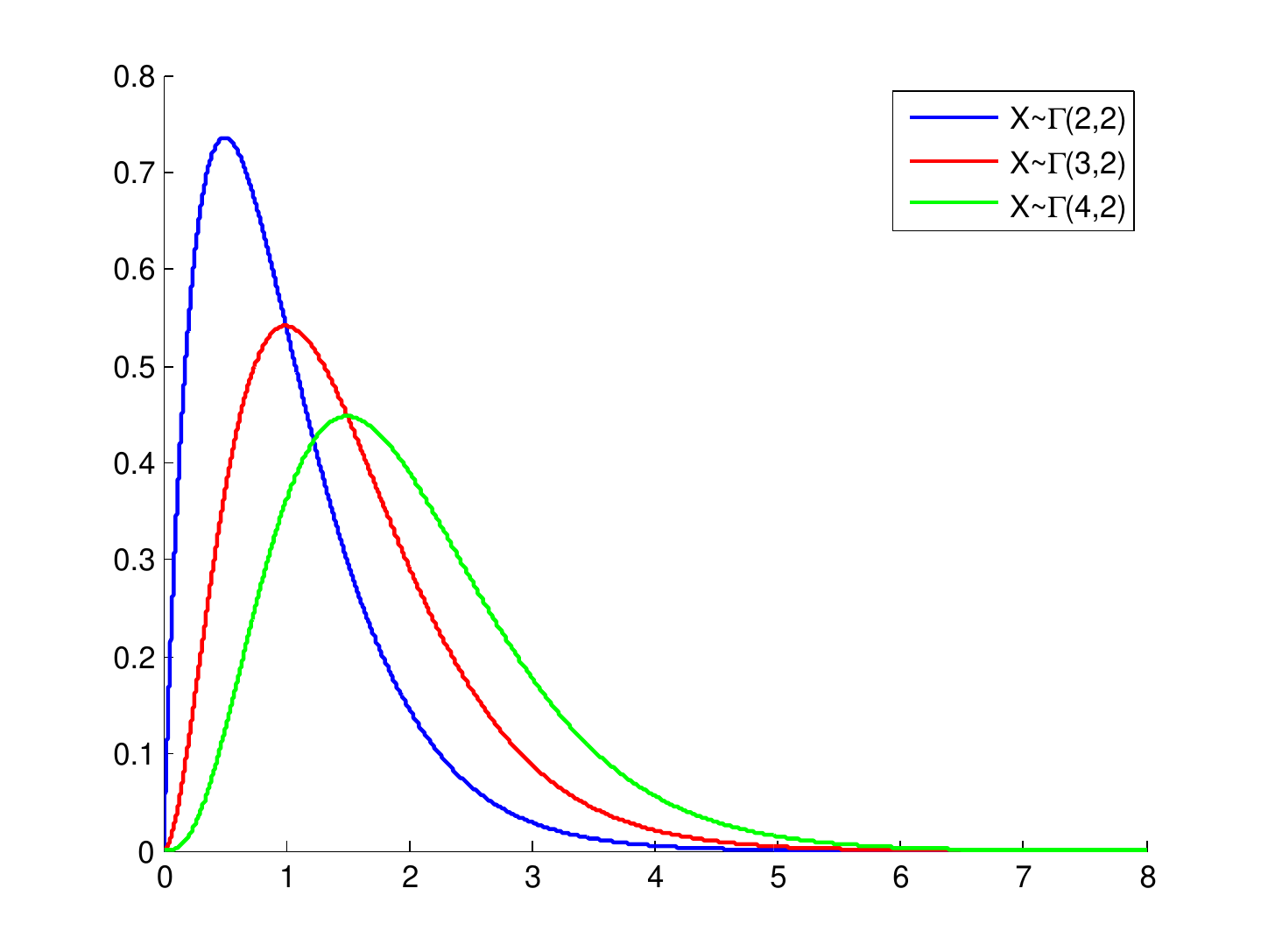}
\caption{Grafico delle densità delle distribuzioni $\Gamma(k,2)$ con $k=2,3,4$.}
\label{FIG:CAP2:Funz_rip_Gamma}
\end{figure}
\FloatBarrier
I grafici riportati di seguito mostrano il confronto tra i valori medi dell'energia e della velocità ottenuti con $d$ costante e quelli ottenuti utilizzando le distribuzioni appena menzionate. Infine, nella Figura \ref{FIG:CAP3:DSMC_new_SiO2_distrib} si evince ancora una volta che le simulazioni rispettano il principio di esclusione di Pauli.
\FloatBarrier
\begin{figure}[ht]
	\centering
	\subfigure[]
   		{\includegraphics[width=7.1cm]{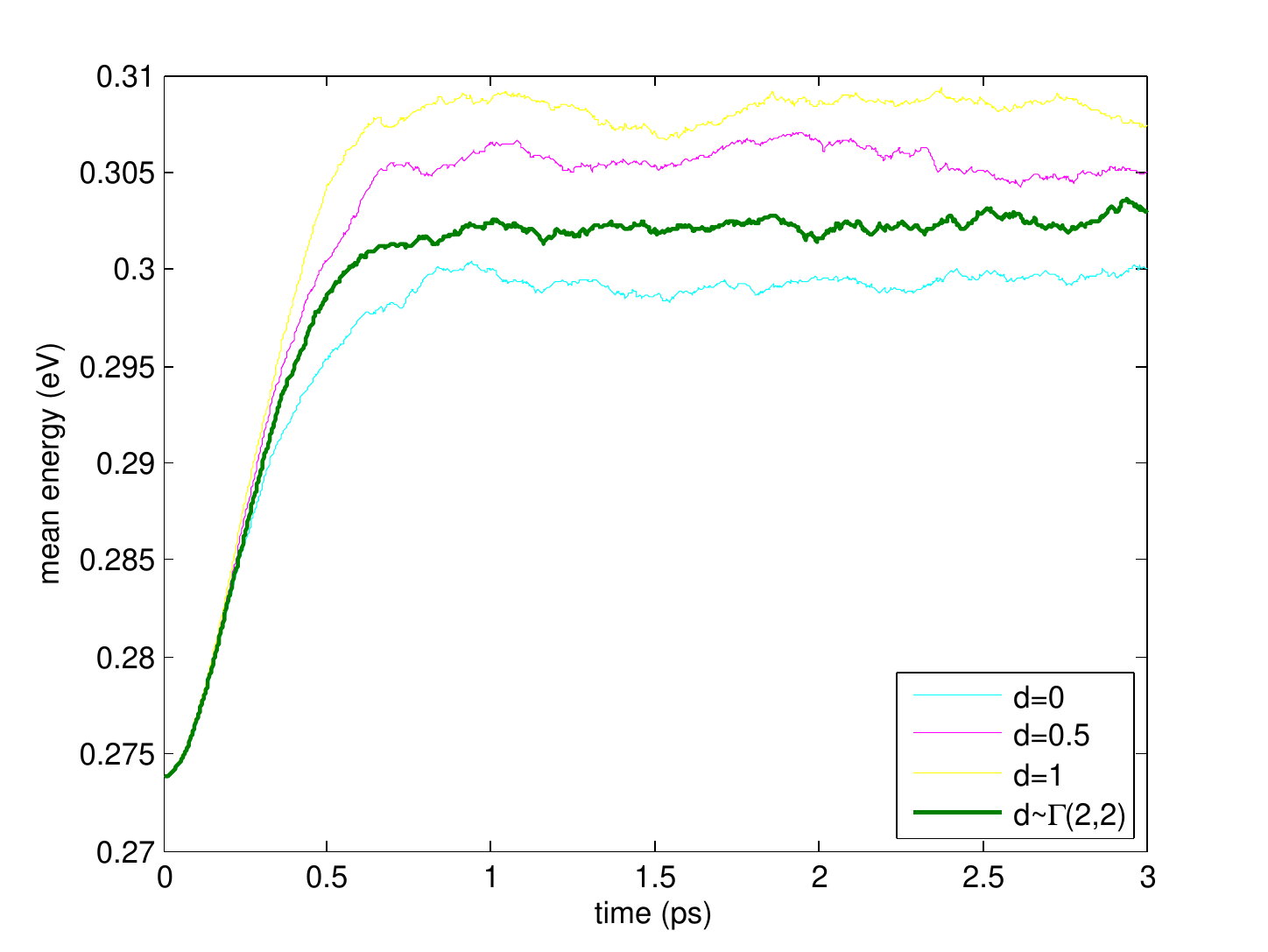}}
 	\,
 	\subfigure[]
   		{\includegraphics[width=7.1cm]{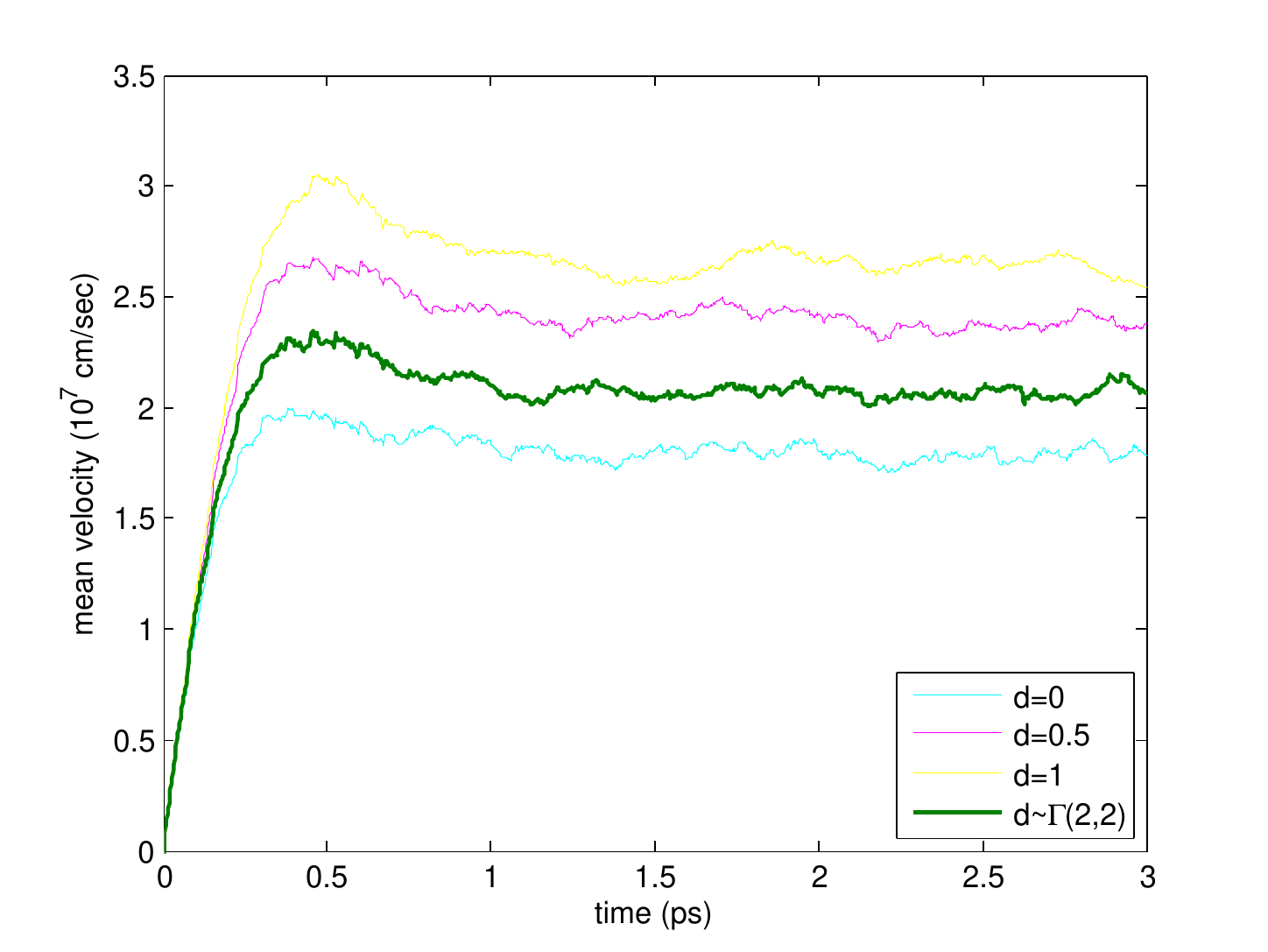}}
 	\caption{Energia (a) e velocità (b) medie ottenute con un campo elettrico applicato di $\si{\num{5}.\kilo\volt\per\centi\metre}$, un livello di Fermi pari a $\si{\num{0.4}.\electronvolt}$ e $d\sim\Gamma(2,2)$.}
\end{figure}
\begin{figure}[ht]
	\centering
	\subfigure[]
   		{\includegraphics[width=7.1cm]{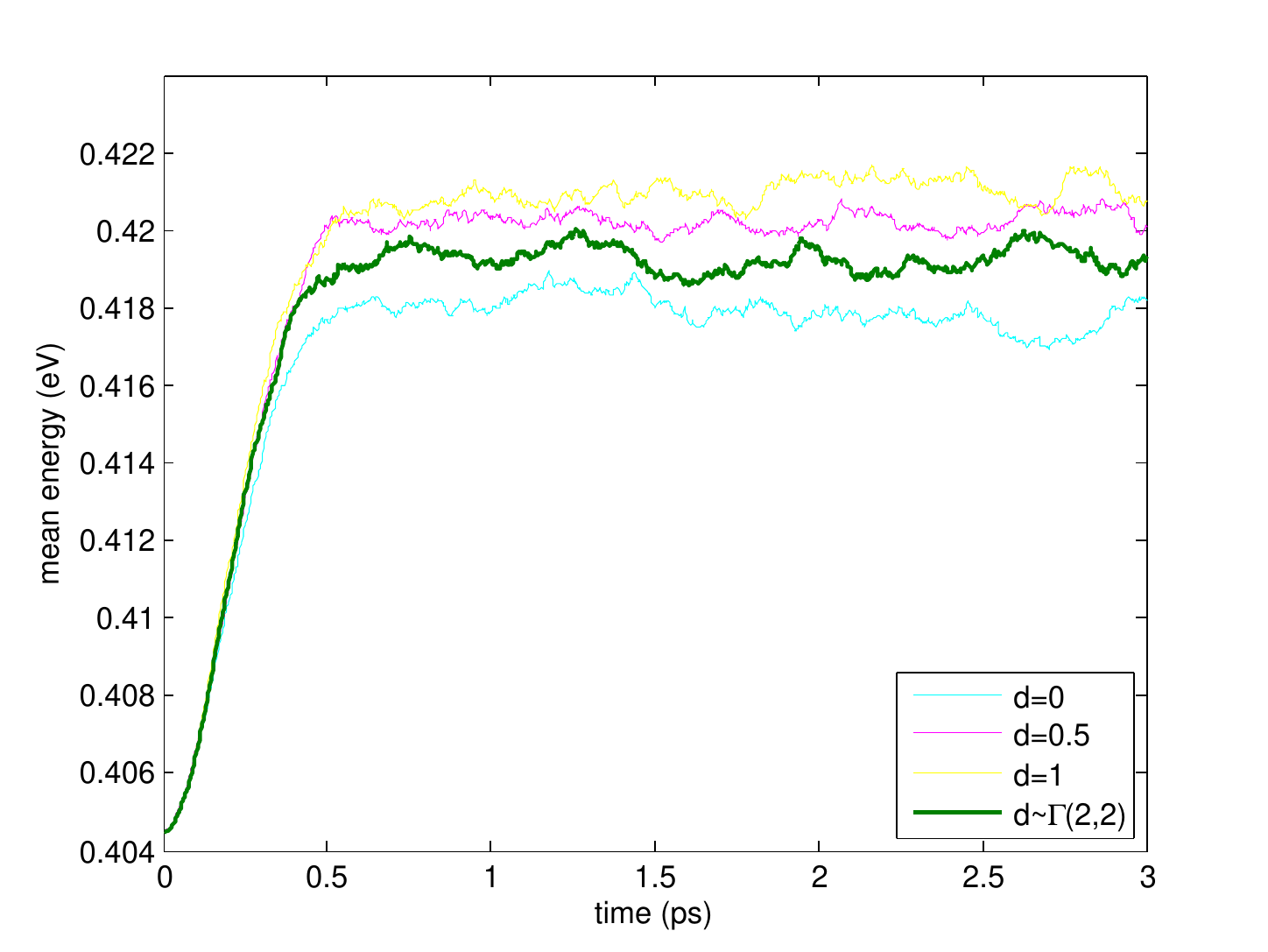}}
 	\,
 	\subfigure[]
   		{\includegraphics[width=7.1cm]{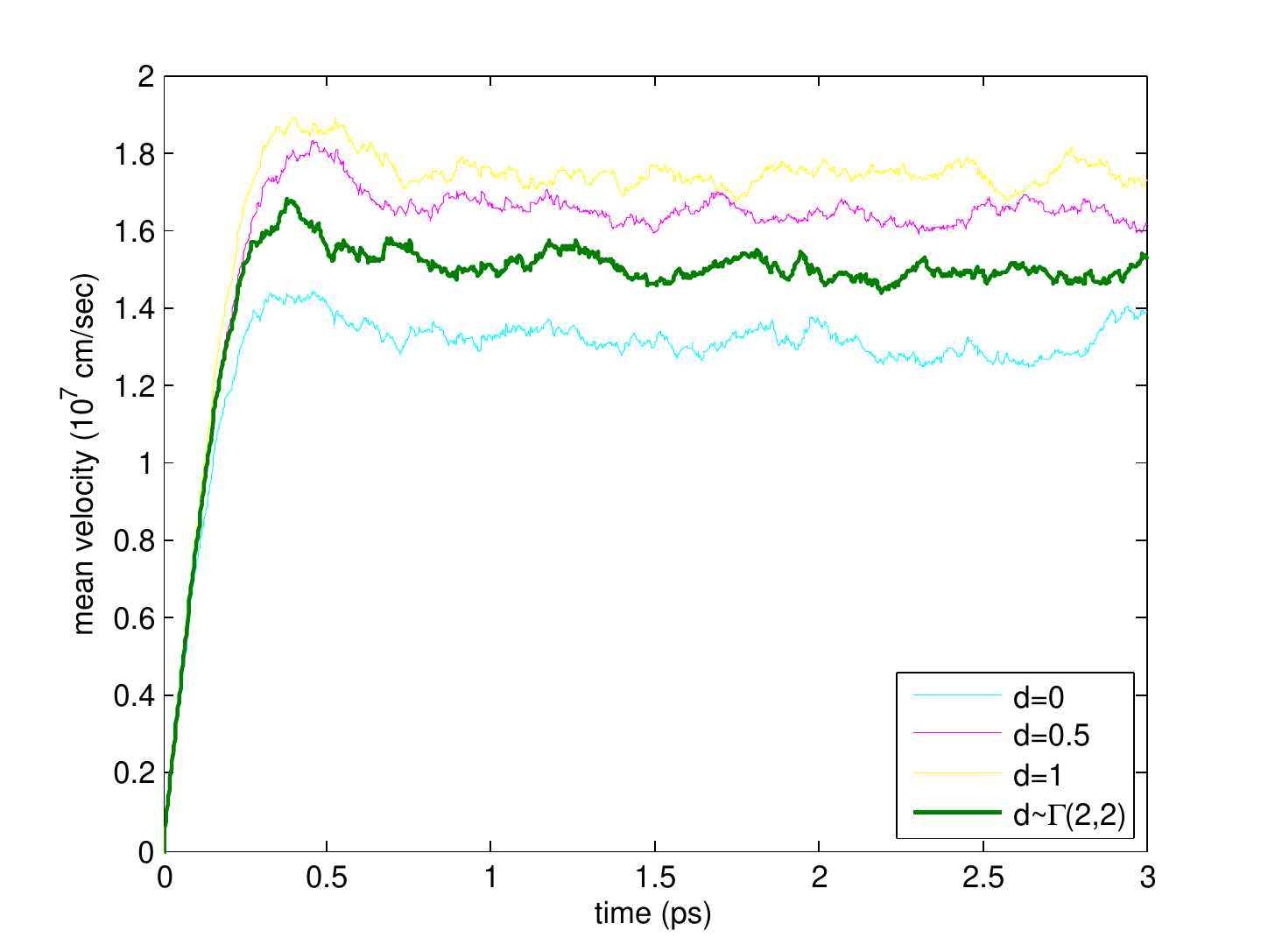}}
 	\caption{Energia (a) e velocità (b) medie ottenute con un campo elettrico applicato di $\si{\num{5}.\kilo\volt\per\centi\metre}$, un livello di Fermi pari a $\si{\num{0.6}.\electronvolt}$ e $d\sim\Gamma(2,2)$.}
\end{figure}
\begin{figure}[ht]
	\centering
	\subfigure[]
   		{\includegraphics[width=7.1cm]{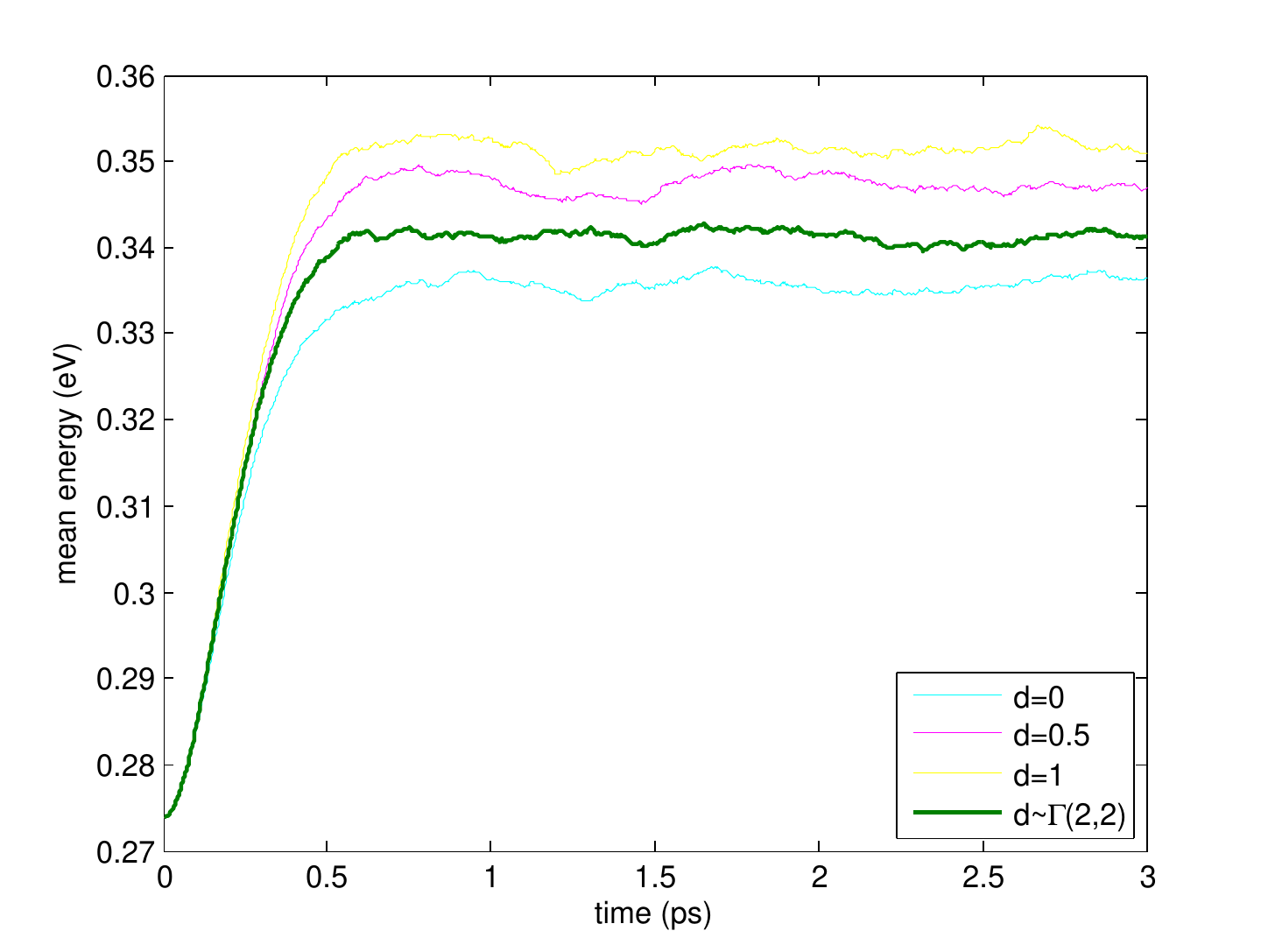}}
 	\,
 	\subfigure[]
   		{\includegraphics[width=7.1cm]{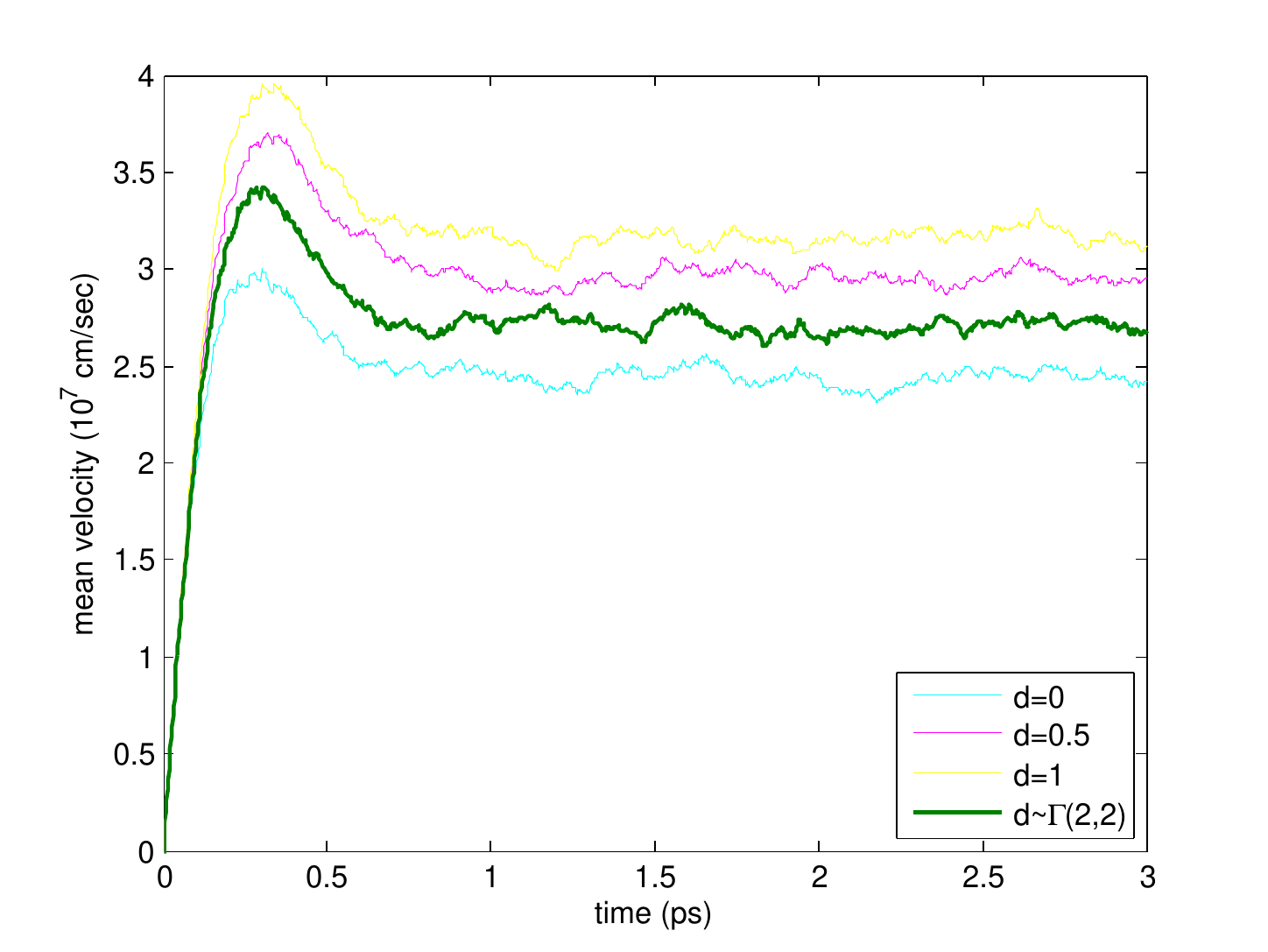}}
 	\caption{Energia (a) e velocità (b) medie ottenute con un campo elettrico applicato di $\si{\num{10}.\kilo\volt\per\centi\metre}$, un livello di Fermi pari a $\si{\num{0.4}.\electronvolt}$ e $d\sim\Gamma(2,2)$.}
\end{figure}
\begin{figure}[ht]
	\centering
	\subfigure[]
   		{\includegraphics[width=7.1cm]{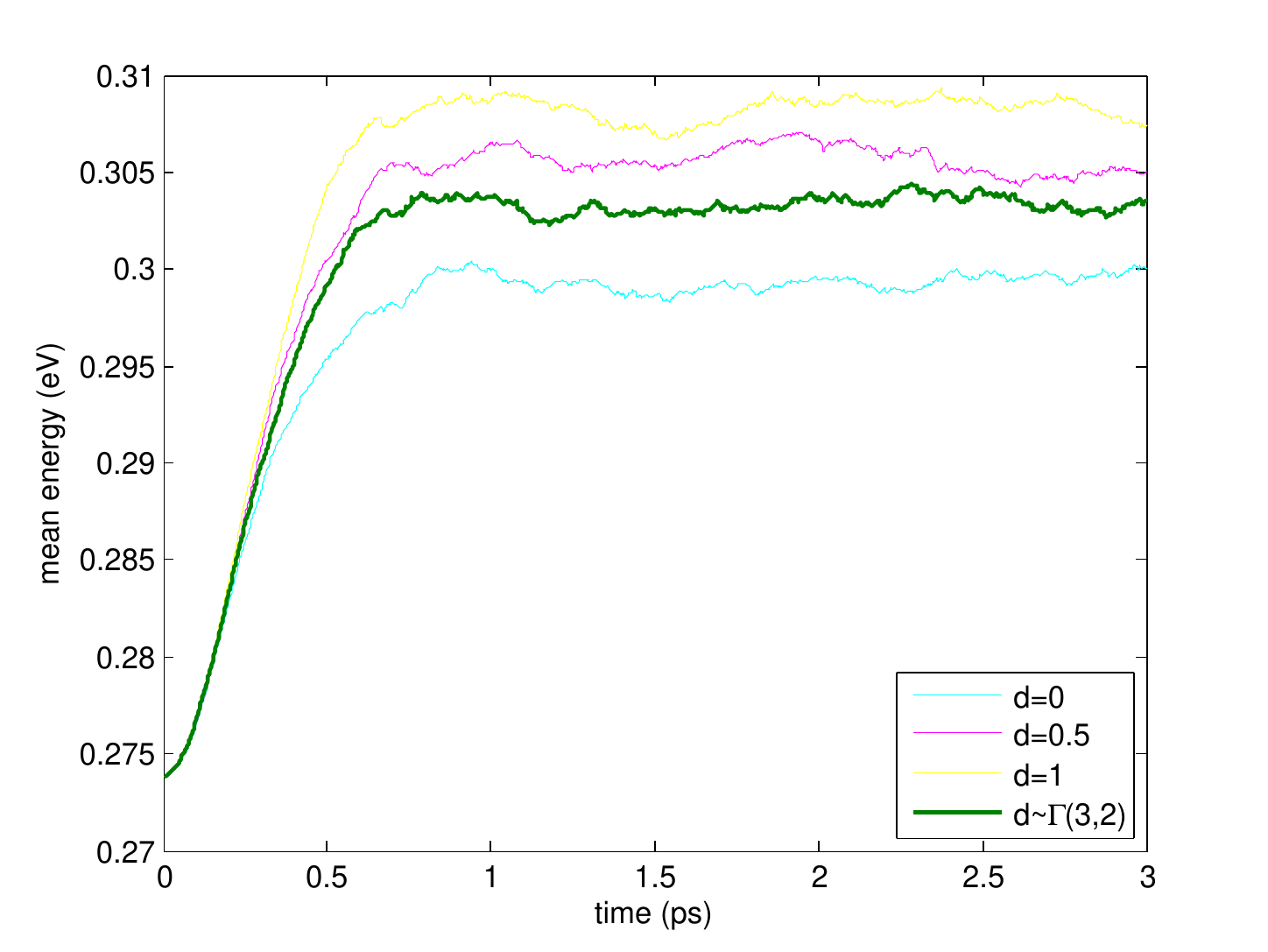}}
 	\,
 	\subfigure[]
   		{\includegraphics[width=7.1cm]{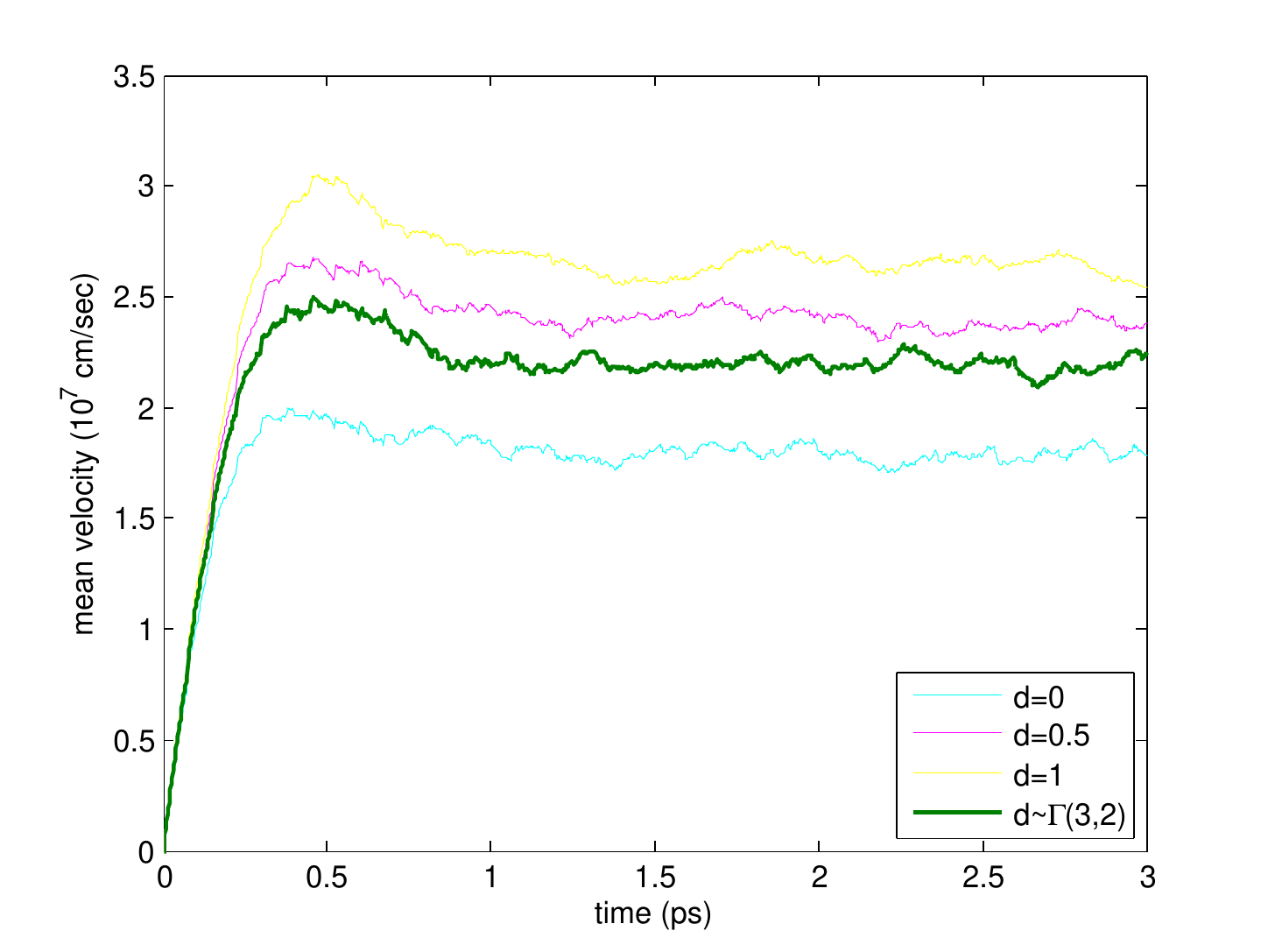}}
 	\caption{Energia (a) e velocità (b) medie ottenute con un campo elettrico applicato di $\si{\num{5}.\kilo\volt\per\centi\metre}$, un livello di Fermi pari a $\si{\num{0.4}.\electronvolt}$ e $d\sim\Gamma(3,2)$.}
\end{figure}
\begin{figure}[ht]
	\centering
	\subfigure[]
   		{\includegraphics[width=7.1cm]{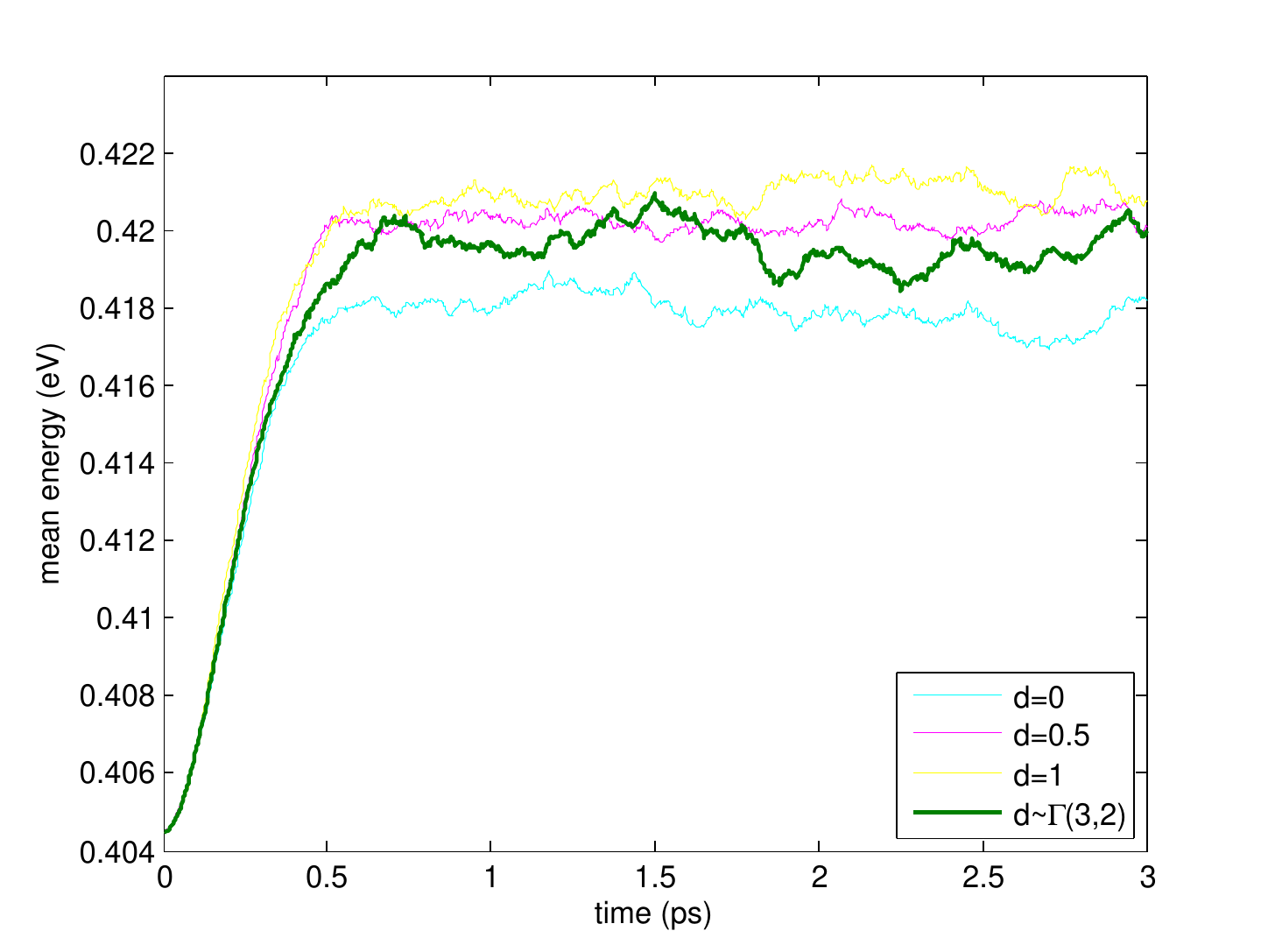}}
 	\,
 	\subfigure[]
   		{\includegraphics[width=7.1cm]{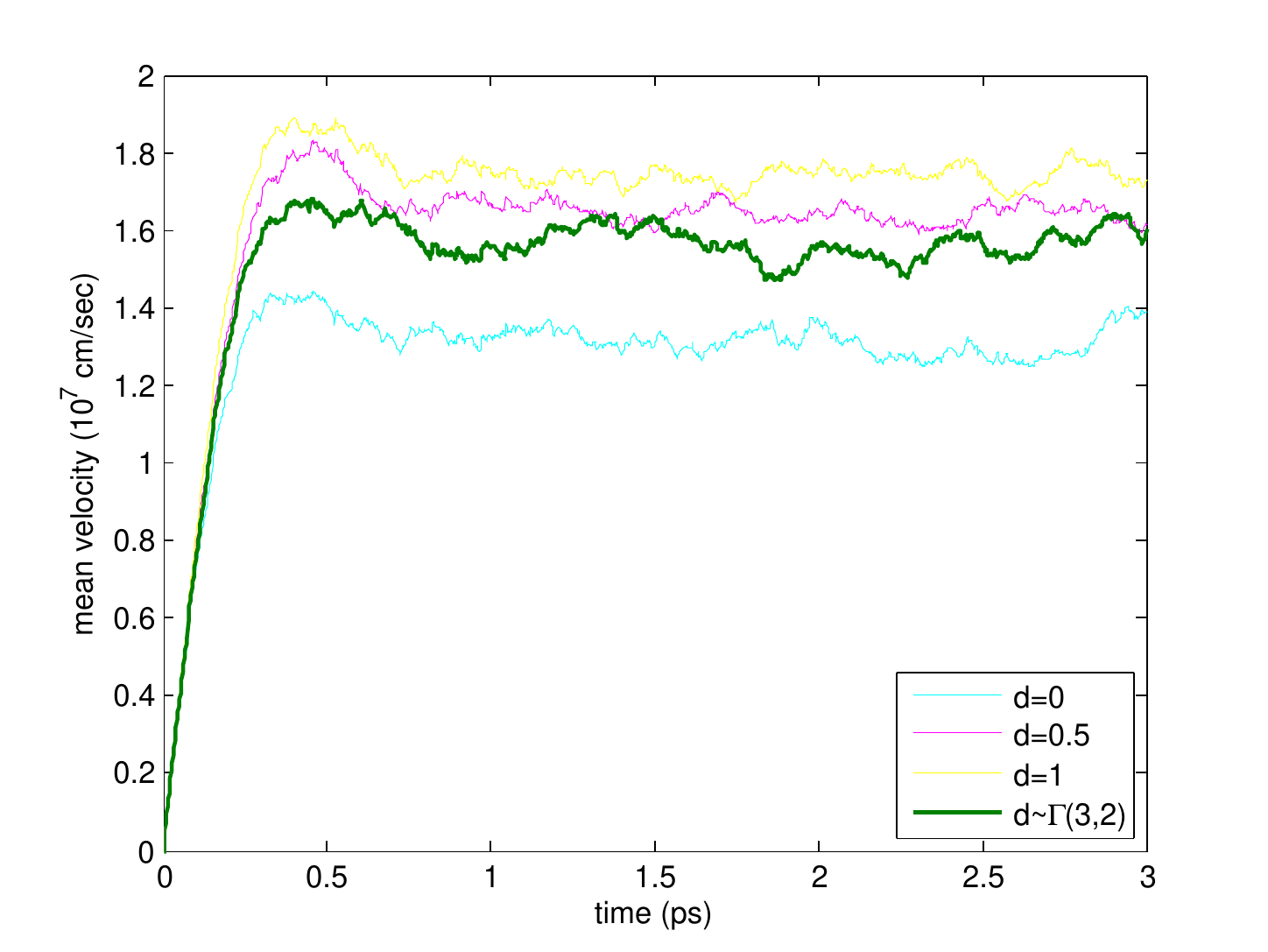}}
 	\caption{Energia (a) e velocità (b) medie ottenute con un campo elettrico applicato di $\si{\num{5}.\kilo\volt\per\centi\metre}$, un livello di Fermi pari a $\si{\num{0.6}.\electronvolt}$ e $d\sim\Gamma(3,2)$.}
\end{figure}
\begin{figure}[ht]
	\centering
	\subfigure[]
   		{\includegraphics[width=7.1cm]{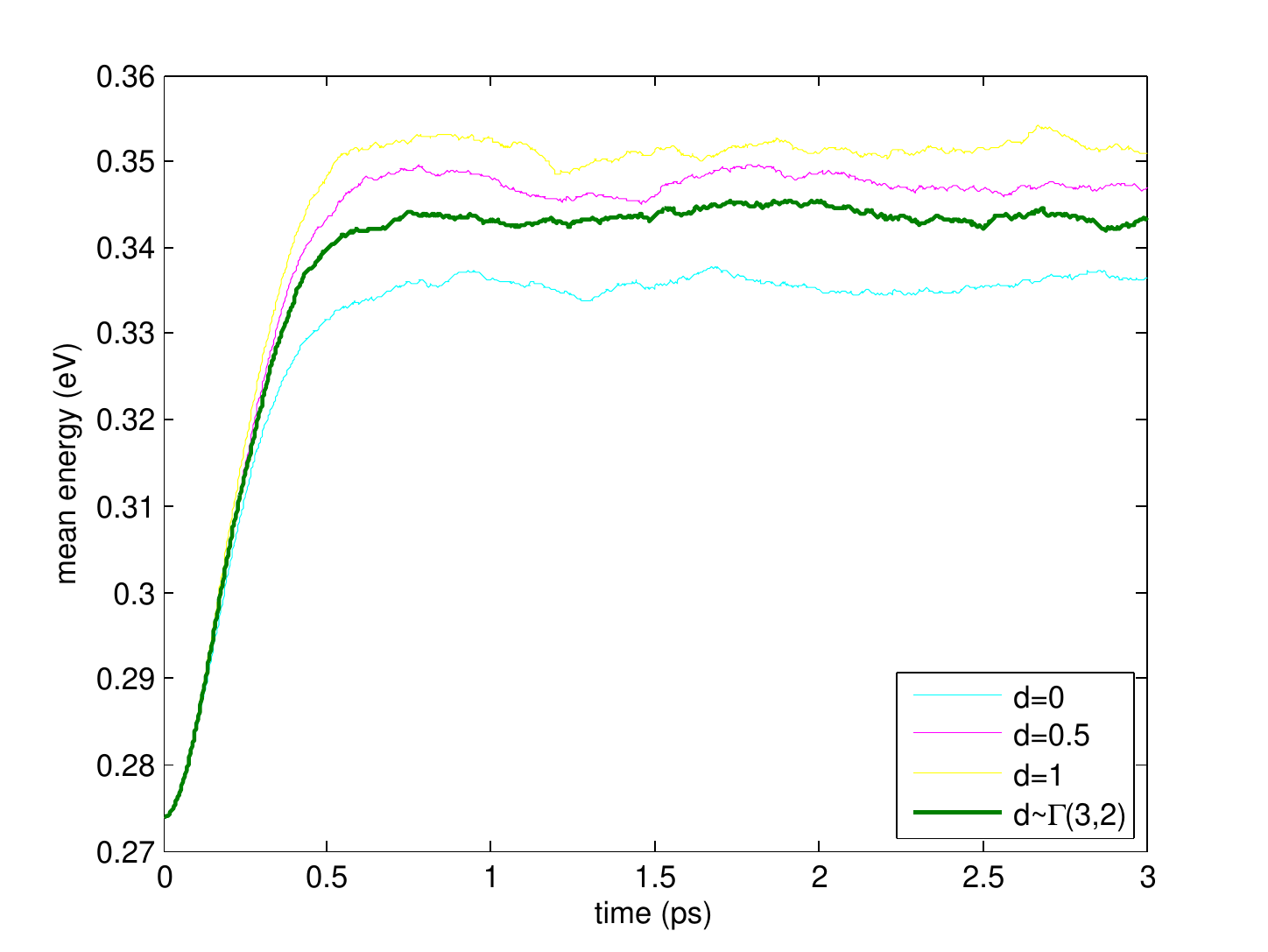}}
 	\,
 	\subfigure[]
   		{\includegraphics[width=7.1cm]{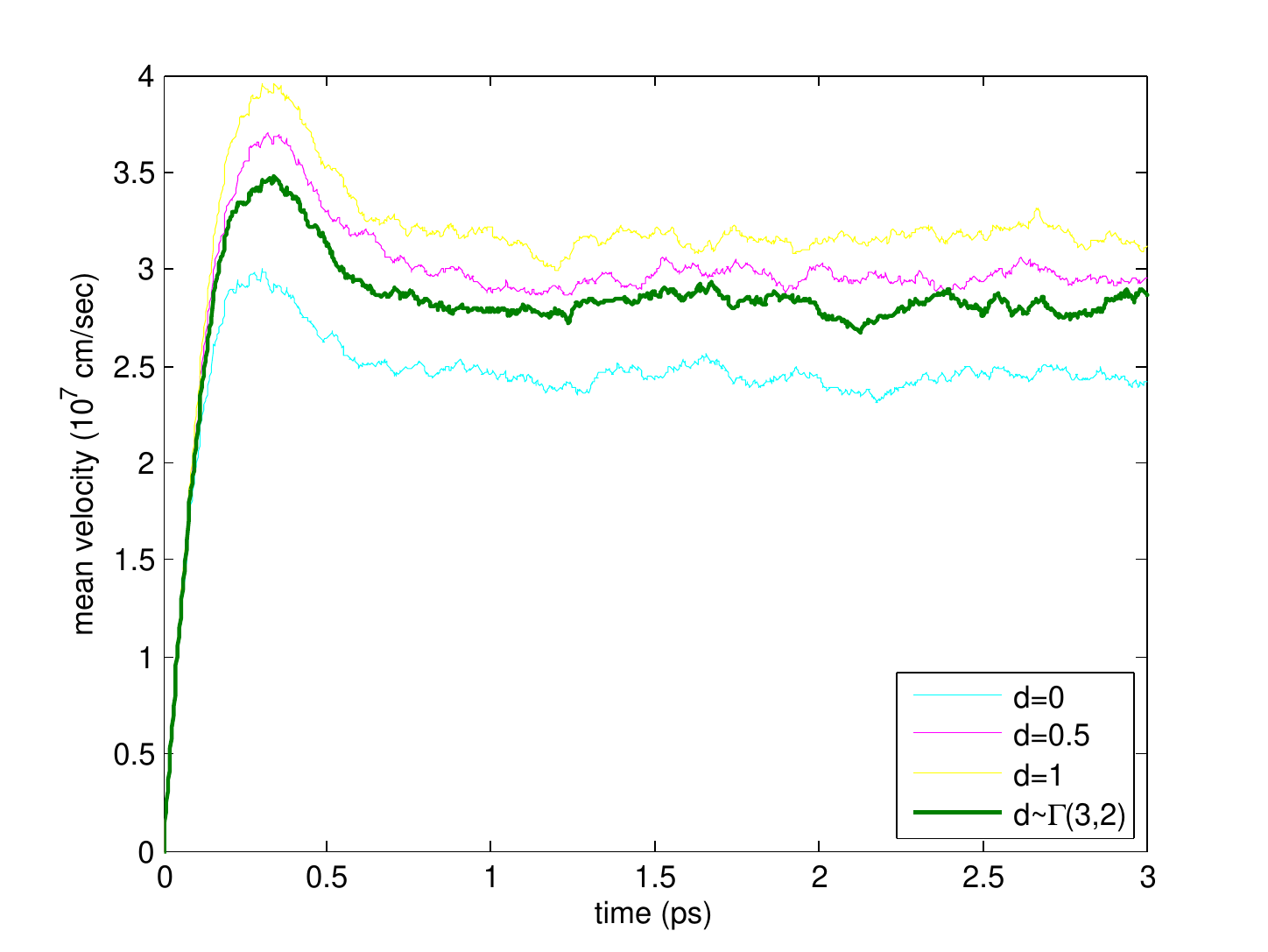}}
 	\caption{Energia (a) e velocità (b) medie ottenute con un campo elettrico applicato di $\si{\num{10}.\kilo\volt\per\centi\metre}$, un livello di Fermi pari a $\si{\num{0.4}.\electronvolt}$ e $d\sim\Gamma(3,2)$.}
\end{figure}
\begin{figure}[ht]
	\centering
	\subfigure[]
   		{\includegraphics[width=7.1cm]{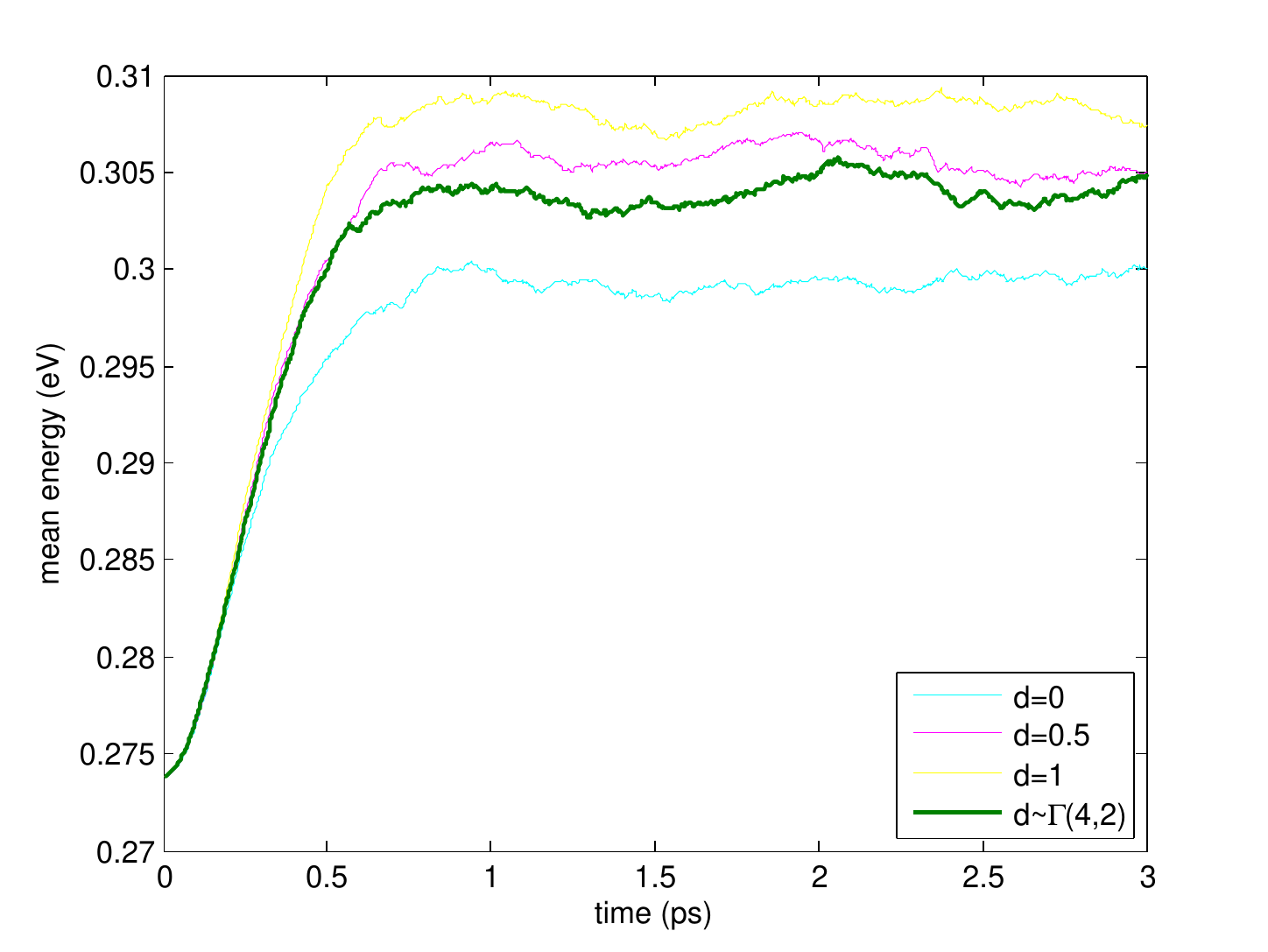}}
 	\,
 	\subfigure[]
   		{\includegraphics[width=7.1cm]{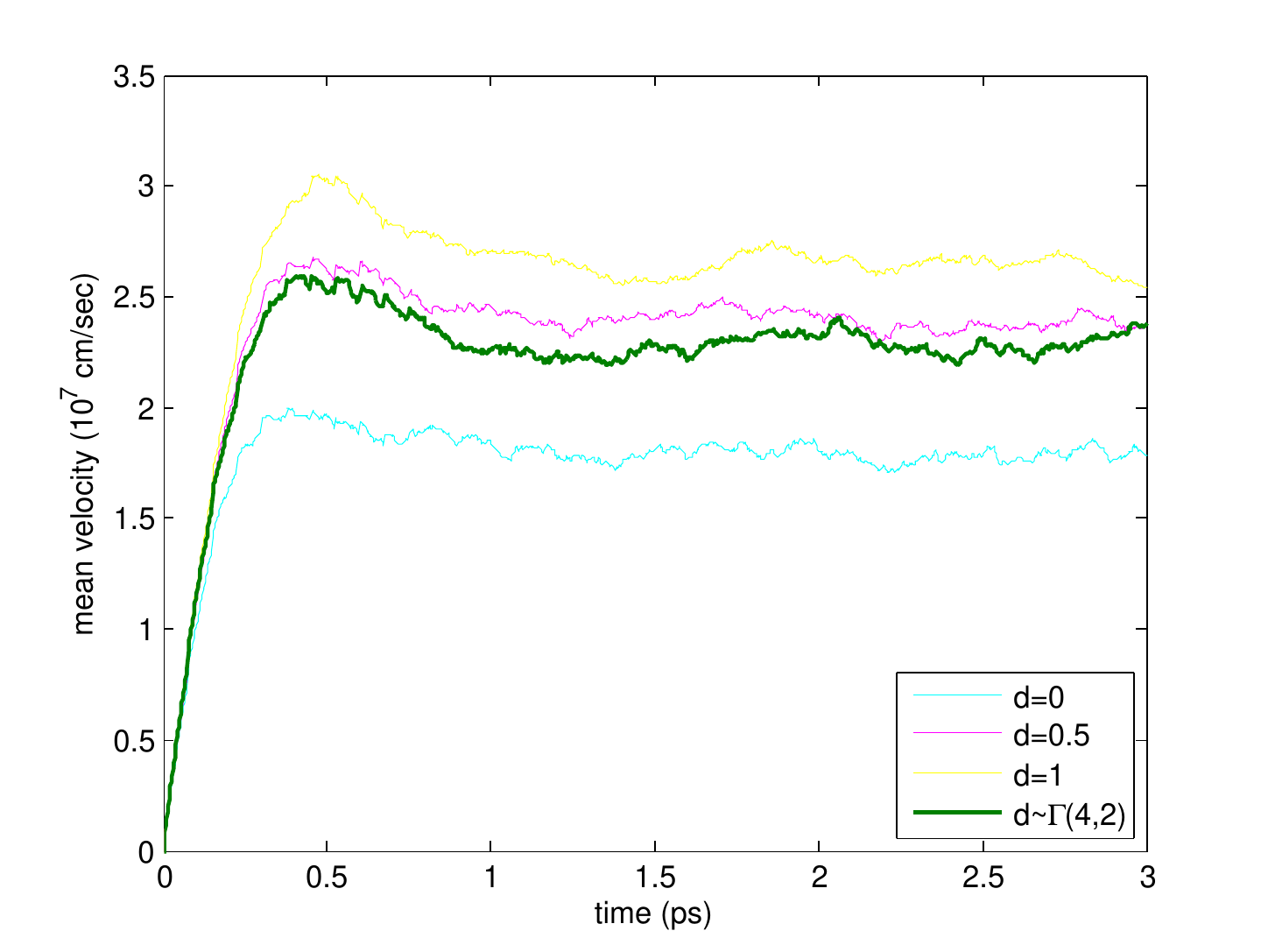}}
 	\caption{Energia (a) e velocità (b) medie ottenute con un campo elettrico applicato di $\si{\num{5}.\kilo\volt\per\centi\metre}$, un livello di Fermi pari a $\si{\num{0.4}.\electronvolt}$ e $d\sim\Gamma(4,2)$.}
\end{figure}
\begin{figure}[ht]
	\centering
	\subfigure[]
   		{\includegraphics[width=7.1cm]{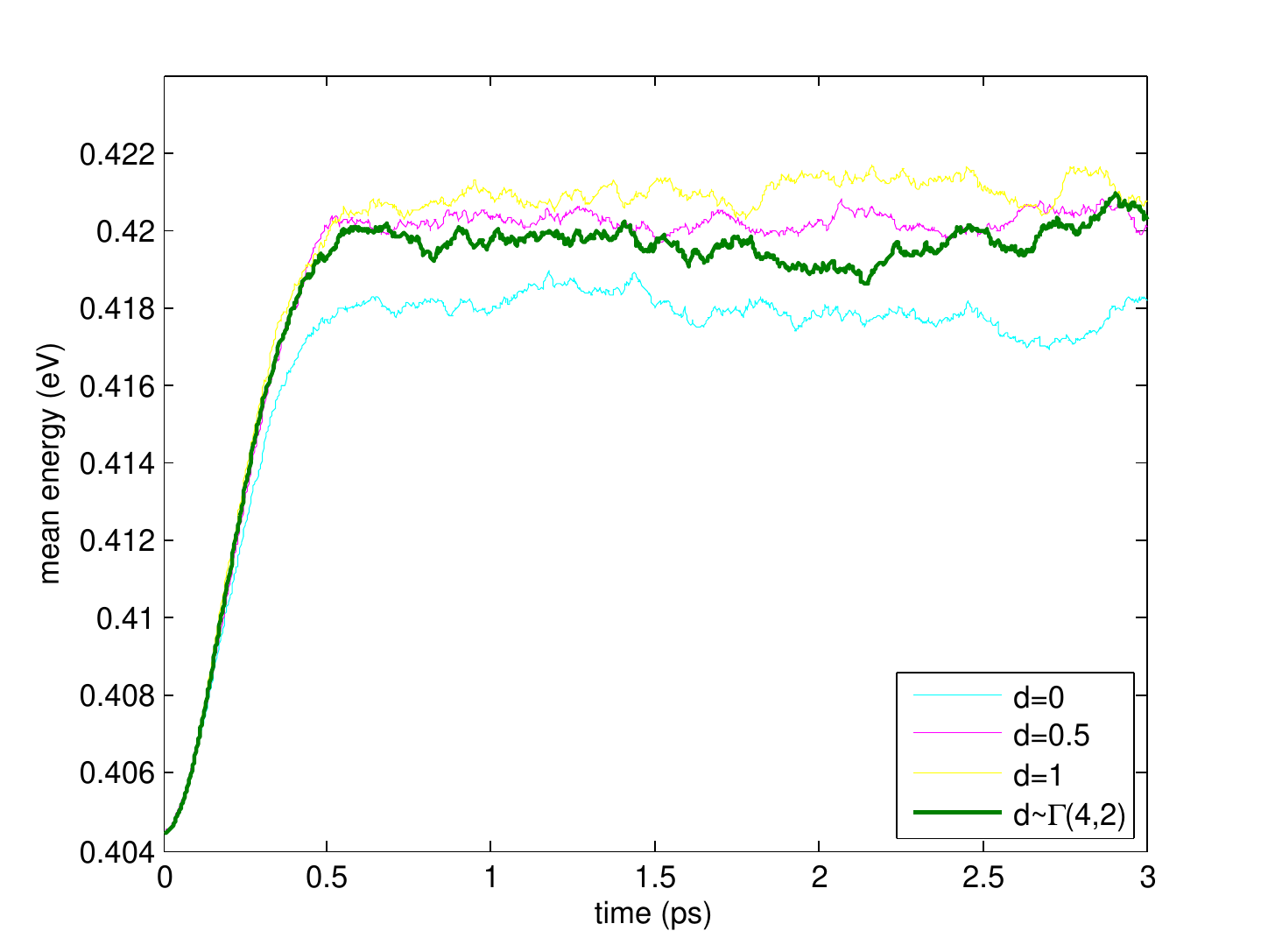}}
 	\,
 	\subfigure[]
   		{\includegraphics[width=7.1cm]{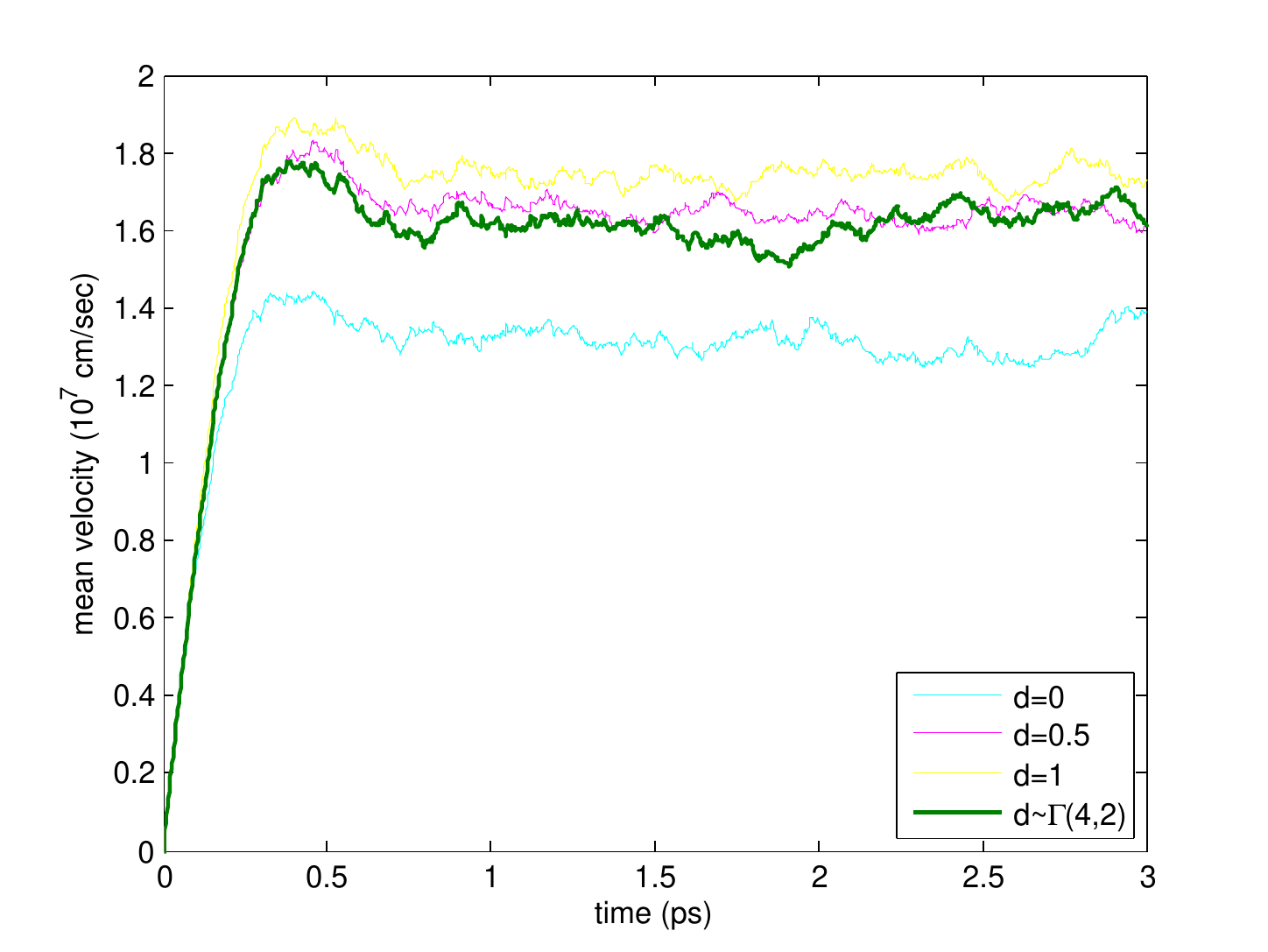}}
 	\caption{Energia (a) e velocità (b) medie ottenute con un campo elettrico applicato di $\si{\num{5}.\kilo\volt\per\centi\metre}$, un livello di Fermi pari a $\si{\num{0.6}.\electronvolt}$ e $d\sim\Gamma(4,2)$.}
\end{figure}
\begin{figure}[ht]
	\centering
	\subfigure[]
   		{\includegraphics[width=7.1cm]{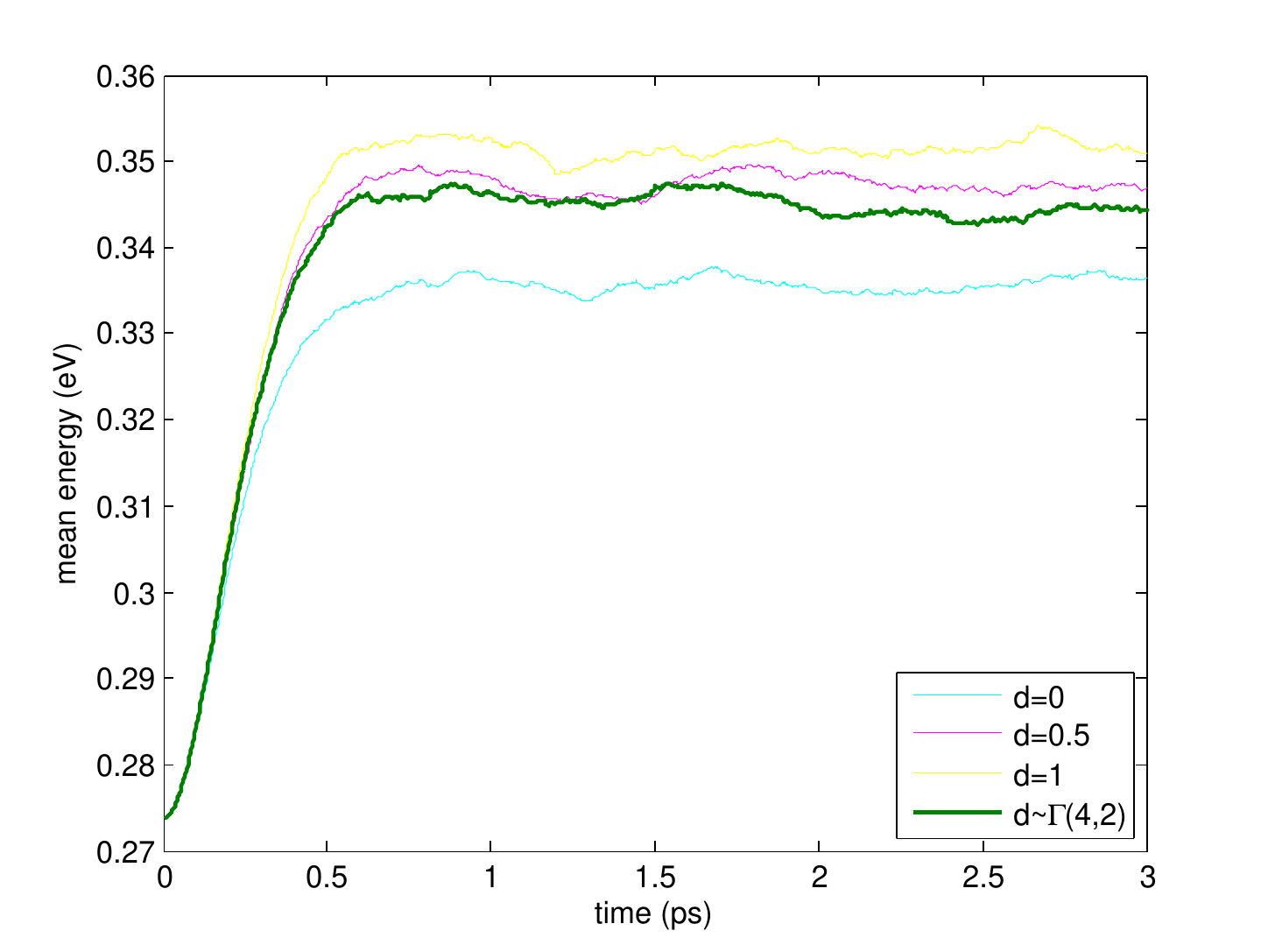}}
 	\,
 	\subfigure[]
   		{\includegraphics[width=7.1cm]{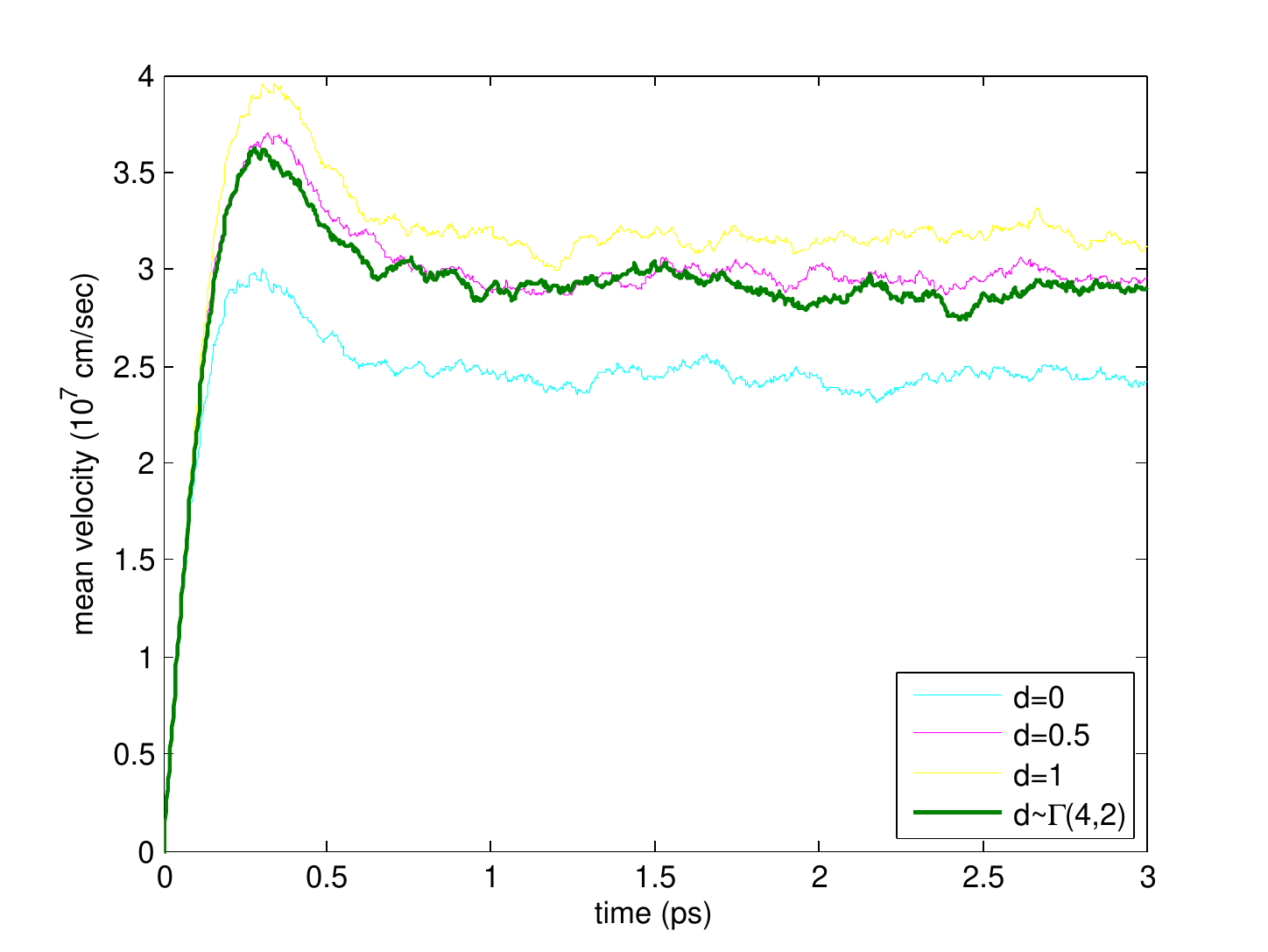}}
 	\caption{Energia (a) e velocità (b) medie ottenute con un campo elettrico applicato di $\si{\num{10}.\kilo\volt\per\centi\metre}$, un livello di Fermi pari a $\si{\num{0.4}.\electronvolt}$ e $d\sim\Gamma(4,2)$.}
\end{figure}
\FloatBarrier
\FloatBarrier
\begin{figure}[ht]
	\centering
	\subfigure[]
   		{\includegraphics[width=7.1cm]{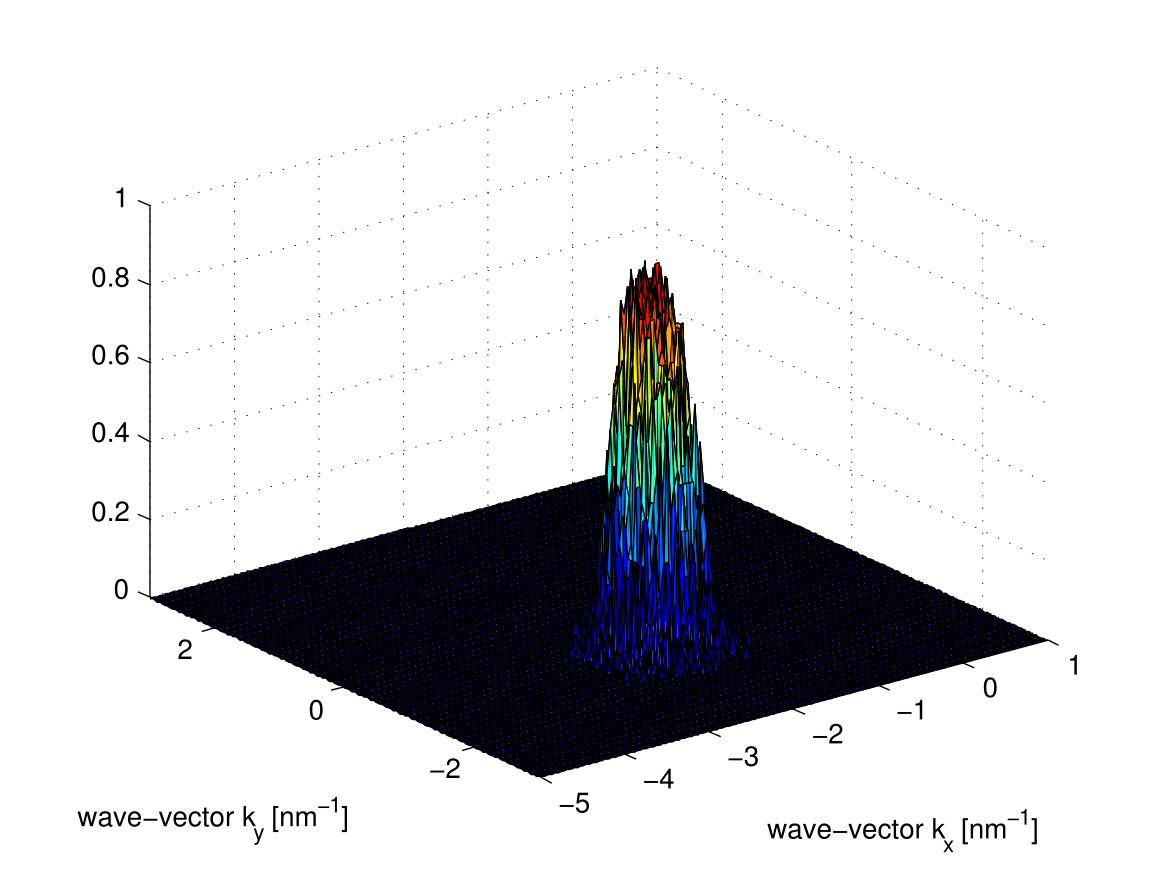}}
 	\,
 	\subfigure[]
   		{\includegraphics[width=7.1cm]{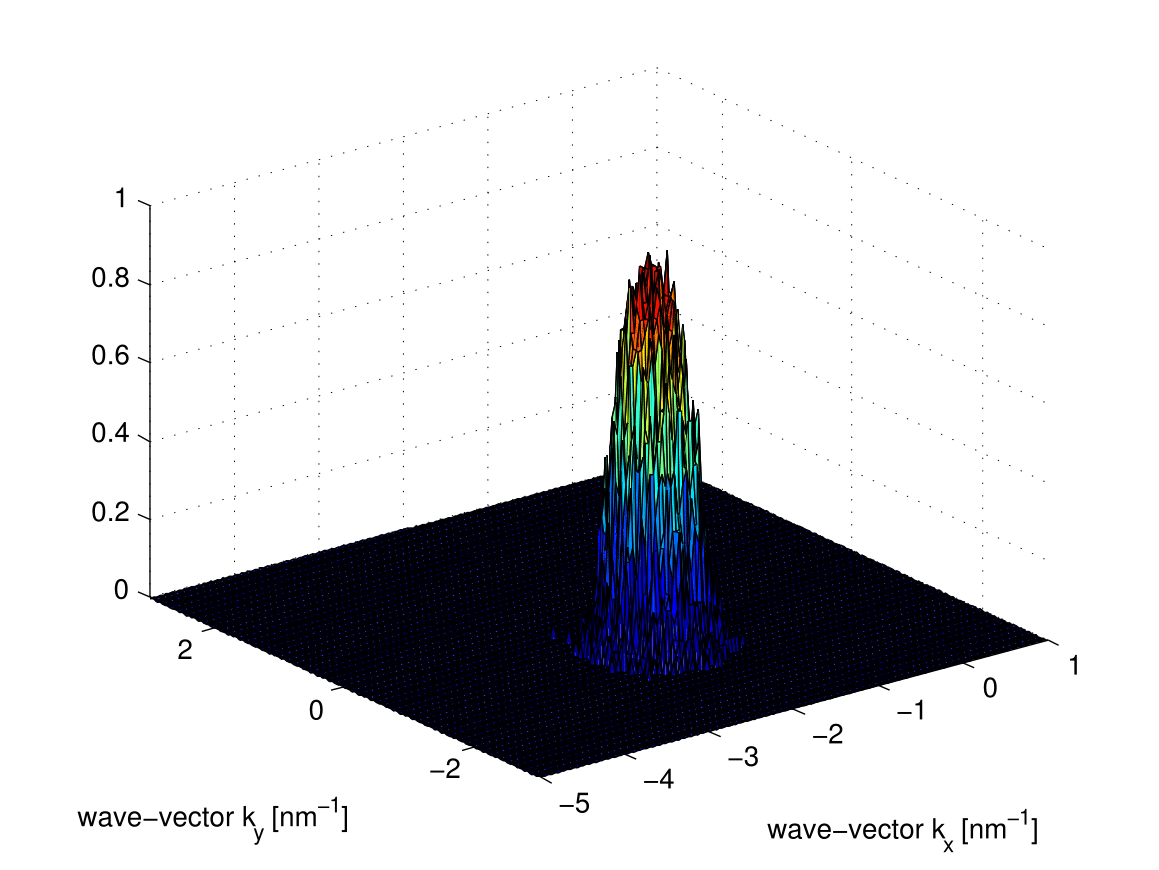}}
	\\
   	\subfigure[]
   		{\includegraphics[width=7.1cm]{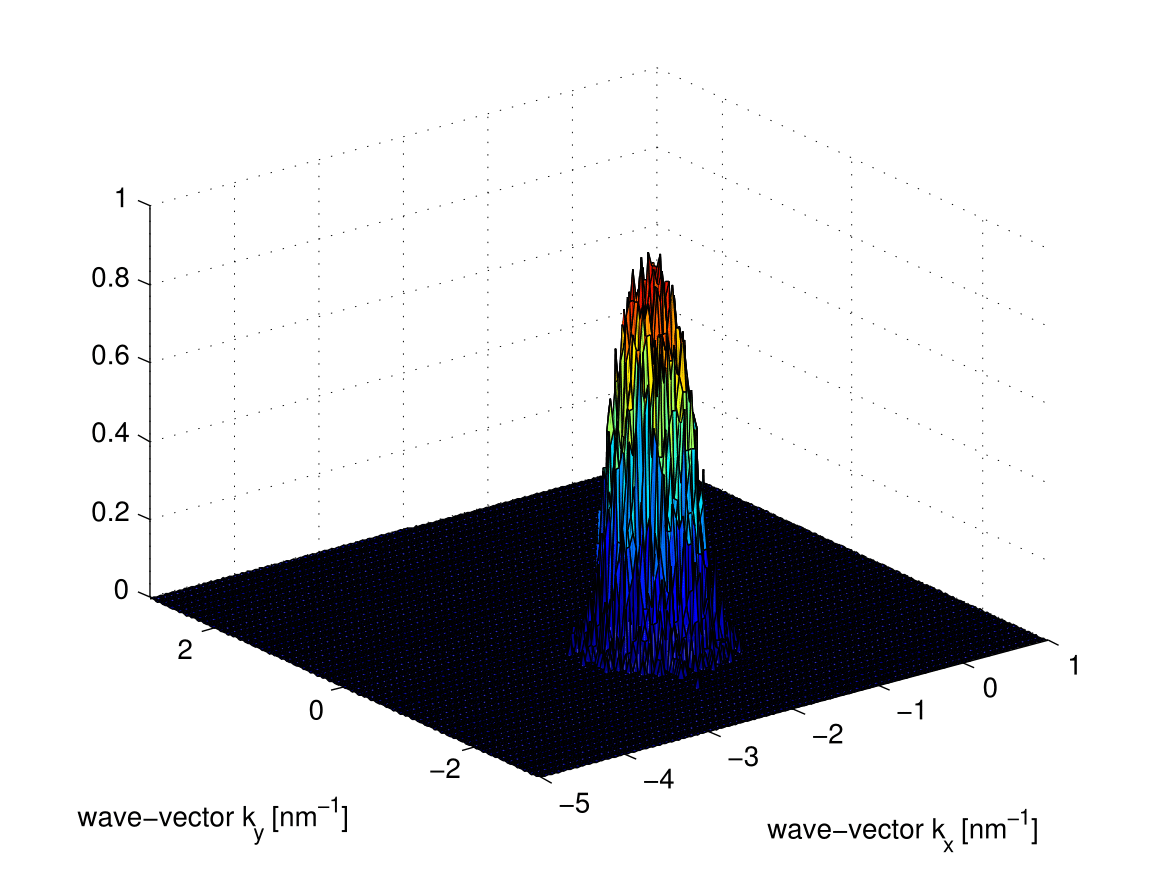}}
   	\,
   	\subfigure[]
   		{\includegraphics[width=7.1cm]{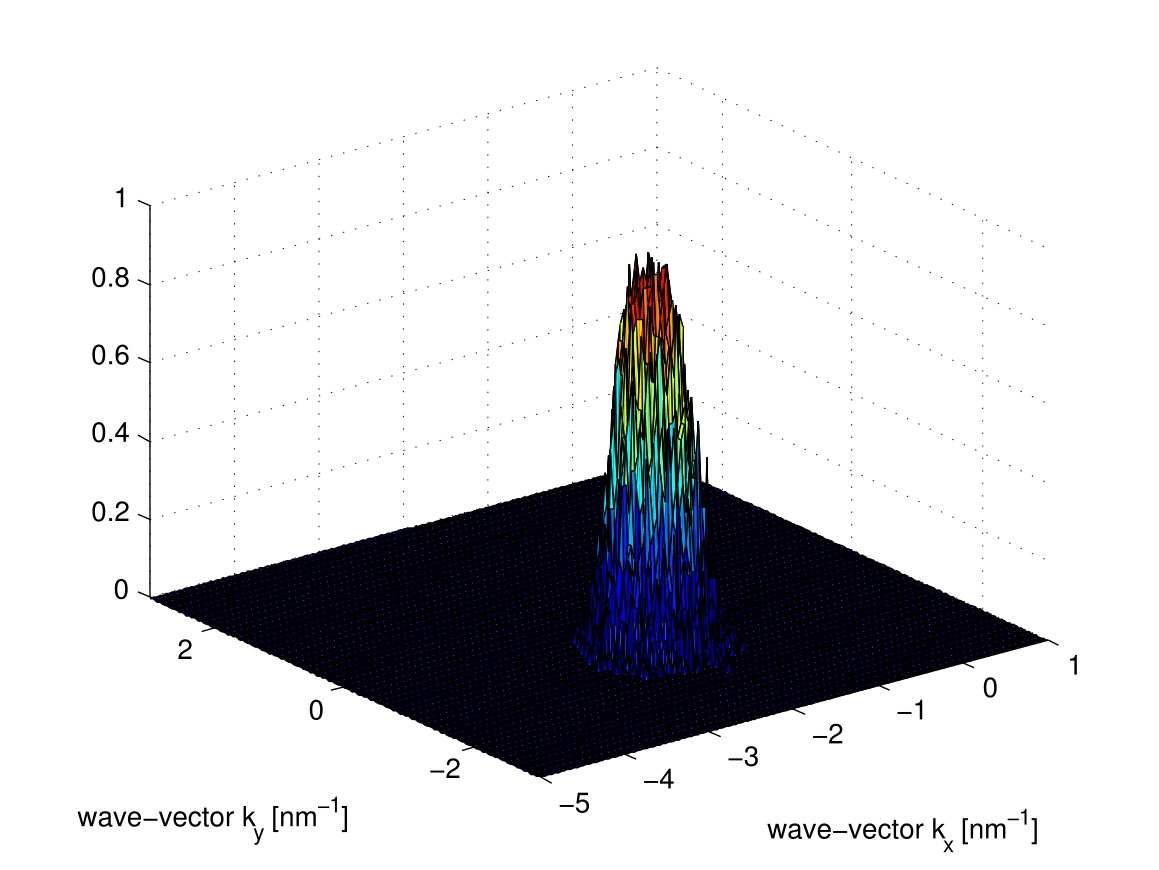}}
 	\caption{Grafici delle funzioni di distribuzione degli elettroni, ottenuti con un campo elettrico applicato di $\si{\num{10}.\kilo\volt\per\centi\metre}$ e un livello di Fermi pari a $\si{\num{0.4}.\electronvolt}$. Inoltre (a) è stato ottenuto con la scelta uniforme per $d$, (b), (c) e (d) sono stati ottenuti scegliendo $\Gamma(k,2)$ con $k=2,3,4$, rispettivamente.}
\label{FIG:CAP3:DSMC_new_SiO2_distrib}
\end{figure}
\FloatBarrier

\subsection{Risultati numerici su \texorpdfstring{$\ce{h-BN}$}{h-BN} e \texorpdfstring{$\ce{HfO2}$}{HfO2}}
Si è voluta ripetere l'analisi illustrata nei Paragrafi \ref{PAR:CAP3:Subs_1} e \ref{PAR:CAP3:Subs_2} utilizzando come substrati, al posto del $\ce{SiO2}$, il nitruro di boro esagonale ($\ce{h-BN}$) e il biossido di afnio ($\ce{HfO2}$). Studiare questi due substrati è importante soprattutto per le applicazioni tecnologiche (cfr.~\citep{ART:Hirai2014}). Il confronto dei risultati si basa sull'adattamento del modello per il $\ce{SiO2}$ agli altri materiali, considerando i diversi parametri fisici che entrano in gioco. Le costanti comuni ai tre materiali sono riportati nella Tabella \ref{TAB:CAP3:Cost_fis_sospeso}. Per quanto riguarda i parametri relativi agli scattering con il substrato, si osserva che il valore del potenziale di deformazione per il $\ce{h-BN}$ è incognito ed è noto che varia da $\si{\num{e6}.\electronvolt\per\centi\metre}$ a $\si{\num{1.29e9}.\electronvolt\per\centi\metre}$. Comunque la mobilità degli elettroni risulta indipendente da questo valore (cfr.~\citep{ART:Hirai2014}), di conseguenza nei risultati mostrati si è scelto di utilizzare il valore più grande. Il grafene su $\ce{HfO2}$ presenta una mobilità che è di circa tre ordini di grandezza inferiore rispetto agli altri materiali (cfr.~\citep{ART:Hirai2014}). Tale riduzione della mobilità è dovuta alla bassa energia dei fononi ottici, pari a $\si{\num{12.4}.\milli\electronvolt}$, e al suo grande potenziale di deformazione, pari a $\si{\num{1.29e9}.\electronvolt\centi\metre^{-1}}$. Tutte le costanti fisiche che caratterizzano i tre substrati sono riportati nella Tabella~\ref{TAB:CAP3:Cost_fis_subs}.
\FloatBarrier
\begin{table}[!ht]
\centering
\begin{tabular}{c|ccc}
\multicolumn{1}{c|}{\textbf{Parametro fisico}} & \multicolumn{3}{c}{\textbf{Materiale}} \\
\hline
\multicolumn{1}{c|}{} & \multicolumn{1}{c}{$\ce{SiO2}$} & \multicolumn{1}{c}{$\ce{HfO2}$} & \multicolumn{1}{c}{$\ce{h-BN}$} \\
\cline{2-4}
$\hbar\omega\ped{op}$ ($\si{\milli\electronvolt}$) & $\si{\num{55}}$ & $\si{\num{12.4}}$ & $\si{\num{200}}$ \\
$D_f$ ($\si{\electronvolt\per\centi\metre}$) & $\si{\num{5.14e7}}$ & $\si{\num{1.29e9}}$ & $\si{\num{1.0e6}-\num{1.29e9}}$ \\
$n\ped{imp}$ ($\si{\centi\metre^{-2}}$) & $\si{\num{2.5e11}}$ & $\ldots$ & $\si{\num{2.5e10}}$ \\
\end{tabular}
\caption{Costanti fisiche dei vari substrati utilizzati per le simulazioni.}
\label{TAB:CAP3:Cost_fis_subs}
\end{table}
\FloatBarrier
In \citep{ART:CoMajNaRo} sono state studiate in primo luogo le prestazioni generali nei tre materiali, confrontando la velocità media per tre diversi valori di $d$, considerati costanti. \`{E} evidente dalla Figura~\ref{FIG:CAP3:DSMC_comp_3_subs} che nel caso del $\ce{HfO2}$ si ha una forte riduzione della velocità media con un conseguente decadimento della mobilità. Ciò rende questo materiale non adatto ad essere usato come substrato per i dispositivi elettronici. Pertanto nel seguito si è voluto scartare il $\ce{HfO2}$ dall'analisi.
\FloatBarrier
\begin{figure}[ht]
	\centering
	\subfigure[]
   		{\includegraphics[width=7.1cm]{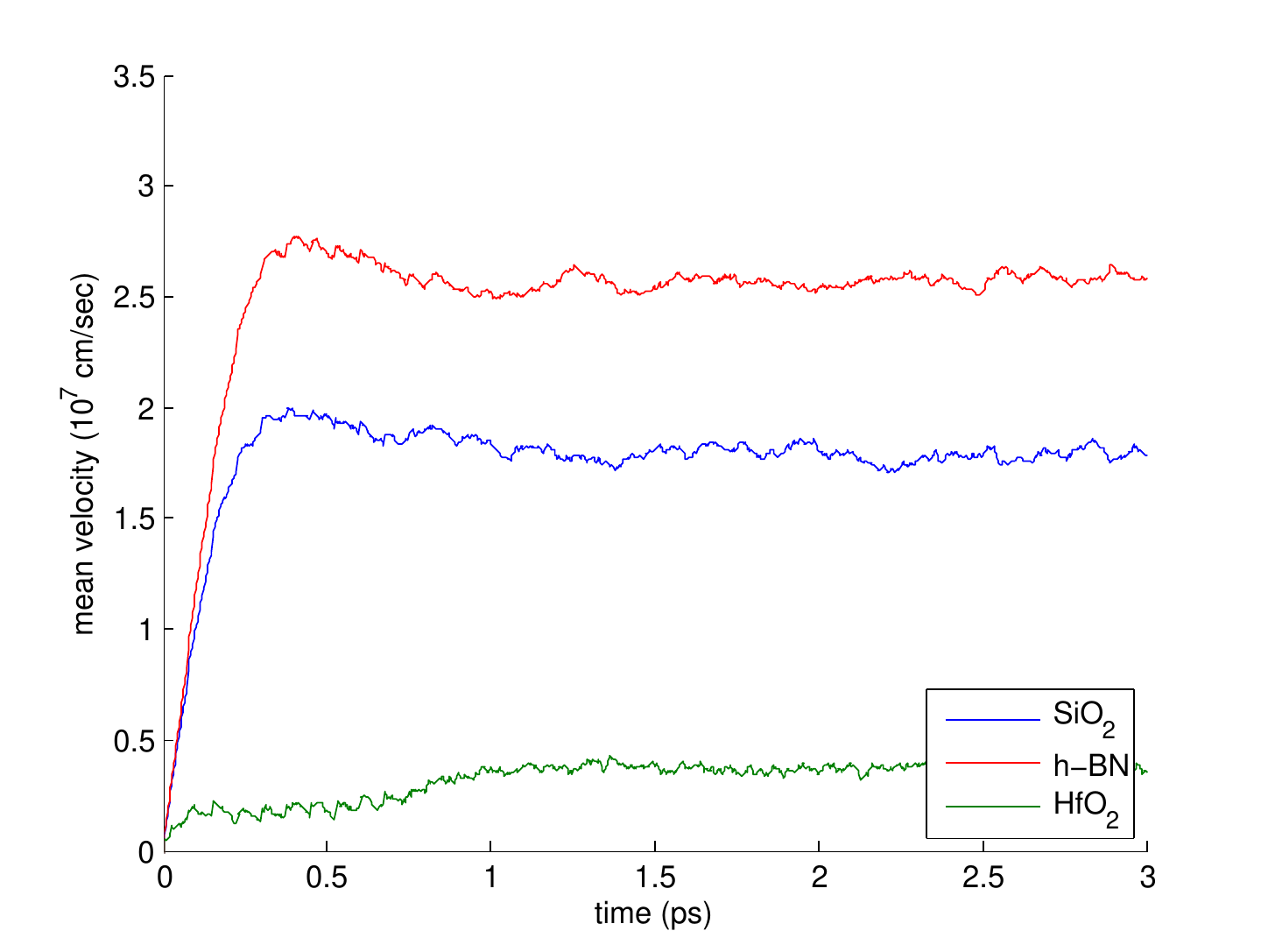}}
 	\,
 	\subfigure[]
   		{\includegraphics[width=7.1cm]{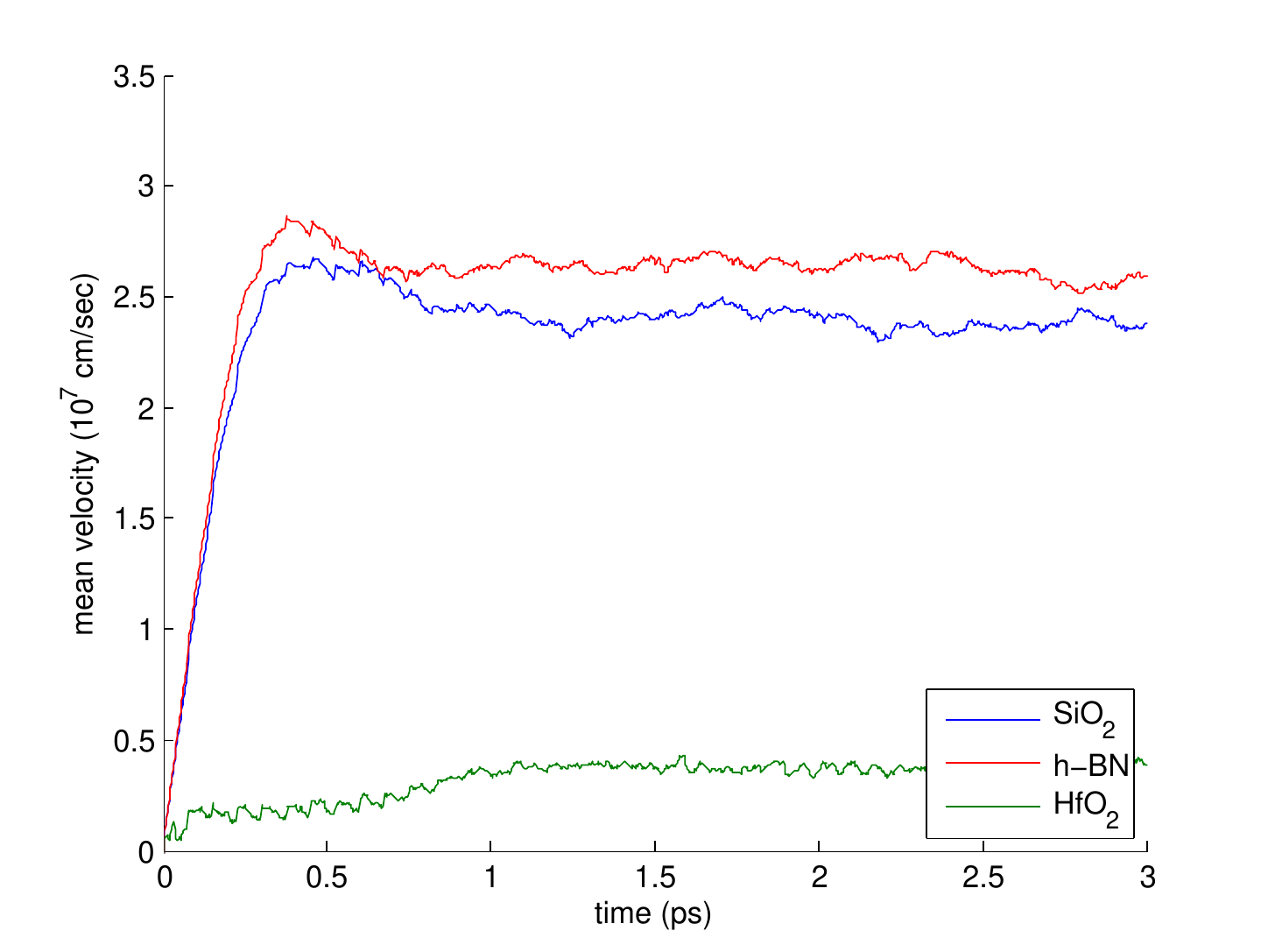}}
	\\
   	\subfigure[]
   		{\includegraphics[width=7.1cm]{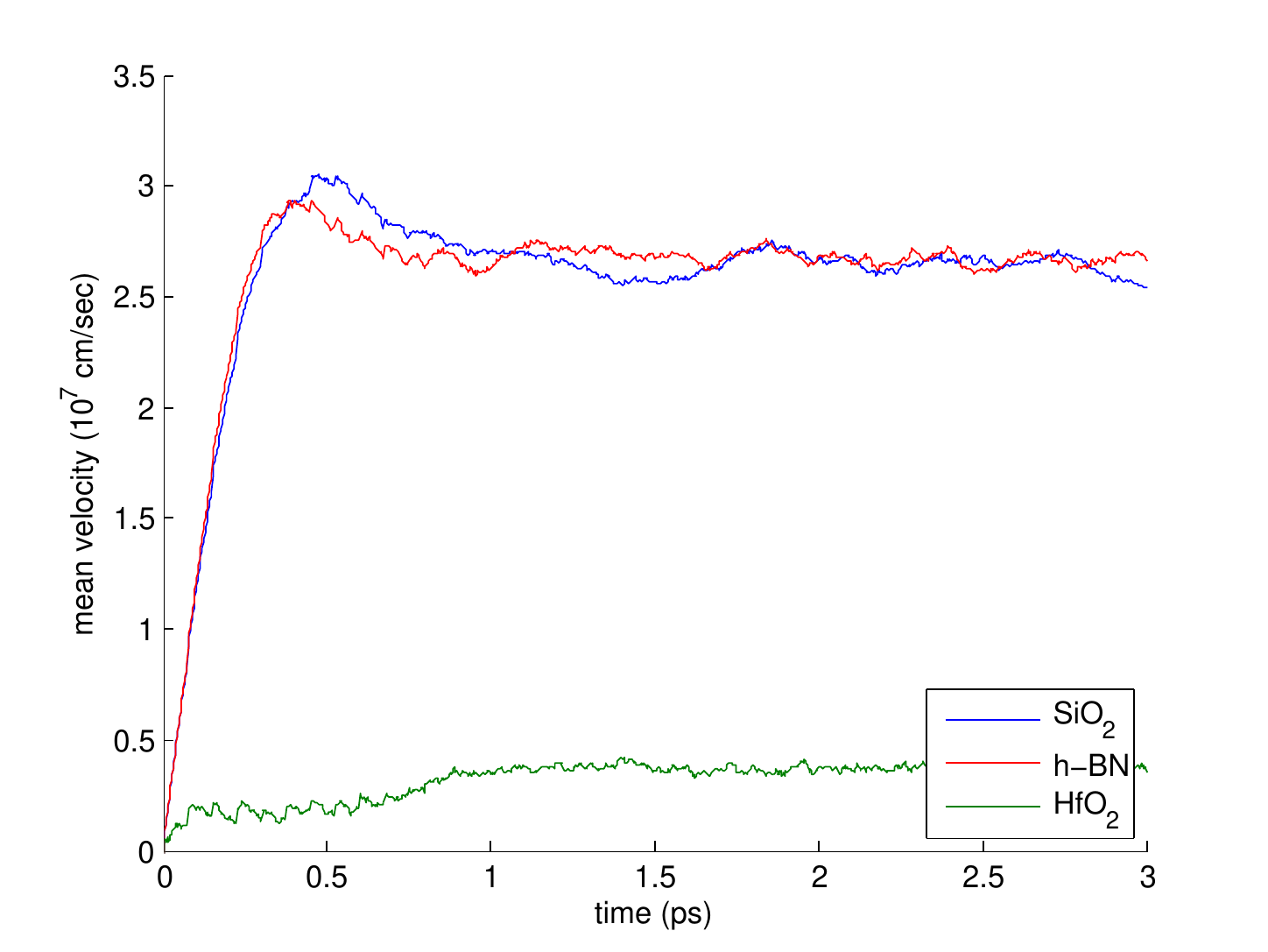}}
 	\caption{Confronto delle velocità medie rispetto al tempo per $d=\si{\num{0}.\nano\metre}$ (a), $d=\si{\num{0.5}.\nano\metre}$ (b) e $d=\si{\num{1}.\nano\metre}$ (c), nel caso di un campo elettrico applicato di $\si{\num{5}.\kilo\volt\per\centi\metre}$ e un livello di Fermi pari a $\si{\num{0.4}.\electronvolt}$.}
\label{FIG:CAP3:DSMC_comp_3_subs}
\end{figure}
\FloatBarrier
\`{E} possibile osservare che i valori della velocità media diminuiscono riducendo la distanza $d$ dalle impurezze del materiale, confermando il decadimento della mobilità dovuta al substrato come conseguenza diretta degli scattering aggiuntivi con le impurezze. Per valori elevati di $d$ la velocità nei due casi, ottenuti utilizzando il $\ce{SiO2}$ e il $\ce{h-BN}$, è confrontabile. Per valori intermedi di $d$ il $\ce{h-BN}$ si comporta meglio rispetto al $\ce{SiO2}$. I precedenti risultati non tengono conto del fatto che la distanza dalle impurezze è in realtà variabile nel substrato. Per ovviare a questo inconveniente si è operato come nel Paragrafo \ref{PAR:CAP3:Subs_1} introducendo per la distanza $d$ sia il modello uniforme che il modello in cui sono usate le leggi Gamma. I grafici seguenti mostrano i risultati ottenuti.
\FloatBarrier
\begin{figure}[ht]
	\centering
	\subfigure[]
   		{\includegraphics[width=7.1cm]{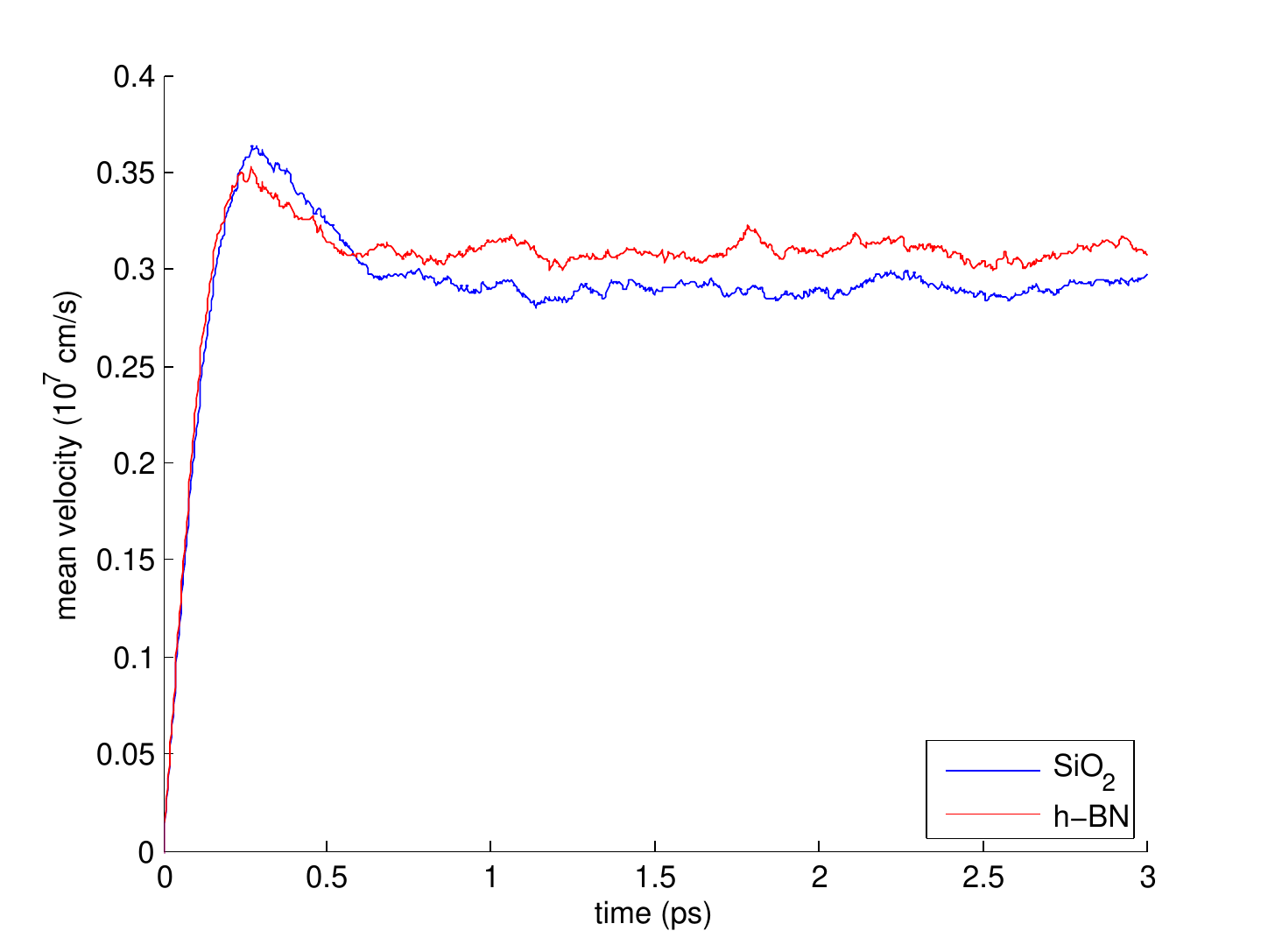}}
 	\,
 	\subfigure[]
   		{\includegraphics[width=7.1cm]{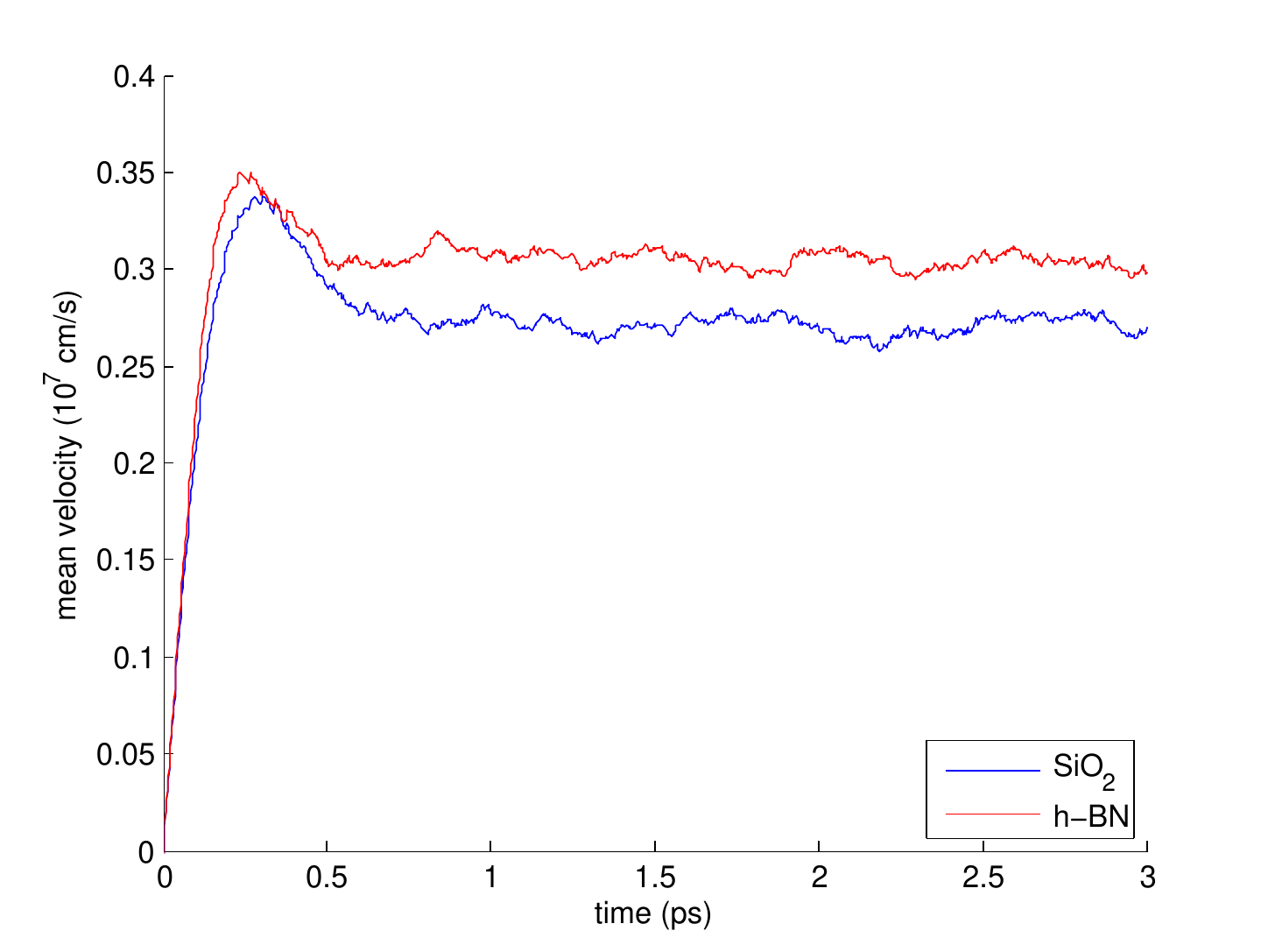}}
	\\
   	\subfigure[]
   		{\includegraphics[width=7.1cm]{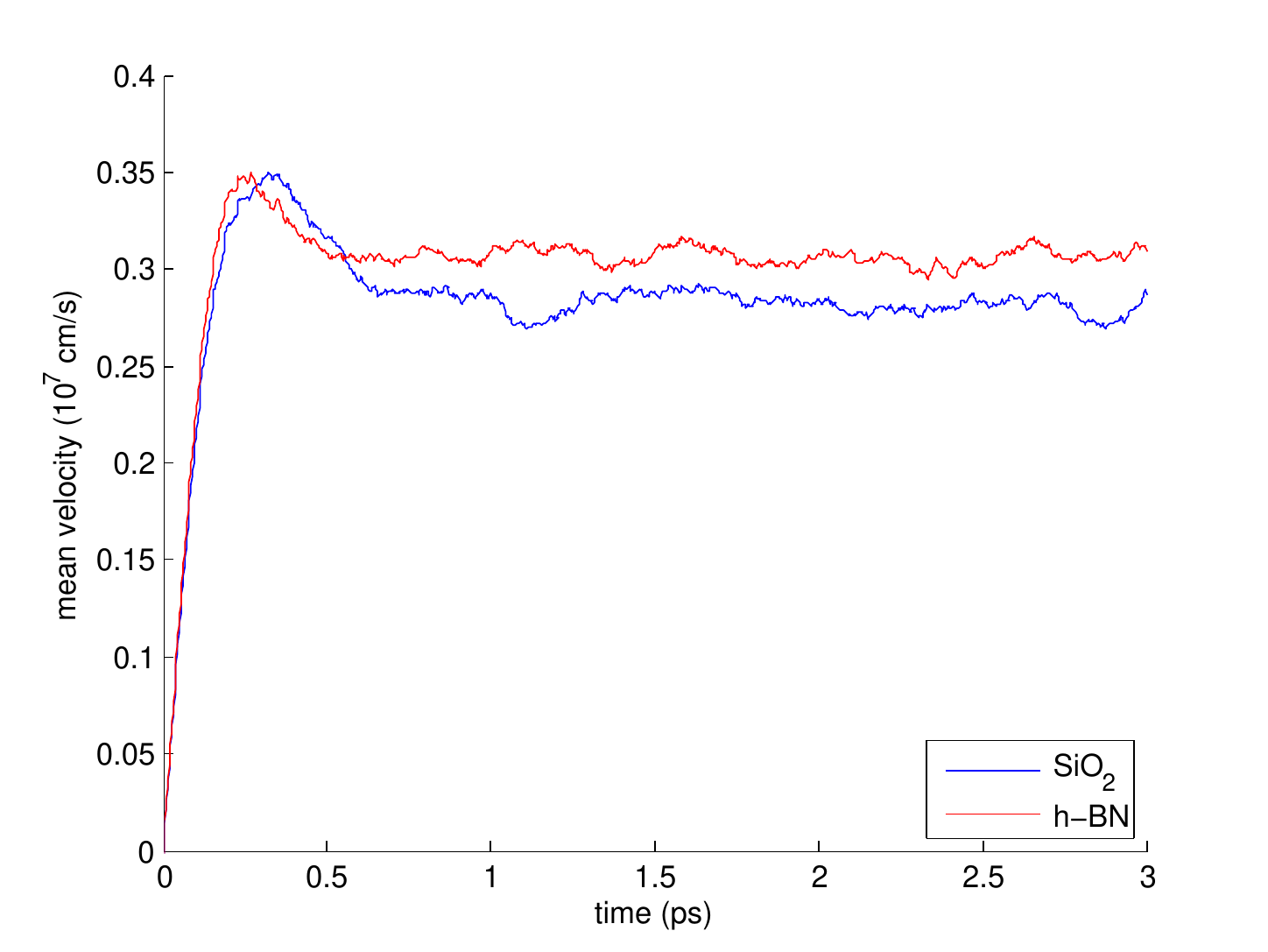}}
   	\,
   	\subfigure[]
   		{\includegraphics[width=7.1cm]{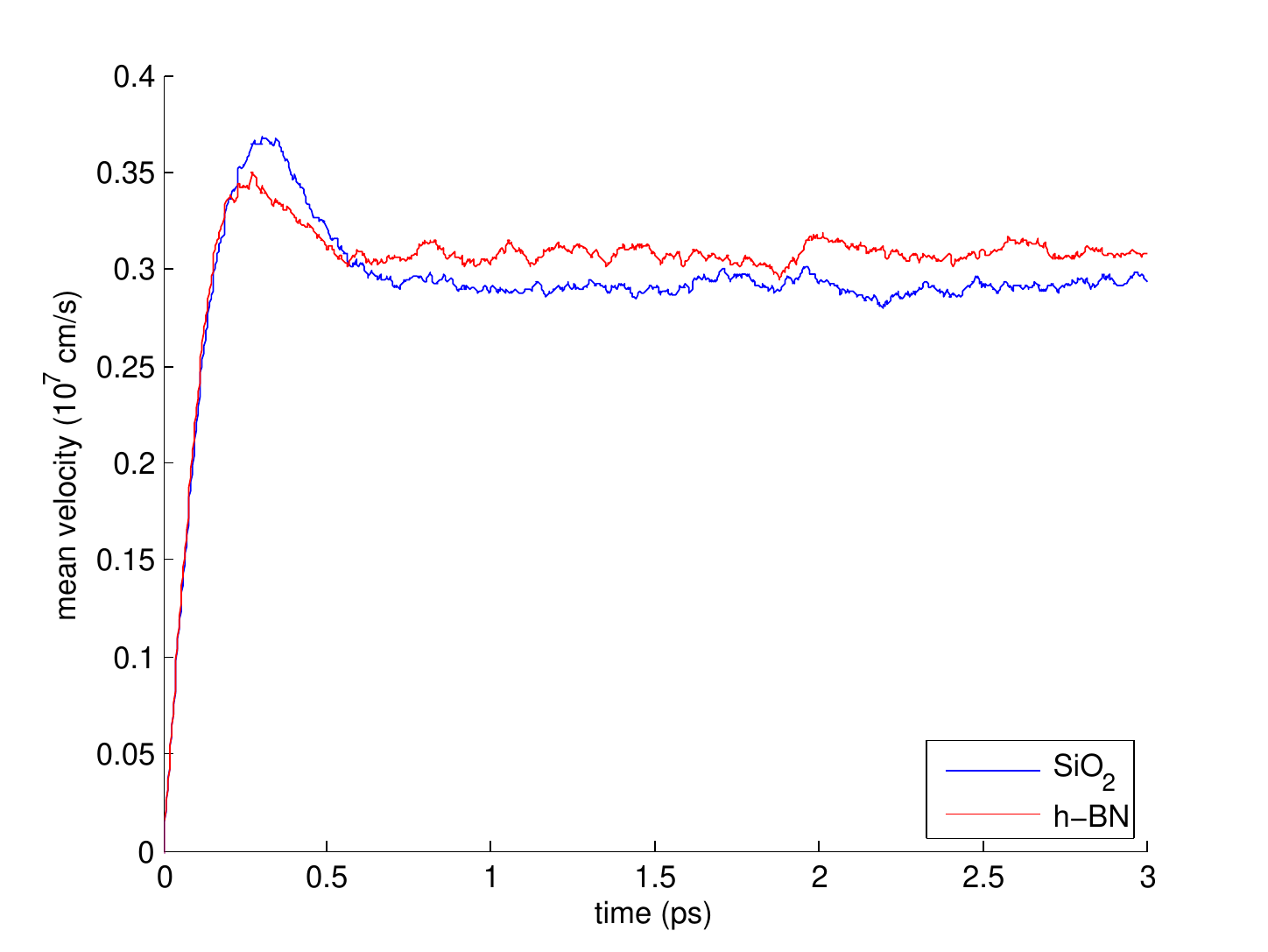}}
 	\caption{Confronto delle velocità medie rispetto al tempo nel caso di un campo elettrico applicato di $\si{\num{10}.\kilo\volt\per\centi\metre}$ e un livello di Fermi pari a $\si{\num{0.4}.\electronvolt}$, considerando diverse distribuzioni per la distanza $d$: $U([0,1])$ (a), $\Gamma(2,2)$ (b), $\Gamma(3,2)$ (c) e $\Gamma(4,2)$ (d).}
\label{FIG:CAP3:DSMC_comp_2_subs}
\end{figure}
\FloatBarrier
Si osservi che, all'aumentare del parametro $k$ delle leggi $\Gamma(k,2)$, le soluzioni di entrambi i substrati tendono a quelle del caso intrinseco ma per $k=2$ il $\ce{h-BN}$ ha un andamento migliore rispetto al caso con $d$ costante. Nei grafici seguenti è mostrato ancora una volta che il principio di esclusione di Pauli viene rispettato. Viene rappresentata la distribuzione degli elettroni considerando un campo elettrico applicato di $\si{\num{10}.\kilo\volt\per\centi\metre}$ e un livello di Fermi pari a $\si{\num{0.4}.\electronvolt}$.
\FloatBarrier
\begin{figure}[ht]
	\centering
	\subfigure[]
   		{\includegraphics[width=7.1cm]{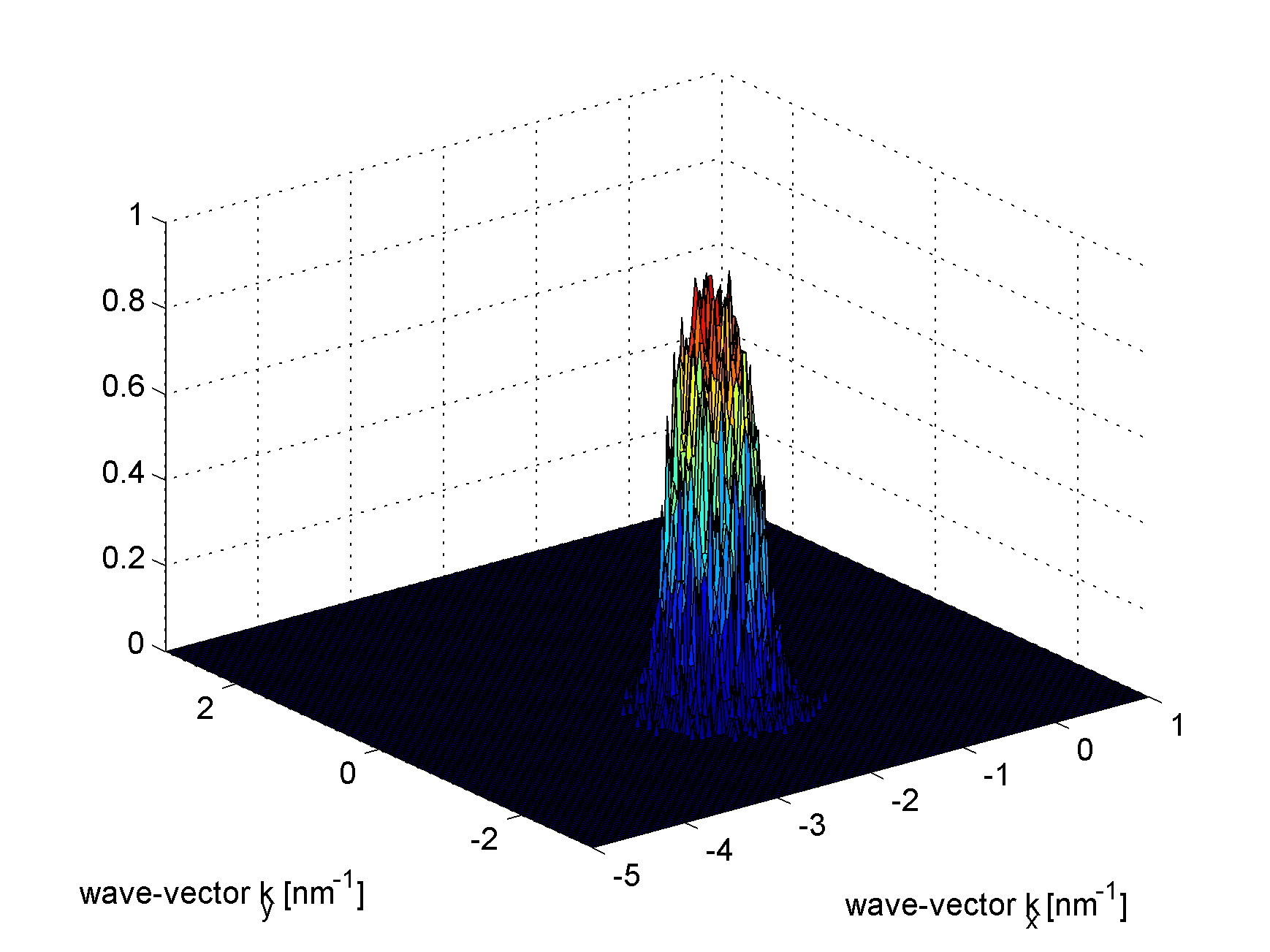}}
 	\,
 	\subfigure[]
   		{\includegraphics[width=7.1cm]{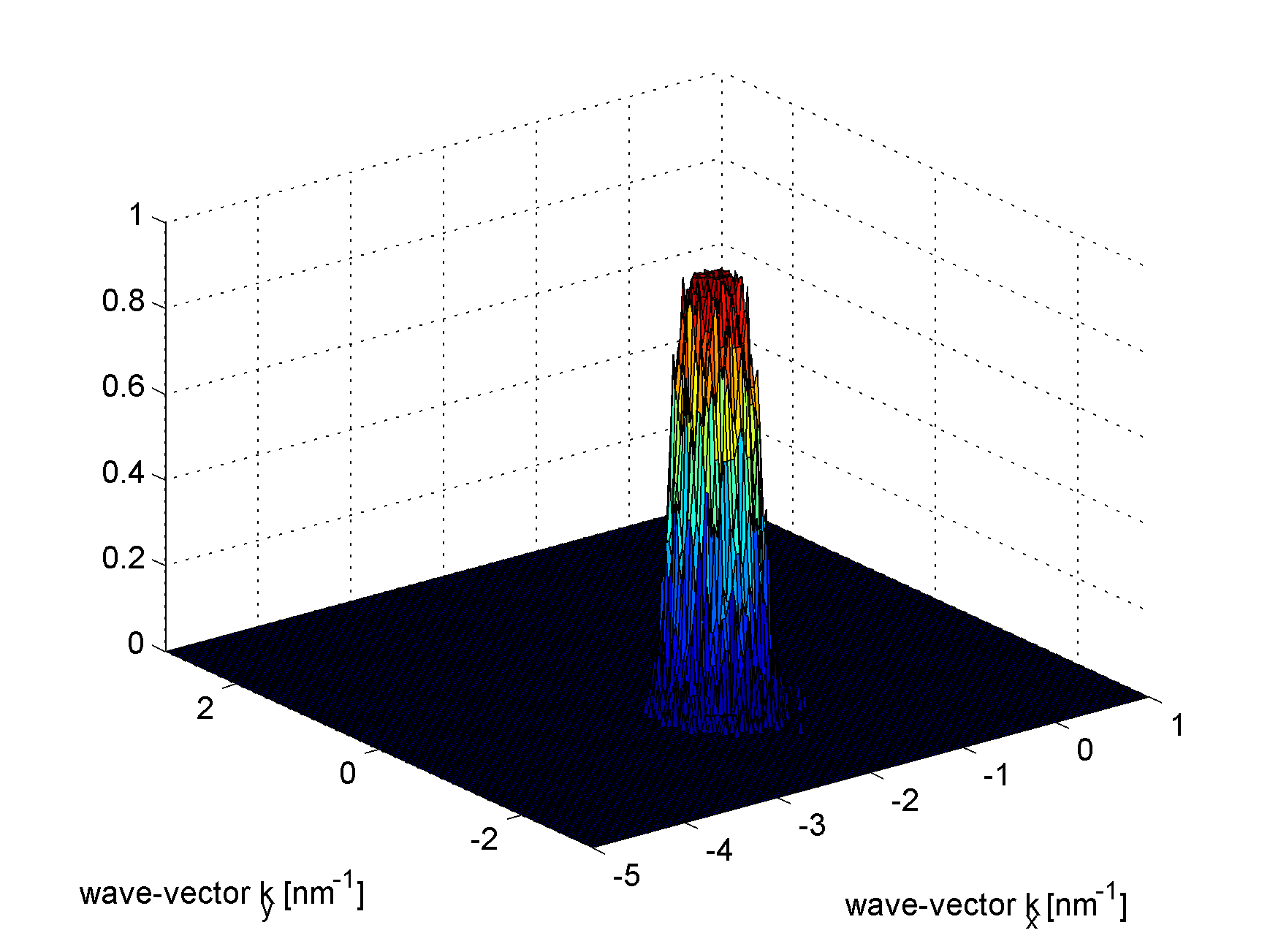}}
 	\caption{Grafici delle distribuzioni degli elettroni nel $\ce{SiO2}$ (a) e nel $\ce{h-BN}$ (b), con $d\sim U([0,1])$.}
\end{figure}
\begin{figure}[ht]
	\centering
	\subfigure[]
   		{\includegraphics[width=7.1cm]{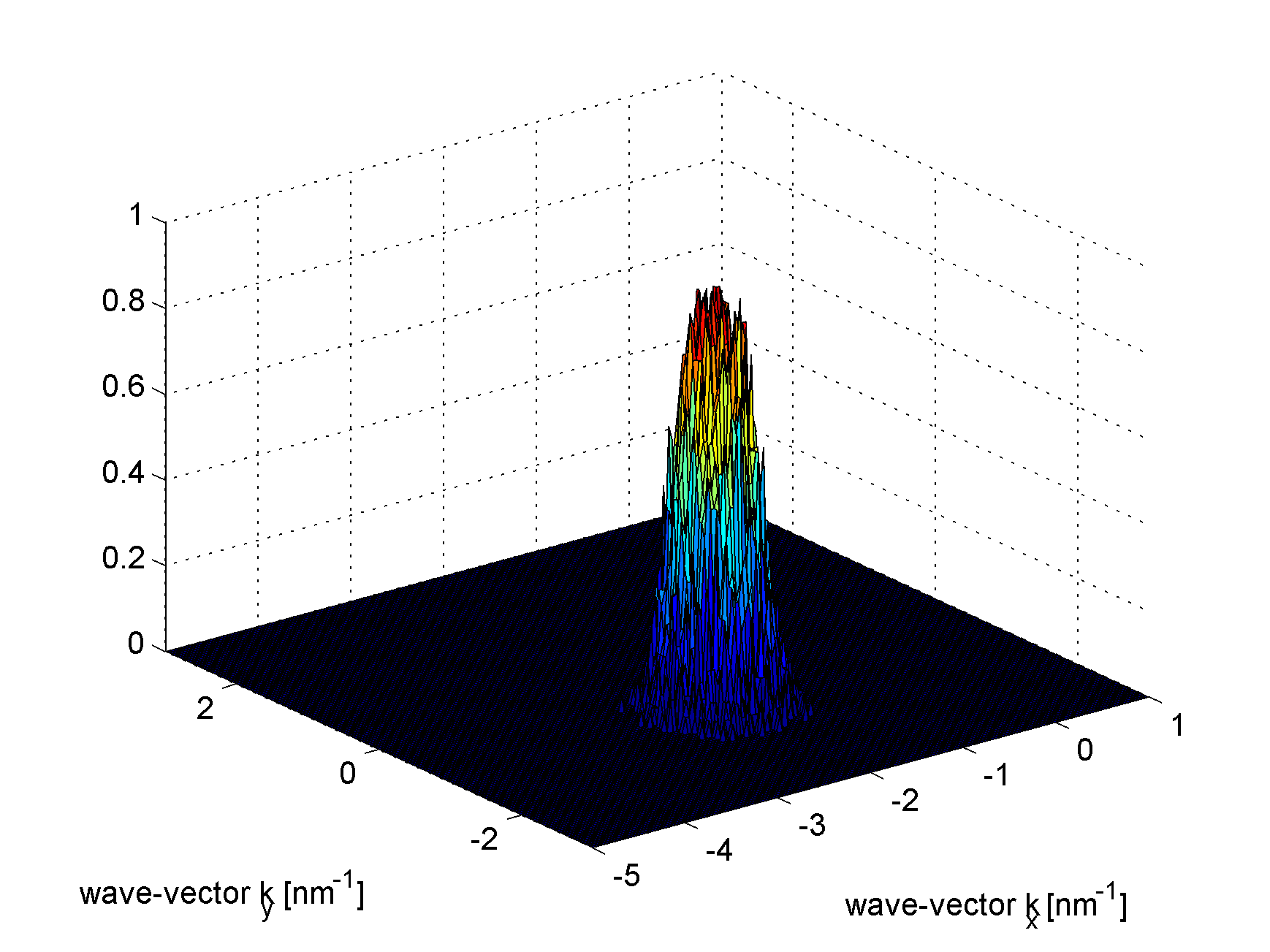}}
 	\,
 	\subfigure[]
   		{\includegraphics[width=7.1cm]{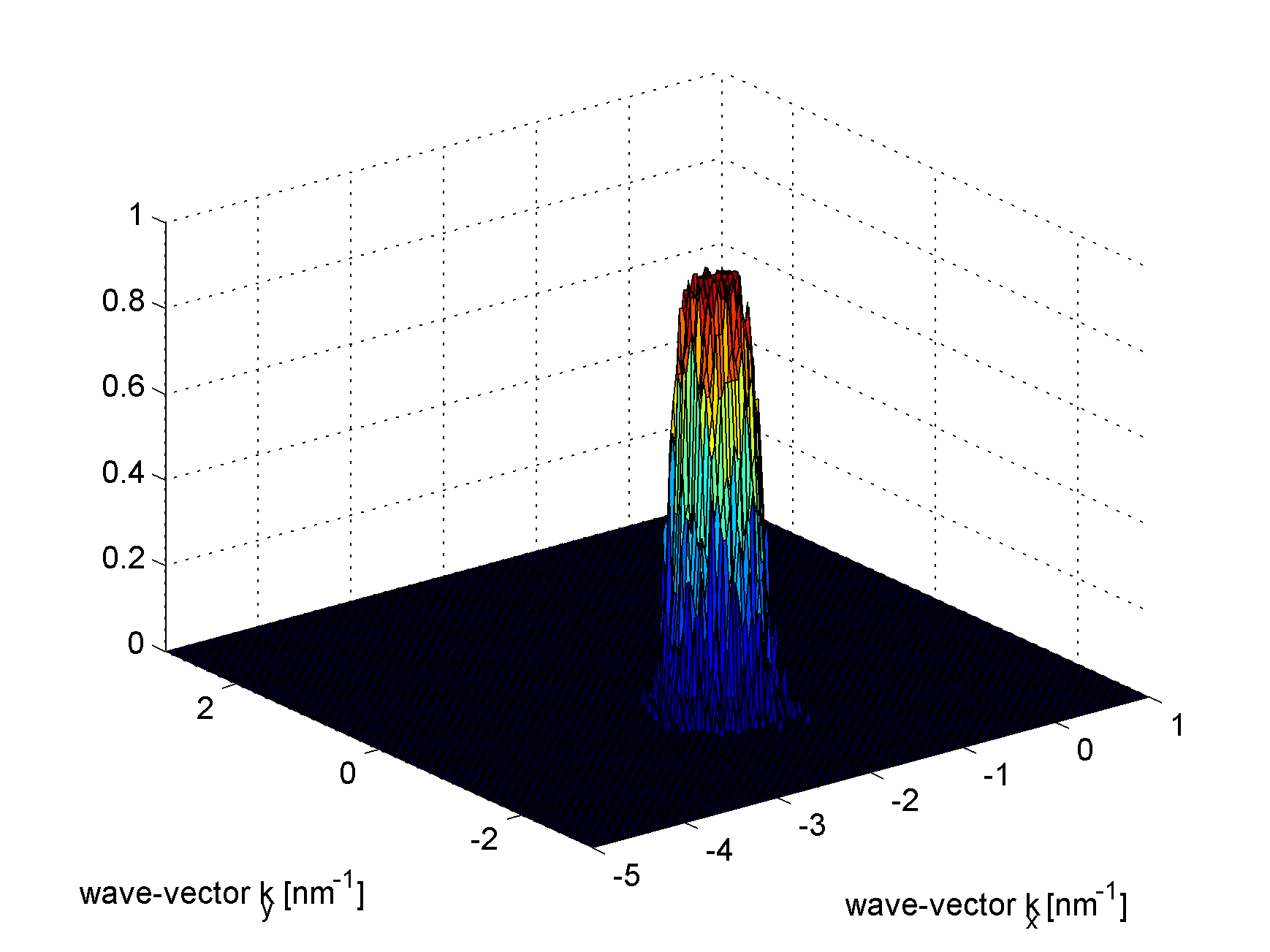}}
 	\caption{Grafici delle distribuzioni degli elettroni nel $\ce{SiO2}$ (a) e nel $\ce{h-BN}$ (b), con $d\sim \Gamma(2,2)$.}
\end{figure}
\begin{figure}[ht]
	\centering
	\subfigure[]
   		{\includegraphics[width=7.1cm]{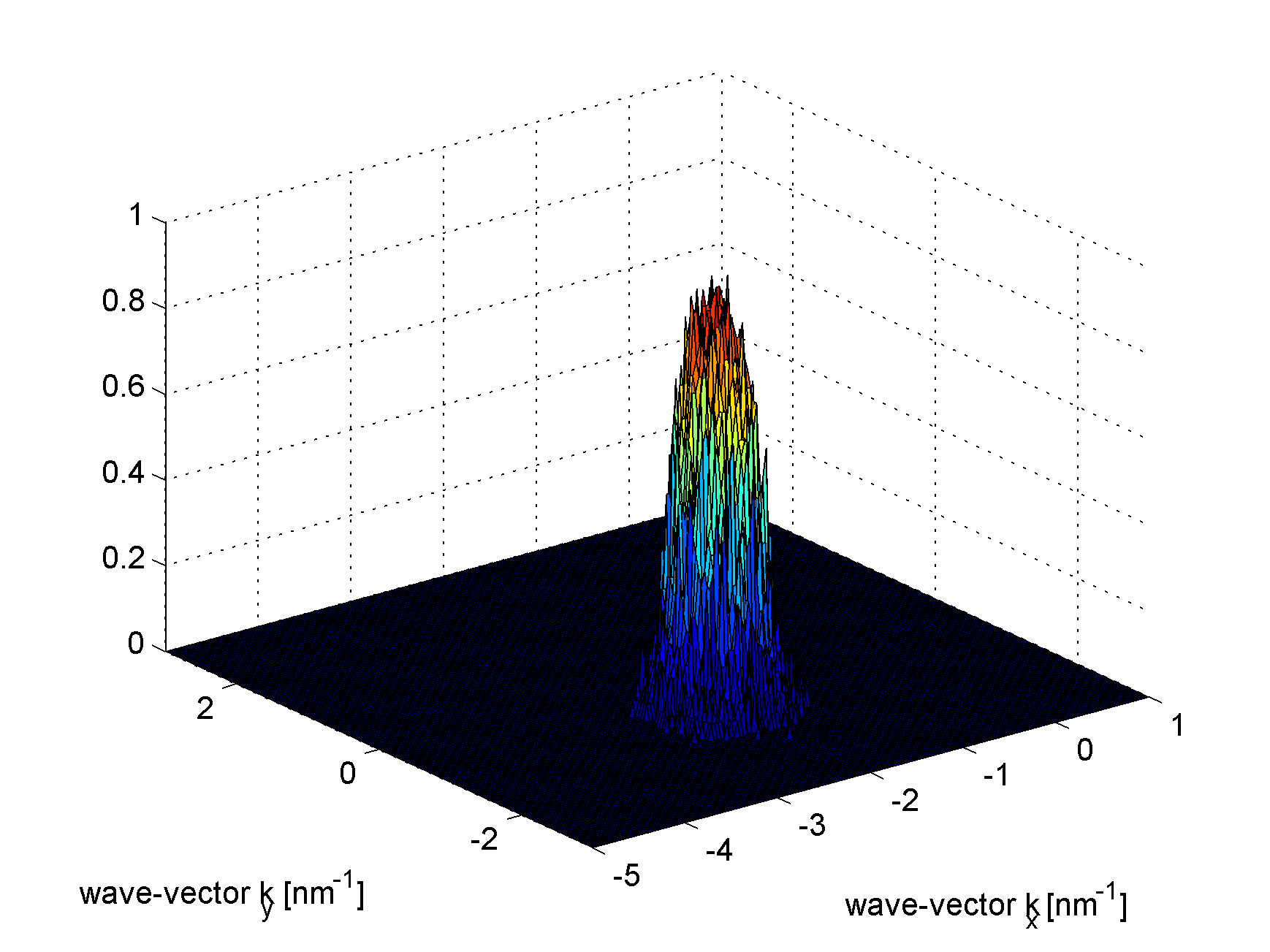}}
 	\,
 	\subfigure[]
   		{\includegraphics[width=7.1cm]{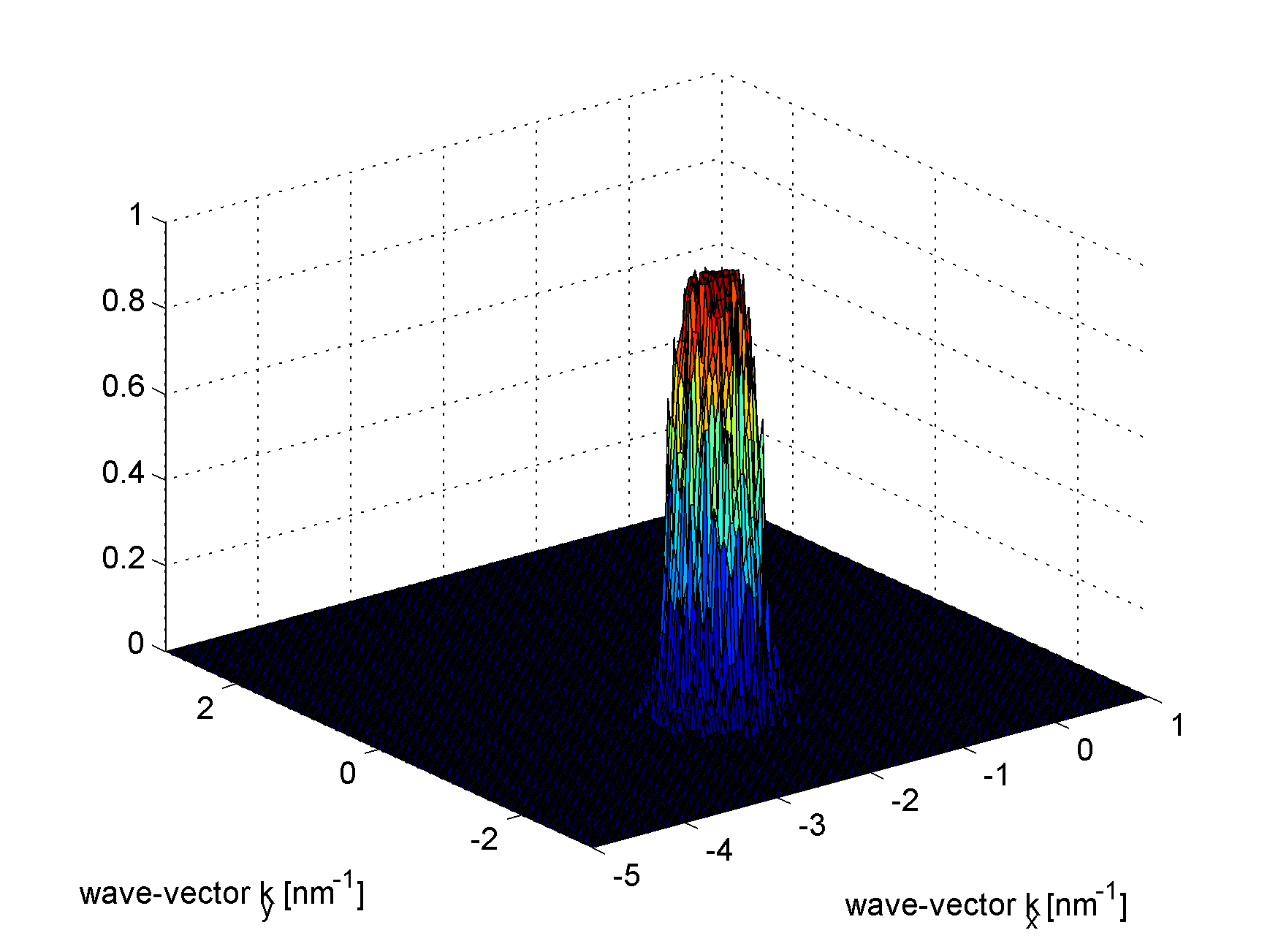}}
 	\caption{Grafici delle distribuzioni degli elettroni nel $\ce{SiO2}$ (a) e nel $\ce{h-BN}$ (b), con $d\sim \Gamma(3,2)$.}
\end{figure}
\FloatBarrier
\begin{figure}[ht]
	\centering
	\subfigure[]
   		{\includegraphics[width=7.1cm]{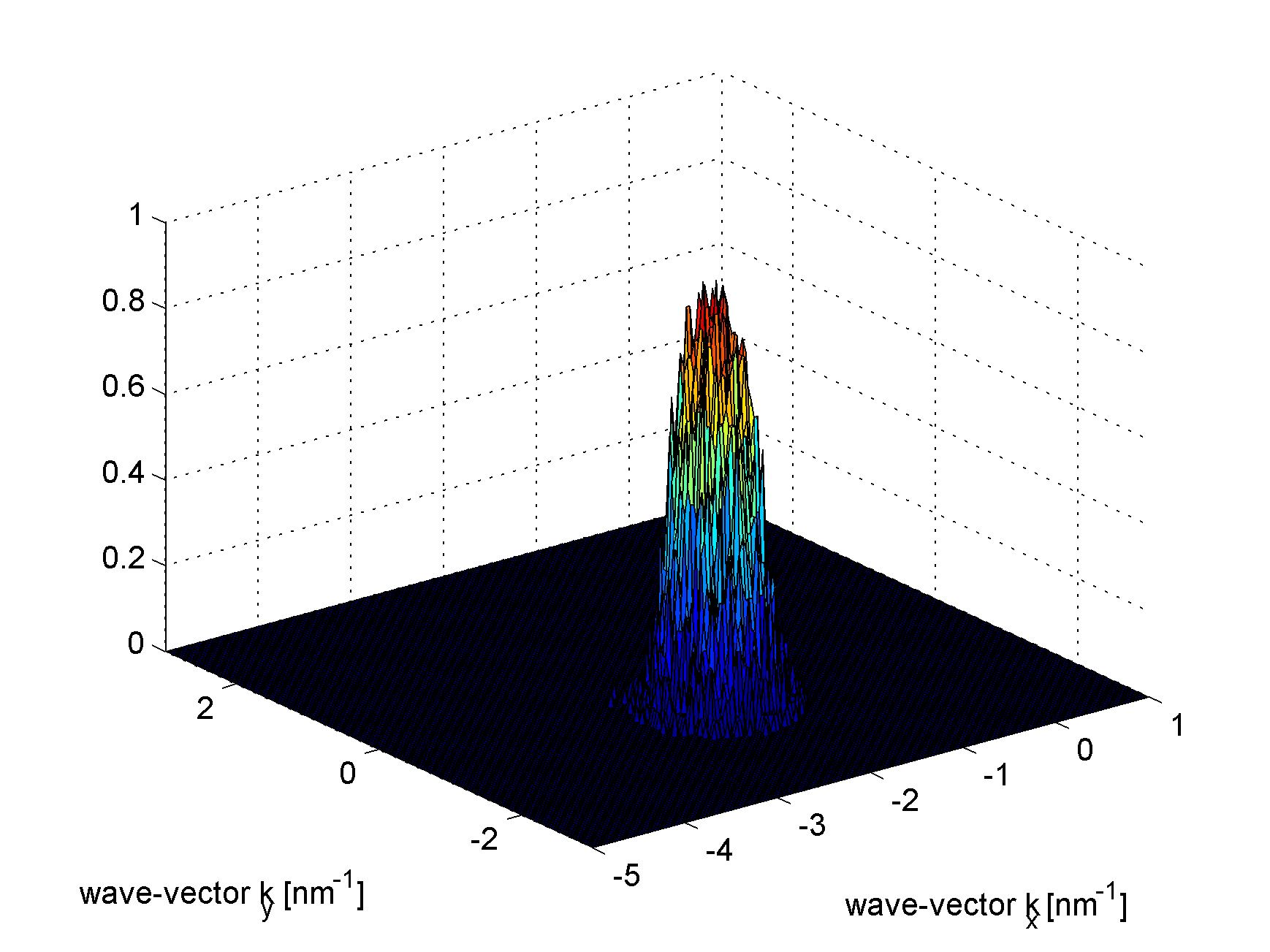}}
 	\,
 	\subfigure[]
   		{\includegraphics[width=7.1cm]{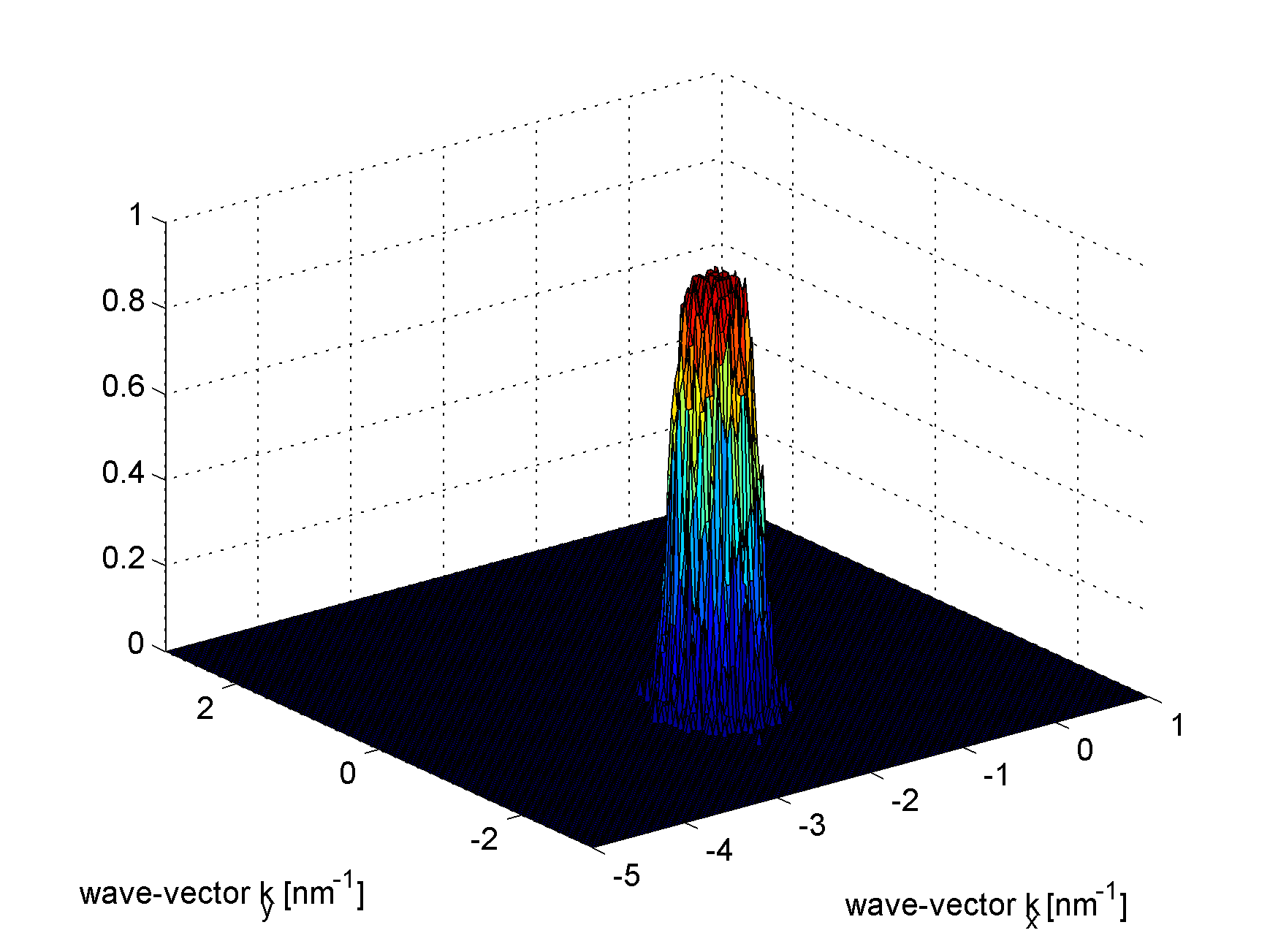}}
 	\caption{Grafici delle distribuzioni degli elettroni nel $\ce{SiO2}$ (a) e nel $\ce{h-BN}$ (b), con $d\sim \Gamma(4,2)$.}
\end{figure}
\FloatBarrier
In conclusione, le differenze tra le velocità medie per i substrati considerati sono in accordo con gli effetti attesi e confermano un decadimento della mobilità. Come già trovato in \citep{ART:Hirai2014}, riguardo alla la mobilità per campi bassi il $\ce{h-BN}$ si rivela essere un materiale migliore del $\ce{SiO2}$ in quanto produce un decadimento inferiore e ciò si manifesta anche nelle simulazioni per campi alti, riportate in questa tesi e in corso di pubblicazione in \citep{ART:CoMajNaRo}.



\addtocontents{toc}{\vspace{2em}} 

\appendix 



\chapter*{Conclusioni}
\addcontentsline{toc}{chapter}{Conclusioni}
\chaptermark{Conclusioni.}

\lhead{Conclusioni}


Questo lavoro di tesi è stato incentrato sul modello semiclassico per il trasporto di cariche nei semiconduttori, con particolare riguardo al grafene. Ciò ha richiesto degli approfondimenti di meccanica classica, meccanica quantistica e modelli cinetici, nonché nozioni di fisica atomica e dei materiali. Per risolvere l'equazione di Boltzmann semiclassica, scritta per la distribuzione degli elettroni, si è utilizzato il metodo di Simulazione Diretta Monte Carlo, nella versione descritta in \citep{ART:RoMaCo_JCP}. 

Poiché l'alto costo computazionale può rappresentare un ostacolo nel proseguo delle ricerche su questo nuovo approccio, il punto di partenza è stato sviluppare un codice più rapido. Effettivamente, cambiando linguaggio di programmazione da MATLAB a Fortran 90 il tempo di esecuzione delle simulazioni si abbassa in maniera considerevole.

Visti gli ottimi risultati conseguiti nel caso del grafene sospeso, si è optato per costruire un codice in Fortran 90 anche nel caso del grafene su substrato, già presentato in \citep{ART:CoMajRo_Ric_mat_2016}. Questo è stato il punto di partenza per la parte sperimentale della tesi. Essa ha riguardato in primo luogo l'uso di modelli che descrivessero l'intrinseca oscillazione della distanza tra gli atomi di grafene e le impurezze, che è dell'ordine di qualche angstrom.

Si è agito considerando il caso del $\ce{SiO2}$ e confrontando i valori ottenuti per i valori costanti pari a $\si{\num{0}.\nano\metre}$, $\si{\num{0.5}.\nano\metre}$ e $\si{\num{1}.\nano\metre}$ con distanze calcolate casualmente in accordo con le distribuzioni di probabilità scelte: $U([0,1])$, $\Gamma(k,2)$ con $k=2,3,4$.

In termini di risultati sui valori medi di energia e velocità degli elettroni, si è ottenuto che la scelta uniforme è molto vicina alla scelta costante per la distanza $d$ pari a $\si{\num{0.5}.\nano\metre}$. Invece, con la scelta Gamma, partendo da un numero di gradi di libertà pari a 2, si ottengono dei valori intermedi tra la scelta $d=0\si{\nano\metre}$ e la scelta $d=\si{\num{0.5}.\nano\metre}$. All'aumentare del valore di $k$ i risultati sono sempre più prossimi a quelli ottenuti scegliendo $d=\si{\num{0.5}.\nano\metre}$.

Il secondo aspetto riguardante la parte sperimentale è consistito nel confrontare i valori della velocità media facendo variare il tipo di substrato. Seguendo l'analisi già trattata in \citep{ART:Hirai2014} per valori bassi di campo elettrico, si è passati a simulare il $\ce{h-BN}$ e il $\ce{HfO2}$ per valori di campo più elevati. La distanza dalle impurezze è un parametro cruciale che influisce sul decadimento della mobilità elettronica. Questo decadimento è risultato molto accentuato nel caso del $\ce{HfO2}$. Per quanto riguarda gli altri substrati, si è osservato che per valori elevati di $d$ le velocità simulate nei due materiali sono confrontabili, per valori intermedi di $d$ invece il $\ce{h-BN}$ si comporta meglio rispetto al $\ce{SiO2}$.

In tutti i casi simulati si è potuto constatare che la distribuzione degli elettroni non eccede mai il valore 1, in accordo con il principio di esclusione di Pauli che è dunque pienamente rispettato.

I risultati presentati in questa tesi saranno pubblicati in \citep{ART:CoMajNaRo}.

In conclusione, per quanto riguarda i possibili sviluppi futuri di questo lavoro di tesi, si può pensare di approfondire il caso del grafene su substrato, determinando le curve di mobilità, seguendo l'approccio utilizzato in \citep{ART:MajMaRo}; studiare ed implementare il caso bipolare, in quanto, fino a questo momento, il moto delle lacune è trascurato imponendo un livello di Fermi sufficientemente elevato (cfr.~\citep{ART:RoMaCo_JCP}); implementare il modello applicando campi elettrici non costanti, come estensione del metodo proposto in \cite{ART:RoMaCo_JCP}; studiare ed implementare il modello considerando anche gli scattering di tipo elettrone-elettrone.

\addtocontents{toc}{\vspace{2em}} 

\backmatter


\label{Bibliography}

\lhead{\emph{Bibliografia}} 

\bibliographystyle{unsrtnat-ita} 

\bibliography{Bibliography} 

\end{document}